\keywords{Quantitative Equational Theories; Algebraic Effects; Sum and Tensor of Algebraic Equational Theories.}
\theoremstyle{plain} %\crefname{satz}{Satz}{S\"atze}
\def\eg{{\em e.g.}}
\def\cf{{\em cf.}}
\def\ie{{\em i.e.}}
  \tikzset{
    commutative diagram/.style 2 args={
    	matrix of math nodes, row sep=#1,column sep=#2,
	text height=1.5ex, text depth=0.25ex},
    commutative diagram/.default={1cm}{1cm}
    }
  \tikzset{    
    skip loop/.style n args={3}{to path={-- ++(0,#1) -| node[pos=0.25,#2] {#3} (\tikztotarget)}},
    cross line/.style={preaction={draw=white, -, line width=6pt}}
  }
\providecommand*{\ifempty}[3]{\ifthenelse{\isempty{#1}}{#2}{#3}}
\newcommand{\parensmathoper}[2]{\ensuremath{{{#1}}\ifempty{#2}{}{(#2)}}}
\newcommand{\naturals}{\mathbb{N}} % 
\newcommand{\prationals}{\mathbb{Q}_{\geq 0}}
\newcommand{\preals}{\mathbb{R}_{\geq 0}}
\newcommand{\reals}{\mathbb{R}}
\newcommand{\tupl}[1]{\langle #1 \rangle} % tuples
\newcommand{\supp}{\parensmathoper{supp}} % support
\newcommand{\e}{\varepsilon}  % epsilon
\newcommand{\iso}{\cong} % categorical isomorphism
\newcommand{\nat}{\Rightarrow} % natural transformation
\newcommand{\inj}{\mathit{in}} % coproduct injections
\newcommand{\set}[2]{\left\{ #1 \ifempty{#2}{}{\mid #2} \right\}} % set
\newcommand{\K}[1][d]{\parensmathoper{\parensmathoper{\mathcal{K}}{#1}}} % Kantorovich distance
\newcommand{\C}{\mathcal{C}}
\newcommand{\FMP}[1][]{\textit{LMP}_{#1}} % Finite Markov processes
\newcommand{\MP}[1][X]{\textit{MP}_{#1}} % Markov processes
\newcommand{\MM}[1][X]{\textit{MM}_{#1}} % Mealy machines
\newcommand{\cat}{\mathbf}
\newcommand{\Set}{\mathbf{Set}} %% category of sets
\newcommand{\Met}{\mathbf{Met}} %% category of metric spaces
\newcommand{\CMet}{\mathbf{CMet}} %% category of complete metric spaces
\newcommand{\PMet}{\mathbf{PMet}} %% category of pseudometric spaces
\newcommand{\CC}{\mathbb{C}}
\renewcommand{\O}[1]{\mathcal{O}_{#1}}
\newcommand{\QA}[1][\Sigma]{\mathbf{QA}(#1)}
\newcommand{\CQA}[1][\Sigma]{\mathbf{CQA}(#1)}
\newcommand{\lebint}[3][{}]{\int_{#1} \, #2 \; {\mathrm{d}}#3}
\newcommand{\mprod}{\mathbin{\square}}
\newcommand{\ev}{\mathit{ev}}
\newcommand{\coev}{ \overline{\mathit{ev}} }
\newcommand{\Op}[1][T]{\mathcal{O}_{#1}}
\newcommand{\TT}[2][]{\mathbb{T}\ifempty{#2}{\ifempty{#1}{}{(#1)}}{\ifempty{#1}{(#2)}{(#1,#2)}}}  %% Set of terms
\newcommand{\Sub}[1][]{\ifempty{#1}{\mathcal{S}}{\mathcal{S}(#1)}} %% Set of substitutions
\newcommand{\V}[1][]{\ifempty{#1}{\mathcal{V}}{\mathcal{V}(#1)}} %% Set of quantitative equations
\newcommand{\E}[1][]{\ifempty{#1}{\mathcal{E}}{\mathcal{E}(#1)}} %% Set of conditional quantitative equations
\newcommand{\U}[1][]{{\mathcal{U}^{#1}}} %% Theory induced by a quantitative deduction system
\newcommand{\KK}[2][]{\mathbb{K}\ifempty{#1}{(#2)}{(#1,#2)}} %% Class of models for theory
\newcommand{\Refl}{\textsf{Refl}} %% Reflexivity Axiom
\newcommand{\Symm}{\textsf{Symm}} %% Symmetry Axiom
\newcommand{\Triang}{\textsf{Triang}} %% Triangular inequality Axiom
\newcommand{\Weak}{\textsf{Weak}} %% Weak Axiom
\newcommand{\Cont}{\textsf{Cont}} %% Continuity Axiom
\newcommand{\Nexp}[1]{#1\text{-}\textsf{NE}} %% Non-expansivity Axiom
\newcommand{\Subst}{\textsf{Subst}} %% Substitution Axiom
\newcommand{\Cut}{\textsf{Cut}} %% Cut rule
\newcommand{\Assum}{\textsf{Ass}} %% Assumption rule
\newcommand{\Lip}[1]{#1\text{-}\textsf{Lip}} %% Lipschitz Axiom
\newcommand{\IB}{\textsf{IB}}
\newcommand{\Bone}{\textsf{B1}}
\newcommand{\Btwo}{\textsf{B2}}
\newcommand{\SC}{\textsf{SC}}
\newcommand{\SA}{\textsf{SA}}
\newcommand{\Szero}{\textsf{S0}}
\newcommand{\Sone}{\textsf{S1}}
\newcommand{\Stwo}{\textsf{S2}}
\newcommand{\Sthree}{\textsf{S3}}
\newcommand{\Sfour}{\textsf{S4}}
\newcommand{\Idem}{\textsf{Idem}}
\newcommand{\Diag}{\textsf{Diag}}
\newcommand{\Zero}{\textsf{Zero}}
\newcommand{\Mult}{\textsf{Mult}}
\newcommand{\Diff}{\textsf{Diff}}
\newcommand{\Com}[1][]{\textsf{Com#1}}
\newcommand{\A}{\mathcal{A}} %% Generic quantitative algebra
\newcommand{\Alg}[1]{#1\text{-}\mathbf{Alg}}
\newcommand{\EMAlg}[1]{\mathbf{EM}(#1)}
\newcommand{\biAlg}[2]{\tupl{#1, #2}\text{-}\mathbf{biAlg}}
\newcommand{\biEMAlg}[2]{\mathbf{EM}\tupl{#1, #2}}
\newcommand{\distAlg}[1]{#1\text{-}\mathbf{biAlg}}
\newcommand{\tensorAlg}[2]{\mathbf{EM}_\mathbf{t}\tupl{#1, #2}}
\newcommand{\B}{\mathcal{B}}  %% Barycentric algebras 
\newcommand{\Semi}{\mathcal{S}}  %% Semilattices with 0  
\newcommand{\R}[1][E]{\mathcal{R}_{#1}}  %% Reader algebras
\newcommand{\rd}{\textsf{r}} %% reader operation
\newcommand{\Wr}[1][W]{\mathcal{W}_{#1}}  %% Writer algebras
\newcommand{\wrt}[1]{\textsf{w}_{#1}} %% reader operation
\newcommand{\D}{\mathcal{D}} %% probability over open states
\newcommand{\zero}{\mathbf{0}}
\newcommand{\dist}[1][c]{\mathbf{d}^{#1}}
\newcommand{\ens}[1]{\{ #1\}}
\begin{document}

\title{Sum and Tensor of Quantitative Effects}
\titlecomment{{\lsuper*}Extended and combined version of~\cite{BacciMPP18} (LICS'18) and \cite{BacciMPP21} (CALCO'21).}

\author[G.~Bacci]{Giorgio Bacci\lmcsorcid{0000-0003-4004-6049}}[a]	%required
\address{Department of Computer Science, Aalborg University, Aalborg, Denmark}	%required
\email{grbacci@cs.aau.dk}  %optional
%\thanks{thanks 1, optional.}	%optional

\author[R.~Mardare]{Radu Mardare\lmcsorcid{0000-0001-8660-1832}}[b]	%optional
\address{Department of Computer \& Information Sciences, University of Strathclyde, Glasgow, Scotland}	%optional
\email{r.mardare@strath.ac.uk}  %optional
%\thanks{thanks 2, optional.}	%optional

\author[P.~Panangaden]{Prakash Panangaden}[c]	%optional
\address{School of Computer Science, McGill University, Montreal, Canada}	%optional
\email{prakash.panangaden@mcgill.ca}
%\urladdr{name3@url3\quad\rm{(optionally, a web-page can be specified)}}  %optional
%\thanks{thanks 3, optional.}	%optional

\author[G.~Plotkin]{Gordon Plotkin}[d]	%optional
\address{LFCS, School of Informatics, University of Edinburgh, Edinburgh, Scotland}	%optional
\email{gdp@inf.ed.ac.uk}  %optional
%\thanks{thanks 2, optional.}	%optional

%% etc.

%% required for running head on odd and even pages, use suitable
%% abbreviations in case of long titles and many authors:

%%%%%%%%%%%%%%%%%%%%%%%%%%%%%%%%%%%%%%%%%%%%%%%%%%%%%%%%%%%%%%%%%%%%%%%%%%%

%% the abstract has to PRECEDE the command \maketitle:
%% be sure not to issue the \maketitle command twice!

\begin{abstract}
  \noindent Inspired by the seminal work of Hyland, Plotkin, and Power on the
  combination of algebraic computational effects via \emph{sum} and \emph{tensor},
  we develop an analogous theory for the combination of quantitative algebraic effects. 
  
  \emph{Quantitative algebraic effects} are monadic computational effects on categories of
  \emph{metric spaces}, which, moreover, have an algebraic presentation in the form of
  quantitative equational theories, a logical framework introduced by Mardare, Panangaden,
  and Plotkin that generalises equational logic to account for a concept of approximate
  equality.  As our main result, we show that the sum and tensor of two quantitative
  equational theories correspond to the categorical sum (\ie, coproduct) and tensor,
  respectively, of their effects qua monads. We further give a theory of
  \emph{quantitative effect transformers} based on these two operations, essentially
  providing quantitative analogues to the following monad transformers due to Moggi:
  exception, resumption, reader, and writer transformers.  Finally, as an application, we
  provide the first quantitative algebraic axiomatizations to the following coalgebraic
  structures: Markov processes, labelled Markov processes, Mealy machines, and Markov
  decision processes, each endowed with their respective bisimilarity metrics.  Apart from
  the intrinsic interest in these axiomatizations, it is pleasing they have been obtained
  as the composition, via sum and tensor, of simpler quantitative equational theories.
\end{abstract}

\maketitle

%% start the paper here:
\section{Introduction}\label{sec:intro}

The theory of computational effects began with the work of
Moggi~\cite{Moggi88,Moggi91} seeking a unified category-theoretic account of the
semantics of higher-order programming languages.  He modelled computational
effects (which he called notions of computation) employing strong monads on a
base category with a Cartesian closed structure.  With
Cenciarelli~\cite{Cenciarelli93}, he later extended the theory by allowing a
compositional treatment of various semantic phenomena such as state, IO,
exceptions, resumptions, etc, via the use of monad transformers.  This work was
followed up by the program of Plotkin and Power~\cite{Plotkin01,Plotkin02} on an
axiomatic understanding of computational effects as arising from operations and
equations via the use of Lawvere theories (see also~\cite{Hyland07}).  In a
fundamental contribution~\cite{HylandPP06}, jointly with Hyland, they developed a
unified modular theory for algebraic effects that supports their combination by
taking the \emph{sum} and \emph{tensor} of their Lawvere theories.  This allowed
them to recover in a pleasing structural algebraic way many of the monad
transformers considered by Moggi.

Quantitative equational logic, introduced by Mardare, Panangaden, and 
Plotkin~\cite{MardarePP:LICS16}, is a logical framework generalising standard equational logic to account for a concept of approximate equality.
The key idea is to introduce equations indexed by rational numbers
\begin{equation*}
t =_\e s
\end{equation*}
where $t, s$ are terms over a signature of operations.  One reads this as ``$s$ is within
$\e$ of $t$''.  The model theory of quantitative equational logic is developed into
quantitative universal algebras, that is, universal algebras with operations interpreted
as non-expansive maps on a metric space.  Quantitative equational logic is a logical
framework providing quantitative analogues of the core results of equational logic, such
as completeness theorems, constructions of free algebras, Cauchy completions of models,
and Birkoff-like (quasi-)variety theorems~\cite{MardarePP:LICS16, MardarePP17, FordMS21}.
Moreover and relevantly for the present paper, they are used to provide an algebraic
presentation of quantitative effects as freely generated monads on categories of metric
spaces.  As we will show in Section~\ref{sec:examples}, quantitative theories are
expressive enough to recover many quantitative effects of interest in computer science,
such as exceptions, interactive input/output, read, write, non-determinism, and
probabilistic choice.

Following Hyland et al.~\cite{HylandPP06}, in this paper
we develop the theory for the \emph{sum} and \emph{tensor of quantitative
equational theories}.

The sum combines two theories by taking their disjoint union.  In this sense, it is the
simplest combination supporting both given effects.  In contrast, the tensor additionally
imposes mutual commutation of the operations from each theory.  As such it refines the sum
of theories, which is just their unrestricted combination.  The sum and tensor of theories
arise in several contexts.  For example, in the semantics of programming languages, the
monad transformers for exception and resumption are given by a sum; and the transformers for
global state, reader, and writer are given by a tensor~\cite{HylandPP06}.

The main contributions of the present paper are:
\begin{enumerate}
  \item \label{i} 
  we prove that the sum and tensor of quantitative equational theories correspond 
  to the categorical sum (\ie, coproduct) and tensor, respectively, of their 
  induced quantitative effects as strong monads;
  \item \label{ii}
  we provide a quantitative presentation to the \emph{quantitative exception}
  and \emph{interactive input monads}, and obtain quantitative analogues to
  their corresponding Moggi's monad transformers at the level of theories using sum;
  \item \label{iii} 
  we give quantitative axiomatizations to the \emph{quantitative reader} 
  and \emph{writer monads}, from which we obtain analogues of their monad 
  transformers at the level of theories using tensor;
  \item \label{iv} 
  we provide the first axiomatizations of \emph{Markov processes}, 
  \emph{labelled Markov processes}, \emph{Mealy machines}, and 
  \emph{Markov decision processes with rewards}, each endowed with their respective 
  (discounted) bisimilarity metrics. 
\end{enumerate}

For the results in \eqref{i}, we require the quantitative theories to be axiomatized by a
set of quantitative inferences involving only quantitative equations between variables in
the premises.  As in~\cite{MardarePP17}, we call this type of theories \emph{basic}.  The equational
monad transformers mentioned in \eqref{ii} and \eqref{iii} are compelling evidence
for the usefulness of our compositional framework. Ideally, these
transformers could be implemented in a future quantitative extension of effectful programming
languages, such as \textsf{Eff}, \textsf{Koka}, or \textsf{Haskell}.

The axiomatizations listed in \eqref{iv} are major examples of our compositional theory of
quantitative effects.  On the one hand, we obtain the bisimilarity metrics for
Markov processes by starting from the theory of interpolative barycentric algebras (used
to axiomatize probability distributions with the Kantorovich metric) and by applying to it, in
turn, the exception and interactive-input theory transformers, which are two examples of
sum of theories.  On the other hand, labelled Markov processes and Markov decision
processes with rewards are obtained by complementing the axiomatization for Markov
processes with the missing computational effects.  We add the computational effect of reacting
to an action label by tensoring the basic theory of Markov
processes with that of quantitative reading computations
(corresponding to the reader transformer); while the computational effect of accumulating rewards is obtained by tensoring with the theory of quantitative writing computations (corresponding
to the writer transformer).  We illustrate our approach by decomposing the proposed
axiomatizations into their basic components and showing how to combine them step-by-step
to get the desired result.  The axiomatization of Mealy machines is obtained similarly and provides 
further evidence for the generality and simplicity of our compositional approach to
quantitative effects.

This article is an extended and combined version of~\cite{BacciMPP18} and
\cite{BacciMPP21}. Beyond providing all proofs which could not be published
in~\cite{BacciMPP18,BacciMPP21} because of space limitations, we refactored and simplified
several technical results.  The main examples of this refactorization are
Sections~\ref{sec:examples}, \ref{sec:sumWithException}, and \ref{sec:sumInput}.  In the
latter, we improve upon some of the results originally presented in~\cite{BacciMPP18} (\cf\ Corollaries~\ref{cor:ContractiveTransfomerMet}--\ref{cor:ResumptionTransfomerCMet}) by observing that quantitative theories induce only
monads with (at most) countable rank, a result due to Ford et al.~\cite{FordMS21} that we
did not know when writing~\cite{BacciMPP18}.  Moreover, the axiomatization of Mealy
machines (Section~\ref{sec:mealyMachines}) is new material not present in the conference
version of~\cite{BacciMPP21}.

\subsection*{Further Related Work}

In~\cite{HylandPP06,HylandLPP07} the sum and tensor of (enriched) Lawvere theories are
characterized as the colimit of certain cocones, and the correspondence with the sum and
tensor of monads is obtained via the equivalence between Lawvere theories and monads.
Since it is not hard to show that (basic) quantitative equational theories can be
characterized as metric-enriched Lawvere theories, one may think to recover the
correspondence with the operations on their monads via the equivalence with Lawvere
theories. Alas, quantitative equational theories and Lawvere theories are \emph{not} equivalent,
as the latter is more expressive than the framework
of Mardare et al.~\cite{MardarePP:LICS16} (metric-enriched Lawere theories allow generic operations with metric spaces as arities, while quantitative equational logic admits only operations with discrete arities).
An equivalence with \emph{discrete} Lawvere theories~\cite{HylandP06} (where arities are
just countable ordinals) does not hold either, because quantitative equations implicitly
generate morphisms (hence, operations in a Lawvere sense) with non-discrete arities which cannot be expressed in the framework of discrete Lawvere theories.

The above limitations required us to follow a different path which required us to prove the
two correspondences directly. For the correspondence with the sum of monads, we could
follow Kelly~\cite{Kelly1980}, which characterizes the Eilenberg-Moore algebras of the
coproduct of monads as bialgebras. 
However, characterizing the tensor bialgebras for the monads, which correspond to the Eilenberg-Moore algebras for their tensor, was more complex. This complexity led us to introduce the concept of \emph{pre-operations of a strong functor}. Pre-operations represent a natural extension of Manes' notion of operation of a monad~\cite{Manes69} and Plotkin and Power's notion of algebraic operation~\cite{PlotkinP01,PlotkinP03}. We chose to consider pre-operations over functors, not just on monads, to establish a connection between the operations of an algebraic monad and those of its signature functors. This approach allowed us to characterize the tensor bialgebras for the monads in terms of the tensor bialgebras for their associated signature functors, eliminating the need for a correspondence with a specific subclass of metric-enriched Lawvere theories.

Finally, we remark that quantitative equational theories, although not as general as
metric-enriched Lavwere theories, are a natural and simpler form of enriched equational
theory, which is still expressive enough to recover many examples of interest in computer
science (see~\cite{MardarePP:LICS16,BacciMPP18,MioV20}).  In this respect, it is pleasing
that also this simpler subclass of enriched theories is closed under sum and tensor.

\subsection*{Synopsis} We start by recalling some preliminary categorical definitions that
will be used in the rest of the paper (Section~\ref{sec:prelim}).  In
Section~\ref{sec:QDS}, we introduce the core definitions and results of the theory of
quantitative algebras. In Section~\ref{sec:examples}, we present several examples of
algebraic quantitative effects and present their axiomatic quantitative equational
theories. In Sections~\ref{sec:sum} and \ref{sec:tensor}, we develop the theory for the
sum and tensor of quantitative equational theories and show that such combinators
correspond to the categorical sum and tensor of quantitative effects as monads,
respectively. In each of these sections, we propose several nontrivial examples of
composition of quantitative effects.  Finally, in Section~\ref{sec:concl} we collect some conclusions and propose
possible future work.

In the Appendices~\ref{app:exmetric} and \ref{sec:EMet-lcp} we recall some technical results regarding the categories of metric spaces that we relevant to the result presented in this paper.

%%%%%%%%%%%%%%%%%%%%%%%%%%%%%%%%%%%%%%%%%%%%%%
\section{Preliminaries and Notation}
\label{sec:prelim}

In this paper, we deal with Eilenberg-Moore algebras of strong monads on the category of extended metric spaces. We assume familiarity with the basic notions of category theory, such as functors, natural transformations, and adjunctions (see~\cite{MacLane} for reference). 

In this section, for the sake of fixing notation, we recall some basic definitions regarding metric spaces, monads, and monoidal closed categories. As these definitions are standard, a reader who is familiar with these concepts can safely skip this section. 

\subsection{Categories of Extended Metric Spaces} An \emph{extended metric space} is a pair $(X,d_X)$ consisting of a set $X$ equipped with a distance function $d_X \colon X \times X \to [0,\infty]$ satisfying: (i) $d(x,y) =0$ iff $x=y$, (ii) $d_X(x,y) = d_X(y,x)$ and (iii) $d_X(x,z) \leq d_X(x,y) + d_X(y,z)$. Note that the distance function is allowed to have infinite values, so the sum of positive real numbers is extended to $[0, \infty]$ by canonically imposing that $\infty + r = r + \infty = \infty$, for all $r \in [0, \infty]$ (hence, $\infty$ is the top element w.r.t.\ the extension of the order $\leq$).

A sequence $(x_i)$ in $(X,d_X)$ \emph{converges} to $x \in X$ if $\forall \epsilon > 0,\exists N, \forall i \geq N, d_X(x_i,x) \leq \epsilon$.
A sequence $(x_i)$ is \emph{Cauchy} if
$\forall \epsilon > 0,\exists N, \forall i,j \geq N, d_X(x_i,x_j) \leq
\epsilon$.
If every Cauchy sequence converges, the extended metric space $(X,d_X)$ is said to be
\emph{complete}.  If a space is not complete it can be completed by a
well-known construction called \emph{Cauchy completion}.  We write
$\overline{(X,d_X)}$ for the completion of $(X,d_X)$. 

Let $(X,d_X)$, $(Y,d_Y)$ be extended metric spaces. A map $f \colon X \to Y$ is \emph{$c$-Lipschitz continuous}, with constant $c \geq 0$, if for all $x,x' \in X$, $c
\cdot d_X(x,x') \geq  d_Y(f(x),f(x'))$. 
If $c = 1$, the function is called \emph{non-expansive}, and if $0 \leq c < 1$
and $f$ maps $X$ to itself, it is called a \emph{contraction}.  Observe that Lipschitz continuous functions preserve convergence since they are continuous in the usual sense.

When the distance function is clear from the context, we will refer to the extended metric space $(X, d_X)$ simply as $X$. Throughout the rest of the paper, to simplify notation, we will adopt the convention of subscripting the distance function with the name of the space, i.e., $d_X$ for the space $X$.

\smallskip

The categories of metric spaces that we consider are $\Met$, with extended metric spaces
as objects and non-expansive maps as morphism, and its full subcategory $\CMet$ of complete extended metric spaces. These categories are complete and cocomplete,
\ie, have all limits and colimits (see \autoref{app:exmetric} for details). Moreover, $\CMet$ is a reflective subcategory of $\Met$, with reflection given by the Cauchy
completion functor $\CC \colon \Met \to \CMet$, mapping a metric space to its completion,
being the left adjoint to the embedding $\CMet \hookrightarrow \Met$.  

\subsection{Monads and their Algebras}
\label{sec:monadsPrelim}
A monad on a category $\cat{C}$ is a triple $(T, \eta, \mu)$ consisting of an endofunctor
$T \colon \cat{C} \to \cat{C}$ and two natural transformations: a \emph{unit} $\eta \colon
\textit{Id} \nat T$ and a \emph{multiplication} 
$\mu \colon TT \nat T$ that satisfy the laws
\begin{equation*}
\begin{tikzcd}%[column sep=large]
TX \arrow[r, "\eta T" ] \arrow[dr, "id"'] &
TTX \arrow[d, "\mu" ] & 
TX \arrow[l, "T \eta"'] \arrow[dl, "id"] \\
& TX
\end{tikzcd}
\qquad\qquad
\begin{tikzcd}
TTTX \arrow[r, "\mu T" ] \arrow[d, "T \mu"'] & 
TTX \arrow[d, "\mu"] \\
TTX \arrow[r, "\mu"'] &
TX 
\end{tikzcd}
\end{equation*}
respectively called the \emph{left/right unit laws} and \emph{multiplication law} for the monad $(T,\eta,\mu)$.
When the monad structure is clear from the context we will denote $(T,\eta,\mu)$ simply as $T$.

\smallskip
Given an endofunctor $H \colon \cat{C} \to \cat{C}$, the \emph{free monad on $H$} is a monad $H^*$ on $\mathbf{C}$ equipped with a natural transformation $\gamma \colon H \nat H^*$ that is initial among all such
pairs $(S, \lambda \colon H \nat S)$. 

\smallskip
A \emph{monad map} from a monad $(T, \eta, \mu)$ on to a monad $(H, \rho, \nu)$ on the same category is
a natural transformation $\sigma \colon T \nat H$ that makes the following diagrams commute,
\begin{equation*}
\begin{tikzcd}%[column sep=large]
X \arrow[r, "\eta" ] \arrow[dr, "\rho"'] &
TX \arrow[d, "\sigma" ] \\
& HX
\end{tikzcd}
\qquad\qquad
\begin{tikzcd}
TTX \arrow[r, "\sigma T" ] \arrow[d, "\mu"'] & 
THX \arrow[r, "\sigma H" ] &
HHX \arrow[d, "\nu"] \\
TX \arrow[rr, "\sigma"'] & &
HX 
\end{tikzcd}
\end{equation*}
If $\sigma \colon T \nat H$ is an epimorphism, then $H$ is a \emph{quotient} of $T$. If it is a monomorphism, then $T$ is a \emph{submonad} of $T$. If it is an isomorphism, the two monads are isomorphic. In the following, we consider monads to be the same up to isomorphism.

\smallskip

Let $F \colon \cat{C} \to \cat{C}$ be an endofunctor. An algebra of $F$ (or simply, $F$-algebra) is a pair $(A, a)$ consisting of an object $A$, called \emph{carrier}, and a morphism $a \colon FA \to A$ in $\cat{C}$, called \emph{$F$-algebra structure}. A morphism of $F$-algebras (or simply, $F$-homomorphism) from $(A,a)$ to $(B,b)$ is an arrow $f \colon A \to B$ in $\cat{C}$ making the square below commute
\begin{equation*}
\begin{tikzcd}
FA \arrow[r, "a" ] \arrow[d, "F f"'] & 
A \arrow[d, "f"] \\
FB \arrow[r, "b"'] &
B
\end{tikzcd}
\end{equation*}
The algebras of a functor $F$ and their homomorphisms form a category, denoted $\Alg{F}$. The category of $F$-algebras has an obvious forgetful functor $U^F \colon \Alg{F} \to \cat{C}$ mapping an $F$-algebra $(A,a)$ to its carrier $A$, hence forgetting the algebra structure. If the forgetful functor has a left adjoint $L^F \colon \cat{C} \to \Alg{F}$, then the algebra $L^F(A)$ obtained from an object $A \in \cat{C}$ is called the \emph{free $F$-algebra for $A$}. The monad $U^FL^F$ resulting from this adjunction is called \emph{algebraic monad} and corresponds to the free monad $F^*$. Observe that free monads are not necessarily algebraic, however, this holds when the category $\cat{C}$ has products~\cite{Barr1970}.

\smallskip
An \emph{Eilenberg-Moore} (EM) algebra for a monad $(T, \eta, \mu)$, is a $T$-algebra $(A, a)$ making the
two diagrams below commute
\begin{equation*}
\begin{tikzcd}%[column sep=large]
A \arrow[r, "\eta" ] \arrow[dr, "id"'] &
TA \arrow[d, "a" ] & 
TTA \arrow[l, "\mu"' ] \arrow[d, "Ta"] \\
& A &
TA \arrow[l, "a" ]
\end{tikzcd}
\end{equation*}
respectively called \emph{unit law} (left diagram) and \emph{multiplication law} (right diagram) for the $T$-algebra
$(A,a)$. The morphisms between EM algebras are the $T$-homomorphism of their $T$-algebras. 
The resulting category of EM algebras for the monad $T$ is called the \emph{Eilenberg-Moore category} for the monad $T$, and it is denoted by $\EMAlg{T}$.

The forgetful functor $U^T \colon \EMAlg{T} \to \cat{C}$ has a left adjoint $F^T \colon \cat{C} \to \EMAlg{T}$ associating the \emph{free} EM algebra $(TX, \mu_X)$ with the object $X \in \cat{C}$. By construction, the monad $U^TF^T$ arising from the adjunction is isomorphic to $T$. Moreover, $\EMAlg{T}$ has all limits which exist in $\cat{C}$, and they are created by the forgetful functor. The situation for colimits is more complicated, as colimits may not necessarily exist.

\subsection{Monoidal Closed Categories and Strong Functors}
A category is \emph{monoidal} when it comes equipped with a ``product'' structure.
In detail, a \emph{monoidal category} is 
%a structure $\langle \cat{V}, \mprod, I, \alpha, \lambda, \rho \rangle$ consisting of 
a category $\cat{V}$ with a bifunctor $\mprod \colon \cat{V} \times \cat{V} \to \cat{V}$, called \emph{monoidal product}%
\footnote{The standard symbol for the monoidal product is $\otimes$, however we prefer to denote it as $\mprod$ to avoid confusion with other tensorial operations we will deal with in this paper, specifically, the tensor of monads.}, a \emph{unit object} $I \in \cat{V}$, and three natural isomorphisms: 
(\emph{associator}) $\alpha_{V,W,Z} \colon V \mprod (W \mprod Z) \stackrel{\iso}{\longrightarrow} (V \mprod W) \mprod Z$,
(\emph{left unitor}) $\lambda_{V} \colon I \mprod V \stackrel{\iso}{\longrightarrow} V$, 
and (\emph{right unitor}) $\rho_{V} \colon V \mprod I \stackrel{\iso}{\longrightarrow} V$, 
subject to the coherence conditions
\begin{equation*}
\begin{tikzcd}
V \mprod (W \mprod (Y \mprod Z)) \arrow[r, "\alpha" ] \arrow[d, "id \,\mprod\, \alpha"'] &
(V \mprod W) \mprod (Y \mprod Z) \arrow[r, "\alpha" ] & ((V \mprod W) \mprod Y) \mprod Z \\
V \mprod ((W \mprod Y) \mprod Z) \arrow[rr, "\alpha"' ] & & 
(V \mprod (W \mprod Y)) \mprod Z \arrow[u, "\alpha \,\mprod\, id"' ]
\end{tikzcd}
\tag{\sc assoc}
\end{equation*}

\begin{equation*}
\begin{tikzcd}
V \mprod (I \mprod W) \arrow[rr, "\alpha" ] \arrow[dr, "id \,\mprod\, \lambda"'] & &
(V \mprod I) \mprod W \arrow[dl, "\rho \,\mprod\, id" ] \\
& V \mprod W
\tag{\sc unit}
\end{tikzcd}
\end{equation*}
expressing that the operation $\mprod$ is associative, with left/right identity.

A monoidal category is \emph{symmetric} when in addition it is equipped with a natural isomorphism 
(\emph{braiding}) $s_{V,W} \colon  V \mprod W \stackrel{\iso}{\longrightarrow} W \mprod V$ 
such that the following diagrams commute:
\begin{equation*}
\begin{tikzcd}[column sep=large]
V \mprod (W \mprod Z) \arrow[r, "\alpha" ] \arrow[d, "id \,\mprod\, s"'] &
(V \mprod W) \mprod Z \arrow[r, "s" ] & Z  \mprod (V \mprod W) \arrow[d, "\alpha"] \\
V \mprod (Z \mprod W) \arrow[r, "\alpha"' ] & (V \mprod Z) \mprod W  \arrow[r, "s \,\mprod\, id"' ] & 
(Z \mprod V) \mprod W
\end{tikzcd}
\end{equation*}
\begin{equation*}
\begin{tikzcd}
I \mprod V \arrow[rr, "s" ] \arrow[dr, "\lambda"'] & &
V \mprod I \arrow[dl, "\rho" ] \\
& V
\end{tikzcd}
\quad
\begin{tikzcd}
V \mprod W \arrow[rr, "s" ] \arrow[dr, "id"'] & &
W \mprod V \arrow[dl, "s" ] \\
& V \mprod W
\end{tikzcd}
\end{equation*}

A monoidal category is \emph{closed}, if has an internal \emph{hom-functor} $[ -,- ] \colon \cat{V} \times \cat{V} \to \cat{V}$, such that for every object $V \in \cat{V}$, $[V, - ] \colon \cat{V} \to \cat{V}$ is right adjoint to $(V \mprod -) \colon \cat{V} \to \cat{V}$. We will denote the \emph{counit} (or evaluation map) of the adjunction $(V \mprod -) \dashv [V, -]$ by $\ev^V \colon V \mprod [V, -] \nat \textit{Id}$ and the \emph{unit} (or co-evaluatation map) by 
$\coev^V \colon \textit{Id} \to [V, V \mprod - ]$.

\begin{exas} \label{ex:monoidalclosed}
The monoidal closed categories we will consider are $\cat{Set}$, $\cat{Met}$, and $\CMet$.
\begin{enumerate} 
\item $\cat{Set}$ is a symmetric monoidal closed category with Cartesian product $X \times Y$ as monoidal product and internal hom $[X, Y]$ given by the set of functions from $X$ to $Y$. Since the monoidal product
coincides with the categorical product, $\Set$ is \emph{Cartesian closed}.

\item \label{MetClosedStructure}
$\Met$ is a symmetric monoidal closed category, with monoidal product 
$X \mprod Y$ being the extended metric space with underlying set $X \times Y$ and 
distance function  $d_{X \mprod Y}((x,y)(x',y')) = d_X(x,x') + d_Y(y,y')$.
The internal hom $[X, Y]$ is given by the set of non-expansive maps from $X$ to $Y$ with $d_{[X,Y]}(f,g) = \sup_{x \in X} d_Y(f(x),g(x))$ (the point-wise supremum metric) as distance function. The the evaluation map $\ev^X_Y \colon X \mprod [X, Y] \to X$ is given by $\ev^X_Y(f,y)=f(y)$.
Note that $\mprod$ is not the categorical product in $\Met$, for which the distance function would have $\max$ in place of $+$, as one can show that $\Met$ is not Cartesian closed~\cite{Lawvere73}.

\item $\CMet$ has the same symmetric monoidal closed structure of 
$\Met$, as the monoidal product $\mprod$ defined above preserves 
Cauchy completeness.
\end{enumerate}
\end{exas}

\smallskip

%Let $\cat{V}$ be a closed category with monoidal product $\mprod \colon \cat{V} \times \cat{V} \to \cat{V}$. 
A functor $F \colon \cat{V} \to \cat{V}$ is \emph{strong} with \emph{monoidal strength} 
$\mathit{t}_{V,W} \colon V \mprod FW \to F(V \mprod W)$, if $t$ is a natural transformation 
satisfying the following coherence conditions w.r.t.\ the associator $\alpha$ and left unitor $\lambda$ of $\cat{V}$:
\begin{equation*}
\begin{tikzcd}
I \mprod FV \arrow[rd, "\lambda"'] \arrow[r, "t"] & 
F(I \mprod V) \arrow[d, "F\lambda"] \\
 & FV
\end{tikzcd}
\begin{tikzcd}
(U \mprod V) \mprod FW \arrow[dd, "t"'] \arrow[r,"\alpha"] 
	& U \mprod (V \mprod FW) \arrow[d, "U \mprod t "] \\
	& U \mprod F(V \mprod W) \arrow[d, "t"] \\
F((U \mprod V) \mprod W) \arrow[r, "F\alpha"] 
	& F(U \mprod (V \mprod W))
\end{tikzcd}
\end{equation*}
When $\cat{V}$ is symmetric, the dual strength $\hat{t}_{V,W} \colon FW \mprod V \to F(W \mprod V)$ is 
given by $\hat{t} = Fs \circ t \circ s$, where $s_{V,W} \colon V \mprod W \to W \mprod V$ is 
the \emph{braiding} of $\cat{V}$.

A natural transformation $\theta \colon F \nat G$ is said \emph{strong} 
if $F,G$ are strong functors with strengths $t, \sigma$, respectively, and the diagram below commutes
%$\sigma \circ (id \mprod \theta) = \theta \circ t$, 
\begin{equation*}
\begin{tikzcd}
V \mprod FW \arrow[r,"V \mprod \theta"] \arrow[d, "t"']
& V \mprod GW \arrow[d, "\sigma"] \\
F(V \mprod W) \arrow[r,"\theta"] & G(V \mprod W)
\end{tikzcd}
\end{equation*} 
meaning that $\theta$ interacts well with the strengths.

A monad $(T, \eta, \mu)$ with unit $\eta \colon Id \nat T$ and multiplication $\mu \colon TT \nat T$ is \emph{strong} if $T$ is a strong functor with strength $t$ such that the following diagrams commute
%$t \circ (id \mprod \eta) = \eta$ and $\mu \circ tt = t \circ (id \mprod \mu)$.
\begin{equation*}
\begin{tikzcd}
& V \mprod TW \arrow[dd, "t"] \\
V \mprod W \arrow[ru, "V \mprod \eta"] \arrow[rd, "\eta"'] \\
 & T(V \mprod W)
\end{tikzcd}
\quad
\begin{tikzcd}
U \mprod TTV \arrow[d, "t"'] \arrow[r,"U \mprod \mu"] 
	& U \mprod TV \arrow[dd, "t "] \\
T(U \mprod TV) \arrow[d, "Tt"'] \\ 
TT(U \mprod V) \arrow[r, "\mu"] 
	& T(U \mprod V)
\end{tikzcd}
\end{equation*}

Note that strong functors (resp.\ monads) on a symmetric monoidal closed category $\cat{V}$
are equivalent to $\cat{V}$-enriched functors  (resp.\ monads) on the self-enriched category $\cat{V}$~\cite{Kock72}.

\section{Quantitative Equational Theories}
\label{sec:QDS}

Quantitative equations were
introduced in~\cite{MardarePP:LICS16}.  In this framework,  equalities $t =_\e s$ are indexed by a
positive rational number, to capture the idea that $t$ is ``within $\e$''
of $s$.  This intuitive description is formalised in a manner analogous to traditional equational logic.  
In this section, we review this formalism.
 
Let $\Sigma$ be a signature of function symbols
$f \colon n \in \Sigma$ of arity $n \in \naturals$.  Let $X$ be a countable
set of variables, ranged over by $x,y,z, \dots$.  We write 
$\TT[\Sigma]{X}$ for the set of $\Sigma$-terms freely generated over $X$,
ranged over by $t,s,u,\ldots$.

A \emph{substitution of type $\Sigma$} is a function
$\sigma \colon X \to \TT[\Sigma]{X}$, canonically extended to
terms as $\sigma(f(t_1, \dots, t_n)) = f(\sigma(t_1), \dots, \sigma(t_n))$;
we write $\Sub[\Sigma]$ for the set of substitutions of type $\Sigma$.

A \emph{quantitative equation of type $\Sigma$} over $X$ is an expression
of the form $t =_\e s$, for $t,s \in \TT[\Sigma]{X}$ and $\e \in \prationals$.  
We use $\V[\Sigma,X]$ to denote the set of quantitative equations of 
type $\Sigma$ over $X$, and its subsets will be ranged over by $\Gamma, \Theta, \ldots$.
%
% Fix $X$ a countable set of \emph{metavariables}.  A \emph{quantitative  deduction system of type $\Sigma$} is a relation ${\vdash} \subseteq 2^{\E[\Sigma,X]} \times \E[\Sigma,X]$ from the powerset of $\E[\Sigma,X]$ to $\E[\Sigma,X]$ satisfying the following meta-axioms, for each $f \colon n \in \Sigma$ 
%
Let  $\E[\Sigma,X]$ be the set of \emph{conditional quantitative equations} on $\TT[\Sigma]{X}$, which are expressions of the form
\begin{equation*}
\{t_1 =_{\e_1} s_1, \dots, t_n =_{\e_n} s_n\} \vdash t =_\e s \,,
\end{equation*}
for arbitrary $s_i,t_i,s,t\in \TT[\Sigma]{X}$ and $\e_i,\e\in\prationals$. 
As in standard equational logic, we abbreviate $\emptyset \vdash t =_\e s$ to $\vdash t =_\e s$.

\begin{defi}[Quantitative Equational Theory]
A \emph{quantitative equational theory of type $\Sigma$ over $X$} is a set $\U \subseteq \E[\Sigma,X]$ of
conditional quantitative equations satisfying the following conditions, for arbitrary $x, y, z, x_i, y_i \in X$, terms $s,t\in \TT[\Sigma]{X}$, 
rationals $\e,\e' \in \prationals$, and $\Gamma, \Theta \subseteq \V[\Sigma,X]$,
\begin{align*} 
({ \Refl}) \,\, 
& \vdash x =_0 x \, \in \U, \\
({ \Symm}) \,\, 
& \{x =_\e y \} \vdash y =_\e x \in \U \,, \\
({ \Triang}) \,\, 
& \{x =_\e z, z =_{\e'} y \} \vdash x =_{\e+\e'} y \in \U \,, \\
({ \Weak}) \,\, 
& \{x =_\e y \} \vdash x =_{\e+\e'} y \in \U \,, \text{ for all $\e'>0$} \,, \\ 
({ \Nexp{f}}) \,
& \{x_i \,{=_\e}\, y_i \mid i =1 \dots n \} \,{\vdash}\, f(x_1,\dots, x_n) \,{=_\e}\, f(y_1,\dots, y_n) \in \U \,, 
\text{ for $f \colon n \in \Sigma$} \,, \\
({ \Cont}) \,\, 
& \text{If $\{x =_{\e'} y \mid \e'>\e\} \subseteq \U$, 
then $\vdash x =_\e y \in \U $} \,, \\
({ \Subst}) \,\,
& \text{If $\Gamma \vdash t =_\e s \in \U$, then $\sigma(\Gamma) \vdash \sigma(t) =_\e \sigma(s) \in \U$} \,, 
\text{ for $\sigma \in \Sub[\Sigma]$} \,, \\
({ \Assum}) \,\,
& \text{If $t =_\e s \in\Gamma$, then $\Gamma \vdash t =_\e s \in \U$} \,, \\
({ \Cut}) \,\, 
& \text{If $\set{\Gamma \vdash t
=_{\e'} s}{t =_{\e'} s \in \Theta} \subseteq \U$ and $\Theta \vdash t =_\e s \in \U$, then $\Gamma \vdash t =_\e s \in \U$} \,,
\end{align*}
where $\sigma(\Gamma) = \set{\sigma(t) =_\e \sigma(s)}{ t =_\e s \in \Gamma}$.
\end{defi}

The conditions (\Subst), (\Cut), (\Assum) are the usual deductive rules of equational logic.
The axioms (\Refl), (\Symm), (\Triang) correspond, respectively, to reflexivity,
symmetry, and the triangle inequality; (\Weak) represents inclusion of
neighbourhoods of increasing diameter; (\Cont) is the limiting property of a
decreasing chain of neighbourhoods with converging diameters; 
and (\Nexp{$f$}) expresses non-expansiveness of $f \in \Sigma$.

A set $A \subseteq \E[\Sigma,X]$ of conditional quantitative equations \emph{axiomatizes} a quantitative
equational theory $\U$, if $\U$ is the smallest quantitative equational
theory containing $A$.

The models of these theories, called \emph{quantitative $\Sigma$-algebras}, are
$\Sigma$-algebras in $\Met$.  
\begin{defi}[Quantitative Algebra] \label{def:quantitativealgebra}
  A \emph{quantitative $\Sigma$-algebra} is a tuple
  $\A = (A, \Sigma^{\A})$, where $A$ is an extended metric space and 
  $\Sigma^\A = \{f^\A \colon A^n \to A \mid f \colon n \in \Sigma \}$ is a
  set of non-expansive \emph{interpretations} %for the algebraic operators in $\Sigma$ 
  (\ie, $\max_{i} d_A(a_i, b_i) \geq d_A( f^\A(a_1, \dots, a_n), f^\A(b_1, \dots, b_n))$).
%  , \ie, satisfying 
%  the following, for all
%  $0 \leq i \leq n$ and $a_i, b_i \in A$,
%\begin{equation*}
% \max_{i} d_A(a_i, b_i) \geq d_A( f^\A(a_1, \dots, a_n), f^\A(b_1, \dots, b_n)) \,.
%\end{equation*}
\end{defi}

The morphisms between quantitative $\Sigma$-algebras are non-expansive
$\Sigma$-homomorphisms.  Quantitative $\Sigma$-algebras and their morphism
form a category, denoted by $\QA[\Sigma]$.

$\A = (A,\Sigma^\A)$ \emph{satisfies} the
conditional quantitative equation $\Gamma \vdash t =_\e s$ in $\E[\Sigma,X]$,
written $\Gamma \models_\A t =_\e s$, if for any assignment 
$\iota \colon X \to A$, the following implication holds
\begin{align*}
\big( \forall t' =_{\e'} s' \in \Gamma\,,  d_A(\iota(t'),\iota(s')) \leq \e'  \big)
%\text{ implies }
\Rightarrow
d_A(\iota(t),\iota(s)) \leq \e \,,
\end{align*}
where $\iota(t)$ is the homomorphic interpretation of $t$ in $\A$.  

A quantitative algebra $\A$ is said to \emph{satisfy} (or be a \emph{model}
for) the quantitative theory  
$\U[]$, if $\Gamma \models_\A t =_\e s$ whenever $\Gamma \vdash t
=_\e s \in \U[]$.  
We write $\KK[\Sigma]{\U}$ for the collection of models of a theory $\U$ of type $\Sigma$.  

\subsection{Free Monads on Quantitative Equational Theories}
To every signature $\Sigma$, one can associate a \emph{signature endofunctor}
 (also called $\Sigma$) on $\Met$ by:
\begin{equation*}
\Sigma X = \coprod_{f {:} n \in \Sigma} X^n \,.
\end{equation*}
It is easy to see that, by universality of the coproduct, 
quantitative $\Sigma$-algebras correspond to
 $\Sigma$-algebras for the functor $\Sigma$ in $\Met$, and the morphisms
between them to non-expansive homomorphisms of $\Sigma$-algebras.
In the rest of the paper, we will pass between these two points of view as convenient.

In~\cite{MardarePP:LICS16} it is shown that any quantitative theory $\U$ of
type $\Sigma$ induces a monad $T_{\U}$ on $\Met$, called the \emph{free
  monad on $\U$}. The result leading to its definition is
summarized in the following theorem.

\begin{thm}[Free Algebra~\cite{MardarePP:LICS16}] \label{th:freeQAlgebra}
The forgetful functor $\KK[\Sigma]{\U} \to \Met$ 
has a left adjoint.
\end{thm}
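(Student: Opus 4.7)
The plan is to construct, for each $X \in \Met$, an explicit free model of $\U$ over $X$ via a term-model quotient, and then verify the universal property of the putative adjunction $F^\U \dashv U^\U$.

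First I would form the set of $\Sigma$-terms $\TT[\Sigma]{|X|}$ with variables drawn from the underlying set of $X$; although $\U$ is phrased over a countable variable set, any single derivation involves only finitely many variables, so (\Subst) lets us transport derivations to terms over an arbitrary carrier. Encode the metric of $X$ as the rational assumption set $\Gamma_X = \{x =_q y \mid x,y \in X,\, q \in \prationals,\, q \geq d_X(x,y)\}$ and define $d(t,s) = \inf\{\e \in \prationals \mid \Gamma_X \vdash t =_\e s \in \U\}$. Axioms (\Refl), (\Symm), (\Triang), and (\Weak) show $d$ is an extended pseudometric on terms, and (\Cont) provides the crucial closure property that guarantees the infimum is attained in a form usable inside $\U$. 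Set $T_\U X := \TT[\Sigma]{|X|}/{\sim}$, where $t \sim s$ iff $d(t,s)=0$, equipped with the induced distance; by (\Nexp{f}) the clause $f^{T_\U X}([t_1],\ldots,[t_n]) := [f(t_1,\ldots,t_n)]$ both respects $\sim$ and yields a non-expansive operation, so $T_\U X$ becomes an object of $\QA[\Sigma]$, and (\Subst) together with (\Cut) show that it satisfies every conditional equation of $\U$, so $T_\U X \in \KK[\Sigma]{\U}$.

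Next I would verify the universal property. Take $\eta_X \colon X \to T_\U X$ to send $x \mapsto [x]$; non-expansiveness follows directly from the definition of $\Gamma_X$. Given $\A = (A, \Sigma^\A) \in \KK[\Sigma]{\U}$ and a non-expansive $h \colon X \to A$, extend $h$ to $\TT[\Sigma]{|X|}$ as the unique $\Sigma$-homomorphism $\bar h$ determined by $\Sigma^\A$; then show by induction on derivations that whenever $\Gamma_X \vdash t =_\e s \in \U$ one has $d_A(\bar h(t), \bar h(s)) \leq \e$, the hypotheses in $\Gamma_X$ being discharged by the non-expansiveness of $h$ and the inductive step using that $\A$ satisfies $\U$. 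This shows $\bar h$ factors through $T_\U X$ as a non-expansive $\Sigma$-homomorphism, with uniqueness immediate from the fact that $\{[x] \mid x \in X\}$ generates $T_\U X$ under the $\Sigma$-operations, yielding the desired left adjoint $F^\U \colon \Met \to \KK[\Sigma]{\U}$.

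The main obstacle is the mismatch between the rational indices $\e \in \prationals$ admitted by quantitative equations and the extended-real distances, valued in $\extpreals$, allowed in arbitrary $X \in \Met$. This is precisely where (\Cont) is indispensable: it lets an infimum over rational witnesses function as a genuine bound inside $\U$, without which the value $d(t,s)$ could fail to be realised by any single derivation and the extension $\bar h$ could not be shown non-expansive. Once this limiting behaviour is in place, the remaining tasks — congruence of $\sim$, non-expansiveness of each $f^{T_\U X}$, and descent of $\bar h$ to the quotient — reduce to straightforward structural inductions on terms and on derivations of $\U$.
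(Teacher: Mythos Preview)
Your proposal is correct and matches the construction the paper describes. The paper itself does not prove this theorem---it is imported from \cite{MardarePP:LICS16}---but immediately after the statement it records (as a ``Fact'') exactly the term-model construction you outline: $T_{\U}X$ is the set of $\Sigma$-terms over $X$ modulo $0$-provability from $\U$ together with $\Gamma_X = \{\, \vdash x =_\delta y \mid d_X(x,y) \leq \delta \,\}$, with distance $d_{T_{\U}X}(t,s) = \inf\{\e \mid {}\vdash t =_\e s \text{ provable from } \U \text{ and } \Gamma_X\}$. The only cosmetic difference is that the paper adjoins the equations in $\Gamma_X$ to the theory as unconditional axioms, whereas you carry them as a hypothesis set on the left of the turnstile; via (\Assum) and (\Cut) these are equivalent, and your observation that any single derivation uses only finitely many premises handles the potential mismatch with the finite-premise format of $\E[\Sigma,X]$.
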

The left adjoint assigns to any $X \in \Met$ a \emph{free quantitative $\Sigma$-algebra} 
$(T_X, \psi^{\U}_X)$ satisfying the quantitative theory $\U$, from which one canonically obtains the monad
$(T_{\U}, \eta^{\U}, \mu^{\U})$, with functor $T_{\U} \colon \Met \to \Met$ mapping 
$X \in \Met$ to the carrier $T_X$ of the free algebra.

\smallskip
Directly from the universal property of the adjunction, we get that for any quantitative
$\Sigma$-algebra $(A, a) \in \KK[\Sigma]{\U}$ and non-expansive map
$\beta \colon X \to A$, there exists a unique homomorphism $h \colon T_{\U}X \to A$ of 
quantitative $\Sigma$-algebras making the diagram below commute
\begin{equation*}
\begin{tikzcd}
X \arrow[rd, "\beta"' ] \arrow[r, "\eta^{\U}_X"] & 
T_{\U}X \arrow[d, dashed, "h"]
& \Sigma T_{\U}X \arrow[l, "\psi^{\U}_X"'] \arrow[d,"\Sigma h"] \\
 & A & \Sigma A \arrow[l, "a"']
\end{tikzcd}
\end{equation*}
The map $h$ is called the \emph{homomorphic extension of $a$ along $\beta$}. 

Notice that, homomorphic extensions provide us with a way of defining maps from $T_{\U}X$, for generic $X \in \Met$.
For example, the multiplication $\mu^{\U} \colon T_{\U}T_{\U} \nat T_{\U}$ is defined at component $X$ as the
homomorphic extension of $\psi^{\U}_{X}$ along $id_{T_{\U}X}$ (\ie, the unique map such that $\mu^{\U}_X \circ \eta^{\U}_X = id_{T_{\U}X}$ and $\mu^{\U}_X \circ \psi^{\U}_{T_{\U}X} = \psi^{\U}_{X} \circ \Sigma \mu^{\U}_X$).

\begin{fact}[The Quantitative Term Monad]
In~\cite{MardarePP:LICS16}, the monad $T_{\U}$ has been characterized in the form of a ``quantitative term monad''. 

Concretely, $T_{\U}X$ is defined as the set of $\Sigma$-terms extended with constants in $X$ modulo $0$-provability from $\U$ and $\Gamma_X = \{ {} \vdash x =_\delta y \mid d_X(x,y) \leq \delta \}$ (\ie, two terms $t,s \in \TT[\Sigma]{X}$ are considered equal if $\vdash t =_0 s$ is provable from $\U$ and $\Gamma_X$). This set is endowed with the distance function
\begin{equation*}
  d_{T_{\U}X}(t,s) = \inf \{\e  \mid {} \vdash t =_\e s \text{ is provable from $\U$ and $\Gamma_X$} \} \,.
\end{equation*}
Intuitively, the distance between the terms $t$ and $s$ is the smallest $\e$ such that $\vdash t =_\e s$
is deducible by using quantitative equations from the theory $\U$ and axioms in $\Gamma_X$ over constants terms in $X$; if $\not\vdash t =_\e s$ (not provable) for any $\e \in \prationals$, the distance is $\infty$.

The unit and multiplication act as in a standard term monad: the unit $\eta^{\U}_X \colon X \to T_{\U}X$ interprets the elements of $X$ as terms; the multiplication $\mu^{\U}_X \colon T_{\U}T_{\U}X \to T_{\U}X$ takes a term over terms in $T_{\U}T_{\U}X$ and flattens it out into a single term $T_{\U}X $ by term composition. The key detail is that these maps are non-expansive w.r.t.\ the distance defined above.
\end{fact}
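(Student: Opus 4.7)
The plan is to construct the left adjoint explicitly by building, for each metric space $X \in \Met$, a free quantitative $\Sigma$-algebra over $X$ using a term-model construction in which the distance function is derived from provability in $\U$. First I would extend the theory with axioms encoding the distances in $X$: let $\Gamma_X = \{ \vdash x =_\delta y \mid d_X(x,y) \leq \delta\}$. On the term set $\TT[\Sigma]{X}$, define the pseudo-distance
\begin{equation*}
d(t,s) = \inf\{\e \in \prationals \mid {} \vdash t =_\e s \text{ is derivable from } \U \cup \Gamma_X\},
\end{equation*}
with $\inf \emptyset = \infty$. Axioms $(\Refl)$, $(\Symm)$, $(\Triang)$, and $(\Weak)$ yield reflexivity, symmetry, the triangle inequality, and monotonicity in $\e$, while $(\Cont)$ ensures that $d(t,s) \leq \e$ already implies that $\vdash t =_\e s$ is derivable (not merely approximable). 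Quotienting $\TT[\Sigma]{X}$ by the kernel $\{(t,s) \mid d(t,s)=0\}$ produces an extended metric space which will serve as $T_{\U}X$.

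Next I would equip $T_{\U}X$ with a $\Sigma$-algebra structure $\psi^{\U}_X$ by interpreting each $f \colon n$ componentwise on representatives; both well-definedness on equivalence classes and non-expansiveness follow from $(\Nexp{f})$. The unit $\eta^{\U}_X \colon X \to T_{\U}X$ sends $x \in X$ to the class of the constant term $x$, and is non-expansive by the very construction of $\Gamma_X$. Verifying that $(T_{\U}X, \psi^{\U}_X)$ satisfies every conditional axiom of $\U$ is then a routine application of $(\Subst)$, $(\Cut)$, and $(\Assum)$: any derivation of the axiom in the ambient deductive system lifts, via substitutions of terms for variables, to a derivation inside $\U \cup \Gamma_X$.

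For the universal property, given any $(A, \Sigma^{\A}) \in \KK[\Sigma]{\U}$ and non-expansive $\beta \colon X \to A$, I define $h \colon \TT[\Sigma]{X} \to A$ by structural recursion with $h(x) = \beta(x)$ for $x \in X$ and $h(f(t_1,\dots,t_n)) = f^{\A}(h(t_1),\dots,h(t_n))$. The core lemma is that derivability of $\vdash t =_\e s$ from $\U \cup \Gamma_X$ implies $d_A(h(t), h(s)) \leq \e$, proved by induction on derivations: non-expansiveness of $\beta$ handles axioms in $\Gamma_X$, while $\A$ satisfying $\U$ handles the remaining axioms of $\U$, and the cases for the deductive rules are straightforward. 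The case $\e = 0$ shows that $h$ descends to the quotient $T_{\U}X$, and general $\e$ yields non-expansiveness of the descended map. Uniqueness is immediate, since $\eta^{\U}_X(X)$ together with the operations $\psi^{\U}_X$ generate all of $T_{\U}X$ and any homomorphism extending $\beta$ is forced on generators and on operation symbols.

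The main technical subtlety, and what sets this apart from a routine term-model argument, is ensuring that the infimum used to define $d$ interacts correctly with derivability: the bound $d(t,s) \leq \e$ must transfer to $d_A(h(t),h(s)) \leq \e$ even though $\e$ is an infimum rather than a witnessed value. This is exactly the role of $(\Cont)$, which closes the collection of derivable facts $\vdash t =_{\e'} s$ under rational-indexed limits of their diameters. I expect this to be the one step that genuinely distinguishes the quantitative setting from classical equational logic, and all other parts of the argument to follow the classical Lindenbaum-style construction once the correct pseudo-distance is in place.
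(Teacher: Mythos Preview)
The paper does not actually prove this statement: it is recorded as a \emph{Fact} imported from~\cite{MardarePP:LICS16}, with no argument given beyond the description of the term-model construction. Your proposal is a faithful outline of the standard proof from that reference, so in spirit it matches what the paper is pointing to rather than something the paper itself carries out.

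One small correction to your final paragraph. The transfer $d(t,s)\leq\e \Rightarrow d_A(h(t),h(s))\leq\e$ does \emph{not} require $(\Cont)$: once your core lemma gives $d_A(h(t),h(s))\leq\e'$ for every rational $\e'$ with $\vdash t=_{\e'}s$ derivable, taking the infimum over such $\e'$ already yields $d_A(h(t),h(s))\leq d(t,s)\leq\e$. Where $(\Cont)$ genuinely earns its keep is in the verification that the term model $(T_{\U}X,\psi^{\U}_X)$ satisfies $\U$: to discharge a hypothesis $t'=_{\e'}s'$ of a conditional axiom via $(\Cut)$, you need $\vdash \iota(t')=_{\e'}\iota(s')$ to be \emph{derivable}, not merely $d(\iota(t'),\iota(s'))\leq\e'$, and that step is exactly an application of $(\Cont)$. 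So the subtlety you flag is real, but it lives in the ``satisfies $\U$'' part of the argument rather than in the universal property.
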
   

In~\cite{BacciMPP18}, it is shown that whenever the quantitative 
theory $\U$ is \emph{basic},
\ie, it can be axiomatized by a set of conditional equations of the form 
\begin{equation*}
\{x_1 =_{\e_1} y_1, \dots, x_n =_{\e_n} y_n\} \vdash t =_\e s \,,
\end{equation*}
where $x_i, y_i \in X$ (\cf~\cite{MardarePP17}), then the EM algebras for $T_{\U}$
are in 1-1 correspondence with the quantitative algebras satisfying $\U$:
\begin{thm} \label{th:EilenbergMoore}
For any basic quantitative theory $\U$ of type $\Sigma$,
$\EMAlg{T_{\U}} \iso \KK[\Sigma]{\U}$.  
\end{thm}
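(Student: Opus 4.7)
The plan is to exhibit mutually inverse functors between $\EMAlg{T_\U}$ and $\KK[\Sigma]{\U}$ that commute with their forgetful functors to $\Met$. A central ingredient is the natural transformation $\iota^\Sigma \colon \Sigma \nat T_\U$ defined by $\iota^\Sigma_X = \psi^\U_X \circ \Sigma \eta^\U_X$, which acts as the canonical ``inclusion of operations as terms''.

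From $\KK[\Sigma]{\U}$ to $\EMAlg{T_\U}$, I would send $(A, a \colon \Sigma A \to A)$ to $(A, \widehat a)$, where $\widehat a \colon T_\U A \to A$ is the homomorphic extension of $a$ along $id_A$ supplied by Theorem~\ref{th:freeQAlgebra}. The unit law for EM-algebras is the defining property of the extension; the multiplication law follows from a standard uniqueness argument, since $\widehat a \circ \mu^\U_A$ and $\widehat a \circ T_\U \widehat a$ are both $\Sigma$-homomorphisms $T_\U T_\U A \to A$ that agree on $\eta^\U_{T_\U A}$. In the opposite direction I would send $(A, \alpha)$ to $(A, \alpha \circ \iota^\Sigma_A)$.

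I expect the main obstacle to be verifying that this second assignment actually lands in $\KK[\Sigma]{\U}$, i.e., that $(A, \alpha \circ \iota^\Sigma_A)$ satisfies $\U$, and it is precisely here that the \emph{basic} hypothesis is essential. Given a basic axiom $\{x_i =_{\e_i} y_i\}_{i=1}^n \vdash t =_\e s$ and an assignment $\iota \colon X \to A$ with $d_A(\iota(x_i), \iota(y_i)) \leq \e_i$, I would endow $X$ with the pullback pseudometric $d_X(x,y) = d_A(\iota(x), \iota(y))$ (quotienting by zero-distance pairs if strict metrics are required), so that $\iota$ becomes non-expansive. Using the term-monad description of $T_\U$, the variable premises are witnessed directly by $\Gamma_X$ via $(\Assum)$, and then $(\Cut)$ against the axiom yields $\vdash t =_\e s$, so that $d_{T_\U X}(t, s) \leq \e$. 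Transporting this bound along the non-expansive composite $\alpha \circ T_\U \iota \colon T_\U X \to A$, which by a short induction on terms coincides with the homomorphic interpretation of $\Sigma$-terms in $(A, \alpha \circ \iota^\Sigma_A)$, gives $d_A(\iota(t), \iota(s)) \leq \e$, as required. The basic hypothesis is crucial here: for non-variable premises, $\Gamma_X$ would not in general suffice to discharge them.

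The remaining mutual-inverse checks are short computations. One direction is direct: $\widehat a \circ \iota^\Sigma_A = \widehat a \circ \psi^\U_A \circ \Sigma \eta^\U_A = a \circ \Sigma(\widehat a \circ \eta^\U_A) = a$. For the other, by uniqueness of homomorphic extensions it suffices to check that $\alpha$ is itself a $\Sigma$-homomorphism $(T_\U A, \psi^\U_A) \to (A, \alpha \circ \iota^\Sigma_A)$ extending $id_A$, which combines naturality of $\iota^\Sigma$, the EM multiplication law, and the identity $\mu^\U_A \circ \iota^\Sigma_{T_\U A} = \psi^\U_A$ (the latter following from $\mu^\U$ being a $\Sigma$-homomorphism together with the monad unit law). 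Functoriality of both constructions is then immediate from these universal properties.
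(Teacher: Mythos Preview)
Your proposal is correct and follows the same construction as the paper: the two functors you define are exactly the $H(A,\alpha) = (A, \alpha \circ \psi^{\U}_A \circ \Sigma\eta^{\U}_A)$ and $K(B,b) = (B, b_\flat)$ that the paper recalls later in the proof of Proposition~\ref{prop:tensorAlgebras} (with full details deferred to \cite[Theorem~4.2]{BacciMPP18}). Your handling of the key verification---using the basic hypothesis together with the term-monad description and the axioms in $\Gamma_X$ to show that the induced $\Sigma$-algebra satisfies $\U$---is the standard argument and is in fact more explicit than what the present paper provides.
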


\subsection{Completion of Quantitative Algebras}

Sometimes it is convenient to consider the quantitative $\Sigma$-algebras whose
carrier is a complete extended metric space.  This class of algebras forms a full
subcategory of $\QA$: the category of complete quantitative algebras, denoted $\CQA$.

Then, it is natural to ask whether the standard Cauchy completion of metric spaces lifts to a notion of Cauchy completion of quantitative algebras. This is done as follows:
\begin{defi}(Algebra Completion)
  The \emph{Cauchy completion of a quantitative $\Sigma$-algebra}
  $\A = (A,\Sigma^\A)$, is the quantitative $\Sigma$-algebra
  $\overline{\A} = (\overline{A}, \Sigma^{\overline{\A}})$, where
  $\overline{A}$ is the Cauchy completion of $A$ and
  $\Sigma^{\overline{\A}} \,{=}\, \{ f^{\overline{\A}} \colon
  \overline{A}^n \to \overline{A} \mid f \colon n \,{\in}\, \Sigma \}$ is
  such that for Cauchy sequences $(b^i_j)_j$ converging to $b^i \in \overline{A}$, 
  for $1 \leq i \leq n$, 
  %$f^{\overline{\A}}(b^1, \ldots, b^n) =\lim_j f^{\A}(b^1_j, \ldots, b^n_j)$.
  \begin{equation*}
 	f^{\overline{\A}}(b^1, \ldots, b^n) =\lim_j f^{\A}(b^1_j, \ldots, b^n_j) \,.
  \end{equation*}
\end{defi}

The above definition extends to a functor $\CC \colon \QA \to \CQA$, also called the Cauchy completion functor, mapping a quantitative algebra to its completion. As happens with metric spaces, this functor is the left adjoint to the embedding $\CQA \hookrightarrow \QA$.

\smallskip
Let $\U$ be a quantitative equational theory. Similarly to before, we consider the 
full subcategory of complete quantitative $\Sigma$-algebras that are models of $\U$, denoted by $\CC\KK{\Sigma, \U}$.
Then, given an algebra $\A$ satisfying $\U$, an interesting question is whether its completion $\overline{\A}$ is still a model for all the equations in $\U$. In other words, does the Cauchy completion functor restrict to 
$\CC \colon \KK{\Sigma, \U} \to \CC\KK{\Sigma,\U}$?

The answer is positive whenever $\U$ can be
axiomatized by a collection of \emph{continuous schemata} of quantitative equations, \ie, sets of quantitative equations of the form
\begin{align*}
\{x_i =_{\e_i} y_i \mid i=1..n\} \vdash t =_\e s \,, \quad
\text{ for all $\e \geq f(\e_1, \dots, \e_n)$,}
\end{align*}
where $f \colon \preals^n \to \preals$ is a continuous real-valued function, $\e, \e_i \in \prationals$, and $x_i, y_i \in X$. We call such a theory \emph{continuous}.

\begin{rem}
Asking for a theory to be continuous is necessary. For a counterexample, consider a signature having a single unary function symbol $g \colon 1$ and a theory $\U$ with axiom
\begin{equation*}
\{x =_2 y \} \vdash g(x) =_1 g(y) \,.
\end{equation*}
A quantitative algebra that is a model of $\U$ is given by the union of open intervals on the reals $[0,1) \cup (3,4]$ (with usual metric) by interpreting $g$ as the identity function. The Cauchy completion of this algebra has the closed set $[0,1] \cup [3,4]$ as the carrier and interprets $g$ again as the identity.
However, the axiom above is not satisfied in $[0,1] \cup [3,4]$ as this would require $|g(1) - g(3)| \leq 1$.
\end{rem}

When $\U$ is a continuous theory, the Cauchy completion functor $\CC \colon \KK{\Sigma, \U} \to \CC\KK{\Sigma,\U}$ is left adjoint to the functor embedding $\CC\KK{\Sigma,\U}$ into $\KK{\Sigma,\U}$.

Moreover, for this class of theories, a similar result to Theorem~\ref{th:freeQAlgebra} also holds.
\begin{thm}[Free Complete Algebra~\cite{MardarePP:LICS16}] \label{th:freeCQAlgebra}
For any continuous quantitative equational theory $\U$ of type $\Sigma$, 
the forgetful functor $\CC\KK[\Sigma]{\U} \to \CMet$ has a left adjoint.
\end{thm}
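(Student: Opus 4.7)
The plan is to realize the desired left adjoint as a composition of two adjunctions already made available in the excerpt. Define $L \colon \CMet \to \CC\KK[\Sigma]{\U}$ by $L X = \overline{T_{\U} X}$, i.e., take the free quantitative $\Sigma$-algebra on $X$ (viewed inside $\Met$ via the full embedding $\CMet \hookrightarrow \Met$), and then Cauchy-complete it as a quantitative algebra. Write $\iota \colon \CC\KK[\Sigma]{\U} \hookrightarrow \KK[\Sigma]{\U}$ for the inclusion and $U \colon \CC\KK[\Sigma]{\U} \to \CMet$ for the forgetful functor; note that $U$ is well-defined because a complete algebra has complete carrier.

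To establish $L \dashv U$, the strategy is to chain together the following natural bijections, for each $X \in \CMet$ and $(A, \Sigma^A) \in \CC\KK[\Sigma]{\U}$:
\begin{align*}
\CC\KK[\Sigma]{\U}\bigl(\overline{T_{\U}X},\, (A,\Sigma^A)\bigr)
&\cong \KK[\Sigma]{\U}\bigl(T_{\U}X,\, \iota(A,\Sigma^A)\bigr) \\
&\cong \Met(X, A) \\
&= \CMet\bigl(X,\, U(A,\Sigma^A)\bigr).
\end{align*}
The first isomorphism is the algebra-level Cauchy-completion adjunction $\CC \dashv \iota$ recalled immediately before the theorem statement (valid because $\U$ is continuous). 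The second is the free-algebra adjunction provided by Theorem~\ref{th:freeQAlgebra}. The third identification is immediate: $\CMet$ is a full subcategory of $\Met$, and both $X$ and the carrier $A$ already live in $\CMet$. Naturality in both arguments is inherited from the constituent adjunctions.

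The only piece of substantive content here is the first isomorphism, whose core assertion is that the Cauchy completion of a quantitative $\Sigma$-algebra satisfying $\U$ is itself a model of $\U$; this is exactly where continuity of $\U$ is indispensable, as the counterexample following the definition of continuous theory demonstrates. Consequently, the main obstacle, should one wish to avoid invoking the preceding paragraph as a black box, is verifying this closure property: given any continuous schema $\{x_i =_{\e_i} y_i \mid i=1..n\} \vdash t =_\e s$ with $\e \geq f(\e_1,\ldots,\e_n)$, one must show that every assignment $\iota \colon X \to \overline{A}$ satisfies the instance. The proof would approximate $\iota$ by assignments into $A$ using Cauchy sequences, apply the axiom inside $\A$ at slightly enlarged parameters $\e_i + \delta$, and then pass to the limit, using continuity of $f$ together with the definition of $f^{\overline{\A}}$ as a limit to conclude the inequality at the required bound $\e$. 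Once this closure property is in hand, the adjunction $\CC \dashv \iota$ formally follows, and the chain of bijections above completes the proof.
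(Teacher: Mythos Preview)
Your approach is correct. The paper does not actually supply a proof of this theorem; it is cited from \cite{MardarePP:LICS16} and stated without argument. However, the paragraph immediately following the theorem describes the left adjoint exactly as you construct it, assigning to $X \in \CMet$ the complete algebra $(\CC T_{\U}X, \CC \psi^{\U}_X)$, so your composite-of-adjunctions argument recovers precisely the intended construction.
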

As a direct consequence of the above and Theorem~\ref{th:freeQAlgebra}, when $\U$ is continuous, for any
  $X \in \CMet$, quantitative $\Sigma$-algebra $(A, a)$ in
  $\CC\KK[\Sigma]{\U}$ and non-expansive map $\beta \colon X \to A$, there
  exists a unique homomorphism $h \colon \CC T_{\U} \to A$ making the following diagram commute
\begin{equation*}
\begin{tikzcd}
X \arrow[rd, "\beta"' ] \arrow[r, "\CC \eta^{\U}_X"] & 
\CC T_{\U}X \arrow[d, dashed, "h"]
& \Sigma \CC T_{\U}X \arrow[l, "\CC \psi^{\U}_X"'] \arrow[d,"\Sigma h"] \\
 & A & \Sigma A \arrow[l, "a"']
\end{tikzcd}
\end{equation*}
This, in particular, tells us that $(\CC T_{\U}X, \CC \psi^{\U}_X)$ is the \emph{free complete quantitative algebra} for an arbitrary metric space $X$, implying that $\CC T_{\U}$ is the free monad on $\U$ in $\CMet$.

\smallskip

Note that, by definition, continuous theories are basic. Thus, by essentially the same arguments of  Theorem~\ref{th:EilenbergMoore}, we have a 1-1 correspondence between the EM algebras for $\CC T_{\U}$ and the quantitative algebras satisfying a continuous theory $\U$.
\begin{thm} \label{th:CompleteEilenbergMoore}
For any continuous quantitative theory $\U$ of type $\Sigma$,
$\EMAlg{\CC T_{\U}} \iso \CC\KK[\Sigma]{\U}$.  
\end{thm}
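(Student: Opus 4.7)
The plan is to mirror the proof of Theorem~\ref{th:EilenbergMoore} in the complete setting, substituting $\CMet$ for $\Met$, $\CC T_{\U}$ for $T_{\U}$, and invoking Theorem~\ref{th:freeCQAlgebra} in place of Theorem~\ref{th:freeQAlgebra}. The key observation, already noted just before the statement, is that every continuous theory is basic, so all the universal-property arguments of the basic case remain available in their Cauchy-completed form.

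First I would construct a functor $\Phi \colon \CC\KK[\Sigma]{\U} \to \EMAlg{\CC T_{\U}}$. Given a complete model $(A,\sigma^A)$ of $\U$, Theorem~\ref{th:freeCQAlgebra} applied at $X = A$ with $\beta = id_A$ produces a unique non-expansive map $h \colon \CC T_{\U} A \to A$ satisfying $h \circ \CC\eta^{\U}_A = id_A$ and $h \circ \CC\psi^{\U}_A = \sigma^A \circ \Sigma h$. The first identity is the EM unit law; the EM multiplication law $h \circ \mu_A = h \circ \CC T_{\U} h$ (writing $\mu$ for the multiplication of $\CC T_{\U}$) follows by verifying that both composites satisfy the two defining equations of the homomorphic extension of $h$ along $\CC \eta^{\U}_{\CC T_{\U} A}$ and then invoking uniqueness. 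This uses that $\mu_A$ is itself the homomorphic extension of $\CC \psi^{\U}_A$ along $id_{\CC T_{\U} A}$, and that $\CC T_{\U} h$ is a $\Sigma$-homomorphism by naturality.

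For the inverse functor $\Psi \colon \EMAlg{\CC T_{\U}} \to \CC\KK[\Sigma]{\U}$, I would equip the carrier of an EM algebra $(A,a)$ with the $\Sigma$-structure $\sigma^A := a \circ \CC\psi^{\U}_A \circ \Sigma(\CC\eta^{\U}_A)$, which is non-expansive as a composition of non-expansive maps. The step I expect to be most delicate is showing that $(A,\sigma^A)$ satisfies every $\Gamma \vdash t =_\e s \in \U$. My strategy is to take an arbitrary assignment $\iota \colon X \to A$, lift it to $\hat\iota := a \circ \CC T_{\U} \iota$, and use the EM unit/multiplication laws to check that $\hat\iota$ computes the $\sigma^A$-interpretation of terms over $X$. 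Since the free algebra $(\CC T_{\U} A, \CC \psi^{\U}_A)$ is itself a model of $\U$ by Theorem~\ref{th:freeCQAlgebra}, the required inequality holds in $\CC T_{\U} A$; non-expansiveness of $a$ then pushes it down to $A$. The continuity hypothesis on $\U$ is what guarantees that axioms coming from continuity schemata remain valid after Cauchy completion, which is essential for this transport.

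Finally, since $\Phi$ and $\Psi$ act identically on underlying objects, mutual inverseness reduces to checking that the two translations between $\Sigma$-structures and EM structures are inverse; this is yet another instance of uniqueness of homomorphic extensions. Functoriality and compatibility with morphisms are automatic: a non-expansive map that commutes with either kind of structure necessarily commutes with the other, since the translated structure is uniquely determined from the original by Theorem~\ref{th:freeCQAlgebra}.
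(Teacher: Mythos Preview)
Your proposal is correct and matches the paper's approach exactly: the paper gives no detailed argument at all, stating only that ``by essentially the same arguments of Theorem~\ref{th:EilenbergMoore}'' the result follows once one observes that continuous theories are basic. Your write-up is simply a spelled-out version of that same transfer, replacing Theorem~\ref{th:freeQAlgebra} by Theorem~\ref{th:freeCQAlgebra} throughout; the one point to tidy is that the verification that $(A,\sigma^A)\models\U$ should take place in $\CC T_{\U}A$ (lifting the assignment via $\CC\eta^{\U}_A\circ\iota$) rather than in $\CC T_{\U}X$, but your text already signals this.
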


%%%%%%%%%%%%%%%%%%%%%%%%%%%%%%%%%%%%%%%%%%%%%%
\section{Algebraic Presentation of Monads over Metric Spaces: Examples}\label{sec:examples} 

A presentation of a $\Set$ monad $T$ is an algebraic theory $(\Sigma, E)$ (\ie, a signature $\Sigma$ and 
a set $E$ of equations $s = t$ between $\Sigma$-terms) such that the full subcategory of the
universal algebras that satisfy all the equations in $E$ is isomorphic to 
the Eilenberg-Moore category $\EMAlg{T}$. If $T$ has a presentation $(\Sigma, E)$, then it is algebraic, because it is isomorphic to the (term) monad $T_E$ freely generated from the equations in $E$.

As in this paper we deal with monads on $\Met$, their presentations will be given in terms of \emph{quantitative algebraic theories} $(\Sigma, \U)$ (\ie, a signature $\Sigma$ and a quantitative equational theory $\U$ of type $\Sigma$) and, in complete analogy with the 
above, $(\Sigma, \U)$ is a presentation for $T$, if the category of quantitative algebras 
that are models of $\U$ is isomorphic to the Eilenberg-Moore category of $T$ 
(in short, $\KK{\Sigma,\U} \iso \EMAlg{T}$). 

In this section, we propose quantitative versions of several $\Set$ monads classically used as  
computational effects in programming languages and for each of them provide a quantitative equational presentation in the sense explained above. The computational effects we consider are: termination and exceptions (Section~\ref{sec:termination}), interactive input (Section~\ref{sec:controperators}), reading/writing (Section~\ref{sec:ReaderWriterMonads}), nondeterminism (Section~\ref{sec:Hausdorff}), and probabilistic choice (Section~\ref{sec:probchoice}).

%%%%%%%%%%%%%%%%%%%%%%%%%%%%%%%%%%%%%%%%%%%%%%
\subsection{Termination and Exceptions}
\label{sec:termination}
The monadic effect for termination in $\Set$ is given by the \emph{termination monad} (a.k.a.\ \emph{maybe monad}), denoted by $(- + 1)$, that maps a set $X$ to $X + 1$, where $+$ denotes the coproduct (hence, disjoint union) and $1 = \{ * \}$ is the terminal object in $\Set$, representing the effect of terminating the computation in an error state. The unit and multiplication are canonically defined from the universal property of the coproduct:
\begin{align*}
\inj_{l} \colon X \to (X + 1) &&& \textsc{unit} \\
[id_{X+1}, \inj_{r}] \colon ((X + 1) + 1) \to (X + 1) &&& \textsc{multiplication}
\end{align*}
where in the above $\inj_{l}$ and $\inj_r$ are the left and right canonical injections into the coproduct $X+1$.
The \emph{exception monad}, denoted by $(- + E)$, generalizes 
the above by mapping a set $X$ to $X + E$, where $E$ is a fixed set of exceptions. Intuitively, an effectful computation of this type, rather than just terminating in a generic error state, allows one to raise an exception $e \in E$ that represents extra information on the causes of the termination.
This monad has a straightforward algebraic presentation, with signature 
\begin{equation*}
\Sigma_E = \{ \textsf{raise}_e \colon 0 \mid e \in E\} \,,
\end{equation*}
having only nullary operation symbols (\ie, constants) $\textsf{raise}_e$, for each exception $e \in E$ and
equational theory containing only identities $t=t$ between terms (\ie, the trivial theory with no axioms on the constant symbols in the signature).

In the quantitative case, the corresponding exception monad on $\Met$ is still given by
$(- + E)$, with the only difference being that now $E$ is an extended metric space with
metric measuring the distance between exceptions. Computationally, this means that one may
measure the difference between the different types of terminations. This interpretation can be
useful, for example, in scenarios where exceptions carry the time stamp of the moment
they have been thrown, thus allowing one to compare program implementations by measuring 
the frequency of which exceptions are thrown.

For $E$ an extended metric space of exceptions, we define the \emph{quantitative algebraic theory of 
exceptions} over $E$, by taking the same signature as above, namely $\Sigma_E$, and adding to 
the theory the quantitative equations
\begin{equation*}
 {\vdash}\, \textsf{raise}_{e_1} \,{=_{\delta}}\, \textsf{raise}_{e_2} \,, \qquad \text{for $\delta \geq d_E(e_1,e_2)$}\,,
\end{equation*}
for any pair of exceptions $e_1,e_2 \in E$ and positive rational $\delta$. The r{\^o}le of this axiom is to lift to 
the set of terms the underlying metric of $E$. We denote this quantitative theory by $\mathcal{E}_E$.

\medskip
It is not difficult to show that for any $X \in \Met$, the quantitative $\Sigma_E$-algebra $(X + E, \phi_X)$
interpreting $\textsf{raise}_e \colon 0 \in \Sigma_E$ as $e \in X + E$ for each exception $e \in E$, formally defined by
\begin{align*}
  \phi_X \colon \Sigma_E (X+E) \to X + E
  &&
  \phi_X(\textsf{raise}_e) = e \,,
\end{align*}
is isomorphic to the free quantitative algebra in $\KK{\Sigma_E, \mathcal{E}_E}$. From this, we obtain:
\begin{thm} \label{th:isoExceptionMet}
The monads $T_{\mathcal{E}_E}$ and $(- + E)$ on $\Met$ are isomorphic. 
\end{thm}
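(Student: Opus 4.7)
The plan is to identify $(X+E, \phi_X)$ as the free quantitative $\Sigma_E$-algebra on $X$ satisfying $\mathcal{E}_E$, with unit $\inj_l \colon X \to X+E$. Since Theorem~\ref{th:freeQAlgebra} characterises $(T_{\mathcal{E}_E}X, \psi^{\mathcal{E}_E}_X)$ by the same universal property, uniqueness of free objects will give a natural isomorphism $T_{\mathcal{E}_E}X \iso X+E$ compatible with the unit; compatibility with the multiplication then follows automatically because both multiplications are obtained by homomorphic extension of the respective algebra structures along the identity.

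First I would check that $(X+E, \phi_X)$ lies in $\KK{\Sigma_E, \mathcal{E}_E}$. Non-expansiveness of the operations is vacuous since every symbol in $\Sigma_E$ is a constant. For satisfaction of the axioms, recall that the coproduct $X+E$ in $\Met$ has underlying set $X \sqcup E$ with distance $d_E(e_1,e_2)$ between two points of the right summand (and $\infty$ between the summands); see Appendix~\ref{app:exmetric}. Hence
\[
d_{X+E}(\phi_X(\textsf{raise}_{e_1}), \phi_X(\textsf{raise}_{e_2})) = d_E(e_1,e_2) \leq \delta
\]
whenever $\delta \geq d_E(e_1,e_2)$, as required.

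Next I would verify the universal property. Given $(A,a) \in \KK{\Sigma_E, \mathcal{E}_E}$ and a non-expansive $\beta \colon X \to A$, any $\Sigma_E$-homomorphism $h \colon (X+E, \phi_X) \to (A,a)$ with $h \circ \inj_l = \beta$ must satisfy $h(e) = h(\phi_X(\textsf{raise}_e)) = a(\textsf{raise}_e)$ for each $e \in E$, so $h$ is forced to be $[\beta, \alpha]$ with $\alpha(e) = a(\textsf{raise}_e)$. Thus uniqueness is automatic, and the only real content is showing $h$ is non-expansive. On the $X$-summand it agrees with $\beta$; on the $E$-summand, satisfaction of $\mathcal{E}_E$ in $(A,a)$ gives $d_A(\alpha(e_1), \alpha(e_2)) \leq d_E(e_1,e_2)$; between summands the distance in $X+E$ is $\infty$, making the constraint vacuous. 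This is the main (and really only) technical point, and it is exactly why the axioms of $\mathcal{E}_E$ were designed to imprint the metric of $E$ on the constant terms.

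Having established the universal property, the isomorphism of quantitative $\Sigma_E$-algebras $(T_{\mathcal{E}_E}X, \psi^{\mathcal{E}_E}_X) \iso (X+E, \phi_X)$ is natural in $X$: for $f \colon X \to Y$, applying the universal property to $\inj_l \circ f \colon X \to Y+E$ identifies $T_{\mathcal{E}_E}f$ with $f + id_E$. Finally, the monad multiplications agree because both $\mu^{\mathcal{E}_E}_X$ and $[id_{X+E}, \inj_r]$ are the unique homomorphic extensions of their respective algebra structures along the identity on $X+E$. Therefore $T_{\mathcal{E}_E} \iso (-+E)$ as monads on $\Met$.
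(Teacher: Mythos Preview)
Your proposal is correct and follows exactly the approach the paper sketches: the paper merely asserts that $(X+E,\phi_X)$ is the free quantitative $\Sigma_E$-algebra in $\KK{\Sigma_E,\mathcal{E}_E}$ and concludes, while you spell out the verification of the axioms, the universal property, naturality, and compatibility of the multiplications. There is nothing to add; your treatment is a faithful elaboration of the paper's one-line proof.
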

As the quantitative theory $\mathcal{E}_E$ is basic, by Theorems~\ref{th:EilenbergMoore} and \ref{th:isoExceptionMet}, we have that $(\Sigma_E, \mathcal{E}_E)$ is a presentation of the exception monad $(- + E)$ on $\Met$ 
(\ie, $\EMAlg{(- + E)} \iso \KK{\Sigma_E, \mathcal{E}_E}$).

\medskip
The exception monad $(- + E)$ is well defined also in $\CMet$, the only difference being that one assumes 
$E$ to a complete metric space. 
As the theory $\mathcal{E}_E$ is continuous, by similar arguments to the above, $(\Sigma, \mathcal{E}_E)$ is a presentation of this monad in $\CMet$, that is 
$\EMAlg{(- + E)} \iso \CC\KK[\Sigma_E]{\mathcal{E}_E}$.
\begin{thm} \label{th:isoExceptionCMet}
The monads $\CC T_{\mathcal{E}_E}$ and $(- + E)$ on $\CMet$ are isomorphic. 
\end{thm}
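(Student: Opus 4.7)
The plan is to parallel the proof of Theorem~\ref{th:isoExceptionMet} by directly verifying that, for $X \in \CMet$, the quantitative algebra $(X+E, \phi_X)$ (with the same interpretation $\phi_X(\textsf{raise}_e) = \inj_E(e)$ used in the $\Met$ case) is the free complete quantitative $\Sigma_E$-algebra in $\CC\KK[\Sigma_E]{\mathcal{E}_E}$ over $X$. The theory $\mathcal{E}_E$ is manifestly continuous (its only nontrivial axiom has empty premises and continuous bound $d_E(e_1, e_2)$), so Theorem~\ref{th:freeCQAlgebra} guarantees that such a free object exists and is uniquely characterized up to isomorphism by its universal property.

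First, I would check that $X+E$ is already a complete metric space whenever $X, E \in \CMet$. This is the key metric fact that makes the completion trivial: recall from Example~\ref{ex:monoidalclosed}-style reasoning that the coproduct in $\Met$ assigns distance $\infty$ between points of different components. Therefore any Cauchy sequence in $X+E$ must eventually lie entirely in one component, and hence converges in that component by completeness of $X$ (respectively $E$). Consequently $\CC(X+E) \iso X+E$, and the quantitative algebra structure $\phi_X$ is preserved verbatim under the embedding into $\CMet$.

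Next I would verify the universal property directly. Given a complete quantitative $\Sigma_E$-algebra $(A, a)$ satisfying $\mathcal{E}_E$ and a non-expansive map $\beta \colon X \to A$, homomorphy forces any extension $h \colon X + E \to A$ to satisfy $h \circ \inj_X = \beta$ and $h(\inj_E(e)) = a(\textsf{raise}_e^A)$ for every $e \in E$. This determines $h$ uniquely via the universal property of the coproduct in $\Met$; non-expansiveness of $h$ reduces to non-expansiveness on each component, which holds by assumption on $\beta$ and because the component $e \mapsto a(\textsf{raise}_e^A)$ is non-expansive precisely by the axiom of $\mathcal{E}_E$ (instantiating the quantitative equation $\vdash \textsf{raise}_{e_1} =_\delta \textsf{raise}_{e_2}$ for $\delta \geq d_E(e_1,e_2)$ in $(A, a)$). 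Thus $(X+E, \phi_X)$ is the free complete quantitative $\mathcal{E}_E$-algebra over $X$.

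Finally, this identifies the functor part $\CC T_{\mathcal{E}_E} \iso (- + E)$ on $\CMet$ naturally in $X$. To lift this to an isomorphism of monads, I would observe that both sides are characterized by the same universal property, so the natural transformations unit and multiplication are forced to agree under the isomorphism (by uniqueness of homomorphic extensions). The step I expect to be mildly delicate is only the first one — confirming that $X+E$ is already complete — since everything else is then forced by universality and parallels the $\Met$ argument word for word.
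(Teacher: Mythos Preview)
Your proposal is correct and follows essentially the same route the paper indicates: the paper treats Theorem~\ref{th:isoExceptionCMet} as an immediate analogue of Theorem~\ref{th:isoExceptionMet}, invoking only that $\mathcal{E}_E$ is continuous and that the same free-algebra verification goes through in $\CMet$. Your explicit check that $X+E$ is already complete (because Cauchy sequences must eventually lie in a single summand) is exactly the extra metric fact needed to make the ``similar arguments'' precise, and the rest of your argument faithfully unpacks the universal property that the paper leaves implicit.
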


%%%%%%%%%%%%%%%%%%%%%%%%%%%%%%%%%%%%%%%%%%%%%%
\subsection{Interactive Input}
\label{sec:controperators}

Interactive input on a (nonempty) finite set $I = \{i_1, \dots, i_n\}$ of symbols, can be expressed
by a $n$-ary operation $\textsf{input}(t_1, \dots, t_n)$ representing the computation that proceeds 
as $t_j$ on input $i_j$. In $\Set$, the corresponding monadic effect is given in terms of the free monad 
on $(-)^{|I|}$, with algebraic presentation given by the trivial equational theory with no axioms
on the input operations.

In the quantitative setting, one may wish the input operation to be contractive (\ie, $c$-Lipschitz continuous for some $0 < c < 1$) so that repeated input operations eventually converge to a fixed point on complete metric spaces (\cf\ Banach fixed point theorem). This can be expressed by means of the following 
quantitative equations
\begin{equation*}
 \{x_1 \,{=_\e}\, y_1, \dots, x_n \,{=_\e}\, y_n\} \,{\vdash}\, \textsf{input}(x_1,\dots, x_n) \,{=_{\delta}}\, \textsf{input}(y_1,\dots, y_n) \,, \qquad \text{for $\delta \geq c \e$}
\end{equation*}
expressing that the input operation is contractive (with contractive factor $c$). 

The corresponding quantitative monadic effect on $\Met$ (and $\CMet$ too) is given by the free monad on $c \cdot (-)^{|I|}$,
where $c \cdot -$ is the \emph{rescaling functor}, mapping a metric space $(X,d_X)$ to $(X, c \cdot d_X)$.

\medskip
The quantitative algebras for interactive inputs described above are a particular instance
of the \emph{algebras of contractive operators} from~\cite{BacciMPP18}, which we recall below.

\subsubsection{Algebras of contractive operators}
A signature of contractive operators $\Sigma$ is an (at most countable) collection of
function symbols $f$ with associated arity $n \in \naturals$ and \emph{contractive factor} $0 < c < 1$. We write this
as $f \colon \tupl{n,c} \in \Sigma$.
The quantitative theory for $\Sigma$, written $\O{\Sigma}$, is the
smallest theory satisfying, for each
$f \colon \tupl{n,c} \in \Sigma$, the quantitative equations
\begin{align*}
(\Lip{f})\,\,
& \{x_1 \,{=_\e}\, y_1, \dots, x_n \,{=_\e}\, y_n\} \,{\vdash}\, f(x_1,\dots, x_n) \,{=_{\delta}}\, f(y_1,\dots, y_n) \,,  \qquad \text{for $\delta \geq c \e$}\,.
\end{align*}
The axiom ($\Lip{f}$) is just asking the interpretation of $f$ to be $c$-Lipschitz continuous.

The quantitative algebras that are models for $\O{\Sigma}$ are called \emph{algebras of the contractive 
signature $\Sigma$}, and we denote their category as $\KK[\Sigma]{\O{\Sigma}}$.

\subsubsection{Monads of contractive operators}
\label{sec:ContractiveOp}

For a contractive signature $\Sigma$, we define a modification of the signature
endofunctor on $\Met$ by:
\begin{equation}
\tilde\Sigma X = \coprod_{f \colon \tupl{n,c} \in \Sigma} c \cdot X^n \,.
\label{eq:contractiveSignFunctor}
\end{equation}

It is not difficult to show that the quantitative $\Sigma$-algebras satisfying 
$\O{\Sigma}$ are in one-to-one correspondence with the algebras of $\tilde\Sigma$,
that is $\KK[\Sigma]{\O{\Sigma}} \iso \Alg{\tilde\Sigma}$. In virtue of this, we will pass between these two points of view as convenient, and say that
an algebra of $\tilde\Sigma$ satisfies $\O{\Sigma}$.

Next we show that the free monad $T_{\O{\Sigma}}$ is isomorphic to $\tilde\Sigma^*$, 
the free monad on $\tilde\Sigma$.
For this result, we first need some discussion about sufficient conditions for the existence of free monads
on an endofunctor.

\begin{rem} \label{rem:freeExists}
Given any endofunctor $H$ on a category $\mathbf{C}$, we
write $(\mu y. Hy, \alpha_H)$ for the initial $H$-algebra, if it
exists. If $\mathbf{C}$ has binary coproducts, the free \mbox{$H$-algebra}
on $X \in \mathbf{C}$ can be identified with
$(\mu y. (Hy + X), \alpha_{H + X})$, and the one exists if and only if the
other does. These free algebras exist if, for example, $\mathbf{C}$ is
locally countably presentable and $H$ has countable rank. In this case
the forgetful functor $U^H \colon \Alg{H} \to \mathbf{C}$ has a left adjoint,
mapping $\mathbf{C}$-objects to their corresponding free $H$-algebra.
\end{rem}

We see from Remark~\ref{rem:freeExists} that, if $\mathbf{C}$ has binary coproducts, then $H^*$ can
be identified with $\mu y. (Hy + -)$ and the former exists if and only if the
other does. We further see that if $\mathbf{C}$ is locally countably
presentable and $H$ has countable rank, then $H^*$ exists~\cite{Kelly1980}.
Moreover, as $H^*$ is algebraic, the Eilenberg-Moore category $\EMAlg{H^*}$ is isomorphic 
to the category $\Alg{H}$ of algebras of $H$ (see Section~\ref{sec:monadsPrelim}).

Therefore, since $\Met$ is locally countably presentable~\cite{AdamekMM12} (see also Appendix~\ref{sec:EMet-lcp}) and $\tilde\Sigma$ has countable rank, the free algebra for $\tilde\Sigma$ exists and so does the free monad $\tilde\Sigma^*$.

As $\EMAlg{\tilde\Sigma^*}$ and $\KK{\Sigma, \O{\Sigma}}$ are isomorphic
and $\O{\Sigma}$ is basic, by freeness of $T_{\O{\Sigma}}$ (Theorem~\ref{th:freeQAlgebra}) 
the following holds: 
\begin{thm} \label{th:isoContractiveMonad}
The monads $T_{\O{\Sigma}}$ and $\tilde\Sigma^*$ on $\Met$ are isomorphic. 
\end{thm}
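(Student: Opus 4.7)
The plan is to chain together three isomorphisms of categories over $\Met$, each of which is essentially already available. By the paragraph just before the theorem, $\KK[\Sigma]{\O{\Sigma}} \iso \Alg{\tilde\Sigma}$: a quantitative $\Sigma$-algebra satisfies $(\Lip{f})$ for each $f \colon \tupl{n,c}$ precisely when its interpretation of $f$ is $c$-Lipschitz, and this is equivalent to the combined map $\tilde\Sigma X = \coprod_f c \cdot X^n \to X$ being non-expansive; so this iso commutes with the forgetful functors down to $\Met$. Since $\O{\Sigma}$ is basic, Theorem~\ref{th:EilenbergMoore} then gives $\EMAlg{T_{\O{\Sigma}}} \iso \KK[\Sigma]{\O{\Sigma}}$ over $\Met$. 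Finally, because $\Met$ is locally countably presentable and $\tilde\Sigma$ has countable rank, Remark~\ref{rem:freeExists} ensures that $\tilde\Sigma^*$ is algebraic, yielding $\EMAlg{\tilde\Sigma^*} \iso \Alg{\tilde\Sigma}$ over $\Met$.

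Composing these three yields an isomorphism $\EMAlg{T_{\O{\Sigma}}} \iso \EMAlg{\tilde\Sigma^*}$ of Eilenberg-Moore categories commuting with the forgetful functors. I then invoke the standard fact that a monad over a fixed base is determined up to isomorphism by its Eilenberg-Moore adjunction (the left adjoint to the forgetful functor is unique up to natural iso, and unit and multiplication are pinned down by its universal property) to conclude $T_{\O{\Sigma}} \iso \tilde\Sigma^*$ as monads on $\Met$, not merely as endofunctors.

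Equivalently and more concretely, one could argue pointwise: for each $X \in \Met$, both $T_{\O{\Sigma}}X$ and $\tilde\Sigma^*X$ are carriers of free algebras in $\KK[\Sigma]{\O{\Sigma}}$ on $X$, so by the universal property they are uniquely isomorphic as such, and the iso is natural in $X$ and compatible with units and multiplications, again by uniqueness. I do not anticipate a real obstacle; the work is mostly a reshuffling of universal properties already in hand. The only mild verification is that the first iso preserves underlying objects, which is transparent from the coproduct formula defining $\tilde\Sigma$.
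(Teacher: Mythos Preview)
Your proposal is correct and essentially matches the paper's own argument: the paper also uses $\KK[\Sigma]{\O{\Sigma}} \iso \Alg{\tilde\Sigma} \iso \EMAlg{\tilde\Sigma^*}$ together with the basicness of $\O{\Sigma}$, then concludes via freeness (Theorem~\ref{th:freeQAlgebra}) that the two monads coincide. The only cosmetic difference is that the paper appeals directly to Theorem~\ref{th:freeQAlgebra} (uniqueness of the left adjoint to the common forgetful functor) rather than routing through Theorem~\ref{th:EilenbergMoore}, which is precisely your ``equivalently and more concretely'' alternative.
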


The situation is similar in the category $\CMet$ of complete extended metrics.
As $\CMet$ has coproducts and finite products, and rescaling a metric by a factor $0 < c < 1$
preserves completeness, for any contractive signature $\Sigma$, the endofunctor 
$\tilde\Sigma$ defined as in \eqref{eq:contractiveSignFunctor} is well defined in $\CMet$.
Moreover, $\CMet$ is locally countably presentable~\cite{AdamekMM12} and, since $\tilde\Sigma$ has 
countable rank, by Remark~\ref{rem:freeExists} the free monad $\tilde\Sigma^*$ on $\CMet$ 
exists and is algebraic. 

Similar to the previous case, also this time the Eilenberg-Moore category 
$\EMAlg{\tilde\Sigma^*}$ is isomorphic to $\CC\KK{\Sigma, \O{\Sigma}}$.
As $\O{\Sigma}$ is a continuous quantitative theory, by Theorem~\ref{th:CompleteEilenbergMoore}
and repeating the same argument we used before, we obtain:
\begin{thm} \label{th:isoContraciveMonadCompletion}
The monads $\CC T_{\O{\Sigma}}$ and $\tilde\Sigma^*$ on $\CMet$ are isomorphic. 
\end{thm}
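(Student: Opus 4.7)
The plan is to mirror the argument given for Theorem~\ref{th:isoContractiveMonad}, replacing the appeal to Theorems~\ref{th:freeQAlgebra} and~\ref{th:EilenbergMoore} by their continuous counterparts Theorem~\ref{th:freeCQAlgebra} and Theorem~\ref{th:CompleteEilenbergMoore}, both of which apply because the axiom schema $(\Lip{f})$ is continuous with bounding function $f(\e_1,\ldots,\e_n) = c\e$, and continuous theories are basic.

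First, I would confirm that $\tilde\Sigma^*$ exists on $\CMet$: rescaling by a factor $c \in (0,1)$ preserves Cauchy completeness, finite products and countable coproducts exist in $\CMet$, and $\CMet$ is locally countably presentable, so $\tilde\Sigma$ is an endofunctor of $\CMet$ of countable rank. Applying Remark~\ref{rem:freeExists} then furnishes the free monad $\tilde\Sigma^*$ on $\CMet$, which is algebraic; consequently $\EMAlg{\tilde\Sigma^*} \iso \Alg{\tilde\Sigma}$.

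Next, I would establish the $\CMet$-analogue of $\KK[\Sigma]{\O{\Sigma}} \iso \Alg{\tilde\Sigma}$, namely $\CC\KK[\Sigma]{\O{\Sigma}} \iso \Alg{\tilde\Sigma}$, using precisely the same carrier-preserving correspondence as in $\Met$: the axiom $(\Lip{f})$ says exactly that $f^\A \colon A^n \to A$ is $c$-Lipschitz, equivalently non-expansive as a map $c \cdot A^n \to A$, and by the universal property of the coproduct a family of such interpretations packages into a single arrow $\tilde\Sigma A \to A$. Combining this with Theorem~\ref{th:CompleteEilenbergMoore} produces the chain
\begin{equation*}
\EMAlg{\CC T_{\O{\Sigma}}} \iso \CC\KK[\Sigma]{\O{\Sigma}} \iso \Alg{\tilde\Sigma} \iso \EMAlg{\tilde\Sigma^*}
\end{equation*}
of isomorphisms that commute with the respective forgetful functors to $\CMet$, since each identification acts as the identity on underlying complete extended metric spaces.

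Finally, I would invoke the standard fact that a monad on a category is determined up to isomorphism by its Eilenberg-Moore adjunction over the base, to conclude that $\CC T_{\O{\Sigma}} \iso \tilde\Sigma^*$ as monads on $\CMet$. I do not expect any serious obstacle, as all the structural ingredients (completeness, local presentability, algebraicity of $\tilde\Sigma^*$, and continuity of $\O{\Sigma}$) have already been laid out in the preceding discussion; the only point that warrants attention is checking that each link in the above chain is genuinely an isomorphism over $\CMet$, which is immediate from the carrier-preserving nature of each identification.
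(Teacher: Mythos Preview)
Your proposal is correct and follows essentially the same approach as the paper: the paper likewise argues that $\tilde\Sigma$ is a well-defined countable-rank endofunctor on $\CMet$ (so $\tilde\Sigma^*$ exists and is algebraic), observes $\EMAlg{\tilde\Sigma^*} \iso \CC\KK{\Sigma,\O{\Sigma}}$, notes that $\O{\Sigma}$ is continuous so Theorem~\ref{th:CompleteEilenbergMoore} applies, and then repeats the freeness argument of Theorem~\ref{th:isoContractiveMonad}. Your explicit chain of isomorphisms over $\CMet$ and appeal to the determination of a monad by its Eilenberg-Moore adjunction is exactly what the paper means by ``repeating the same argument''.
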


%%%%%%%%%%%%%%%%%%%%%%%%%%%%%%%%%%%%%%%%%%%%%%
\subsection{Reading/Writing}
\label{sec:ReaderWriterMonads}
The monadic effects for reading and writing in $\Set$ are respectively given by the so-called, reader and writer monads. These effects, respectively, allow a computation to read from a finite list of globally declared 
variables, and write on an output tape though to record annotations or just used as standard output. 
Their formal definitions are recalled below. 

Given a set $E$ of input values, the \emph{reader monad} on $\Set$, denoted by $(-)^E$, maps a set $X$ to 
$X^E$, the set all of functions from $E$ to $X$, and acts on morphism $f \colon X \to Y$ as 
$f^E(g) = f \circ g$, for all $g \in X^E$. The unit $\kappa_X \colon X \to X^E$ and multiplication 
$\zeta_X \colon (X^E)^E  \to X^E$ are respectively given as follows, for $x \in X$, $e \in E$, and $g \colon E \to X^E$
\begin{equation}
\begin{aligned}
\kappa_X(x)(e) &= x  && \textsc{unit}
\\
\zeta_X(g)(e) &= f(e)(e) && \textsc{multiplication}
\end{aligned}
\label{eq:readerUnitMult}
\end{equation}

Given a set $\Lambda$ of output values, equipped with a monoidal structure $(\Lambda, *, 0)$, the \emph{writer monad} on $
\Set$, denoted by $(\Lambda \times -)$, acts on sets $X$ as $\Lambda \times X$, where $\times$ denotes the product (hence, Cartesian product), and on morphisms $f \colon X \to Y$ as $(\Lambda \times f)(\alpha, x) = (\alpha, f(x))$, for $x \in X$ and $\alpha \in \Lambda$. The unit  $\tau \colon Id \nat (\Lambda \times -)$ and multiplication $\varsigma \colon (\Lambda \times (\Lambda \times -)) \nat (\Lambda \times -)$ are 
respectively given as follows, for $x \in X$ and $\alpha, \alpha' \in \Lambda$ 
\begin{equation}
\begin{aligned}
\tau_X(x) &= (0, x) \,, && \textsc{unit}
\\
\varsigma_X((\alpha, (\alpha', x))) &= (\alpha* \alpha', x) && \textsc{multiplication}
\end{aligned}
\label{eq:writerUnitMult}
\end{equation}

\medskip
In the quantitative case, one wishes to define analogous monads on the category $\Met$ of extended metric spaces. However, extra care has to be taken as the definition of the above monads crucially exploits the Cartesian closed structure of $\Set$, and we already have seen that $\Met$ is not Cartesian closed (Example~\ref{ex:monoidalclosed}).

\begin{rem} \label{rem:readerCC}
The reader monad is always well defined in a Cartesian closed category $\cat{C}$.
Fix an object $E \in \cat{C}$. The reader monad $(-)^E$ has unit and multiplication
respectively given by 
\begin{align*}
 X \iso X^1 \xrightarrow{\, X^! \,} X^E  && \text{and} &&
 (X^E)^E \iso X^{E \times E} \xrightarrow{\, X^\delta \,} X^E  \,,
\end{align*}
where $! \colon E \to 1$ is the unique map to the terminal object and $\delta \colon E \to E \times E$
the diagonal map $\delta = \tupl{id, id}$.
However, this definition does not generalise to arbitrary monoidal closed categories and $\Met$ 
is an example of such.  The specific problem with $\Met$ is that
$\delta \colon E \to E \mprod E$ is not well defined for arbitrary $E \in \Met$,
as non-expansiveness requires that
\begin{equation*}
d_E(e, e') \geq d_{E \mprod E}(\delta(e), \delta(e')) = d_E(e,e') + d_E(e,e') \,,
\end{equation*}
which holds only when $E$ has the discrete metric (that assigns infinite distance to any pair of distinct elements). From this, we see that a quantitative analogue of the reader 
monad can be obtained if we restrict our attention only to spaces $E$ with discrete metrics. 
\end{rem}

For a set $E$ denote by $\underline{E}$ the corresponding extended 
metric space equipped with discrete metric. 
The reader monad on $\Met$, denoted by $(-)^{\underline{E}}$, 
assigns to each $X \in \Met$ the internal hom $[E, X]$ of (necessarily non-expansive) maps 
from $\underline{E}$ to $X$ with point-wise supremum metric (\cf\ Example~\ref{ex:monoidalclosed}\eqref{MetClosedStructure}) and acts on morphisms $f \colon X \to Y$ as $f^E(g) = f \circ g$, for all $g \in [E,X]$.
The unit and multiplication are defined as in~\eqref{eq:readerUnitMult}, where non-expansiveness 
for the multiplication's components at $X \in \Met$ follows because $\underline{E}$ 
has discrete metric.

\medskip
As for a quantitative analogue of the writer monad, we will assume the set of output values $\Lambda$ to be 
an extended metric space and further require its monoid structure $(\Lambda, *, 0)$ to have  
multiplication $* \colon \Lambda \times \Lambda \to \Lambda$, satisfying the following condition 
\begin{equation}
d_\Lambda(\alpha * \beta, \alpha' * \beta') \leq d_\Lambda(\alpha,\alpha') + d_\Lambda(\beta,\beta') \,,
\label{*nonexp}
\end{equation}
for all $\alpha, \alpha', \beta, \beta' \in \Lambda$, that is, $*$ is a non-expansive map of type $\Lambda \mprod \Lambda \to \Lambda$ in $\cat{Met}$.

Then, the writer monad on $\Met$, denoted by $(\Lambda \mprod -)$, acts on objects $X \in \Met$ as 
 $(\Lambda \mprod X)$, where $\mprod$ denotes the monoidal product discussed in Example~\ref{ex:monoidalclosed}\eqref{MetClosedStructure}, and on morphisms $f \colon X \to Y$ as
$(\Lambda \mprod f)(\alpha, x) = (\alpha, f(x))$, and $\alpha \in \Lambda$. The unit and multiplication 
are defined as in~\eqref{eq:writerUnitMult}, where the assumption \eqref{*nonexp} 
is necessary for proving non-expansiveness for the multiplication's components.

Below we provide quantitative equational representations for these two $\Met$ monads. 

\subsubsection{Reader Algebras}
\label{sec:readeralgebras}
Let $E = \ens{e_1, \dots, e_n}$ be a finite set of input values of which we assume a fixed enumeration.
The \emph{quantitative reader algebras} of type $E$ are the
algebras for the signature 
\begin{equation*}
  \Sigma_{\R[E]} = \ens{ \rd \colon |E| }
\end{equation*}
having only one operator $\rd$ of arity equal to the number of the input values 
in $E$, and satisfying the following quantitative equations
\begin{align*}
{ (\Idem)} 
& \vdash x =_0 \rd(x, \dots, x) \,, \\
{ (\Diag)} 
& \vdash \rd(x_{1,1}, \dots, x_{n,n} ) =_0 
	\rd( \rd(x_{1,1}, \dots, x_{1,n}), \dots, \rd(x_{n,1}, \dots, x_{n,n}) ) \,.
\end{align*}
We call the quantitative theory induced by the equations above, written $\R[E]$ (or simply $\R[]$ when $E$ is clear), \emph{quantitative theory of reading computations}.

Intuitively, the term $\rd(t_1, \dots, t_n)$ 
can be interpreted as the computation that proceeds as $t_i$ after reading the value $e_i$ from its input.
So $\rd$ describes the operation of reading from an input with values in $E$.
The equation (\Idem) says that if we ignore the value of the input the reading of it is not observable;
(\Diag) says that the resulting computation after reading the input is the same no matter how many times we read it. 

\begin{rem}
For the binary case ($|E| = 2$) we can think of $\rd$ as an \emph{if-then-else} statement $b?(S,T)$
checking for the value of a fixed global Boolean variable $b$ and proceeding as $S$ when $b = \text{true}$,
and as $T$ otherwise. In this case, $(\Idem)$ and $(\Diag)$ express the standard program equivalences
\begin{align*}
  S \equiv b?(S,S) &&\text{and} &&
  b?(S,T) \equiv b?\big(b?(S,T'),b?(S',T)\big) \,. 
\end{align*}
We should also remark that (\Idem) and (\Diag) are purely equational judgements
and are the equations presenting the reader monad $(-)^E$ on $\Set$.
Shortly, we will see that they also provide a presentation for the reader monad $(-)^{\underline{E}}$ on $\Met$. This should not surprise, since having input symbols equipped with the discrete metric makes 
the two monads equivalent.
\end{rem}

Next we show that the reader monad $(-)^{\underline{E}}$ is isomorphic to the free monad $T_{\R[]}$ on $\R[]$. Consequently, as the theory $\R[]$ is basic, by Theorem~\ref{th:EilenbergMoore}, $\EMAlg{(-)^{\underline{E}}} \iso \KK[\Sigma_{\R[]}]{\R[]}$.
In other words, $(\Sigma_{\R[]}, \R[])$ is a quantitative equational presentation of the monad $(-)^{\underline{E}}$ on $\Met$. 
\begin{thm} \label{th:isoReaderMonad}
The monads $T_{\R[]}$ and $(-)^{\underline{E}}$ in $\Met$ are isomorphic.  
\end{thm}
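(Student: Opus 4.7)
The plan is to exhibit $X^{\underline{E}}$ as the free quantitative $\Sigma_{\R[]}$-algebra in $\KK[\Sigma_{\R[]}]{\R[]}$ over $X$, and then invoke uniqueness of free algebras to identify the monad arising from this free algebra adjunction with the reader monad $(-)^{\underline{E}}$.

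First, I would equip $X^{\underline{E}} = [\underline{E}, X]$ with the $\Sigma_{\R[]}$-algebra structure given pointwise by the diagonal interpretation $\rd(f_1, \dots, f_n)(e_i) = f_i(e_i)$. Since $\underline{E}$ is discrete, a direct pointwise check shows that this interpretation is non-expansive with respect to the supremum metric on $[\underline{E}, X]$, that it satisfies $(\Idem)$ (since $\rd(f, \dots, f)(e_i) = f(e_i)$), and that it satisfies $(\Diag)$ (both sides evaluate to $f_{i,i}(e_i)$ at $e_i$). Hence $(X^{\underline{E}}, \rd)$ is an object of $\KK[\Sigma_{\R[]}]{\R[]}$.

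Next, I would verify freeness over $X$ with respect to $\kappa_X$. Given any $(A, \rd^A) \in \KK[\Sigma_{\R[]}]{\R[]}$ and non-expansive $\beta \colon X \to A$, define $h \colon X^{\underline{E}} \to A$ by $h(f) = \rd^A(\beta(f(e_1)), \dots, \beta(f(e_n)))$. Non-expansiveness of $h$ follows from non-expansiveness of $\beta$ and $\rd^A$ together with the definition of the supremum metric on $X^{\underline{E}}$. The equation $h \circ \kappa_X = \beta$ reduces to $(\Idem)$ in $A$, while the homomorphism property $h \circ \rd = \rd^A \circ h^n$ is an instance of $(\Diag)$ in $A$. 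For uniqueness, observe that a direct pointwise computation yields $f = \rd(\kappa_X(f(e_1)), \dots, \kappa_X(f(e_n)))$ in $X^{\underline{E}}$; hence the homomorphism property together with the constraint $h' \circ \kappa_X = \beta$ forces any extension $h'$ of $\beta$ to coincide with $h$.

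By Theorem~\ref{th:freeQAlgebra}, $(X^{\underline{E}}, \rd)$ is thus isomorphic, naturally in $X$, to the free algebra $(T_{\R[]}X, \psi^{\R[]}_X)$. Since monads arising from left adjoints are determined up to isomorphism by the corresponding free algebras, this yields the desired monad isomorphism $T_{\R[]} \iso (-)^{\underline{E}}$. The main remaining bookkeeping is to confirm that the monad multiplication $\mu^{\R[]}_X$, obtained as the homomorphic extension of $\psi^{\R[]}_X$ along the identity, is transported to the map $\zeta_X$ from~\eqref{eq:readerUnitMult}; applying the formula for $h$ with $\beta = id$ and $A = X^{\underline{E}}$ gives exactly $g \mapsto \lambda e.\, g(e)(e)$, as required, and I expect this verification to be the only place where care is needed in matching the two explicit presentations.
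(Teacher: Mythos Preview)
Your proposal is correct and follows essentially the same route as the paper: define the diagonal $\rd$-structure on $X^{\underline{E}}$, verify $(\Idem)$ and $(\Diag)$, construct the unique extension $h(f) = \rd^A(\beta(f(e_1)), \dots, \beta(f(e_n)))$ using those two axioms, and obtain uniqueness from the identity $f = \rd(\kappa_X(f(e_1)), \dots, \kappa_X(f(e_n)))$. The only cosmetic difference is in the final step: the paper checks directly that $\zeta_X$ satisfies the freeness diagram, whereas you instantiate the formula for $h$ with $\beta = id$ to recover $\zeta_X$; these are the same verification phrased two ways.
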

\begin{proof}
We prove this statement by showing that, for each $X \in \Met$, $X^{\underline{E}}$ has a quantitative
algebraic structure that is free in $\KK{\Sigma_{\R[]}, \R[]}$ with universal natural arrow $\kappa_X$;
and show that the units and multiplications of the two monads coincide (up to iso).

For any $X \in \Met$, we define the quantitative 
$\Sigma_{\R[]}$-algebra $(X^{\underline{E}}, \rho_X)$ as follows, for arbitrary maps 
$f_1,\dots, f_n \colon \underline{E} \to X$
\begin{align*}
  \rho_X \colon \Sigma_{\R[]} X^{\underline{E}} \to X^{\underline{E}}
  &&
  \rho_X(\rd(f_1,\dots, f_n))(e_i) = f_i(e_i) \,.
\end{align*}

Next we show that it satisfies $\R[]$.
For convenience, let $\rd^\rho$ denote the interpretation of the operator symbol 
$\rd \colon n \in \Sigma_{\R[]}$
in the algebra $(X^{\underline{E}}, \rho_X)$.
Soundness for (\Nexp{\rd}) follows by the fact that $\rho_X$ is a well defined 
map in $\Met$ as shown below
\begin{align*}
d_{X^{\underline{E}}}(\rd^\rho(f_1, \dots, f_n), \rd^\rho(g_1, \dots, g_n)) 
&= \sup_{e_i} d_X(\rd^\rho(f_1, \dots, f_n)(e_i), \rd^\rho(g_1, \dots, g_n)(e_i)) \\
&= \sup_{e_i} d_X(f_i(e_i), g_i(e_i)) \\
&\leq \max_{j} \Big( \sup_{e_i \in E} d_X(f_j(e_i), g_j(e_i)) \Big) \\
&\leq \max_{j} d_{X^{\underline{E}}}(f_j, g_j) \,.
\end{align*}
Soundness for (\Idem) follows by definition of $\rho$ as, for all $e_i \in E$
\begin{equation*}
 \rd^\rho(f,\dots, f)(e_i) = f(e_i) \,.
\end{equation*}
Soundness for (\Diag) also follows by definition, as
\begin{align*}
\rd^\rho(\rd^\rho(f_{1,1},\dots, f_{1,n}),\dots, \rd^\rho(f_{n,1},\dots, f_{n,n}) )(e_i) 
&= \rd^\rho(f_{i,1},\dots, f_{i,n})(e_i) \\
&= f_{i,i}(e_i) \\
&= \rd^\rho(f_{1,1},\dots, f_{n,n})(e_i) \,.
\end{align*}

Now we prove freeness.
Let $(A,a) \in \KK{\Sigma_{\R[]}, \R[]}$ and let $\beta \colon X \to A$ be a non-expansive map. Define $h \colon X^{\underline{E}} \to A$ as follows, 
for arbitrary $f \colon {\underline{E}} \to X$
\begin{equation*}
 h(f) = a(\rd(\beta(f(e_1)), \dots, \beta(f(e_n)) )) \,.
\end{equation*}
As it is the composition of non-expansive maps, then also $h$ is non-expansive.
Next, we prove that $h$ is the only map making the diagram below commute.
\begin{equation*}
\begin{tikzcd}
X \arrow[rd, "\beta"' ] \arrow[r, "\kappa_X"] & 
X^{\underline{E}} \arrow[d, dashed, "h"]
& \Sigma_{\R[]} X^{\underline{E}} \arrow[l, "\rho_X"'] \arrow[d,"\Sigma_{\R[]}h"] \\
 & A & \Sigma_{\R[]}A \arrow[l, "a"']
\end{tikzcd}
\end{equation*}
Let $\rd^\rho$ and $\rd^a$ denote the 
interpretations of $\rd \colon n \in \Sigma_{\R[]}$, respectively, in the algebras $(X^{\underline{E}}, \rho_X)$
and $(A,a)$.
Let $x \in X$.  Then 
\begin{align*}
(h \circ \kappa_X)(x)
&= \rd^a(\beta(\kappa_X(x)(e_1)), \dots, \beta(\kappa_X(x)(e_n)) )
\tag{def.\ $h$} \\
&= \rd^a(\beta(x), \dots, \beta(x) )
\tag{def.\ $\kappa$} \\
&= \beta(x) \,.
\tag{\Idem}
\end{align*}
Let $f_1, \dots, f_n \colon \underline{E} \to X$.  Then
\begin{align*}
(h \circ \rho_X)(\rd(f_1, \dots, f_n))
&= \rd^a(\beta(f_1(e_1)), \dots, \beta(f_n(e_n)) ) \tag{def.\ $h$ and $\rho$} \\
&= \rd^a \Big( 
	\rd^a( \beta(f_1(e_1)), \dots, \beta(f_1(e_n)) ), \dots \\
&\phantom{\rd^a \Big( \rd^a( \beta}	
	\dots,  \rd^a( \beta(f_n(e_1)), \dots, \beta(f_n(e_1)) ) \Big)
\tag{\Diag} \\
&= \rd^a(h(f_1), \dots, h(f_n))  \tag{def.\ $h$} \\
&= (a \circ \Sigma_{\R[]}h)(\rd(f_1, \dots, f_n)) \,.  \tag{def.\ $\rd^a$ and $\Sigma_{\R[]}$}
\end{align*}
Hence $h$ is a $\Sigma_{\R[]}$-homomorphism, that is, $h \circ \rho_X = a \circ \Sigma_{\R[]} h$.

It remains to prove uniqueness. 
Assume there exists $g \colon X^{\underline{E}} \to A$ such that 
$g \circ \kappa_X = \beta$ and
$g \circ \rho_X = a \circ \Sigma_{\R[]}g$.
Next, we show $h = g$.  Notice first that for any $f \colon X^{\underline{E}} \to X$,
$f = \rd^\rho( \kappa_X(f(e_1)), \dots, \kappa_X(f(e_n)) )$, as for all $e_i \in E$, the following
holds:
\begin{align*}
f(e_i) 
&= \kappa_X(f(e_i))(e_i) \tag{def.\ $\kappa$}  \\
&= \rd^\rho( \kappa_X(f(e_1)), \dots, \kappa_X(f(e_n)) )(e_i) \,.  \tag{def.\ $\rho$}
\end{align*}
From the above we have that, for all $f \colon X^{\underline{E}} \to X$,
\begin{align*}
h(f) 
&= h(\rd^\rho( \kappa_X(f(e_1)), \dots, \kappa_X(f(e_n)) )) \\
&= \rd^a((h \circ \kappa)(f(e_1)), \dots, (h \circ \kappa)(f(e_1))) \tag{$h$ homo} \\
&= \rd^a(\beta(f(e_1)), \dots, \beta(f(e_1))) \tag{$h \circ \kappa = \beta$} \\
&= \rd^a((g \circ \kappa)(f(e_1)), \dots, (g \circ \kappa)(f(e_1))) \tag{$g \circ \kappa = \beta$} \\
&= g(\rd^\rho( \kappa_X(f(e_1)), \dots, \kappa_X(f(e_n)) )) \tag{$g$ homo} \\
&= g(f)
\end{align*}
Therefore, $g = h$.

By the proof of freeness above, the functors $(-)^{\underline{E}}$ and $T_{\R[]}$ are
isomorphic and the units of the two monads coincide (up to iso).  We are
left to prove that also the multiplications coincide (up to iso).  By the universal property of free algebras, 
this follows by showing that the following diagram commutes
\begin{equation*}
\begin{tikzcd}
X^{\underline{E}} \arrow[rd, "id"' ] \arrow[r, "\kappa_{X^{\underline{E}}}"] & 
(X^{\underline{E}})^{\underline{E}} \arrow[d, dashed, "\zeta_X"]
& \Sigma_{\R[]} (X^{\underline{E}})^{\underline{E}} \arrow[l, "\rho_{X^{\underline{E}}}"'] \arrow[d,"\Sigma_{\R[]}\zeta_X"] \\
 & X^{\underline{E}} & \Sigma_{\R[]}X^{\underline{E}} \arrow[l, "\rho_X"']
\end{tikzcd}
\end{equation*}
$\zeta_X \circ \kappa_X = id$ holds since $(-)^{\underline{E}}$ is a monad.  
The right square commutes as shown below
\begin{align*}
(\zeta_X \circ \rho_{X^{\underline{E}}})(\rd(g_1, \dots, g_n))(e_i) 
&= \rho_{X^{\underline{E}}}(\rd(g_1, \dots, g_n))(e_i)(e_i)    \tag{def.\ $\zeta$} \\
&= g_i(e_i)(e_i)  \tag{def.\ $\rho$} \\
&= \zeta_X(g_i)(e_i)   \tag{def.\ $\zeta_X$} \\
&= \rho_{X}(\rd(\zeta_X(g_1), \dots, \zeta_X(g_n)))(e_i)   \tag{def.\ $\rho$} \\
&= (\rho_{X} \circ \Sigma_{\R[]} \zeta_X)(\rd(g_1, \dots, g_n))(e_i)  \tag{def.\ $\Sigma_{\R[]}$}
\end{align*}
for arbitrary $g_1, \dots, g_n \colon \underline{E} \to X^{\underline{E}}$.
\end{proof}

Note that, the monad $(-)^{\underline{E}}$ is well defined also in $\CMet$. Indeed, 
as the functor $(-)^{\underline{E}}$ is isomorphic to the finite product $(-)^n$, for $n = |E|$,
it preserves Cauchy completeness and can be restricted to an endofunctor on $\CMet$.
We further observe that it is isomorphic to the composite 
\begin{equation*}
  \CMet \hookrightarrow \Met \xrightarrow{(-)^{\underline{E}}} \Met \xrightarrow{\CC} \CMet \,.
\end{equation*} 

Since $\R[]$ is a continuous quantitative theory, by Theorems~\ref{th:freeCQAlgebra} and 
\ref{th:isoReaderMonad} we obtain also the following isomorphism of monads.
\begin{thm} \label{th:isoReaderMonadCompletion}
The monads $\CC T_{\R[]}$ and $(-)^{{\underline{E}}}$ in $\CMet$ are isomorphic.  
\end{thm}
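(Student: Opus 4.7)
The plan is to mimic the proof of Theorem~\ref{th:isoReaderMonad}, this time exhibiting $(X^{\underline{E}}, \rho_X)$ as the \emph{free complete} quantitative $\Sigma_{\R[]}$-algebra over any $X \in \CMet$, and then invoking the universal property to transport the monad structure of $(-)^{\underline{E}}$ to $\CC T_{\R[]}$.

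First I would observe that if $X \in \CMet$ then $X^{\underline{E}} \in \CMet$ as well: since $\underline{E}$ is finite with discrete metric, the internal hom $[\underline{E}, X]$ is isomorphic (as an extended metric space) to the $n$-fold monoidal product $X^n$ (with $n=|E|$), and this product preserves Cauchy completeness, as recalled in Example~\ref{ex:monoidalclosed}. Hence the structure map $\rho_X \colon \Sigma_{\R[]} X^{\underline{E}} \to X^{\underline{E}}$ from the proof of Theorem~\ref{th:isoReaderMonad} lives entirely in $\CMet$, and the verifications that $(X^{\underline{E}}, \rho_X)$ satisfies $(\Nexp{\rd})$, $(\Idem)$, and $(\Diag)$ are exactly those already carried out there; thus $(X^{\underline{E}}, \rho_X) \in \CC\KK[\Sigma_{\R[]}]{\R[]}$.

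Next I would prove the universal property in $\CMet$. Given $(A,a) \in \CC\KK[\Sigma_{\R[]}]{\R[]}$ and a non-expansive $\beta \colon X \to A$, I define $h \colon X^{\underline{E}} \to A$ by $h(f) = a(\rd(\beta(f(e_1)), \dots, \beta(f(e_n))))$, exactly as before. The verifications that $h$ is non-expansive, that $h \circ \kappa_X = \beta$, that $h$ is a $\Sigma_{\R[]}$-homomorphism (using $(\Diag)$), and that any such $h$ is unique (using $(\Idem)$ together with $f = \rd^\rho(\kappa_X(f(e_1)), \dots, \kappa_X(f(e_n)))$) are literally the computations from the proof of Theorem~\ref{th:isoReaderMonad}; no completion argument is needed because the candidate $h$ is already defined by an explicit, non-expansive formula.

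Finally, because $\R[]$ is continuous, Theorem~\ref{th:freeCQAlgebra} gives a free complete quantitative algebra $(\CC T_{\R[]} X, \CC \psi^{\R[]}_X)$ on $X \in \CMet$ with unit $\CC \eta^{\R[]}_X$. By the freeness just established for $(X^{\underline{E}}, \rho_X)$ with unit $\kappa_X$, the two free algebras are canonically isomorphic, and this isomorphism is natural in $X$. The coincidence of units is automatic from the universal property, and the coincidence of multiplications follows by the same diagram chase used at the end of Theorem~\ref{th:isoReaderMonad}, applied to $\zeta_X$ (which now lives in $\CMet$). There is no real obstacle here: the only point that requires a moment of thought is the completeness of $X^{\underline{E}}$, which is immediate since $\underline{E}$ is finite and discrete, so the entire argument of Theorem~\ref{th:isoReaderMonad} restricts cleanly to $\CMet$.
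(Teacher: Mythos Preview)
Your proposal is correct and follows essentially the same approach as the paper. The paper's proof is just the sentence preceding the theorem statement: since $\R[]$ is continuous and $(-)^{\underline{E}}$ preserves completeness (being isomorphic to a finite product), Theorems~\ref{th:freeCQAlgebra} and~\ref{th:isoReaderMonad} together yield the isomorphism; you have simply unpacked this by re-running the freeness argument of Theorem~\ref{th:isoReaderMonad} verbatim inside $\CMet$, which is exactly what those two theorems jointly license.
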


Consequently, by Theorem~\ref{th:CompleteEilenbergMoore}, $\EMAlg{(-)^{\underline{E}}} \iso \CC\KK[\Sigma_{\R[]}]{\R[]}$, meaning that $(\Sigma_{\R[]}, \R[])$ is a quantitative equational 
presentation also of the monad $(-)^{\underline{E}}$ on $\CMet$.

\subsubsection{Quantitative Writer Algebras}
\label{sec:writeralgebras}

Fix an extended metric space $\Lambda \in \Met$ of \emph{output values} having monoid 
structure $(\Lambda, *, 0)$ with multiplication $* \colon \Lambda \times \Lambda \to \Lambda$ satisfying~ \eqref{*nonexp}.

The \emph{quantitative writer algebras} of type $\Lambda$ are the
algebras for the signature 
\begin{equation*}
  \Sigma_{\Wr[\Lambda]} = \ens{ \wrt{\alpha} \colon 1 \mid \alpha \in \Lambda }
\end{equation*}
having a unary operator $\wrt{\alpha}$, for each output value $\alpha \in \Lambda$, 
and satisfying the following axioms
\begin{align*}
{ (\Zero)} \,\,
& \vdash x =_0 \wrt{0}(x) \,, \\
{ (\Mult)} \,\,
& \vdash \wrt{\alpha}( \wrt{\alpha'}(x) ) =_0  \wrt{\alpha * \alpha'}(x) \,, \\
{ (\Diff)} \,\,
& \{x =_\e x'\} \vdash \wrt{\alpha}(x) =_\delta  \wrt{\alpha'}(x')\,, \text{ for $\delta \geq d_\Lambda(\alpha, \alpha') + \e $} \,.
\end{align*}
The quantitative theory induced by the axioms above, written $\Wr[\Lambda]$ (or simply $\Wr[]$, when $\Lambda$ is clear), is called \emph{quantitative theory of writing computations}.

The term $\wrt{\alpha}(t)$ represents the computation that proceeds as $t$ after 
writing $\alpha$ on the output tape.  
The axiom (\Zero) says that writing the identity element $0$ is not observable on the tape;
(\Mult) says that consecutive writing operations are stored in the tape in the order of execution;
(\Diff) compares two computations w.r.t.\ the distance of their output values.

\medskip
Next we show that the writer monad $(\Lambda \mprod -)$ is isomorphic to the free monad $T_{\Wr[]}$ on $\Wr[]$. Consequently, as the theory $\Wr[]$ is basic, by Theorem~\ref{th:EilenbergMoore}, $\EMAlg{(\Lambda \mprod -)} \iso \KK[\Sigma_{\Wr[]}]{\Wr[]}$.
Thus, $(\Sigma_{\Wr[]}, \Wr[])$ is a quantitative equational presentation of the writer monad on $\Met$.
\begin{thm} \label{th:isoWriterMonad}
The monads $T_{\Wr[]}$ and $(\Lambda \mprod -)$ in $\Met$ are isomorphic.  
\end{thm}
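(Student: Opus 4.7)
The plan is to mirror the structure of the proof of Theorem~\ref{th:isoReaderMonad}: endow $\Lambda \mprod X$ with a free quantitative algebra structure in $\KK{\Sigma_{\Wr[]}, \Wr[]}$ whose universal arrow is the unit $\tau_X \colon X \to \Lambda \mprod X$, and then verify that units and multiplications of the two monads coincide under the induced isomorphism.

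First, for each $X \in \Met$, I define the quantitative $\Sigma_{\Wr[]}$-algebra $(\Lambda \mprod X, \omega_X)$ by setting
\begin{equation*}
\omega_X(\wrt{\alpha}(\beta, x)) = (\alpha * \beta, x) \,,
\end{equation*}
so that the interpretation $\wrt{\alpha}^\omega \colon \Lambda \mprod X \to \Lambda \mprod X$ of $\wrt{\alpha}$ sends $(\beta, x)$ to $(\alpha * \beta, x)$. Non-expansiveness of each $\wrt{\alpha}^\omega$ follows from the assumption~\eqref{*nonexp} on $*$, applied with one argument fixed. Soundness of (\Zero) is immediate from $0 * \beta = \beta$; soundness of (\Mult) is associativity of $*$; soundness of (\Diff) reduces, by definition of $d_{\Lambda \mprod X}$, to the bound
\begin{equation*}
d_\Lambda(\alpha * \beta, \alpha' * \beta') + d_X(x, x') \leq d_\Lambda(\alpha, \alpha') + d_\Lambda(\beta, \beta') + d_X(x, x') \,,
\end{equation*}
which is exactly~\eqref{*nonexp}.

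Next, for freeness, given a $\Wr[]$-model $(A, a)$ and a non-expansive $\gamma \colon X \to A$, I define $h \colon \Lambda \mprod X \to A$ by $h(\alpha, x) = \wrt{\alpha}^a(\gamma(x))$. Non-expansiveness of $h$ is precisely the content of (\Diff) applied in $(A,a)$, together with non-expansiveness of $\gamma$. Commutation $h \circ \tau_X = \gamma$ uses (\Zero), and $h \circ \omega_X = a \circ \Sigma_{\Wr[]} h$ uses (\Mult). For uniqueness, the key observation -- the analogue of the diagonal identity used in the reader proof -- is that every $(\alpha, x) \in \Lambda \mprod X$ can be written as $\omega_X(\wrt{\alpha}(\tau_X(x)))$, since $\tau_X(x) = (0, x)$ and $\wrt{\alpha}^\omega(0, x) = (\alpha * 0, x) = (\alpha, x)$. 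Any competing homomorphism $g$ must therefore agree with $h$ on every element.

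Finally, to transfer the monad structure, I invoke the universal property: the functors $T_{\Wr[]}$ and $(\Lambda \mprod -)$ are isomorphic and their units coincide with $\tau$ under this iso, and it remains to check that the multiplication $\varsigma$ of $(\Lambda \mprod -)$ makes the free-algebra diagram commute for the $(\Lambda \mprod X)$-indexed extension of $id$, which is a direct calculation using associativity of $*$. I expect the main obstacle to be a clean verification of the non-expansiveness of $h$ and of the (\Diff) axiom, since both hinge on correctly unfolding the sum metric on $\Lambda \mprod X$ together with condition~\eqref{*nonexp}; once these are in place, the rest of the argument is a routine adaptation of the reader-monad proof.
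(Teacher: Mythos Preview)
Your proposal is correct and follows essentially the same approach as the paper: the same algebra structure $\omega_X(\wrt{\alpha}(\beta,x)) = (\alpha*\beta,x)$, the same candidate extension $h(\alpha,x) = \wrt{\alpha}^a(\gamma(x))$, the same use of (\Zero), (\Mult), (\Diff) for the respective pieces, and the same key decomposition $(\alpha,x) = \wrt{\alpha}^\omega(\tau_X(x))$ for uniqueness. The paper's proof carries out exactly these steps.
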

\begin{proof}
We show that, for each $X \in \Met$, $(\Lambda \mprod X)$ carries a quantitative
algebraic structure that is free in $\KK{\Sigma_{\Wr[]}, \Wr[]}$ with universal arrow $\tau_X$;
and show that the units and multiplications of the two monads coincide (up-to iso).

For any $X \in \Met$, we define the quantitative 
$\Sigma_{\Wr[]}$-algebra $(\Lambda \mprod X, \omega_X)$ as follows, for arbitrary 
$\alpha,\alpha' \in \Lambda$ and $x \in X$
\begin{align*}
  \omega_X \colon \Sigma_{\Wr[]} (\Lambda \mprod X) \to \Lambda \mprod X \,,
  &&
  \omega_X(\wrt{\alpha}(\alpha', x)) = (\alpha * \alpha', x) \,.
\end{align*}

Next we show that it satisfies $\Wr[]$. 
Let $\wrt{\alpha}^{\omega}$ denote the interpretation
of the operation $\wrt{\alpha} \colon 1 \in \Sigma_{\Wr[]}$ in the algebra $(\Lambda \mprod X, \omega_X)$.
Proving the soundness for ($\Nexp{\wrt{\alpha}}$), for each $\alpha \in \Lambda$, is equivalent to showing 
that the map $\omega$ is well defined in $\Met$.
\begin{align*}
d_{(\Lambda \mprod  X)}(\wrt{\alpha}^{\omega}(\beta, x), \wrt{\alpha}^{\omega}(\beta', x' )) 
&=d_{(\Lambda \mprod  X)}((\alpha*\beta, x), (\alpha * \beta', x' )) \tag{def.\ $\omega$} \\
&=d_\Lambda(\alpha*\beta, \alpha * \beta') + d_X(x, x') \tag{def.\ $\mprod$} \\
&\leq d_\Lambda(\alpha, \alpha) + d_\Lambda(\beta, \beta') + d_X(x, x')
\tag{\ref{*nonexp}} \\
&= d_\Lambda(\beta, \beta') + d_X(x, x') \tag{metric} \\
&=d_{(\Lambda \mprod  X)}((\beta, x), (\beta', x')) \,.  \tag{def.\ $\mprod$}
\end{align*} 
We are left to prove that the algebra $((\Lambda \mprod X), \omega_X)$ satisfies
the axioms (\Zero), (\Mult), and (\Diff).  

Soundness for (\Zero) holds trivially as $(\alpha, x) = (0*\alpha,x)$
because $0$ is the identity element of the monoid $\Lambda$.
The soundness of (\Mult) follows directly by definition of $\omega$ as 
\begin{equation*}
\wrt{\alpha}^{\omega}( \wrt{\alpha'}^{\omega} (\beta, x)) = 
\wrt{\alpha}^{\omega}((\alpha' * \beta, x)) =
((\alpha * (\alpha' * \beta), x)) =
((\alpha * \alpha') * \beta, x)) =
\wrt{\alpha * \alpha'}^{\omega}((\beta, x))
\,.
\end{equation*}
Finally, the soundness for (\Diff) follows by
\begin{align*}
d_{(\Lambda \mprod  X)}(\wrt{\alpha}^{\omega}(\beta, x), \wrt{\alpha'}^{\omega}(\beta', x' )) 
&=d_\Lambda(\alpha*\beta, \alpha' * \beta') + d_X(x, x') \tag{def.\ $\omega$ \& $\mprod$} \\
%&=d_\Lambda(\alpha*\beta, \alpha * \beta') + d_\Lambda(\alpha*\beta', \alpha' * \beta') + d_X(x, x') 
%\tag{triang.} \\
&\leq d_\Lambda(\beta, \beta') + d_\Lambda(\alpha, \alpha') + d_X(x, x') 
\tag{\ref{*nonexp}} \\
&= d_{\Lambda}(\alpha, \alpha') + d_{(\Lambda \mprod  X)}((\beta,x), (\beta', x')) \,,
\tag{def.\ $\mprod$}
\end{align*}

Now we prove freeness. 
Let $(A,a)$ be a $\Sigma_{\Wr[]}$-algebra satisfying $\Wr[]$ and $b \colon X \to A$
a non-expansive map. We define $h \colon \Lambda \mprod X \to A$ as follows, 
for arbitrary $\alpha \in \Lambda$ and $x \in X$:
\begin{equation*}
 h((\alpha,x)) = a({\wrt{\alpha}}(b(x))) \,.
\end{equation*}
Non-expansiveness of $h$ follows by the fact that $(A, a)$ satisfies
the axiom (\Diff) as shown below, where $\wrt{\alpha}^{a}$ denotes 
the interpretation of $\wrt{\alpha} \colon 1 \in \Sigma_{\Wr[]}$ in $(A,a)$,
\begin{align*}
d_A(h((\alpha,x)), h((\alpha',x))) 
&= d_A(\wrt{\alpha}^{a}(b(x)), \wrt{\alpha'}^{a}(b(x'))) \tag{def.\ $h$} \\
&\leq d_\Lambda(\alpha, \alpha') + d_A(b(x),b(x'))  \tag{\Diff} \\
&\leq d_\Lambda(\alpha, \alpha') + d_X(x,x')  \tag{$b$ non-exp} \\
&= d_{\Lambda \mprod X}((\alpha,x),(\alpha',x')) \,.  \tag{def.\ $\mprod$}
\end{align*} 
Next, we prove that $h$ is the unique map such that the diagrams below commute.
\begin{equation*}
\begin{tikzcd}
X \arrow[rd, "b"' ] \arrow[r, "\tau_X"] & 
\Lambda \mprod X \arrow[d, dashed, "h"]
& \Sigma_{\Wr[]} (\Lambda \mprod X) \arrow[l, "\omega_X"'] \arrow[d,"\Sigma_{\Wr[]}h"] \\
 & A & \Sigma_{\Wr[]}A \arrow[l, "a"']
\end{tikzcd}
\end{equation*}
The triangle to the left commutes because for all $x \in X$
\begin{align*}
(h \circ \tau_X )(x) 
&= h((0,x)) \tag{def.\ $\tau$} \\
&= \wrt{0}^{a}(b(x)) \tag{def.\ $h$} \\
&= b(x) \,.  \tag{\Zero}
\end{align*}
Let $x \in X$ and $\alpha, \alpha' \in \Lambda$.  Then, 
\begin{align*}
(h  \circ \omega_X)({\wrt{\alpha}}(\alpha',x)) 
&= \wrt{\alpha * \alpha'}^{a}(b(x)) \tag{def.\ $h$ and $\omega$} \\
&= \wrt{\alpha}^{a}( \wrt{\alpha'}^{a}(b(x)) ) \tag{\Mult} \\
&= \wrt{\alpha}^{a}( h(\alpha', x) ) \tag{def.\ $h$} \\
&= (a \circ \Sigma_{\Wr[]}h)({\wrt{\alpha}}(\alpha',x))  
\tag{def.\ $\wrt{\alpha}^{a}$ and $\Sigma_{\Wr[]}$}\,.
\end{align*}
Thus, $h$ is a $\Sigma_{\Wr[]}$-homomorphism, \ie, $h \circ \omega_X = a \circ \Sigma_{\Wr[]}h$.

It remains to show uniqueness. 
Notice first that, $(\alpha, x) = \wrt{\alpha}^{\omega}(\tau(x))$, where $\wrt{\alpha}^{\omega} = \omega_X \circ {\wrt{\alpha}}$ denotes the interpretation of $\wrt{\alpha} \colon 1 \in \Sigma_{\Wr[]}$ in $(\Lambda \mprod X, \omega_X)$.
Indeed, the following holds
\begin{align*}
(\alpha, x) 
&= (\alpha * 0, x) \tag{$0$ identity} \\
&= \wrt{\alpha}^{\omega}(0,x) \tag{def.\ $\omega$} \\
&= \wrt{\alpha}^{\omega}(\tau(x)) \,.  \tag{def.\ $\tau$}
\end{align*}
Assume there exists $g \colon \Lambda \mprod X \to A$ such that 
$g \circ \tau_X = b$ and
$g \circ \omega_X = a \circ \Sigma_{\Wr[]}g$.  Then,
\begin{align*}
h((\alpha, x))
&= h(\wrt{\alpha}^{\omega}(\tau(x))) \\
&= \wrt{\alpha}^{a}(h(\tau(x))) \tag{$h$ homo} \\
&= \wrt{\alpha}^{a}(b(x)) \tag{$h \circ \tau = b$} \\
&= \wrt{\alpha}^{a}(g(\tau(x)))  \tag{$g \circ \tau = b$} \\
&= g(\wrt{\alpha}^{\omega}(\tau(x))) \tag{$g$ homo} \\
&= g((\alpha, x))
\end{align*}
Therefore, $h = g$.

By proof of freeness above, the functors $(\Lambda \mprod -)$ and $T_{\Wr[]}$ are
isomorphic and the units of the two monads coincide (up-to iso).  We are
left to prove that also the multiplications coincide (up-to iso).  By the universal property of free algebras, 
this follows by showing that the following diagram commutes
\begin{equation*}
\begin{tikzcd}
(\Lambda \mprod X) \arrow[rd, "id"' ] \arrow[r, "\tau_{\Lambda \mprod X}"] & 
(\Lambda \mprod (\Lambda \mprod X)) \arrow[d, dashed, "\varsigma_X"]
& \Sigma_{\Wr[]} (\Lambda \mprod (\Lambda \mprod X)) \arrow[l, "\omega_{\Lambda \mprod X}"'] \arrow[d,"\Sigma_{\Wr[]}\varsigma_X"] \\
 & \Lambda \mprod X & \Sigma_{\Wr[]}(\Lambda \mprod X) \arrow[l, "\omega_X"']
\end{tikzcd}
\end{equation*}
The triangle to the left holds as $(\Lambda \mprod -)$ is a monad.  
The right square commutes by
\begin{align*}
(\varsigma_X \circ \omega_{\Lambda \mprod X})({\wrt{\alpha}}(\alpha', (\alpha'', x)) ) 
&= \varsigma_X((\alpha * \alpha', (\alpha'', x))) \tag{def.\ $\omega$} \\
&= (\alpha * \alpha' * \alpha'', x) \tag{def.\ $\varsigma$} \\
&= \omega_X({\wrt{\alpha}}(\alpha' * \alpha'', x)) \tag{def.\ $\omega_X$} \\
&= \omega_X({\wrt{\alpha}}( \varsigma_X( \alpha', (\alpha'', x)) ) ) \tag{def.\ $\varsigma$} \\
&= (\omega_X \circ \Sigma_{\Wr[]} \varsigma_X)({\wrt{\alpha}}(\alpha', (\alpha'', x)) ) 
\tag{def.\ $\Sigma_{\Wr[]}$}
\end{align*}
for arbitrary $x \in X$ and $\alpha, \alpha', \alpha'' \in \Lambda$.
\end{proof}

If we assume the carrier of a monoid $(\Lambda, *, 0)$ to be a complete extended metric space $\Lambda$, the writer monad $(\Lambda \mprod -)$ 
is well defined also in $\CMet$.
We further observe that, as $\mprod$ preserves completeness, the underlying functor $(\Lambda \mprod -)$ 
is isomorphic to the composite 
\begin{equation*}
  \CMet \hookrightarrow \Met \xrightarrow{(\Lambda \mprod -)} \Met \xrightarrow{\CC} \CMet \,.
\end{equation*} 

Since $\Wr[]$ is a continuous quantitative theory, by Theorems~\ref{th:freeCQAlgebra} and 
\ref{th:isoWriterMonad} we obtain also the following isomorphism of monads.
\begin{thm} \label{th:isoWriterMonadCompletion}
The monads $\CC T_{\Wr[]}$ and $(\Lambda \mprod -)$ in $\CMet$ are isomorphic.  
\end{thm}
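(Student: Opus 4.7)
The plan is to mirror the strategy already used for the reader monad in Theorem~\ref{th:isoReaderMonadCompletion}, bootstrapping from the $\Met$-level isomorphism of Theorem~\ref{th:isoWriterMonad} via the Cauchy completion functor $\CC$. The key observation is that $\Wr[]$ is a continuous theory, which puts Theorem~\ref{th:freeCQAlgebra} at our disposal.

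First I would verify continuity of $\Wr[]$: the axioms (\Zero) and (\Mult) are unconditional and trivially fit the continuous schema shape; (\Diff) has the schema form $\{x =_\e x'\} \vdash \wrt{\alpha}(x) =_\delta \wrt{\alpha'}(x')$ for all $\delta \geq d_\Lambda(\alpha,\alpha')+\e$, where $\e \mapsto d_\Lambda(\alpha,\alpha')+\e$ is a continuous real-valued function. By Theorem~\ref{th:freeCQAlgebra}, the forgetful functor $\CC\KK[\Sigma_{\Wr[]}]{\Wr[]} \to \CMet$ then has a left adjoint, and the induced monad is precisely $\CC T_{\Wr[]}$; in particular $(\CC T_{\Wr[]} X, \CC\psi^{\Wr[]}_X)$ is a free complete quantitative algebra on $X \in \CMet$.

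Next I would argue that, for any $X \in \CMet$, the algebra $(\Lambda \mprod X, \omega_X)$ constructed in the proof of Theorem~\ref{th:isoWriterMonad} already lives in $\CC\KK[\Sigma_{\Wr[]}]{\Wr[]}$. This is because $\mprod$ preserves Cauchy completeness (Example~\ref{ex:monoidalclosed}(3)), so $\Lambda \mprod X$ is complete whenever $\Lambda$ and $X$ are; and the verification that $\omega_X$ makes $(\Lambda \mprod X, \omega_X)$ a model of $\Wr[]$ carried out inside the proof of Theorem~\ref{th:isoWriterMonad} uses nothing about $X$ beyond that it is an extended metric space. Moreover, the freeness argument in that proof produces, for every model $(A,a)$ of $\Wr[]$ (in particular every complete one) and every non-expansive $b \colon X \to A$, a unique homomorphism $h \colon \Lambda \mprod X \to A$ extending $b$ along $\tau_X$. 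Hence $(\Lambda \mprod X, \omega_X)$ is also free in the full subcategory $\CC\KK[\Sigma_{\Wr[]}]{\Wr[]}$, with universal arrow $\tau_X$.

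By uniqueness of free objects, this yields a natural isomorphism $\CC T_{\Wr[]} X \iso \Lambda \mprod X$ for $X \in \CMet$. To upgrade this to a monad isomorphism I would repeat the final diagram chase from the proof of Theorem~\ref{th:isoWriterMonad}: the unit $\tau$ is determined by the universal property, and the multiplication $\varsigma_X$ is the unique homomorphism $\Lambda \mprod (\Lambda \mprod X) \to \Lambda \mprod X$ extending the identity along $\tau_{\Lambda \mprod X}$, which coincides (up to the iso) with $\mu^{\Wr[]}_X$ lifted through $\CC$. The only subtlety to check — and essentially the only real work — is that the freeness argument for $(\Lambda \mprod X, \omega_X)$ transfers verbatim from $\Met$ to $\CMet$; this is where continuity of $\Wr[]$ is silently doing the work, since it guarantees that the ambient free-algebra adjunction exists at the level of $\CMet$, so that the universal characterization we are comparing against is the right one.
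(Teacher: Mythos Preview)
Your proposal is correct and follows essentially the same route as the paper: the paper's argument is the one-line observation that $\Wr[]$ is continuous and that $\mprod$ preserves completeness, so one invokes Theorems~\ref{th:freeCQAlgebra} and \ref{th:isoWriterMonad} directly. Your version simply unpacks this by re-verifying freeness of $(\Lambda \mprod X, \omega_X)$ inside $\CC\KK[\Sigma_{\Wr[]}]{\Wr[]}$, which is exactly what those two citations are doing implicitly.
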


Thus, by Theorem~\ref{th:CompleteEilenbergMoore}, $(\Sigma_{\Wr[]}, \Wr[])$ is a quantitative equational 
presentation also of the monad $(\Lambda \mprod -)$ on $\CMet$.

%%%%%%%%%%%%%%%%%%%%%%%%%%%%%%%%%%%%%%%%%%%%%%
\subsection{Nondeterminism}
\label{sec:Hausdorff}
The monadic effect for nondeterminism in $\Set$ is given by the \emph{powerset monad}, denoted by $\mathcal{P}$, mapping a set $X$ to $\mathcal{P} X = \{ U \mid U \subseteq X \}$ and a function $f \colon X \to Y$ to $\mathcal{P} f(U) = \{ f(u) \mid u \in U \}$, for $U \in \mathcal{P} X$. 
The unit $\sigma \colon Id \nat \mathcal{P}$ and multiplication $\upsilon \colon \mathcal{P}\mathcal{P} \nat \mathcal{P}$ are given, for $x \in X$ and $S \in \mathcal{P}\mathcal{P} X$, by
\begin{equation}
\begin{aligned}
\sigma_X(x) &= \{ x \} && \textsc{unit} 
\\
\upsilon_X(S) &= \bigcup \{ U \mid U \in S \} && \textsc{multiplication}
\end{aligned}
\label{eq:PowesetUnitMult}
\end{equation}
%Another monads of interest are the submonads $\mathcal{P}_{f}$, of finite subsets, and $\mathcal{P}_{ne}$, 
%of non-empty finite subsets, that act on functions just like the powerset monad, and have unit and multiplication 
%defined as in~\eqref{eq:PowesetUnitMult}.

\medskip
In the quantitative setting, a natural candidate for a distance on subsets is the Hausdorff metric. Typically, the Hausdorff metric is defined on nonempty compact subsets (equivalently, nonempty closed bounded subsets). 
Nonemptyness is meant to ensure that the distance is always finite. However, as we are dealing with extended metric spaces the empty set need not be excluded.

The \emph{Hausdorff extended metric} on the set of all compact subsets of an extended metric space $X$ is defined, for arbitrary closed sets $U,V \subseteq X$ by
\begin{equation*}
\mathcal{H}(d_X)(U,V) = \max \left\{ \sup_{u \in U} d_X(u,V), \sup_{v \in V} d_X(v,U) \right\} \,,
\end{equation*}
where, $d_X(x,U) = \inf_{u \in U} d_X(x,U)$ denotes the distance from an element $x \in X$ to a set 
$U \subseteq X$ (we assume $\inf \emptyset = \infty$).

\smallskip
In this paper, we will consider monads for quantitative nondeterminism both in $\Met$ and $\CMet$.

On $\Met$, the \emph{finite (quantitative) powerset monad}, denoted by $\mathcal{P}_f$, maps an extended metric space $X$ to $\mathcal{P}_f X = \{ U \mid U \subseteq X, U \text{ finite} \}$ with Hausdorff metric (note that finite sets are compact) and acts on morphisms $f \colon X \to Y$ as $\mathcal{P}_f f (U) = \{ f(u) \mid u \in U \}$, for $U \in \mathcal{P}_f X$. The unit and multiplication are defined as in~\eqref{eq:PowesetUnitMult}. 
Another monad of interest on $\Met$ is the submonad $\mathcal{P}_{ne}$, 
of non-empty finite subsets, with the same unit and multiplication.

On $\CMet$ the \emph{compact subsets monad}, denoted by $\C$, maps a compacts extended metric space 
$X$ to $\C X = \{ U \mid U \subseteq X, U \text{ compact} \}$ with Hausdorff metric (as shown in~\cite[Lemma 3]{kuratowski1956m}, if the metric space $X$ is complete, then so is the metric space $\C X$), and acts on morphisms $f \colon X \to Y$ as $\C f (U) = \{ f(u) \mid u \in U \}$, for $U \in \C X$. 
The unit and multiplication are defined as in~\eqref{eq:PowesetUnitMult}. The fact that this is a well defined monad results from~\cite[Theorem 15]{BreugelHMW07}, which establishes the adjunction from which this monad is derived.

\begin{lem} \label{lem:complHausdorff}
The monads $\CC\mathcal{P}_f$ and $\C$ on $\CMet$ are isomorphic.
\end{lem}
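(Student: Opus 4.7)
The plan is to construct a natural isomorphism of monads by defining, for each $X \in \CMet$, an explicit isometric bijection $\phi_X \colon \CC\mathcal{P}_f X \to \C X$ and then checking that it commutes with the units and multiplications of the two monads. A point of $\CC\mathcal{P}_f X$ is represented by a Cauchy sequence $(F_n)_n$ of finite subsets of $X$ in the Hausdorff metric, and I would set $\phi_X([F_n]) = \bigcap_{N \in \naturals} \overline{\bigcup_{n \geq N} F_n}$. This is the classical construction of the Hausdorff limit of a Cauchy sequence in a complete space, and it yields a compact subset of $X$: it is closed by construction, and it is totally bounded because any $F_N$ with $\mathcal{H}(d_X)(F_n, F_N) < \varepsilon$ for $n \geq N$ serves as a finite $\varepsilon$-net.

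Next, I would verify that $\phi_X$ is a bijective isometry. Isometry follows from the identity $\mathcal{H}(d_X)(\phi_X([F_n]), \phi_X([G_n])) = \lim_n \mathcal{H}(d_X)(F_n, G_n)$, which matches the completion metric on $\CC\mathcal{P}_f X$. Surjectivity is a consequence of total boundedness of compact sets: every $K \in \C X$ admits a finite $(1/n)$-net $F_n \subseteq K$ for each $n \geq 1$, and the resulting Cauchy sequence satisfies $\phi_X([F_n]) = K$. Injectivity then follows from the isometry property. Naturality in $X$ is immediate since a non-expansive map $f \colon X \to Y$ sends Hausdorff-Cauchy sequences to Hausdorff-Cauchy sequences and preserves their limits by uniform continuity, so $\C f \circ \phi_X = \phi_Y \circ \CC\mathcal{P}_f f$.

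Finally, I would verify compatibility with the monad structures. The units coincide via $\phi_X$: both send $x \in X$ to $\{x\}$. For the multiplications, the multiplication of $\mathcal{P}_f$ on $\Met$ is set-theoretic union $\bigcup \colon \mathcal{P}_f \mathcal{P}_f X \to \mathcal{P}_f X$, and the multiplication of $\CC\mathcal{P}_f$ on $\CMet$ is the unique non-expansive extension of $\bigcup$ through the Cauchy completion (by universality of the reflection $\CC \dashv {\hookrightarrow}$). On the other side, the multiplication of $\C$ is again $\bigcup$, which is well-defined and non-expansive on $\C\C X$ since a compact family of compact sets has compact union. Uniqueness of the non-expansive extension then forces the square involving $\phi$, $\mu^{\CC\mathcal{P}_f}$ and $\mu^\C$ to commute, establishing that $\phi$ is a monad isomorphism.

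The main obstacle is the careful verification that the multiplication of $\CC\mathcal{P}_f$ matches union on compact sets, because the former is defined only indirectly through the reflection adjunction, whereas the latter is the literal set-theoretic union. The bridging fact is that finite union, viewed as a map between metric spaces with Hausdorff metric, is non-expansive, and hence extends uniquely to a non-expansive map on the completions; by density of finite families of finite sets in $\C\C X$ (under the iterated isomorphism), this extension must coincide with the literal $\bigcup$ on compact sets. Once this is settled, the coherence squares for unit and multiplication reduce to routine diagram-chasing on representative Cauchy sequences.
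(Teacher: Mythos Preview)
Your proposal is correct and rests on the same mathematical content as the paper's proof, but the paper packages it more economically. Rather than building $\phi_X$ by hand from Cauchy sequences, the paper simply observes that $\mathcal{P}_f X \subseteq \C X$, that $\C X$ is complete, and that $\mathcal{P}_f X$ is dense in $\C X$ (via exactly the finite $\varepsilon$-net argument you use for surjectivity); the isomorphism $\CC\mathcal{P}_f X \iso \C X$ then falls out of the universal property of Cauchy completion without ever writing down the limit formula $\bigcap_N \overline{\bigcup_{n\ge N} F_n}$. For the multiplication, the paper argues directly that if $(V_i)$ is a sequence of finite families of finite sets converging to $S \in \C\C X$, then $(\bigcup V_i)$ converges to $\bigcup S$, which is the concrete instance of your density-plus-uniqueness-of-extension argument. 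Your route is more explicit and self-contained; the paper's route is shorter because it offloads the Cauchy-sequence bookkeeping to the standard fact that a dense isometric inclusion into a complete space realises the completion.
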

\begin{proof}
Let $X \in \CMet$. Clearly, $\mathcal{P}_f X \subseteq \C X$, as finite subsets are compact.
Next, we show that $\mathcal{P}_f X$ is dense in $\C X$. Let $U \in \C X$ and $\epsilon > 0$.
Let $B_\epsilon(u) = \{ x \in X \mid d_X(u, x) < \epsilon \}$ be the open ball of radius $\epsilon$
centered in $u \in U$. As $\{ B_\epsilon(u) \mid u \in U \}$ is an open cover for $U$, by compactness of $U$,
there exists a finite subcover $\{ B_\epsilon(v) \mid v \in V \}$ for some $V = \{ v_0, \dots, v_n \} \subseteq U$.
In particular, we have that for any $u \in U$, exists $v \in V$ such that $d_X(u,v) < \epsilon$ 
(equivalently, $\sup_{u \in U} d_X(u,V) < \epsilon$). Thus
\begin{align*}
\mathcal{H}(d_X)(U,V) 
&= \max \left\{ \sup_{u \in U} d_X(u,V), \sup_{v \in V} d_X(v,U) \right\} \\
&= \sup_{u \in U} d_X(u,V) \tag{by $V \subseteq U$} \\
&< \epsilon
\end{align*}
So $\mathcal{P}_f X$ is dense in $\C X$. As $X$ is complete, convergence and Cauchy convergence coincide. Thus $\overline{\mathcal{P}_f X} \iso \C X$.
The correspondence between the units is trivial. The correspondence between multiplications follows because if 
$(V_i)$ is a sequence of finite subsets of $\C X$ converging to $S \in \C\C X$, then $(\bigcup \{ U \mid U \in V_i \})$ converges to $\bigcup \{ U \mid U \in S \}$.
\end{proof}

\subsubsection{Quantitative Join-Semilattice with Bottom}
\label{sec:semillattices}
In~\cite{MardarePP:LICS16} it was shown that the quantitative powerset monads considered above 
have an algebraic presentation in terms of a simple quantitative extension to the equational theory of
join-semilattices with bottom.

A quantitative join-semilattice with bottom is a quantitative algebra for the signature  
\begin{equation*}
  \Sigma_{\Semi} = \{ + \colon 2 , \zero \colon 0 \}
\end{equation*}
with a binary operator $+$ and a constant $\zero$ that satisfy the quantitative equations
\begin{align*}
{ (\Szero)} 
& \vdash x + \zero =_0 x \,, \\
{ (\Sone)} 
& \vdash x + x =_0 x \,, \\
{ (\Stwo)} 
& \vdash x + y =_0 y + x \,, \\
{ (\Sthree)} 
& \vdash (x + y) + z =_0 x + (y + z) \,, \\ 
{ (\Sfour)} \,
& \{ x \,{=_\e}\, y, x' \,{=_{\e'}}\, y' \} \,{\vdash}\, x + x' \,{=_{\max\{\e, \e' \}}}\, y + y'  \,.
\end{align*}
We denote by $\Semi$ the above quantitative theory \emph{of semilattices with bottom}.
The axioms (\Szero), (\Sone), (\Stwo), (\Sthree) are those of \emph{(join-)semilattices with bottom} and they are
essentially standard ``equational'' axioms. The truly quantitative equation is the last one, 
(\Sfour). 

\begin{rem}
Note that (\Sfour) is derivable from (\Nexp{+}) and (\Weak) and, conversely, (\Nexp{+}) is just an instance 
of (\Sfour) (when $\e = \e'$). Thus, in the quantitative theory $\Semi$, the axiom (\Sfour) is not necessary. We added it here to match the definition presented in~\cite{MardarePP:LICS16}.
\end{rem}
 
\medskip
For any $X \in \Met$, one can define a quantitative 
$\Sigma_{\Semi}$-algebra $(\mathcal{P}_f X, \phi_X)$ as follows, for arbitrary $U, V \in \mathcal{P}_f X$
\begin{align*}
  \phi_X \colon \Sigma_{\Semi}\mathcal{P}_f X \to \mathcal{P}_f X
  &&
  \begin{aligned}
  \phi_X(U+V) &= U \cup V \,, \\
  \phi_X({\zero}) &= \emptyset \,.
  \end{aligned}
\end{align*}
This quantitative algebra satisfies the quantitative theory $\Semi$, 
(\cf~\cite[Theorem~9.2]{MardarePP:LICS16}) and it is isomorphic to the 
free quantitative $\Sigma_{\Semi}$-algebra on $\Semi$ (\cf~\cite[Theorem~9.3]{MardarePP:LICS16}).

Thus, as shown in~\cite{MardarePP:LICS16}, $\mathcal{P}_f$ is isomorphic to the free monad $T_{\Semi}$ 
on the theory of quantitative join-semilattices with bottom.  
\begin{thm} \label{th:isofSemilatticesMonad}
The monads $T_{\Semi}$ and $\mathcal{P}_f$ on $\Met$ are isomorphic.  
\end{thm}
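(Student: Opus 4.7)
The plan is to follow the template of the proofs of Theorems~\ref{th:isoReaderMonad} and~\ref{th:isoWriterMonad}: I would show that the quantitative algebra $(\mathcal{P}_f X, \phi_X)$ defined in the statement is free in $\KK[\Sigma_{\Semi}]{\Semi}$ over $X \in \Met$, with universal arrow given by the singleton map $\sigma_X$, and then derive the monad isomorphism by checking that the units and multiplications match up to the resulting iso.

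Soundness first. The equations (\Szero)--(\Sthree) reduce to the standard identities of commutative idempotent monoids applied to $(\mathcal{P}_f X, \cup, \emptyset)$, while the quantitative axiom (\Sfour) is the well-known fact that binary union is non-expansive with respect to the Hausdorff metric, namely $\mathcal{H}(d_X)(U \cup U', V \cup V') \leq \max \{ \mathcal{H}(d_X)(U,V), \mathcal{H}(d_X)(U',V') \}$. The latter also guarantees that $\phi_X$ is itself non-expansive, hence a legitimate morphism in $\Met$.

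For freeness, given any $(A,a) \in \KK[\Sigma_{\Semi}]{\Semi}$ and non-expansive $\beta \colon X \to A$, I would define $h \colon \mathcal{P}_f X \to A$ by $h(\emptyset) = \zero^a$ and $h(\{x_1, \dots, x_n\}) = \beta(x_1) +^a \cdots +^a \beta(x_n)$, where $+^a$ and $\zero^a$ denote the interpretations of the operation symbols in $(A,a)$. Well-definedness (independence from the chosen enumeration) follows from (\Sone)--(\Sthree). Verifying that $h$ is a $\Sigma_{\Semi}$-homomorphism with $h \circ \sigma_X = \beta$ is a direct calculation, and uniqueness is forced because every finite set is generated from $\emptyset$ and singletons via $+$, so any candidate $g$ with the homomorphism property must agree with $h$ on those generators.

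The main obstacle is proving non-expansiveness of $h$ with respect to the Hausdorff distance, that is, $d_A(h(U), h(V)) \leq \mathcal{H}(d_X)(U,V)$. Writing $\epsilon = \mathcal{H}(d_X)(U,V)$, for each $u \in U$ I would select $v_u \in V$ with $d_X(u, v_u) \leq \epsilon$ and for each $v \in V$ some $u_v \in U$ with $d_X(u_v, v) \leq \epsilon$. Using idempotence, commutativity, and associativity derived from (\Sone)--(\Sthree), both $h(U)$ and $h(V)$ can be rewritten as $+^a$-combinations indexed over a common finite list of pairs $W \subseteq U \times V$ whose left and right projections generate $U$ and $V$ respectively, with each $(x,y) \in W$ satisfying $d_X(x,y) \leq \epsilon$. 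Applying (\Sfour) (equivalently, repeated use of (\Nexp{+})) along this common indexing, together with non-expansiveness of $\beta$, yields $d_A(h(U), h(V)) \leq \epsilon$, as required. Once freeness is established, the isomorphism of monads follows exactly as in Theorem~\ref{th:isoReaderMonad}: the units coincide since $\sigma_X$ is precisely the universal arrow, and the multiplications coincide by uniqueness of homomorphic extensions applied to the free algebra on $\mathcal{P}_f X$.
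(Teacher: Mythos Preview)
Your proposal is correct and follows exactly the strategy the paper adopts: the paper does not spell out a proof here but defers to \cite[Theorems~9.2 and 9.3]{MardarePP:LICS16}, which establish precisely the soundness and freeness of $(\mathcal{P}_f X,\phi_X)$ that you outline, after which the monad isomorphism follows as in Theorems~\ref{th:isoReaderMonad} and~\ref{th:isoWriterMonad}. Your non-expansiveness argument for $h$ via a common indexing set $W\subseteq U\times V$ is the standard one; just note explicitly that for finite $U,V$ the infima in the Hausdorff distance are attained, so the witnesses $v_u,u_v$ exist, and that the cases where $U$ or $V$ is empty are trivial since then $\mathcal{H}(d_X)(U,V)\in\{0,\infty\}$.
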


As a direct consequence of Lemma~\ref{lem:complHausdorff} and Theorem~\ref{th:isofSemilatticesMonad}
we obtain the following result, which, in combination with Theorem~\ref{th:CompleteEilenbergMoore}, tells us that  
$(\Sigma_{\Semi}, \Semi)$ is an algebraic presentation of the compact subsets monad $\C$ on complete metric spaces.
\begin{thm} \label{th:isoCompactSubsets}
The monads $\CC T_{\Semi}$ and $\C$ on $\CMet$ are isomorphic.  
\end{thm}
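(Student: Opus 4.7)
My plan is to prove this as an immediate corollary chaining together the two previously established results: Theorem~\ref{th:isofSemilatticesMonad}, which gives $T_{\Semi} \iso \mathcal{P}_f$ as monads on $\Met$, and Lemma~\ref{lem:complHausdorff}, which gives $\CC\mathcal{P}_f \iso \C$ as monads on $\CMet$. The strategy is simply to transport the first isomorphism across the Cauchy completion functor to land in $\CMet$, and then compose with the second isomorphism.

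First I would verify that $\Semi$ is a continuous quantitative theory, so that Theorem~\ref{th:freeCQAlgebra} applies and $\CC T_{\Semi}$ is a well-defined monad on $\CMet$. The axioms $(\Szero)$--$(\Sthree)$ are purely equational, and $(\Sfour)$ is a continuous schema since the function $(\e, \e') \mapsto \max\{\e, \e'\}$ is continuous; hence $\Semi$ is continuous.

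Next I would lift the isomorphism of Theorem~\ref{th:isofSemilatticesMonad} along $\CC$. Writing $\theta \colon T_{\Semi} \nat \mathcal{P}_f$ for the monad isomorphism on $\Met$, the components $\CC\theta_X \colon \CC T_{\Semi} X \to \CC\mathcal{P}_f X$ are isomorphisms in $\CMet$ for every $X \in \CMet$, and naturality in $\Met$ together with functoriality of $\CC$ gives naturality in $\CMet$. Compatibility with units and multiplications transfers to the completed monads because the unit $\CC\eta^{\Semi}$ and multiplication $\CC\mu^{\Semi}$ of $\CC T_{\Semi}$ are obtained (up to iso) by applying $\CC$ componentwise to those of $T_{\Semi}$ (and analogously for $\CC\mathcal{P}_f$), so the relevant coherence squares remain commutative after applying $\CC$. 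Thus $\CC T_{\Semi} \iso \CC\mathcal{P}_f$ as monads on $\CMet$.

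Finally, composing with the monad isomorphism $\CC\mathcal{P}_f \iso \C$ from Lemma~\ref{lem:complHausdorff} yields $\CC T_{\Semi} \iso \C$ as monads on $\CMet$. I do not anticipate a real obstacle here, as all of the substantive work has already been done in Theorem~\ref{th:isofSemilatticesMonad} and Lemma~\ref{lem:complHausdorff}; the only mild verification is that monad coherence is preserved by completion, which is a routine consequence of the fact that $\CC$ is the left adjoint to the inclusion $\CMet \hookrightarrow \Met$ and that $\C$ has the requisite universal property of the free quantitative semilattice with bottom in $\CMet$ (as guaranteed by Theorem~\ref{th:freeCQAlgebra} applied to the continuous theory $\Semi$).
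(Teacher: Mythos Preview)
Your proposal is correct and follows exactly the route the paper indicates: the theorem is stated there as a direct consequence of Lemma~\ref{lem:complHausdorff} and Theorem~\ref{th:isofSemilatticesMonad}, with no further proof given. The additional checks you spell out (continuity of $\Semi$, transporting the monad isomorphism along $\CC$) are precisely the routine details the paper leaves implicit.
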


%%%%%%%%%%%%%%%%%%%%%%%%%%%%%%%%%%%%%%%%%%%%%%
\subsection{Probabilistic choice}
\label{sec:probchoice}
The monadic effect describing probabilistic choice (a.k.a., probabilistic nondeterminism) in $\Set$ is given by the
\emph{finitely supported distribution monad}, denoted by $\mathcal{D}$. This monad acts on sets $X$ as 
\begin{equation*}
\mathcal{D}X = \Big\{ p \colon X \to [0,1] \mid \sum_{x \in X} p(x) = 1, \supp{p} \text{ is finite} \Big\} \,.
\end{equation*}
\ie, the set of probability distributions $p$ with finite support $\supp{p} = \{ x \mid p(x) \neq 0 \}$ over $X$,
and acting on morphisms $f \colon  X \to Y$ as $\mathcal{D}(f)(p) = \sum_{x \in f^{-1}(y)} p(x)$, for 
$p \in \mathcal{D}X$.
The unit $\delta \colon Id \nat \mathcal{D}$ is given by the Dirac distribution function $\delta_X(x) = (x \mapsto 1)$, for $x \in X$, and the multiplication $m \colon \mathcal{D}\mathcal{D} \nat \mathcal{D}$ by 
$m_X(P)(x) = \sum_{p \in \supp{P}} P(p) \cdot p(x)$, for $P \in \mathcal{D}\mathcal{D}X$. 

\medskip
In the quantitative setting, given a metric space, one can more generally consider the set of Borel probability measures over it (those defined on the Borel $\sigma$-algebra induced by the metric).
There are several ways of measuring the ``difference'' between probability measures, \eg, using the Total Variation distance, Hellinger distance, Kullback--Leibler divergence, Jensen--Shannon divergence, etc. Here, however, we focus on one specific notion of distance: the \emph{Kantorovich metric}~\cite{Kantorovich42} (a.k.a., Wasserstein-1 metric, or Earth Mover's Distance). This distance has applications in optimisation and measure theory, as it is related to the concept
of transportation problem~\cite{Villani08} and metrizes weak-convergence of probability measures~\cite{Billingsley:convergence}.

Formally, the Kantorovich distance is defined on Radon measures of finite moment, but as we are dealing with distances that may take infinite values, we won't require the latter condition and consider instead integration on nonnegative extended real-valued measurable functions (\cf~\cite{Bar1995} for the formal definition
of the Lebesgue integration of extended real-valued functions).
In detail, a Borel probability measure $\mu$ on an extended metric space $X$ is \emph{Radon} if for 
any Borel set $E \subseteq X$, $\mu(E)$ is the supremum of $\mu(K)$ over all compact subsets 
$K$ of $E$. 

The \emph{Kantorovich extended metric} between Radon probability measures $\mu,\nu$ over 
an extended metric space $X$ is then defined as
\begin{equation*}
\K[d_X]{\mu,\nu} = \min_{\omega}  \lebint{d_X}{\omega} \,,
\end{equation*}
where $\omega$ runs over the set of all joint probability measures on $X \times X$ whose left and right 
marginals (= pushforwards along the projections) are, respectively, $\mu$ and $\nu$.

Examples of Radon probability measures are: (i) finitely supported
probability measures on any (extended) metric space, and (ii) generic Borel probability
measures over complete separable (extended) metric spaces.

\medskip
In this paper, we consider two distinct quantitative monads for probabilistic nondeterminism: one on $\Met$ and  one on $\CMet$.

On $\Met$, the \emph{finitely supported probability monad}, denoted by $\Pi$, assigns 
to an extended metric space $X$ the space $\Pi X$ of finitely supported Borel
probability measures with Kantorovich metric; and acts on morphisms $f \colon X \to Y$  
as $\Pi(f)(\mu) = \mu \circ f^{-1}$ (the pushforward of $f$), for any $\mu \in \Pi(X)$.
The unit $\delta \colon Id \Rightarrow \Pi$ and multiplication $m \colon \Pi\Pi \Rightarrow \Pi$, 
are given as follows, for $x \in X$, $\Phi \in \Pi \Pi X $, and  Borel subset $E \subseteq X$
\begin{equation}
\begin{aligned}
\delta_X(x) &= \delta_x \,, && \textsc{unit} 
\\
m_X(\Phi)(E) &= \lebint{v_E}{\Phi} && \textsc{multiplication} \,,
\end{aligned}
\label{eq:KantorovichUnitMult}
\end{equation}
where $\delta_x$ is the Dirac delta measure at $x$, and
$v_E \colon \Pi X \to [0,1]$ is the evaluation function, taking
$\mu \in \Pi X $ to $\mu(E) \in [0,1]$.

\smallskip
On $\CMet$, the \emph{Radon probability monad}, denoted by $\Delta$, maps a complete
extended metric space $X$ to the (complete) extended metric space $\Delta X$ of 
Radon probability measures with Kantorovich metric; and 
acts on morphisms $f \colon X \to Y$ as $\Delta(f)(\mu) = \mu \circ f^{-1}$, for
$\mu \in \Delta(X)$.  
The unit $\delta \colon Id \Rightarrow \Delta$ and multiplication $m \colon \Delta\Delta \Rightarrow \Delta$, 
are defined as in~\eqref{eq:KantorovichUnitMult}.

\smallskip
These two monads are related as follows
\begin{lem} \label{lem:complGiry}
The monads $\CC\Pi$ and $\Delta$ on $\CMet$ are isomorphic.
\end{lem}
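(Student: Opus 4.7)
The plan is to mirror the proof of Lemma~\ref{lem:complHausdorff}: identify $\Pi X$ as a dense isometric subspace of $\Delta X$ in the Kantorovich extended metric, invoke completeness of $\Delta X$ to obtain $\CC \Pi X \iso \Delta X$ as extended metric spaces, and finally verify that this isometric bijection transports the monad structure. First I would observe that every finitely supported Borel probability measure is Radon (its support is finite, hence compact), so the inclusion $\Pi X \subseteq \Delta X$ is isometric for the Kantorovich metric.

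The main technical step is density. Given $\mu \in \Delta X$ and $\epsilon > 0$, I would exploit Radonness to pick a compact $K \subseteq X$ with $\mu(X \setminus K)$ small and, using total boundedness of $K$, cover $K$ by finitely many open balls $B_\epsilon(x_i)$ with $x_i \in K$. Disjointifying yields a Borel partition $A_1, \dots, A_n$ of $K$ with $A_i \subseteq B_\epsilon(x_i)$, and I would set
\begin{equation*}
\nu \,=\, \mu(X \setminus K) \cdot \delta_{x_1} \,+\, \sum_{i=1}^{n} \mu(A_i) \cdot \delta_{x_i} \,\in\, \Pi X \,.
\end{equation*}
The explicit coupling transporting mass of $\mu$ in each $A_i$ onto $x_i$, and mass in $X \setminus K$ onto $x_1$, yields the bound $\K[d_X]{\mu, \nu} \leq \epsilon + \mu(X \setminus K) \cdot \sup_{y \in X \setminus K} d_X(y, x_1)$, which is then driven below any prescribed threshold by a suitable refinement of $K$.

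With density and the stated completeness of $\Delta X$, the universal property of the Cauchy completion functor $\CC$ yields a canonical isometric bijection $\CC \Pi X \iso \Delta X$ natural in $X$. Finally I would verify that this bijection is a morphism of monads: the units agree trivially, since Dirac measures are finitely supported; the multiplications agree on the dense subspace $\Pi\Pi X \subseteq \Delta\Delta X$ (both being given by the same Lebesgue-integral formula \eqref{eq:KantorovichUnitMult}) and are non-expansive in the Kantorovich metric by a Fubini-type argument, hence they agree on the completion by uniqueness of non-expansive extensions.

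The main obstacle is the density step in the extended-metric setting: the tail term $\mu(X \setminus K) \cdot \sup_{y \in X \setminus K} d_X(y, x_1)$ is genuinely infinite whenever $\mu$ charges components of $X$ lying at infinite distance from $x_1$, so the naive choice of $K$ does not suffice. Overcoming this uses that any compact set meets only finitely many finite-distance components of $X$, together with countable additivity of $\mu$ across the at-most-countable family of components it charges, to pick a $K$ that simultaneously absorbs a $(1-\epsilon)$-fraction of the mass within each such component.
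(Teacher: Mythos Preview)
The paper states this lemma without proof, so there is no argument of the paper's to compare against; I will therefore assess your proposal on its own merits.

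Your overall strategy---show that $\Pi X$ sits isometrically and densely inside the complete space $\Delta X$, then transport the monad structure---is the natural analogue of the proof of Lemma~\ref{lem:complHausdorff}, and the parts about the isometric inclusion and the monad structure are fine. The problem is the density step, and specifically the fix you propose for the extended-metric obstacle.

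You correctly note that a compact set can meet only finitely many finite-distance components of $X$, and that $\mu$ charges at most countably many components. You then propose to ``pick a $K$ that simultaneously absorbs a $(1-\epsilon)$-fraction of the mass within each such component.'' These two observations are in tension: if $\mu$ charges \emph{infinitely} many components, no compact $K$ can absorb a positive fraction of the mass in all of them. Concretely, take $X$ to be a countably infinite set $\{x_1,x_2,\dots\}$ with the discrete extended metric (distinct points at distance~$\infty$); this $X$ is complete. The measure $\mu=\sum_{i\ge 1}2^{-i}\delta_{x_i}$ is Radon (finite subsets are compact and approximate any Borel set from within), so $\mu\in\Delta X$. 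But for any finitely supported $\nu$, every coupling of $\mu$ with $\nu$ must move strictly positive mass across an infinite gap, so $\K[d_X]{\mu,\nu}=\infty$. Thus $\Pi X$ is \emph{not} dense in $\Delta X$ here; worse, $\Pi X$ already carries the discrete extended Kantorovich metric and is therefore complete, giving $\CC\Pi X=\Pi X\subsetneq\Delta X$.

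So the density route, as written, cannot close. Either the statement needs an additional hypothesis (e.g.\ restricting $\Delta$ to Radon measures whose mass is concentrated on finitely many finite-distance components, which is exactly the closure of $\Pi X$), or a genuinely different argument is required that does not go through density of $\Pi X$ in $\Delta X$. Your identification of the obstacle is on point; the proposed resolution is the part that does not go through.
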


\subsubsection{Interpolative Barycentric Algebras}
\label{sec:barycentric}
In~\cite{MardarePP:LICS16} it was shown that the quantitative probability monads considered above have an algebraic presentation in terms of a quantitative extension of barycentric algebras, which they called \emph{interpolative barycentric algebras}.

%We recall interpolative barycentric algebras from 
%and prove that the free monad induced by the interpolative barycentric quantitative equational 
%theory is isomorphic to the finite distribution monad.
Interpolative barycentric algebras are the quantitative algebras for the signature  
\begin{equation*}
  \Sigma_{\B} = \set{ +_e \colon 2 }{ e \in [0,1]}
\end{equation*}
with a binary operator $+_e$, for each $e \in [0,1]$ (a.k.a.\
\emph{barycentric signature}), and satisfying the quantitative equations
\begin{align*}
{(\Bone)} 
& \vdash x +_1 y =_0 x \,, \\
{ (\Btwo)} 
& \vdash x +_e x =_0 x \,, \\
{ (\SC)} 
& \vdash x +_e y =_0 y +_{1-e} x \,, \\
{ (\SA)} 
& \vdash (x +_e y) +_{e'} z =_0 x +_{ee'} (y +_{\frac{e' - ee'}{1 - ee'}} z) \,, 
\text{ for $e,e' \in [0,1)$} \,, \\ 
{ (\IB)} \,
& \{ x \,{=_\e}\, y, x' \,{=_{\e'}}\, y' \} \,{\vdash}\, x +_e x' \,{=_{\delta}}\, y +_e y', \, 
\text{ for $\delta \geq e \e + (1-e) \e'$} \,.
\end{align*}
The quantitative theory axiomatized by the quantitative equations above, written $\B$, is 
called \emph{interpolative barycentric quantitative equational theory}.
The axioms (\Bone), (\Btwo), (\SC), (\SA) are those of \emph{barycentric algebras} (a.k.a.\ 
\emph{abstract convex sets}) due to M.\ H.\ Stone~\cite{Stone49} where
(\SC) stands for \emph{skew commutativity} and (\SA) for \emph{skew
  associativity}; (\IB) is the \emph{interpolative barycentric axiom}
introduced in~\cite{MardarePP:LICS16}.  

\medskip
For any $X \in \Met$, one can define a quantitative 
$\Sigma_{\B}$-algebra $(\Pi X, \phi_X)$ as follows, for arbitrary $\mu, \nu \in \Pi X$
\begin{align*}
  \phi_X \colon \Sigma_{\B}\Pi X \to \Pi X
  &&
  \phi_X(\mu +_e \nu) = e  \mu + (1-e)  \nu \,.
\end{align*}
This quantitative algebra satisfies the interpolative barycentric theory
$\B$ (\cf~\cite[Theorem~10.4]{MardarePP:LICS16}) and is isomorphic to the 
free quantitative $\Sigma_{\B}$-algebra on $\B$ (\cf~\cite[Theorem~10.5]{MardarePP:LICS16}).

Thus, as shown in~\cite{MardarePP:LICS16}, $\Pi$ is isomorphic to the free monad $T_{\B}$ 
on the theory $\B$ of interpolative barycentric algebras.  
\begin{thm} \label{th:isofBarycentricMonad}
The monads $T_{\B}$ and $\Pi$ on $\Met$ are isomorphic.  
\end{thm}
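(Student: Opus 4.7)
The plan is to follow the template used in the proofs of Theorems~\ref{th:isoReaderMonad} and~\ref{th:isoWriterMonad}. I endow $\Pi X$ with the quantitative $\Sigma_{\B}$-algebra structure $\phi_X(\mu +_e \nu) = e\mu + (1-e)\nu$, verify that $(\Pi X, \phi_X)$ lies in $\KK[\Sigma_{\B}]{\B}$, and then show that it is the free such algebra with universal arrow $\delta_X \colon X \to \Pi X$. This immediately yields $T_{\B} X \iso \Pi X$ together with coinciding units, and coincidence of multiplications is forced by the universal property of the free algebra, exactly as in the closing argument of the proof of Theorem~\ref{th:isoWriterMonad}.

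For the satisfaction of $\B$, the equational axioms $(\Bone)$, $(\Btwo)$, $(\SC)$, $(\SA)$ are classical identities of convex combinations, verified by direct computation on finitely supported measures. The substantive point is the quantitative axiom $(\IB)$: given couplings $\omega \in \coupling{\mu}{\nu}$ and $\omega' \in \coupling{\mu'}{\nu'}$ witnessing $d_{\Pi X}(\mu,\nu) \leq \e$ and $d_{\Pi X}(\mu',\nu') \leq \e'$, the convex combination $e\omega + (1-e)\omega'$ is itself a coupling of $e\mu + (1-e)\mu'$ with $e\nu + (1-e)\nu'$, and integrating $d_X$ against it yields $d_{\Pi X}(e\mu + (1-e)\mu',\, e\nu + (1-e)\nu') \leq e\e + (1-e)\e'$. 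Non-expansiveness $(\Nexp{+_e})$ is then the special case $\e = \e'$ combined with $(\Weak)$.

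For freeness, given $(A,a) \in \KK[\Sigma_{\B}]{\B}$ and a non-expansive map $\beta \colon X \to A$, I define $h \colon \Pi X \to A$ on a finitely supported measure $\mu = \sum_{i=1}^{n} p_i \delta_{x_i}$ as the nested convex combination of $\beta(x_1), \dots, \beta(x_n)$ formed using the interpretations of the operators $+_e$ in $A$. The barycentric axioms $(\Bone)$--$(\SA)$ ensure that $h(\mu)$ is independent of the chosen presentation of $\mu$ as a nested convex combination of Diracs; this is the classical fact that $\Pi X$ is the free barycentric algebra on $X$ in $\Set$. Non-expansiveness of $h$ follows from $(\IB)$ by a coupling argument, the equation $h \circ \delta_X = \beta$ follows from $(\Bone)$, the homomorphism square commutes by construction of $h$, and uniqueness holds because every $\mu \in \Pi X$ is generated from Dirac measures under $\phi_X$.

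The main obstacle is the rigorous verification of the well-definedness and non-expansiveness of $h$, because a finitely supported measure admits many presentations as nested convex combinations and one must check that the resulting element of $A$ is the same and respects the Kantorovich distance. Fortunately, both steps are already carried out in~\cite[Theorems~10.4 and~10.5]{MardarePP:LICS16}, which establish, respectively, that $(\Pi X, \phi_X)$ satisfies $\B$ and that it is the free object over $X$ in $\KK[\Sigma_{\B}]{\B}$. Invoking these results, the isomorphism of monads is completed by transporting $\phi_X$ through the freeness isomorphism to identify the two multiplications, verbatim as in the final diagram chase of the proof of Theorem~\ref{th:isoWriterMonad}.
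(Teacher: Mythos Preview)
Your proposal is correct and follows the same approach as the paper: the paper does not give an independent proof here but simply records that $(\Pi X,\phi_X)$ satisfies $\B$ and is free in $\KK[\Sigma_{\B}]{\B}$ by \cite[Theorems~10.4 and~10.5]{MardarePP:LICS16}, from which the monad isomorphism follows. You invoke exactly the same two results and, if anything, are more explicit than the paper in spelling out the coupling argument for $(\IB)$ and the final-diagram step for matching the multiplications.
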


As a direct consequence of Lemma~\ref{lem:complGiry} and Theorem~\ref{th:isofBarycentricMonad}
we obtain the following result, which, in combination with Theorem~\ref{th:CompleteEilenbergMoore}, tells us that  
$(\Sigma_{\B}, \B)$ is an algebraic presentation of the Radon probability monad $\Delta$ on complete metric spaces.
\begin{thm} \label{th:isoGiry}
The monads $\CC T_{\B}$ and $\Delta$ on $\CMet$ are isomorphic.  
\end{thm}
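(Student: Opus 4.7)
The plan is to chain two prior results. Theorem~\ref{th:isofBarycentricMonad} gives $T_{\B} \iso \Pi$ as monads on $\Met$, and Lemma~\ref{lem:complGiry} gives $\CC \Pi \iso \Delta$ as monads on $\CMet$. So it suffices to establish the bridging isomorphism $\CC T_{\B} \iso \CC \Pi$ on $\CMet$ and then compose these three by transitivity. This is exactly the template already used for Theorems~\ref{th:isoReaderMonadCompletion}, \ref{th:isoWriterMonadCompletion}, and \ref{th:isoCompactSubsets}.

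For the bridging step, I would first observe that the interpolative barycentric theory $\B$ is continuous: the only truly quantitative axiom $(\IB)$ requires $\delta \geq e \e + (1-e)\e'$, where the right-hand side is a continuous real-valued function of $\e, \e'$; all other axioms are purely equational. Hence Theorem~\ref{th:freeCQAlgebra} and the discussion following it apply, so $\CC T_{\B}$ is the free monad on $\B$ in $\CMet$, carrying a canonical monad structure. Because the isomorphism $T_{\B} \iso \Pi$ of Theorem~\ref{th:isofBarycentricMonad} is a monad map on $\Met$, applying the reflector $\CC$ levelwise transports it to an isomorphism $\CC T_{\B} \iso \CC \Pi$ of endofunctors on $\CMet$; the monad structure on $\CC \Pi$ is then the unique one making this iso a monad isomorphism, and it agrees with that induced by the universal property of the free complete quantitative $\Sigma_\B$-algebras $(\CC T_{\B} X, \CC \psi_X^{\B})$.

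Putting it all together, $\CC T_{\B} \iso \CC \Pi \iso \Delta$ as monads on $\CMet$, yielding the theorem. The only point I expect to demand care is checking that the unit and multiplication of $\CC T_{\B}$ correspond, under the two chained isomorphisms, to the Dirac map $\delta_X$ and the integration-based multiplication $m_X(\Phi)(E) = \lebint{v_E}{\Phi}$ of $\Delta$ from \eqref{eq:KantorovichUnitMult}. This reduces to unfolding the adjunction between $\CC$ and the inclusion $\CMet \hookrightarrow \Met$, using that finitely supported Radon measures on a complete space are dense in all Radon measures in the Kantorovich metric (this is precisely what Lemma~\ref{lem:complGiry} encodes). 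No new obstacle is expected beyond this routine bookkeeping.
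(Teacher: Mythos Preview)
Your proposal is correct and follows essentially the same approach as the paper, which simply records the theorem as ``a direct consequence of Lemma~\ref{lem:complGiry} and Theorem~\ref{th:isofBarycentricMonad}''. You have spelled out the bridging step $\CC T_{\B} \iso \CC \Pi$ and the continuity of $\B$ in more detail than the paper does, but the argument is the same.
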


%%%%%%%%%%%%%%%%%%%%%%%%%%%%%%%%%%%%%%%%%%%%%%
\section{Sum of Quantitative Theories}
\label{sec:sum} 

In this section we develop the theory of the \emph{sum} (or disjoint union) of
quantitative equational theories and show it to correspond to the sum quantitative algebraic effects
whose presentation is given in terms of basic quantitative theories.

Our leading examples of the sum of quantitative effects are given by the combination of 
termination/exceptions with arbitrary quantitative effects; and the combination of 
interactive inputs (more generally, a collection of contractive operators) with arbitrary quantitative effects.
We conclude this section by showing how we can recover the
theory of quantitative Markov processes (\ie, the usual theory of Markov processes but now enriched with metric reasoning principles for the underlying probability measures) in terms of these two generic
combinators of quantitative effects.

\medskip
Let $\Sigma$, $\Sigma'$ be two disjoint signatures.  The
sum of two quantitative theories $\U$, $\U'$ of
respective types 
$\Sigma$ and $\Sigma'$, written $\U + \U'$, is the
smallest quantitative theory containing $\U$ and $\U'$.  Following
Kelly~\cite{Kelly1980}, we show that any model for $\U + \U'$ is a
\emph{$\tupl{\U, \U'}$-bialgebra}: a metric space $A$ with both a
$\Sigma$-algebra structure $\alpha \colon \Sigma A \to A$ satisfying $\U$
and a $\Sigma'$-algebra structure $\beta \colon \Sigma' A \to A$ satisfying
$\U'$.  Formally, let $\KK{(\Sigma, \U) \oplus (\Sigma', \U')}$ be the
category of $\tupl{\U, \U'}$-bialgebras with non-expansive maps 
preserving the two algebraic structures. Then, the following isomorphism
of categories holds.
\begin{prop} \label{prop:bialgebras}
$\KK{\Sigma + \Sigma', \U + \U'} \iso \KK{(\Sigma, \U) \oplus (\Sigma',
  \U')}$.
\end{prop}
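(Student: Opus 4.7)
The plan is to exploit the universal property of the coproduct $\Sigma + \Sigma'$ in the functor category to split a $(\Sigma+\Sigma')$-algebra structure on a metric space $A$ into a pair of structures, one for each signature, and then to check that satisfaction of the theory $\U + \U'$ corresponds, under this splitting, to satisfaction of both $\U$ and $\U'$.

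Concretely, I will first define the correspondence on objects. Given a $(\Sigma + \Sigma')$-algebra $(A,\gamma)$, using the canonical coproduct injections $\iota_1 \colon \Sigma \hookrightarrow \Sigma + \Sigma'$ and $\iota_2 \colon \Sigma' \hookrightarrow \Sigma + \Sigma'$, set $\alpha = \gamma \circ \iota_{1,A}$ and $\beta = \gamma \circ \iota_{2,A}$. Conversely, given a bialgebra $(A,\alpha,\beta)$, the copairing $[\alpha,\beta] \colon \Sigma A + \Sigma' A \to A$ provides a $(\Sigma+\Sigma')$-algebra structure on $A$. That these operations are mutually inverse is immediate from the universal property of the coproduct. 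Since signature functors on $\Met$ coincide with coproducts of power functors and the interpretations of $f \in \Sigma$ (resp.\ $g \in \Sigma'$) inside $(A,\gamma)$ coincide with those inside $(A,\alpha)$ (resp.\ $(A,\beta)$), the interpretation of any $\Sigma$-term in $(A,\gamma)$ depends only on $\alpha$ and symmetrically for $\Sigma'$-terms.

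Next, I need to match satisfaction of equations. Since $\U + \U'$ is by definition the smallest quantitative theory containing $\U \cup \U'$, closure under the deduction rules ({\sc Refl, Symm, Triang, Weak, Nexp, Cont, Subst, Assum, Cut}) is automatic. Soundness is preserved under these rules, so $(A,\gamma) \models \U + \U'$ iff $(A,\gamma) \models \U \cup \U'$; by the previous observation about term interpretations, this holds iff $(A,\alpha) \models \U$ and $(A,\beta) \models \U'$. Hence the bijection on objects restricts to one between $\KK{\Sigma + \Sigma', \U + \U'}$-objects and $\tupl{\U,\U'}$-bialgebras.

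Finally, a non-expansive map $f \colon A \to B$ is a $(\Sigma + \Sigma')$-homomorphism iff $f \circ \gamma_A = \gamma_B \circ (\Sigma+\Sigma')f$, which, by precomposing with $\iota_{1,A}$ and $\iota_{2,A}$ and using the universal property of the coproduct once more, is equivalent to being simultaneously a $\Sigma$- and a $\Sigma'$-homomorphism. This gives the isomorphism at the level of morphisms, completing the proof. The only subtle step is the reverse direction of the equation-satisfaction argument: one must be confident that no new (quantitative) equations are introduced by $\U + \U'$ beyond what is provable from $\U \cup \U'$ together with the structural deduction rules — but since soundness holds for all these rules in any quantitative algebra, this causes no difficulty.
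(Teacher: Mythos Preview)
Your proposal is correct and follows essentially the same approach as the paper: both construct the isomorphism via the coproduct universal property (splitting $\gamma$ into $\gamma\circ\iota_1$ and $\gamma\circ\iota_2$, and conversely copairing), and both handle equation satisfaction by observing that interpretations of $\Sigma$-terms depend only on the $\Sigma$-part of the structure together with the fact that soundness of the deduction rules means it suffices to check the generators $\U\cup\U'$. The paper spells out the term-interpretation correspondence via an explicit induction on $\Sigma$-terms, whereas you state it more abstractly, but the argument is the same.
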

\begin{proof}
The isomorphism is given by the following pair of functors 
\begin{equation*}
\begin{tikzcd}
\KK{\Sigma + \Sigma', \U + \U'} \arrow[rr, bend left=10, "H"]
&  & 
\KK{(\Sigma, \U) \oplus (\Sigma', \U')} \arrow[ll, bend left=10, "K"] 
\end{tikzcd}
\end{equation*}
defined, for an arbitrary quantitative $(\Sigma + \Sigma')$-algebra $(A,
\gamma)$ satisfying $\U + \U'$ and a 
$\tupl{\U, \U'}$-bialgebra $(B, \alpha, \beta)$, respectively as
\begin{align*}
H(A, \gamma) = (A, \gamma \circ in_l, \gamma \circ in_r) \,,
&&
K(B, \alpha, \beta) = (B, [\alpha, \beta]) \,, 
\end{align*}
where $[\alpha, \beta]$ is the unique map induced by $\alpha$ 
and $\beta$ by universality of the coproduct $\Sigma A + \Sigma' A$. 
On morphisms both functors map a morphism to itself; it is easy to see that
a homomorphism in one sense is also a homomorphism in the other. 

The fact that the functors are inverses is clear: $H \circ K = Id$ and
$K \circ H = Id$ follow immediately from the universal property of
coproducts.  It remains to show that the functors preserve the relevant quantitative equations, and thus are well defined.

To show that $H$ is well defined we need to prove that whenever
$(A, \gamma)$ satisfies $\U + \U'$, then $(A, \gamma \circ in_l)$ and
$(A, \gamma \circ in_r)$ satisfy $\U$ and $\U'$, respectively.  We will
prove only that $(A, \gamma \circ in_l)$ satisfies $\U$; the other
is similar.  
Let $\Gamma \vdash t =_\e s \in \U$ and $\iota \colon X \to A$ be an arbitrary assignment
of the variables.  Since $(A, \gamma)$ satisfies $\U + \U'$ and 
$\U \subseteq \U + \U'$, we have:
\begin{equation}
\big( \text{for all $t' =_{\e'} s' \in \Gamma$, }
  d_A(\iota^\sharp(t'),\iota^\sharp(s')) \leq \e'  \big)  
\text{ implies }
d_A(\iota^\sharp(t),\iota^\sharp(s)) \leq \e \,,
\label{eq:sat}
\end{equation}
where $\iota^\sharp \colon \TT{\Sigma+\Sigma',X} \to A$ is the homomorphic
extension of $\iota$ on $(A, \gamma)$.

Let $\iota^\flat \colon \TT{\Sigma,X} \to A$ is the homomorphic
extension of $\iota$ on $(A, \gamma \circ \inj_l)$. Next we show that $\iota^\flat = \iota^\sharp \circ i$, where $i$ is the canonical inclusion of $\Sigma$-terms into
$\TT{\Sigma+\Sigma',X}$. We proceed by induction on $\Sigma$-terms.
(Base case) Let $x \in X$. Then 
$(\iota^\sharp \circ i)(x) = \iota^\sharp(x) = \iota(x) = \iota^\flat(x)$.
(Inductive step) Let $f\colon n \in \Sigma$ and $t_1, \dots, t_n \in \TT{\Sigma,X}$. Then
\begin{align*}
    (\iota^\sharp \circ i)(f(t_1, \dots, t_n))
    &= \iota^\sharp(f(t_1, \dots, t_n)) 
    \tag{$f(t_1, \dots, t_n) \in \TT{\Sigma,X}$} \\
    &= (\gamma \circ \inj_l)(f(\iota^\sharp(t_1), \dots, \iota^\sharp(t_n))) 
    \tag{def. $\iota^\sharp$} \\
    &= \gamma(f(\iota^\sharp(t_1), \dots, \iota^\sharp(t_n))) 
    \tag{$f \in \Sigma$} \\
    &= \gamma(f((\iota^\sharp \circ i)(t_1), \dots, (\iota^\sharp \circ i)(t_n)))
    \tag{$f(t_1, \dots, t_n) \in \TT{\Sigma,X}$} \\
    &= \gamma(f(\iota^\flat(t_1), \dots, \iota^\flat(t_n)))
    \tag{ind. hp}\\
    &= \iota^\flat(f(t_1, \dots, t_n)) \,.
    \tag{def. $\iota^\flat$}
\end{align*}
Since $\Gamma \vdash t =_\e s$ has only occurrences of $\Sigma$-terms, \eqref{eq:sat} implies
\begin{equation*}
\big( \text{for all $t' =_{\e'} s' \in \Gamma$, }
  d_A((\iota^\sharp \circ i)(t'),(\iota^\sharp \circ i)(s')) \leq \e'  \big)  
\text{ implies }
d_A((\iota^\sharp \circ i)(t),(\iota^\sharp \circ i)(s)) \leq \e \,.
\end{equation*}
As $\iota^\flat = \iota^\sharp \circ i$, we can conclude that $(A, \gamma \circ in_l)$ satisfies $\Gamma \vdash t =_\e s$.  This argument is general so it applies to the whole theory $\U$.

For $K$, we need to show that whenever $(A, \alpha)$ satisfies $\U$ and
$(A, \beta)$ satisfies $\U'$, then $(A, [\alpha,\beta])$ satisfies
$\U + \U'$.  The argument resembles the one discussed earlier, so we will omit the 
more detailed explanation. By definition of disjoint union of quantitative
theories, it suffices to prove that $(A, [\alpha,\beta])$ is a model
for both $\U$ and $\U'$.  We consider only the former case; the other
is similarl. Let $\Gamma \vdash t =_\e s \in \U$ and
$\iota \colon X \to A$ be an arbitrary assignment of the variables.  Since
$(A, \alpha)$ satisfies $\U$, we have that 
\begin{equation}
\big( \text{for all $t' =_{\e'} s' \in \Gamma$, }
  d_A(\iota^\flat(t'),\iota^\flat(s')) \leq \e'  \big) 
\text{ implies }  
d_A(\iota^\flat(t),\iota^\flat(s)) \leq \e \,,
\label{eq:sat2}
\end{equation}
where $\iota^\flat \colon \TT{\Sigma+\Sigma',X} \to A$ is the homomorphic extension of 
$\iota$ on $(A, [\alpha,\beta])$. 

Let $\iota^\sharp \colon \colon \TT{\Sigma+\Sigma',X} \to A$ be the homomorphic extension of 
$\iota$ on $(A, \alpha)$. By universal property of the coproduct and definition of homomorphic
extension, we can show that $\iota^\sharp \circ i = \iota^\flat$, for $i$ the canonical inclusion of $\Sigma$-terms into $\TT{\Sigma+\Sigma',X}$.
Since $\Gamma \vdash t =_\e s$ contains only terms $\Sigma$-terms, from \eqref{eq:sat2} and $\iota^\sharp \circ i = \iota^\flat$ we get that $(A, [\alpha,\beta])$ satisfies
$\Gamma \vdash t =_\e s$. Again, this implies the result for all of $\U$.
\end{proof}

Let $T$, $T'$ be two monads on a category $\mathbf{C}$. An Eilenberg-Moore bialgebra for
\emph{$\tupl{T, T'}$} (or simply, $\tupl{T, T'}$-bialgebra) is an object $A \in \mathbf{C}$ with
Eilenberg-Moore algebra structures $\alpha \colon T A \to A$ and
$\beta \colon T'A \to A$. We write $\biEMAlg{T}{T'}$ for the category of
Eilenberg-Moore bialgebras for $\tupl{T, T'}$ with morphisms those 
in $\mathbf{C}$ preserving the two algebraic structures.

When the quantitative equational theories $\U$ and $\U'$ are basic, by
Theorem~\ref{th:EilenbergMoore}, we get a refinement of
Proposition~\ref{prop:bialgebras} as follows.
\begin{cor} \label{cor:bialgSimple}
For $\U, \U'$ basic quantitative theories,  $\KK{\Sigma + \Sigma', \U + \U'} \iso \biEMAlg{T_{\U}}{T_{\U'}}$.
\end{cor}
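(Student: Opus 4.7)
The plan is to obtain the desired isomorphism by composing Proposition~\ref{prop:bialgebras} with the isomorphism from Theorem~\ref{th:EilenbergMoore} applied separately to each of the two basic theories. So the strategy is to factor the target equivalence through the intermediate category $\KK{(\Sigma, \U) \oplus (\Sigma', \U')}$ of $\tupl{\U, \U'}$-bialgebras and transport the algebraic structures from quantitative algebras to EM algebras on each side independently.

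First, I would invoke Proposition~\ref{prop:bialgebras} to get $\KK{\Sigma + \Sigma', \U + \U'} \iso \KK{(\Sigma, \U) \oplus (\Sigma', \U')}$. An object of the latter is, by definition, a metric space $A$ equipped with a $\Sigma$-algebra structure $\alpha \colon \Sigma A \to A$ satisfying $\U$ and a $\Sigma'$-algebra structure $\beta \colon \Sigma' A \to A$ satisfying $\U'$, with morphisms being non-expansive maps that are homomorphic with respect to both structures.

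Next, since $\U$ is basic, Theorem~\ref{th:EilenbergMoore} gives an isomorphism $\KK[\Sigma]{\U} \iso \EMAlg{T_{\U}}$ which, at the level of objects, identifies a $\Sigma$-algebra structure $\alpha \colon \Sigma A \to A$ satisfying $\U$ with an EM algebra structure $\hat\alpha \colon T_{\U} A \to A$, and identifies homomorphisms on either side. This identification is natural in $A$ (it leaves the carrier untouched), so we can apply it componentwise to the first structure of a bialgebra. Symmetrically, because $\U'$ is basic, the same theorem applied to $\U'$ identifies the second structure $\beta$ with an EM algebra structure $\hat\beta \colon T_{\U'} A \to A$. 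Since a morphism of bialgebras is a morphism with respect to each structure separately, and each of the two identifications preserves morphisms, we obtain an isomorphism of categories
\begin{equation*}
\KK{(\Sigma, \U) \oplus (\Sigma', \U')} \iso \biEMAlg{T_{\U}}{T_{\U'}} \,,
\end{equation*}
by simply taking $(A, \alpha, \beta) \mapsto (A, \hat\alpha, \hat\beta)$ on objects and the identity on morphisms (using the characterisation of $\biEMAlg{T_{\U}}{T_{\U'}}$ given just before the corollary).

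Composing the two isomorphisms yields the claim. The step that actually does work is the appeal to Theorem~\ref{th:EilenbergMoore}, but that is already available; there is no real obstacle here, since basicness is exactly the hypothesis needed to convert quantitative $\Sigma$-algebras satisfying $\U$ into EM algebras for $T_{\U}$, and the bialgebra perspective lets us apply this conversion independently on each side without interference.
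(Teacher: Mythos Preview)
Your proposal is correct and follows exactly the paper's approach: the paper's proof is simply ``Immediate from Theorem~\ref{th:EilenbergMoore} and Proposition~\ref{prop:bialgebras},'' and you have spelled out in detail how these two results combine, factoring through $\KK{(\Sigma, \U) \oplus (\Sigma', \U')}$ and applying Theorem~\ref{th:EilenbergMoore} componentwise to each structure.
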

\begin{proof}
Immediate from Theorem~\ref{th:EilenbergMoore} and
Proposition~\ref{prop:bialgebras}. 
\end{proof}

The following result supports the construction of the sum of quantitative theories as 
a combinator of quantitative effects. It states that the free monad $T_{\U + \U'}$ 
on the sum $\U + \U'$ corresponds to the categorical sum (coproduct)
$T_{\U} + T_{\U'}$ of the free monads on $\U$ and $\U'$, respectively. 
This isomorphism of monads stands under the assumption that the sum is taken 
over basic quantitative theories.
\begin{thm} \label{th:sumofsimpleeqmonad}
If $\U, \U'$ are basic quantitative theories, then $T_{\U + \U'}$ is isomorphic to $T_{\U} + T_{\U'}$. 
\end{thm}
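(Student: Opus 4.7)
The plan is to establish the desired isomorphism of monads by exhibiting an isomorphism between their Eilenberg-Moore categories which commutes with the forgetful functors to $\Met$, since a monad is determined up to isomorphism by its EM adjunction.

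The argument proceeds via a chain of three isomorphisms of categories, each preserving underlying metric spaces. First, I would note that the sum $\U + \U'$ is itself basic: the union of two basic axiomatizations is a basic axiomatization of the sum. By Theorem~\ref{th:EilenbergMoore} applied to $\U + \U'$, we therefore have $\EMAlg{T_{\U + \U'}} \iso \KK{\Sigma + \Sigma', \U + \U'}$. Next, Corollary~\ref{cor:bialgSimple} supplies $\KK{\Sigma + \Sigma', \U + \U'} \iso \biEMAlg{T_{\U}}{T_{\U'}}$, with the isomorphism acting as the identity on carriers, as can be read off from the explicit functors $H$ and $K$ constructed in the proof of Proposition~\ref{prop:bialgebras}. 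Finally, I would invoke Kelly's theorem from~\cite{Kelly1980}, which identifies the Eilenberg-Moore algebras of the coproduct of two monads with the bialgebras for the pair: $\biEMAlg{T_{\U}}{T_{\U'}} \iso \EMAlg{T_{\U} + T_{\U'}}$. Composing the three isomorphisms yields an isomorphism $\EMAlg{T_{\U + \U'}} \iso \EMAlg{T_{\U} + T_{\U'}}$ over $\Met$, whence $T_{\U + \U'} \iso T_{\U} + T_{\U'}$ as monads.

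The main technical point to discharge is that the coproduct $T_{\U} + T_{\U'}$ actually exists in the category of monads on $\Met$, so that Kelly's characterization of its EM category applies. This rests on $\Met$ being locally countably presentable (\cf\ Appendix~\ref{sec:EMet-lcp}) and on the fact, due to Ford et al.~\cite{FordMS21} and recalled in the introduction, that monads induced by quantitative equational theories have countable rank; under these hypotheses the construction of colimits of monads in~\cite{Kelly1980} applies and delivers both the coproduct and the bialgebra description of its EM category. Everything else is bookkeeping: chaining the three category isomorphisms and observing that they each act as the identity on the underlying metric space, so the composite is compatible with the forgetful functors, which is what is needed to transport the adjunction and conclude that the two monads coincide.
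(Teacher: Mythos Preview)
Your proposal is correct and follows essentially the same route as the paper: both arguments rest on Corollary~\ref{cor:bialgSimple} to identify $\KK{\Sigma+\Sigma',\U+\U'}$ with $\biEMAlg{T_{\U}}{T_{\U'}}$, and then invoke Kelly~\cite{Kelly1980} to connect bialgebras with the coproduct of monads. The only cosmetic difference is that the paper uses Theorem~\ref{th:freeQAlgebra} directly to show the forgetful functor $\biEMAlg{T_{\U}}{T_{\U'}}\to\Met$ has a left adjoint and that the induced monad is $T_{\U+\U'}$ (so Kelly's result gives existence and identification of the coproduct in one stroke), whereas you route through Theorem~\ref{th:EilenbergMoore} and an explicit chain of EM-category isomorphisms, with existence of $T_{\U}+T_{\U'}$ argued separately from local presentability and countable rank; both packagings are fine.
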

\begin{proof}
  By Corollary~\ref{cor:bialgSimple} and Theorem~\ref{th:freeQAlgebra} the
  obvious forgetful functor from $\biEMAlg{T_{\U}}{T_{\U'}}$ to $\Met$ has a
  left adjoint. The monad generated by this adjunction is isomorphic to
  $T_{\U + \U'}$. Thus, by~\cite{Kelly1980} (cf.\ also
  \cite[Proposition~2.8]{AdamekMBL12}), the monad $T_{\U + \U'}$ is 
  isomorphic to $T_{\U} + T_{\U'}$.
\end{proof}

\medskip
The above constructions do not use any specific property of the category
$\Met$, apart from requiring its morphisms to be non-expansive.  Thus, we can reformulate 
an alternative version of Theorem~\ref{th:sumofsimpleeqmonad} which is valid in $\CMet$, under
the assumption that the sum is taken over continuous quantitative theories.

Recall that continuous theories are basic.
Moreover, the disjoint union $\U + \U'$ of two
continuous quantitative theories $\U, \U'$ is also continuous, so that, by
Theorem~\ref{th:freeCQAlgebra}, the free monad on it in $\CMet$ is
$\CC T_{\U + \U'}$. Thus:
\begin{thm} \label{th:sumofsimpleeqmonadComplete}
If $\U, \U'$ are continuous theories, then $\CC T_{\U + \U'}$ is isomorphic to $\CC T_{\U} + \CC T_{\U'}$.
\end{thm}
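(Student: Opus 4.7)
The plan is to follow the same line of reasoning used for Theorem~\ref{th:sumofsimpleeqmonad}, but carried out in $\CMet$ instead of $\Met$. First I would record two elementary facts. (i) Every continuous theory is, by definition, basic, so in particular $\U$ and $\U'$ are both basic; and (ii) the disjoint union $\U + \U'$ is itself continuous: its defining axioms are just those of $\U$ and those of $\U'$, each of which is already a continuous schema, and no axiom mixing the two signatures is added, so the defining schemata of $\U+\U'$ are continuous.

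Next, I would lift Proposition~\ref{prop:bialgebras} to the complete setting. The proof given there constructs an isomorphism between the categories of $(\Sigma+\Sigma')$-algebras for $\U+\U'$ and of $\tupl{\U,\U'}$-bialgebras by acting as the identity on carriers, and the argument never touches the underlying metric structure; therefore, restricting both sides to algebras whose carrier is Cauchy-complete yields
\begin{equation*}
\CC\KK{\Sigma+\Sigma', \U+\U'} \iso \CC\KK{(\Sigma,\U)\oplus(\Sigma',\U')},
\end{equation*}
where the right-hand side denotes the category of bialgebras on a complete extended metric space. Combining this with Theorem~\ref{th:CompleteEilenbergMoore}, applied to the continuous theories $\U$, $\U'$, and $\U+\U'$, I obtain
\begin{equation*}
\EMAlg{\CC T_{\U+\U'}} \iso \CC\KK{\Sigma+\Sigma', \U+\U'} \iso \biEMAlg{\CC T_{\U}}{\CC T_{\U'}}.
\end{equation*}

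Now I would invoke Theorem~\ref{th:freeCQAlgebra}: the forgetful functor $\CC\KK{\Sigma+\Sigma', \U+\U'}\to\CMet$ has a left adjoint, and by the displayed isomorphism so does the forgetful functor $\biEMAlg{\CC T_{\U}}{\CC T_{\U'}}\to\CMet$. The monad induced by this adjunction on $\CMet$ is (up to iso) $\CC T_{\U+\U'}$. By Kelly's characterisation of coproducts of monads via bialgebras~\cite{Kelly1980} (cf.\ also \cite[Proposition~2.8]{AdamekMBL12}), which applies in any category admitting the relevant free bialgebras, this induced monad is exactly $\CC T_{\U} + \CC T_{\U'}$, establishing the desired isomorphism.

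The only non-bureaucratic step is checking that Kelly's argument applies in $\CMet$: this requires the free bialgebra functor to exist, which we have just obtained from Theorem~\ref{th:freeCQAlgebra}. Everything else is a transport of the $\Met$ argument of Theorem~\ref{th:sumofsimpleeqmonad} across the reflection $\CC\colon\Met\to\CMet$, using that continuous theories behave well with respect to Cauchy completion.
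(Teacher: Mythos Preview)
Your proposal is correct and follows essentially the same approach as the paper: establish that $\U+\U'$ is continuous, lift the bialgebra correspondence to $\CMet$ via Theorem~\ref{th:CompleteEilenbergMoore} to obtain $\CC\KK{\Sigma+\Sigma',\U+\U'}\iso\biEMAlg{\CC T_{\U}}{\CC T_{\U'}}$, and then invoke Theorem~\ref{th:freeCQAlgebra} together with Kelly's characterisation of the coproduct of monads. The paper's proof is slightly terser but structurally identical.
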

\begin{proof}
By Theorem~\ref{th:freeCQAlgebra}, the monads $\CC T_{\U + \U'}$,
$\CC T_{\U}$, and $\CC T_{\U'}$ are, respectively, the free monads 
on $\U + \U'$, $\U$, and $\U'$ in $\CMet$. 

Similarly to Corollary~\ref{cor:bialgSimple}, one obtains that
$\CC\KK{\Sigma + \Sigma', \U + \U'}$ and $\biEMAlg{\CC T_{\U}}{\CC T_{\U'}}$
are isomorphic.  Thus, by Theorem~\ref{th:freeCQAlgebra} the
forgetful functor from $\biEMAlg{\CC T_{\U}}{\CC T_{\U'}}$ to $\Met$ has a
left adjoint, and the monad generated by this adjunction is isomorphic to
$\CC T_{\U + \U'}$.  Thus, by~\cite{Kelly1980} (cf.\ also
\cite[Proposition~2.8]{AdamekMBL12}), $\CC T_{\U + \U'}$ 
is the sum of $\CC T_{\U}$ and $\CC T_{\U'}$. 
%isomorphic to the sum of monads $\CC T_{\U} + \CC T_{\U'}$.
\end{proof}

\subsection{Sum with Exceptions}
\label{sec:sumWithException}
As remarked by Hyland, Plotkin, and Power (\cf\ \cite[Corollary~3]{HylandPP06}), Moggi's exception 
monad transformer, sending a monad $T$ to the composite $T(- + E)$ can be explained in terms of 
the sum of monads:
\begin{prop}[Sum with Exception Monad]
Given a category $\cat{C}$ with finite coproducts, an object $E$ of $\cat{C}$, and a monad $T$ on $\cat{C}$, 
the sum of the monads $(- + E)$ and $T$ exists and is given by a canonical monad structure on the composite 
$T(- + E)$.
\end{prop}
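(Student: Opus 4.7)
The plan is to mirror the strategy used for Theorem~\ref{th:sumofsimpleeqmonad}: build the sum as the monad induced by the forgetful functor from $\tupl{T, (-+E)}$-bialgebras, and then read off its underlying functor as $T(-+E)$. The key preliminary observation is that Eilenberg-Moore algebras for the exception monad admit a trivial description: any algebra structure $a \colon A + E \to A$ is, by the unit law, of the form $[id_A, e]$ for the unique map $e := a \circ \inj_r \colon E \to A$, and, since the multiplication of $(-+E)$ is $\mu^E_A = [id_{A+E}, \inj_r]$, the EM multiplication law is then automatic. Consequently a $\tupl{T, (-+E)}$-bialgebra is precisely a $T$-algebra $(A, \alpha)$ equipped with an arbitrary ``pointing'' $e \colon E \to A$, and its morphisms are $T$-algebra morphisms commuting with the pointings.

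Next, I would exhibit a left adjoint $F$ to the forgetful functor $U \colon \biEMAlg{T}{(-+E)} \to \cat{C}$ by setting $F(X) = (T(X+E),\, \mu_{X+E},\, \eta_{X+E} \circ \inj_r)$, with unit of adjunction $\eta^*_X := \eta_{X+E} \circ \inj_l$. Given a bialgebra $(A, \alpha, e)$ and a map $f \colon X \to A$, the candidate extension is $h := \alpha \circ T[f, e] \colon T(X+E) \to A$. Checking that $h$ is a $T$-algebra morphism uses naturality of $\mu$ together with the Eilenberg-Moore multiplication law $\alpha \circ \mu_A = \alpha \circ T\alpha$; preservation of the $E$-pointing reduces, via naturality of $\eta$ and the $T$-algebra unit law, to the identity $[f,e] \circ \inj_r = e$; and uniqueness follows from the universal property of the free $T$-algebra $(T(X+E), \mu_{X+E})$, since any bialgebra morphism $h$ must satisfy $h \circ \eta_{X+E} = [f, e]$.

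The final step is to invoke the same result of Kelly used in the proof of Theorem~\ref{th:sumofsimpleeqmonad}: the monad induced by the adjunction $F \dashv U$ is the coproduct of $T$ and $(-+E)$ in the category of monads on $\cat{C}$, so in particular this sum exists. Unwinding the adjunction, its underlying functor is $T(-+E)$ with unit $\eta_{X+E} \circ \inj_l$ and multiplication $\mu_{X+E} \circ T[id_{T(X+E)},\, \eta_{X+E} \circ \inj_r]$, yielding the claimed canonical monad structure on the composite. The only step that demands any care is verifying that the candidate $h$ really is a bialgebra morphism, but each of the relevant diagrams is a short chase combining the monad laws with the Eilenberg-Moore laws, and the whole argument uses nothing about $\cat{C}$ beyond the assumed finite coproducts.
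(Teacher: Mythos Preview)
Your argument is correct. The paper does not actually prove this proposition; it is quoted as \cite[Corollary~3]{HylandPP06} and used as a black box. So there is no ``paper's own proof'' to compare against, beyond the reference.

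That said, your route is exactly the one the present paper would take if it spelled the result out: it is the same bialgebra strategy used for Theorem~\ref{th:sumofsimpleeqmonad}, specialised to the case $T' = (-+E)$. The simplification that makes everything explicit here is your observation that an EM algebra for $(-+E)$ is nothing but a pointing $e \colon E \to A$, so $\biEMAlg{T}{(-+E)}$ is just $E$-pointed $T$-algebras; the free such object over $X$ is then visibly $(T(X+E),\mu_{X+E},\eta_{X+E}\circ\inj_r)$, and Kelly's result finishes. By contrast, the original argument in \cite{HylandPP06} is phrased at the level of (enriched) Lawvere theories and their coproducts. Your version has the advantage of being entirely internal to the monad/EM formalism already set up in this paper and of needing only finite coproducts in $\cat{C}$, which is exactly the hypothesis stated.
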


From the above result, in combination with Theorems~\ref{th:sumofsimpleeqmonad}, \ref{th:EilenbergMoore} and \ref{th:isoExceptionMet} we obtain an analogous transformer 
at the level of quantitative equational theories as follows.

\begin{cor} \label{cor:ExceptionTransfomerMet}
Let $\U$ be a basic quantitative equational theory.  
Then, $T_{\U}(- + E)$ is the free monad on the theory $\U + \mathcal{E}_E$ on $\Met$.
\end{cor}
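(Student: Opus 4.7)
The plan is to assemble the result by chaining together the three preceding facts: the sum-of-theories theorem, the algebraic presentation of the exception monad, and the sum-with-exception proposition. The argument is essentially an identification of monads through a sequence of isomorphisms, so there is no real obstacle beyond checking that the hypotheses of each invoked result are satisfied.

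First, I would verify that $\mathcal{E}_E$ is a basic quantitative equational theory. Its only nontrivial axioms are of the shape $\vdash \textsf{raise}_{e_1} =_{\delta} \textsf{raise}_{e_2}$ with $\delta \geq d_E(e_1,e_2)$, whose premises are empty and whose terms are constants; hence these axioms trivially fit the basic form $\{x_1 =_{\e_1} y_1, \dots, x_n =_{\e_n} y_n\} \vdash t =_{\e} s$ (taking $n = 0$). Since $\U$ is basic by assumption and $\mathcal{E}_E$ is basic, the sum $\U + \mathcal{E}_E$ is a disjoint union of two basic theories.

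Next, I would apply Theorem~\ref{th:sumofsimpleeqmonad} to conclude that
\begin{equation*}
T_{\U + \mathcal{E}_E} \;\iso\; T_{\U} + T_{\mathcal{E}_E}
\end{equation*}
as monads on $\Met$, where the $+$ on the right denotes the coproduct in the category of monads on $\Met$. Then, by Theorem~\ref{th:isoExceptionMet}, $T_{\mathcal{E}_E} \iso (- + E)$, so substituting into the previous isomorphism yields
\begin{equation*}
T_{\U + \mathcal{E}_E} \;\iso\; T_{\U} + (- + E).
\end{equation*}

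Finally, I would invoke the Sum-with-Exception proposition recalled just above, which identifies the coproduct of an arbitrary monad with the exception monad $(- + E)$ with the composite functor $T_{\U}(- + E)$ (equipped with its canonical monad structure). Combining with the previous display gives $T_{\U + \mathcal{E}_E} \iso T_{\U}(- + E)$, which is exactly the statement. Since $\U + \mathcal{E}_E$ is basic, Theorem~\ref{th:freeQAlgebra} guarantees that $T_{\U+\mathcal{E}_E}$ is indeed the free monad on this theory, completing the identification of $T_{\U}(- + E)$ as the free monad on $\U + \mathcal{E}_E$.
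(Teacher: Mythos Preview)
Your proof is correct and follows essentially the same route as the paper, which derives the corollary from the Sum-with-Exception proposition together with Theorems~\ref{th:sumofsimpleeqmonad} and~\ref{th:isoExceptionMet}. Your explicit check that $\mathcal{E}_E$ is basic is a useful addition; note only that your closing sentence is unnecessary, since $T_{\U+\mathcal{E}_E}$ is \emph{by definition} the free monad on $\U+\mathcal{E}_E$ (Theorem~\ref{th:freeQAlgebra} holds for arbitrary quantitative theories, not just basic ones).
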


Similarly, from Theorems~\ref{th:sumofsimpleeqmonadComplete} and \ref{th:isoExceptionCMet}, 
an analogous result holds also in $\CMet$.
\begin{cor} \label{cor:ExceptionTransfomerCMet}
Let $\U$ be a basic continuous equational theory.  
Then, $\CC T_{\U}(- + E)$ is the free monad on the theory $\U + \mathcal{E}_E$ on $\CMet$.
\end{cor}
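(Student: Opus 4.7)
The plan is to mirror the proof of Corollary~\ref{cor:ExceptionTransfomerMet}, but with every application of a $\Met$-level result replaced by its $\CMet$-level counterpart. The four ingredients I will chain together are: the Sum with Exception Monad proposition, the isomorphism $\CC T_{\mathcal{E}_E} \iso (- + E)$ on $\CMet$ (Theorem~\ref{th:isoExceptionCMet}), the presentation of the sum as a sum of monads for continuous theories (Theorem~\ref{th:sumofsimpleeqmonadComplete}), and the freeness of $\CC T_{-}$ on continuous theories (Theorem~\ref{th:freeCQAlgebra}).

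First, I would verify the hypotheses line up. The theory $\mathcal{E}_E$ is continuous: its only non-trivial axiom is the schema $\vdash \textsf{raise}_{e_1} =_\delta \textsf{raise}_{e_2}$ for all $\delta \geq d_E(e_1,e_2)$, and the right-hand side bound is a (constant) continuous function of the absent premise variables. Since $\U$ is continuous by assumption, and the disjoint union of continuous theories is continuous (the schemata of each summand remain schemata in the combined theory, without any interaction), $\U + \mathcal{E}_E$ is also continuous. By Theorem~\ref{th:freeCQAlgebra}, the free monad on $\U + \mathcal{E}_E$ in $\CMet$ is $\CC T_{\U + \mathcal{E}_E}$, so it suffices to establish the monad isomorphism $\CC T_{\U + \mathcal{E}_E} \iso \CC T_{\U}(- + E)$.

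Next, I would assemble the chain of isomorphisms. Applying Theorem~\ref{th:sumofsimpleeqmonadComplete} to the continuous theories $\U$ and $\mathcal{E}_E$ yields
\begin{equation*}
\CC T_{\U + \mathcal{E}_E} \iso \CC T_{\U} + \CC T_{\mathcal{E}_E}.
\end{equation*}
Applying Theorem~\ref{th:isoExceptionCMet} on the second summand rewrites this as $\CC T_{\U} + (- + E)$. Since $\CMet$ has finite coproducts (see Appendix~\ref{app:exmetric}), the Sum with Exception Monad proposition applies, giving $\CC T_{\U} + (- + E) \iso \CC T_{\U}(- + E)$ with its canonical monad structure. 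Composing these three isomorphisms produces the required identification.

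I expect no real obstacle: all heavy lifting has been done by the cited theorems, and the argument is essentially bookkeeping. The one point that requires a moment's care is checking that $\mathcal{E}_E$ and $\U + \mathcal{E}_E$ are continuous so that Theorems~\ref{th:freeCQAlgebra} and \ref{th:sumofsimpleeqmonadComplete} are applicable; once this is verified, the proof reduces to an application of the Sum with Exception Monad proposition in $\CMet$ followed by the two cited isomorphisms, in direct analogy with Corollary~\ref{cor:ExceptionTransfomerMet}.
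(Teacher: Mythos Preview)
Your proposal is correct and follows essentially the same route as the paper: the paper simply states that the result holds ``similarly, from Theorems~\ref{th:sumofsimpleeqmonadComplete} and \ref{th:isoExceptionCMet},'' which is exactly the chain $\CC T_{\U + \mathcal{E}_E} \iso \CC T_{\U} + \CC T_{\mathcal{E}_E} \iso \CC T_{\U} + (- + E) \iso \CC T_{\U}(- + E)$ you assemble, together with the Sum with Exception Monad proposition applied in $\CMet$. Your explicit check that $\mathcal{E}_E$ and $\U + \mathcal{E}_E$ are continuous is a useful detail that the paper leaves implicit.
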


\begin{exa}[Quantitative pointed convex semilattices]
Mio and Vignudelli~\cite{MioSV21,MioV20} while reasoning about the algebraic combination of quantitative nondeterminism (\cf\ Section~\ref{sec:Hausdorff}) and probabilistic choice (\cf\ Section~\ref{sec:probchoice}), considered the category of quantitative pointed convex semilattices and showed it is isomorphic to the Eilenberg-Moore category for $\hat{\C}(- + 1)$, \ie, the quantitative variant of the monad of \emph{(nonempty) convex sets of sub-probability distributions}.

This monad is just the composition of $\hat{\C}$, the (nonempty) convex sets of probability 
distribution monad~\cite{KeimelP16}, with the termination monad $(- + 1)$.
So, as $\hat{\C}$ is presented by the quantitative theory of convex semilattices, their result can be recovered as a simple application of Corollary~\ref{cor:ExceptionTransfomerMet} and Theorem~\ref{th:sumofsimpleeqmonad} and further extended on complete metric spaces by means of Corollary~\ref{cor:ExceptionTransfomerCMet} and Theorem~\ref{th:sumofsimpleeqmonadComplete}.
\end{exa}

\subsection{Sum with Interactive Inputs}
\label{sec:sumInput}
Now we consider the sum of generic quantitative effects $T$ with the monads $\tilde\Sigma^*$ of contractive 
operators, of which interactive inputs is a particular instance (\cf\ Section~\ref{sec:controperators}).

From Theorem~\ref{th:sumofsimpleeqmonad} we know that if $T$ has a quantitative algebraic presentation in terms of a basic theory $\U$, the sum exists, and, when starting with quantitative theories, we know how to describe it. But for the purposes of calculation, it is still convenient to have a more explicit construction of the sum qua monad, and Hyland et al.\ provided such a construction (\cf~\cite[Theorem~4]{HylandPP06}), which we recall below for convenience. The key fact used here is that the monad of contractive operators is described as the free monad on an endofunctor with countable rank, namely the contractive signature functor $\tilde\Sigma$ given in \eqref{eq:contractiveSignFunctor}.
\begin{thmC}[{\cite{HylandPP06}}] \label{th:sumofMonads}
Given an endofunctor $F$ and a monad $T$ on a category $\cat{C}$, if the free monads 
$F^*$ and $(FT)^*$ exist and are algebraic, then the sum of monads $T + F^*$ exists and is 
given by a canonical monad structure on the composite $T(FT)^*$.
\end{thmC}

As remarked in~\cite{HylandPP06}, when $\cat{C}$ is locally countably presentable and both 
$T$ and $F$ have countable rank, then $F^*$ and $(FT)^*$ exist and are algebraic.
Moreover, also $T(FT)^*$ has countable rank and so it is also the sum of monads $T + F^*$.

\medskip
We know that $\Met$ is locally countably presentable and that contractive signature functors 
$\tilde\Sigma$ have countable rank (\cf\ Section~\ref{sec:ContractiveOp}). Moreover, as recently proved 
by Ford et al.~\cite{FordMS21}, any quantitative theory $\U$ induces a 
monad $T_{\U}$ with countable rank. 

Therefore, from the discussion above and by Theorems~\ref{th:sumofsimpleeqmonad} and 
\ref{th:isoContractiveMonad}, we obtain the following characterization.
\begin{cor} \label{cor:ContractiveTransfomerMet}
Let $\U$ be a basic quantitative equational theory.  
Then, $T_{\U}(\tilde\Sigma T_{\U})^*$ is the free monad on the theory $\U + \O{\Sigma}$ on $\Met$.
\end{cor}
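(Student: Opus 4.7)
The strategy is to identify $T_{\U+\Op[\Sigma]}$ with a concrete composite $T_{\U}(\tilde\Sigma T_{\U})^*$ by chaining three ingredients: the characterisation of the sum of quantitative theories as the coproduct of their induced monads (Theorem~\ref{th:sumofsimpleeqmonad}), the identification of the monad of contractive operators with the free monad $\tilde\Sigma^*$ on the contractive signature functor (Theorem~\ref{th:isoContractiveMonad}), and finally the explicit coproduct-of-monads formula of Hyland, Plotkin, and Power (Theorem~\ref{th:sumofMonads}).

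First I would observe that $\U$ is basic by assumption, and that $\Op[\Sigma]$ is also basic (it is axiomatised only by the single schema $(\Lip{f})$ whose premises involve only variables). Hence by Theorem~\ref{th:sumofsimpleeqmonad} the free monad on the disjoint-union theory is
\begin{equation*}
T_{\U+\Op[\Sigma]} \;\cong\; T_{\U} + T_{\Op[\Sigma]},
\end{equation*}
and by Theorem~\ref{th:isoContractiveMonad} the right-hand summand coincides, up to isomorphism, with $\tilde\Sigma^{*}$. Therefore it suffices to exhibit an isomorphism of monads $T_{\U} + \tilde\Sigma^{*} \cong T_{\U}(\tilde\Sigma T_{\U})^{*}$.

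Next I would apply Theorem~\ref{th:sumofMonads} with $\mathbf{C}=\Met$, $F=\tilde\Sigma$, and $T=T_{\U}$: this gives precisely the composite $T_{\U}(\tilde\Sigma T_{\U})^{*}$ as a canonical monad structure computing the sum $T_{\U}+\tilde\Sigma^{*}$, provided the two free monads $\tilde\Sigma^{*}$ and $(\tilde\Sigma T_{\U})^{*}$ exist and are algebraic. To verify these hypotheses I would invoke three facts established earlier in the paper: (i) $\Met$ is locally countably presentable (Appendix~\ref{sec:EMet-lcp}); (ii) $\tilde\Sigma$ has countable rank, as noted in Section~\ref{sec:ContractiveOp}; and (iii) by the result of Ford et al.~\cite{FordMS21} cited above, every monad of the form $T_{\U}$ induced by a quantitative theory has countable rank. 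Since countable rank is closed under composition, $\tilde\Sigma T_{\U}$ has countable rank, and thus by Remark~\ref{rem:freeExists} both $\tilde\Sigma^{*}$ and $(\tilde\Sigma T_{\U})^{*}$ exist and are algebraic.

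The main subtleties are bookkeeping rather than deep: making sure that the basicness hypothesis propagates to the sum (so that Theorem~\ref{th:sumofsimpleeqmonad} applies), and checking the rank hypotheses needed for Theorem~\ref{th:sumofMonads}. Once those are in place the desired isomorphism is just the chain
\begin{equation*}
T_{\U+\Op[\Sigma]} \;\cong\; T_{\U}+T_{\Op[\Sigma]} \;\cong\; T_{\U}+\tilde\Sigma^{*} \;\cong\; T_{\U}(\tilde\Sigma T_{\U})^{*},
\end{equation*}
which concludes the proof.
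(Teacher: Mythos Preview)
Your proposal is correct and follows essentially the same route as the paper: the corollary is derived by chaining Theorem~\ref{th:sumofsimpleeqmonad} (so that $T_{\U+\O{\Sigma}}\cong T_{\U}+T_{\O{\Sigma}}$), Theorem~\ref{th:isoContractiveMonad} (so that $T_{\O{\Sigma}}\cong\tilde\Sigma^{*}$), and Theorem~\ref{th:sumofMonads}, after verifying its rank hypotheses via local countable presentability of $\Met$, countable rank of $\tilde\Sigma$, and the Ford et al.\ result on the rank of $T_{\U}$. Your write-up is in fact slightly more explicit than the paper's, which compresses the hypothesis-checking into the paragraph preceding the corollary.
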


As observed in~\cite{HylandPP06}, the monad $T(FT)^*$ of Theorem~\ref{th:sumofMonads} is
simply another form of the generalised resumptions monad transformer of 
Cenciarelli and Moggi~\cite{Cenciarelli93}, sending $T$ to $\mu y. T(F y + -)$.
Hence, by the characterization above and guided by the same observations that lead
to~\cite[Corollary~2]{HylandPP06}, we obtain an analogous transformer at the level of quantitative equational 
theories as follows.
\begin{cor} \label{cor:ResumptionTransfomerMet}
Let $\U$ be a basic quantitative equational theory.  
Then, $\mu y. T_{\U}(\tilde\Sigma y + -)$ is the free monad on the theory $\U + \O{\Sigma}$ on $\Met$.
\end{cor}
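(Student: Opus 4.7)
The plan is to derive this directly from Corollary~\ref{cor:ContractiveTransfomerMet} by exhibiting a natural isomorphism of monads
\[
T_{\U}(\tilde\Sigma T_{\U})^* \;\cong\; \mu y.\, T_{\U}(\tilde\Sigma y + -)
\]
on $\Met$. Once this is established, the universal property of the left-hand side as free monad on $\U + \O{\Sigma}$, already proved in Corollary~\ref{cor:ContractiveTransfomerMet}, transfers to the right-hand side.

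The identification of the composite $T(FT)^*$ with the generalised resumption transformer $\mu y.\, T(Fy + -)$ is the observation of Hyland, Plotkin and Power preceding~\cite[Corollary~2]{HylandPP06}, and is purely categorical: it holds whenever the ambient category is locally countably presentable and both the monad $T$ and the endofunctor $F$ have countable rank. All three conditions are met in our setting: $\Met$ is locally countably presentable (Appendix~\ref{sec:EMet-lcp}); the contractive signature functor $\tilde\Sigma$ has countable rank (Section~\ref{sec:ContractiveOp}); and $T_{\U}$ has countable rank by~\cite{FordMS21}.

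To expose the mechanics of the isomorphism, I would first unfold the left-hand side via Remark~\ref{rem:freeExists}, obtaining a natural isomorphism $(\tilde\Sigma T_{\U})^*X \cong \tilde\Sigma T_{\U}(\tilde\Sigma T_{\U})^*X + X$, and then apply $T_{\U}$ to both sides to present $y_X := T_{\U}(\tilde\Sigma T_{\U})^*X$ as a fixed point (hence, by Lambek, an algebra) for the functor $G_X \colon y \mapsto T_{\U}(\tilde\Sigma y + X)$. The principal obstacle is to show that this algebra is \emph{initial}: given any $G_X$-algebra $(y',g)$ one must produce a unique $G_X$-algebra morphism $y_X \to y'$. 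The idea is to use $g$ together with the unit $\eta^{\U}_{\tilde\Sigma y' + X}$ and the coproduct injections to endow $y'$ with compatible structure maps $\tilde\Sigma y' \to y'$ and $X \to y'$, and then invoke the universal property of $(\tilde\Sigma T_{\U})^*X$ as the initial algebra of $z \mapsto \tilde\Sigma T_{\U}z + X$ to obtain the required morphism, lifted through the free-algebra adjunction for $T_{\U}$.

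Finally, the monad structure on $\mu y.\, T_{\U}(\tilde\Sigma y + -)$ transported via this isomorphism from $T_{\U}(\tilde\Sigma T_{\U})^*$ coincides with the canonical one by the uniqueness inherent in the free monad universal property, and the corollary then follows directly from Corollary~\ref{cor:ContractiveTransfomerMet}.
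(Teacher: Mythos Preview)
Your proposal is correct and follows essentially the same route as the paper: both derive the result from Corollary~\ref{cor:ContractiveTransfomerMet} together with the Hyland--Plotkin--Power observation that $T(FT)^* \cong \mu y.\,T(Fy + -)$ under the countable-rank and local-presentability hypotheses, which you correctly verify for $\Met$, $\tilde\Sigma$, and $T_{\U}$. The paper leaves this last isomorphism as a citation to~\cite{HylandPP06}, whereas you usefully unpack its mechanics via Remark~\ref{rem:freeExists} and the initial-algebra argument; this extra detail is sound and does not deviate from the intended proof strategy.
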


Similarly, by Theorems~\ref{th:sumofsimpleeqmonadComplete} and \ref{th:isoContraciveMonadCompletion}, 
an analogous result holds also in $\CMet$.
\begin{cor} \label{cor:ResumptionTransfomerCMet}
Let $\U$ be a continuous quantitative equational theory.  
Then, $\mu y. \CC T(\tilde\Sigma y + -)$ is the free monad on the theory $\U + \O{\Sigma}$ on $\CMet$.
\end{cor}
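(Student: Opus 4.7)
The plan is to mimic the proof of Corollary~\ref{cor:ResumptionTransfomerMet}, but carried out entirely inside $\CMet$ rather than $\Met$. First, I would invoke Theorem~\ref{th:sumofsimpleeqmonadComplete}: since $\U$ is continuous and the theory $\O{\Sigma}$ of contractive operators is continuous as well (the schema $(\Lip{f})$ is a continuous schema with $f(\e) = c\e$), the disjoint union $\U + \O{\Sigma}$ is continuous, and the free monad on it in $\CMet$ is $\CC T_{\U + \O{\Sigma}} \iso \CC T_{\U} + \CC T_{\O{\Sigma}}$. By Theorem~\ref{th:isoContraciveMonadCompletion}, $\CC T_{\O{\Sigma}} \iso \tilde\Sigma^{*}$ in $\CMet$, so it remains to identify $\CC T_{\U} + \tilde\Sigma^{*}$ with $\mu y.\,\CC T_{\U}(\tilde\Sigma y + -)$.

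Next, I would apply the Hyland--Plotkin--Power formula (Theorem~\ref{th:sumofMonads}) with $\mathbf{C} = \CMet$, $T = \CC T_{\U}$, and $F = \tilde\Sigma$. This requires checking the three hypotheses of the remark following that theorem: (i) $\CMet$ is locally countably presentable, which is the content of Appendix~\ref{sec:EMet-lcp}; (ii) $\tilde\Sigma$ has countable rank, which holds by construction since each summand $c \cdot X^{n}$ preserves countably filtered colimits; and (iii) $\CC T_{\U}$ has countable rank, which is the result of Ford et al.~\cite{FordMS21} invoked just above the corollary. Under these hypotheses, $\tilde\Sigma^{*}$ and $(\tilde\Sigma \CC T_{\U})^{*}$ both exist and are algebraic on $\CMet$, so Theorem~\ref{th:sumofMonads} yields $\CC T_{\U} + \tilde\Sigma^{*} \iso \CC T_{\U}(\tilde\Sigma \CC T_{\U})^{*}$.

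Finally, I would invoke the Cenciarelli--Moggi identification used to pass from Corollary~\ref{cor:ContractiveTransfomerMet} to Corollary~\ref{cor:ResumptionTransfomerMet}: for any endofunctor $F$ with countable rank and any countable-rank monad $T$ on a locally countably presentable category admitting finite coproducts, the composite $T(FT)^{*}$ is canonically isomorphic to the initial algebra $\mu y.\, T(Fy + -)$, because both realise the same free-monad-on-$F$-relative-to-$T$ construction. Chaining these isomorphisms gives the desired identification of the free monad on $\U + \O{\Sigma}$ in $\CMet$ with $\mu y.\,\CC T_{\U}(\tilde\Sigma y + -)$.

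The main obstacle is purely bookkeeping: making sure that every step of the $\Met$-side argument transports cleanly to $\CMet$. In particular, one must be careful that $\tilde\Sigma$ restricts to an endofunctor of $\CMet$ (it does, since coproducts, finite powers, and rescaling by $0 < c < 1$ all preserve Cauchy completeness), and that the hypotheses of Theorem~\ref{th:sumofMonads} really are met by $\CMet$ rather than just $\Met$. Once local countable presentability and the countable-rank property of $\CC T_{\U}$ are both in hand, the remainder of the proof is a direct transcription of the $\Met$ argument.
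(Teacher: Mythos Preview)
Your proposal is correct and follows exactly the route the paper intends: the paper's own justification is the single sentence ``Similarly, by Theorems~\ref{th:sumofsimpleeqmonadComplete} and \ref{th:isoContraciveMonadCompletion}, an analogous result holds also in $\CMet$,'' and you have unpacked precisely this, invoking Theorem~\ref{th:sumofMonads} and the Cenciarelli--Moggi identification just as in the $\Met$ case. One small attribution nit: Ford et al.\ prove countable rank for $T_{\U}$ on $\Met$; for $\CC T_{\U}$ on $\CMet$ you need the extra observation that $\CC$ is $\aleph_1$-accessible (cf.\ the proof of the final-coalgebra proposition in Section~\ref{sec:mp&bisim}) and that the inclusion $\CMet \hookrightarrow \Met$ creates $\aleph_1$-filtered colimits, so the composite retains countable rank --- but the paper itself does not spell this out either.
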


\begin{rem}
It is worth remarking that using these ideas one obtains a modular description
of the monads of contractive operators. Let $\Sigma_1, \Sigma_2$ be two disjoint signatures of contractive
operators. It is clear that $\O{\Sigma_1 \cup \Sigma_2}$ is the same as the sum of theories $\O{\Sigma_1}$ and 
$\O{\Sigma_2}$. Hence, the sum $\tilde\Sigma_1^* + \tilde\Sigma_2^*$ is given by the free monad
$(\tilde\Sigma_1 + \tilde\Sigma_2)^*$, where we now mean the pointwise sum of functors.
\end{rem}

%%%%%%%%%%%%%%%%%%%%%%%%%%%%%%%
\subsection{The Algebras of Markov Processes}
\label{sec:markovprocesses}

In this section, we show how to obtain a quantitative equational axiomatization
of Markov processes with discounted bisimilarity metric~\cite{DesharnaisGJP04}
as the composition, via sum, of the following quantitative theories:

\begin{enumerate}%[label={\itshape(\alph*)}, fullwidth, itemsep=0.5ex]
\item 
\emph{The quantitative theory $\B$ of interpolative barycentric algebras},
used to express probabilistic nondeterminism with Kantorovich metric (Section~\ref{sec:probchoice});

\item 
\emph{The quantitative theory $\mathcal{E}_1$ of exceptions} over $1 = \{ * \}$, with $*$ 
as the only exception. This will be used to express termination (Section~\ref{sec:termination});

\item
\emph{The quantitative theory of contractive operators} (Section \ref{sec:controperators}).
In our case, we consider a signature $\Sigma_{\diamond} = \ens{\diamond \colon \tupl{1,c} }$ 
with a unary operator $\diamond$ with contractive factor $c \in (0,1)$.
This will be used to axiomatize the transition to a next state with discount factor $c$.
\end{enumerate}

\medskip\noindent
Formally, we define the \emph{quantitative theory of Markov processes} as
%, denoted by $\U_{\textbf{MP}}$, is given by
\begin{equation*}
  \U_{\textbf{MP}} = \B + \mathcal{E}_{1} + \O{\Sigma_{\diamond}} \,.
\end{equation*}
with signature $\Sigma_{\textbf{MP}} = \Sigma_{\B} \cup \Sigma_{1} \cup \Sigma_{\diamond}$ given 
as the disjoint union of those from its component theories. More explicitly,
\begin{equation*}
  \Sigma_{\textbf{MP}}  = \set{ +_e \colon 2}{ e \in [0,1]} \cup \{ \textsf{raise}_* \colon 0 \} \cup \{ \diamond \colon \tupl{1, c} \}
\end{equation*}
and $\U_{\textbf{MP}}$ has the following set of axioms 
\begin{align*} 
(\Bone)\,
& \vdash x +_1 y =_0 x \,, \\
(\Btwo)\, 
& \vdash x +_e x =_0 x \,, \\
(\SC)\,
& \vdash x +_e y =_0 y +_{1-e} x \,, \\
(\SA)\,
& \vdash (x +_e y) +_{e'} z =_0 x +_{ee'} (y +_{\frac{e' - ee'}{1 - ee'}} z) \,, \text{ for $e,e' \in [0,1)$} \,, 
\\ 
(\IB)\,
& \{ x \,{=_\e}\, y, x' \,{=_{\e'}}\, y' \} \,{\vdash}\, x +_e x' \,{=_{\delta}}\, y +_e y', \, \text{for $\delta \geq e \e + (1-e) \e'$,}
\\
(\Lip{\diamond})\,
& \{x =_\e y \} \vdash \diamond(x) =_{\delta} \diamond(y) \,,
\text{ for $\delta \geq c \e$} \,.
\end{align*}
Note that, the constant $\textsf{raise}_*$ has no explicit associated axiom since $\mathcal{E}_{1}$ is 
the trivial theory, corresponding to that for termination.

Intuitively, $\Sigma_{\textbf{MP}}$-terms (modulo $=_0$ provability) can be interpreted as equivalence classes of behaviours of Markov processes up to bisimilarity. The term $t +_e t'$ expresses convex combination of behaviours; $\textsf{raise}_*$ represents termination (or the deadlock behavior); and $\diamond(t)$ expresses
the ability to take a transition to the behaviour $t$.

\subsubsection{Markov Processes over Metric Spaces}
\label{sec:mp&bisim}

Following~\cite[Section~6]{BreugelHMW07}, we regard Markov processes as coalgebras on the category of
metric spaces, and slightly extending their approach to encompass the case 
when the bisimilarity distance is discounted by a factor $0 < c < 1$.

We consider two variants of Markov processes according to the type of their transition 
distribution functions%
\footnote{Note that the two types of coalgebras we are considering 
live in two different categories, $\Met$ and $\CMet$.}:
\begin{align*}
  X &\longrightarrow \Pi(c \cdot X + 1 )  \quad \text{in $\Met$} \,, \\
  X &\longrightarrow \Delta(c \cdot X + 1 )  \quad \text{in $\CMet$} \,,
\end{align*}
where $\Pi$ and $\Delta$ are the functors from Section~\ref{sec:probchoice}, mapping a 
metric space $X$ to a space of probability measures with Kantorovich metric.
The first variant is Markov processes with finitely supported transition probability distributions, commonly
regarded as \emph{Markov chains}. The second variant is Markov processes with Radon transition probability distributions. The use of the rescaling functor $(c \cdot -)$ is to express that transition functions are 
$c$-Lipschitz continuous, with contractive factor $0 \leq c \leq 1$. We will collectively refer to these two types of coalgebras structures as $c$-Markov processes.

In~\cite{BreugelHMW07}, van Breugel et al.\ characterized the bisimilarity distance 
on (labelled) Markov processes as the pseudometric induced by the unique homomorphism to the final 
coalgebra. We will do the same here by replicating their arguments in our specific setting.

\begin{prop}
The final coalgebras for $\Pi(c \cdot - + 1 )$ and $\Delta(c \cdot - + 1 )$ exist.
\end{prop}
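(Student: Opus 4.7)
The plan is to reduce the existence of both final coalgebras to a classical fixed-point argument for \emph{locally contractive} endofunctors on (complete) metric spaces, in the style of America--Rutten and van Breugel et al.~\cite{BreugelHMW07}, where the rescaling factor $c \in (0,1)$ plays the role of the contractivity constant. The first step is to check that both endofunctors $F = \Pi(c \cdot - + 1)$ and $F = \Delta(c \cdot - + 1)$ are locally $c$-Lipschitz, i.e., $d_{[FX, FY]}(Ff, Fg) \leq c \cdot d_{[X,Y]}(f,g)$ for any parallel non-expansive pair $f,g \colon X \to Y$. This follows from three standard facts: the pushforward is non-expansive in the Kantorovich metric (so $\Pi$ and $\Delta$ are $1$-Lipschitz on hom-objects), rescaling by $c$ multiplies the hom-metric by $c$, and $(-) + 1$ preserves hom-metrics.

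For the $\CMet$ case, I would invoke the standard existence theorem that any locally contractive endofunctor on $\CMet$ admits a unique-up-to-isomorphism final coalgebra, constructed as the limit of the terminal $\omega^{op}$-chain $1 \longleftarrow F1 \longleftarrow F^2 1 \longleftarrow \cdots$. Local contractivity forces the $n$-th connecting map to differ from an isomorphism by a factor of order $c^n$, so the chain yields a Cauchy situation in the hom-metric that converges; together with closure of $\CMet$ under such inverse limits, this produces the final coalgebra $(Z_\Delta, \zeta_\Delta)$. This is essentially the argument that van Breugel et al. give for labelled Markov processes, adapted to the discounted setting.

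The $\Met$ case is more delicate because $\Pi$ does not preserve Cauchy completeness, so the ambient category is not metrically complete and one cannot directly apply the $\CMet$ theorem. My plan is to start from the $\CMet$ final coalgebra $(Z_\Delta, \zeta_\Delta)$ constructed above and carve out the sub-coalgebra $Z_\Pi \subseteq Z_\Delta$ of \emph{hereditarily finitely supported} states, i.e., those $z \in Z_\Delta$ for which $\zeta_\Delta(z)$ is finitely supported and every state in its support again enjoys this property. One then checks that $\zeta_\Delta$ restricts to a coalgebra structure $Z_\Pi \to \Pi(c \cdot Z_\Pi + 1)$, and establishes finality in $\Met$ by showing that the unique homomorphism from any $\Pi(c \cdot - + 1)$-coalgebra into $(Z_\Delta, \zeta_\Delta)$, obtained via the inclusion $\Pi \hookrightarrow \Delta$, automatically factors through $Z_\Pi$.

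The main obstacle is precisely the $\Met$ case: one must argue that the hereditarily-finite-support subspace is large enough to receive all $\Met$-coalgebras and that the restricted structure map is well-defined. Here the compatibility of $\Pi$ and $\Delta$ with Cauchy completion (Lemma~\ref{lem:complGiry}) enters essentially, since the pushforward of a finitely supported measure along a morphism is again finitely supported, so finite support propagates along homomorphisms in the expected way.
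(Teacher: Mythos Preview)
Your approach differs substantially from the paper's. The paper does not use metric contractivity here at all; instead it argues via \emph{accessibility}. Both $\Met$ and $\CMet$ are locally $\aleph_1$-presentable (hence complete and accessible), and the paper shows that $\Pi(c\cdot - + 1)$ and $\Delta(c\cdot - + 1)$ are $\aleph_1$-accessible functors by exhibiting a quantitative algebraic presentation $\B^c$ with $T_{\B^c}\cong \Pi(c\cdot - + 1)$ and invoking the result of Ford et al.\ that free monads on quantitative theories have countable rank; closure of accessibility under $\CC$ handles $\Delta$. Existence of the final coalgebra then follows from the general theorem that accessible endofunctors on complete accessible categories admit final coalgebras (\cite[Theorem~8]{BreugelHMW07}). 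This is uniform across both categories and sidesteps the $\Met$/$\CMet$ mismatch entirely.

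Your $\CMet$ argument is correct, and indeed the paper uses exactly this local-contractivity route later (Theorem~\ref{th:uniquefixedpoint} and Lemma~\ref{lm:Girylocallynonexp}) to identify the final $\Delta(c\cdot - + 1 + X)$-coalgebra with the initial algebra. So for $\CMet$ you have a valid alternative that the paper already relies on elsewhere.

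For $\Met$, however, there is a genuine gap. You appeal to ``the unique homomorphism from any $\Pi(c\cdot - + 1)$-coalgebra into $(Z_\Delta,\zeta_\Delta)$, obtained via the inclusion $\Pi\hookrightarrow\Delta$'', but $(Z_\Delta,\zeta_\Delta)$ is final only for $\Delta(c\cdot - + 1)$-coalgebras \emph{in $\CMet$}. A $\Pi$-coalgebra $(X,\tau)$ in $\Met$ need not have complete carrier, so you cannot invoke that finality directly. To make this work you must first pass to the completion: compose $\tau$ with the inclusion $\Pi(c\cdot X+1)\hookrightarrow\Delta(c\cdot\overline{X}+1)$, extend by continuity to $\overline{\tau}\colon\overline{X}\to\Delta(c\cdot\overline{X}+1)$, obtain $\overline{h}\colon\overline{X}\to Z_\Delta$ by finality in $\CMet$, and then restrict to $X$. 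Only then does your hereditary-finite-support argument apply to show the image lands in $Z_\Pi$. Uniqueness likewise requires extending a competing morphism to the completion and checking the extended map is still a coalgebra morphism. None of this is mentioned, and while it can be carried out, it is not the one-line step your sketch suggests. The paper's accessibility argument avoids all of this.
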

\begin{proof}
As the categories $\Met$ and $\CMet$ are both complete and accessible (\cf\ Appendices~\ref{app:exmetric} and \ref{sec:EMet-lcp} for the formal definitions and proofs), the thesis follows by \cite[Theorem~8]{BreugelHMW07}, by showing that $\Pi(c \cdot - + 1 )$ and $\Delta(c \cdot - + 1 )$ 
are accessible functors (more precisely, $\aleph_1$-accessible). 

Notice that
$\Pi(c \cdot - + 1 )$ has a quantitative algebraic presentation in $\Met$ in terms of the theory $\B^c$
defined as $\B + \mathcal{E}_1$ where the axiom (\IB) (\cf\ Section~\ref{sec:barycentric}) is replaced by 
\begin{align*}
{ (\IB^c)} \,
& \{ x \,{=_\e}\, y, x' \,{=_{\e'}}\, y' \} \,{\vdash}\, x +_e x' \,{=_{\delta}}\, y +_e y', \, 
\text{ for $\delta \geq c(e \e + (1-e) \e')$} \,,
\end{align*}
that is, $T_{\B^c} \cong \Pi(c \cdot - + 1 )$ (the proof follows essentially identically to~\cite[Theorem~10.5]{MardarePP:LICS16}, which implies the isomorphism of monads). 
As~\cite{FordMS21} proved that the monads freely generated by a quantitative theory are $\aleph_1$-accessible,
we have that the final coalgebra for $\Pi(c \cdot - + 1 )$ exists.
Moreover, as $\CC T_{\B^c} \cong \Delta(c \cdot - + 1 )$, $\CC$ is $\aleph_1$-accessible, and $\aleph_1$-accessibility is closed under composition, we have that also $\Delta(c \cdot - + 1 )$ admits a final coalgebra.
\end{proof}

Then, the \emph{$c$-discounted bisimilarity pseudometric} 
on a $c$-Markov process $(X, \tau)$ is defined as the function $\dist \colon X \times X \to [0,\infty]$ given as
\begin{equation*}
  \dist(x,x') = d_{Z}(h(x), h(x')) \,,
\end{equation*}
where $h \colon X \to {Z}$ is the unique homomorphism to the 
final $c$-Markov process $(Z,\omega)$.

This distance has a characterization as the least fixed point of a 
monotone function on a complete lattice of $[0,\infty]$-valued pseudometrics.
\begin{prop} \label{prop:fixpointbisimdist}
The $c$-discounted bisimilarity pseudometric $\dist$ on $(X, \tau)$ 
is the \emph{unique} fixed point
of the following operator on the complete lattice of extended pseudometrics $d$ on $X$
with point-wise order $\sqsubseteq$, such that $d \sqsubseteq d_X$,
\begin{equation*}
  \Psi^c(d)(x,x') = \sup_{f} \left| \lebint{f}{\tau(x)} - \lebint{f}{\tau(x')} \right| \,,
\end{equation*}
with $f$ ranging over non-expansive positive $1$-bounded
real valued functions on $c \cdot X + 1$.
\end{prop}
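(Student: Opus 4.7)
The plan is to realise $\Psi^c$ as a $c$-contraction on the complete lattice $L$ of extended pseudometrics $d$ on $X$ with $d \sqsubseteq d_X$, verify that $\dist$ lies in $L$ and is a fixed point of $\Psi^c$, and then invoke the Banach fixed point theorem. The main technical tool throughout will be Kantorovich--Rubinstein duality, which re-expresses $\Psi^c(d)(x,x')$ as the Kantorovich distance $\K[d']{\tau(x), \tau(x')}$, where $d'$ denotes the canonical coproduct metric on $c \cdot X + 1$ built from the rescaled pseudometric $c \cdot d$ on the $X$ summand and the trivial metric on the termination summand.

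I would first check that $\Psi^c$ maps $L$ into itself. Symmetry, reflexivity, and the triangle inequality for $\Psi^c(d)$ all follow from the corresponding properties of $|\cdot|$ propagated through the supremum. The bound $\Psi^c(d) \sqsubseteq d_X$ is obtained by rewriting $\Psi^c(d)(x,x') = \K[d']{\tau(x), \tau(x')}$ via Kantorovich--Rubinstein and using non-expansiveness of $\tau$ in $\Met$ (respectively $\CMet$) together with $c \cdot d \leq c \cdot d_X \leq d_X$. In particular, $\dist$ lies in $L$, since the unique homomorphism $h \colon X \to Z$ is non-expansive.

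Next I would show that $\Psi^c$ is a $c$-contraction with respect to the supremum metric on $L$. Given $d_1, d_2 \in L$ with $|d_1(x,y) - d_2(x,y)| \leq \e$ uniformly, the induced metrics $d'_1, d'_2$ on $c \cdot X + 1$ differ pointwise by at most $c\e$. Using the primal (coupling) form of the Kantorovich distance, for any coupling $\gamma$ of $\tau(x)$ and $\tau(x')$ one has $\left|\int d'_1\,\mathrm{d}\gamma - \int d'_2\,\mathrm{d}\gamma\right| \leq c\e$; taking infima yields $|\Psi^c(d_1)(x,x') - \Psi^c(d_2)(x,x')| \leq c\e$. Banach's theorem, suitably adapted to the extended-pseudometric lattice, then delivers a unique fixed point in $L$.

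Finally, to identify $\dist$ with this fixed point, I would invoke Lambek's lemma: the structure map $\omega$ of the final coalgebra $(Z,\omega)$ is a categorical isomorphism, and since both $\omega$ and $\omega^{-1}$ are non-expansive, $\omega$ is an isometry between $d_Z$ and the Kantorovich metric on $\Pi(c \cdot Z + 1)$ (respectively $\Delta(c \cdot Z + 1)$). Combined with the homomorphism identity $\omega \circ h = \Pi(c \cdot h + 1) \circ \tau$, this gives
\[
\dist(x,x') = d_Z(h(x), h(x')) = \K[d'_Z]{\Pi(c \cdot h + 1)(\tau(x)),\, \Pi(c \cdot h + 1)(\tau(x'))}.
\]
A last application of Kantorovich--Rubinstein, together with a transfer of test functions between $c \cdot X + 1$ (with metric induced by $c \cdot \dist$) and $c \cdot Z + 1$ through $h$ -- pulling back by composition in one direction, and extending via the McShane--Whitney theorem followed by truncation to $[0,1]$ in the other -- rewrites the right-hand side as $\Psi^c(\dist)(x,x')$. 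The main obstacle will be this test-function transfer, which relies on the fact that, by definition of $\dist$, the map $h$ descends to an isometric embedding of the quotient $X/\dist$ into $Z$, so that McShane--Whitney faithfully recovers test functions on $c \cdot X + 1$ from those on $c \cdot Z + 1$.
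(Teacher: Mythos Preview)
Your proposal is correct and follows essentially the same strategy as the paper: establish that $\Psi^c$ is a $c$-contraction in the sup-norm and invoke Banach's fixed point theorem for uniqueness, while identifying $\dist$ as a fixed point via Kantorovich--Rubinstein duality. The paper's own proof is much terser---it defers the fixed-point characterisation of $\dist$ to \cite[Section~6]{BreugelHMW07} and applies Banach on the ambient space of all extended real-valued functions on $X \times X$ rather than on $L$---whereas you spell out the Lambek/isometry argument and the test-function transfer explicitly; but the underlying ideas coincide.
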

\begin{proof}
Similar to the fixed point characterization given in \cite[Section~6]{BreugelHMW07}. The unicity of the fixed point follows by Banach fixed point theorem. Indeed, the set of extended real valued functions on $X \times X$
(which is a superset of the set of extended pseudometrics on $X$) can be turned into a complete Banach space 
by means of the sup-norm $|| f || = \sup_{x,x'} |f(x,x')|$ and $\Psi^c$ is a $c$-contractive operator on it.
\end{proof}

\subsubsection{Quantitative Algebraic Presentation}
\label{sec:presentationMP}
Here we relate $c$-Markov processes and their bisimilarity distance to the free algebras of $\U_{\textbf{MP}}$, both on $\Met$ and $\CMet$.

\paragraph{On Metric Spaces.}
We start by characterizing the monad $T_{\U_{\textbf{MP}}}$ on $\Met$. We do this in steps, by explaining the
contribution of each subtheory in the sum
\begin{equation*}
  \U_{\textbf{MP}} = \B + \mathcal{E}_{1} + \O{\Sigma_{\diamond}} \,.
\end{equation*}

(Step 1) First, note that $T_{\mathcal{E}_1} \iso (- + 1)$ is the \emph{maybe monad} (Theorem~\ref{th:isoExceptionMet}).
As $\B$ is basic, by Corollary~\ref{cor:ExceptionTransfomerMet} and Theorem~\ref{th:isofBarycentricMonad}, the free monad on $\B + \mathcal{E}_1$ is 
\begin{equation*}
T_{\B + \mathcal{E}_1} \iso T_\B(- + 1) \iso \Pi(- + 1) \,.
\end{equation*}
where $\Pi(- + 1)$ is the \emph{finitely supported sub-distribution monad} with functor assigning to
$X \in \Met$ the space of finitely supported Borel sub-probability measures with Kantorovich metric. 
Thus, $\B + \mathcal{E}_1$ axiomatizes finitely supported sub-probability distributions with Kantorovich metric.

(Step 2) The final step is to sum the above with the theory $\O{\Sigma_\diamond}$.
By Corollary~\ref{cor:ResumptionTransfomerMet}, the free monad on $\U_{\textbf{MP}} = \B + \mathcal{E}_1 + \O{\Sigma_\diamond}$ is 
\begin{equation*}
T_{\U_{\textbf{MP}}} 
\iso \mu y.  T_{\B + \mathcal{E}_1}(c \cdot y + - ) 
\iso \mu y.  \Pi(c \cdot y + 1 + -) \,,
\end{equation*}
where we implicitly applied the isomorphisms $c \cdot (A + B) \iso c \cdot A + c \cdot B$ and $1 \iso c \cdot 1$.
Explicitly, this means that the free monad on $\U_{\textbf{MP}}$
assigns to an arbitrary metric space $X \in \Met$ the \emph{initial solution} 
to the following functorial equation in $\Met$
\begin{equation}
  \MP \iso \Pi(c \cdot \MP + 1 + X) \,.
  \label{eq:functorialEquationMP}
\end{equation}

\medskip
Next we argue that $\U_{\textbf{MP}}$ axiomatizes the initial $c$-Markov process on $\Met$ with \mbox{$c$-discounted} bisimilarity metric.
Let $X = 0$ be the empty metric space (\ie, the initial object in $\Met$). Then \eqref{eq:functorialEquationMP} corresponds to the isomorphism on the initial $\Pi(c \cdot - + 1)$-algebra.
The isomorphism provides us also with a $\Pi(c \cdot - + 1)$-coalgebra
structure on $\MP[0]$ which, according to our interpretation, is a $c$-Markov process $(\MP[0], \tau_0)$.

The key observation is that the metric on $\MP[0]$ is the bisimilarity metric.
\begin{lem} \label{lem:bisimMetric}
$d_{\MP[0]}$ is the $c$-discounted bisimilarity metric on $(\MP[0], \tau_0)$.
\end{lem}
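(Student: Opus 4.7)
The plan is to invoke Proposition~\ref{prop:fixpointbisimdist} by showing that the extended pseudometric $d_{\MP[0]}$ itself is a fixed point of the operator $\Psi^c$, and then conclude by uniqueness of the fixed point. Since $\dist \sqsubseteq d_{\MP[0]}$ automatically and $d_{\MP[0]}$ sits at the top of the lattice on which $\Psi^c$ is defined, once $d_{\MP[0]}$ is shown to be fixed by $\Psi^c$ it must coincide with the bisimilarity pseudometric $\dist$ on $(\MP[0],\tau_0)$.

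The first step is to exploit the structural isomorphism. By construction, $\tau_0 \colon \MP[0] \to \Pi(c \cdot \MP[0] + 1)$ arises from the isomorphism expressing $\MP[0]$ as the initial $\Pi(c \cdot - + 1)$-algebra, so both $\tau_0$ and $\tau_0^{-1}$ live in $\Met$, hence are non-expansive; consequently $\tau_0$ is an isometry. Therefore
\begin{equation*}
d_{\MP[0]}(t,t') \;=\; d_{\Pi(c \cdot \MP[0] + 1)}(\tau_0(t),\tau_0(t')) \;=\; \K[d_{c\cdot\MP[0]+1}]{\tau_0(t),\tau_0(t')},
\end{equation*}
where the second equality is simply the definition of the metric on $\Pi(-)$ (Section~\ref{sec:probchoice}).

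The second step is to apply the Kantorovich--Rubinstein duality to rewrite the Kantorovich distance as a supremum over test functions. Because of the discount factor $c < 1$ on the recursive component and the triviality of the exception component, the metric on $c \cdot \MP[0] + 1$ is controlled enough that the dual form restricts correctly to non-expansive positive $1$-bounded real-valued $f$, giving
\begin{equation*}
\K[d_{c\cdot\MP[0]+1}]{\tau_0(t),\tau_0(t')} \;=\; \sup_{f}\, \left|\, \lebint{f}{\tau_0(t)} - \lebint{f}{\tau_0(t')} \,\right|.
\end{equation*}
By the definition of $\Psi^c$ in Proposition~\ref{prop:fixpointbisimdist}, this supremum is exactly $\Psi^c(d_{\MP[0]})(t,t')$. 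Combining with the first step yields $\Psi^c(d_{\MP[0]}) = d_{\MP[0]}$, and uniqueness (via the Banach argument in Proposition~\ref{prop:fixpointbisimdist}) forces $d_{\MP[0]} = \dist$.

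The main obstacle is justifying the restricted form of the Kantorovich--Rubinstein duality used in Proposition~\ref{prop:fixpointbisimdist}: the standard duality ranges over all $1$-Lipschitz real-valued functions, whereas here the supremum is taken over $1$-bounded non-expansive ones. One needs to verify that the contractive rescaling by $c$ in $c \cdot \MP[0] + 1$, together with the fact that terms of $\U_{\textbf{MP}}$ generate distances that are ultimately controlled by the discount, makes the $1$-bounded restriction harmless — equivalently, that test functions can be truncated and normalised without changing the supremum. Once this duality is in hand, the remaining steps are formal.
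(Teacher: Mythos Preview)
Your proposal is correct and follows essentially the same route as the paper: use that $\tau_0$ is an isometry to identify $d_{\MP[0]}$ with the Kantorovich distance, apply Kantorovich--Rubinstein duality to recognise the right-hand side as $\Psi^c(d_{\MP[0]})$, and conclude by the uniqueness clause of Proposition~\ref{prop:fixpointbisimdist}. The paper simply cites Villani for the duality step without discussing the $1$-bounded restriction you flag; your caution about that point is warranted but does not change the argument.
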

\begin{proof}
Isomorphisms in $\Met$ are isometries. Hence, by definition of $(\MP[0], \tau_0)$ and \eqref{eq:functorialEquationMP} 
\begin{equation*}
d_{\MP[0]}(x,x') = \K[d]{\tau_0(x), \tau_0(x')} \,,
\end{equation*}
where $d$ is the metric on $c \cdot \MP[0] + 1$. 
By Kantorovich-Rubinstein Duality~\cite[Theorem.~5.10]{Villani08}
\begin{equation*}
\K[d]{\tau_0(x), \tau_0(x')} = \sup_{f} \left| \lebint{f}{\tau_0(x)} - \lebint{f}{\tau_0(x')} \right| \,,
\end{equation*}
where $f$ ranges over non-expansive functions $f \colon c \cdot X + 1 \to [0,1]$. 
Thus, the thesis follows by Proposition~\ref{prop:fixpointbisimdist}.
\end{proof}

\begin{rem} 
For a less abstract description of $(\MP[0], \tau_0)$, notice that the elements of $\MP[0]$ 
are ground terms over the signature $\Sigma_{\textbf{MP}}$ modulo $=_0$ provability.
One can interpret a term as a pointed (or rooted) acyclic sub-probabilistic Markov chain up-to bisimilarity.
For example, the term $\diamond( \diamond(\textsf{raise}_*) +_{\frac{1}{2}} (\diamond(\diamond(\textsf{raise}_*)+_{\frac{1}{2}} \textsf{raise}_*) )$ corresponds to the sub-probabilistic Markov chain below
\begin{equation*}
\begin{tikzcd}
& \bullet \arrow[d, "1"' ] \\
& \bullet \arrow[dl, "\frac{1}{2}"'] \arrow[dr, "\frac{1}{4}"] \\
\bullet & & \bullet \arrow[ll, "1"' ]
\end{tikzcd}
\end{equation*}
and $\textsf{raise}_*$ corresponds to the deadlock process, with probability $0$ to move to any state.
\end{rem}

\paragraph{On Complete Metric Spaces.}
Now we characterize the monad $\CC T_{\U_{\textbf{MP}}}$ on $\CMet$. We do this by following the 
same steps as for the monad $T_{\U_{\textbf{MP}}}$ on $\Met$.

(Step 1) By Theorem~\ref{th:isoExceptionCMet}, $\CC T_{\mathcal{E}_1} \iso (- + 1)$ is the 
\emph{maybe monad}. As $\B$ is continuous, by Corollary~\ref{cor:ExceptionTransfomerCMet} and Theorem~\ref{th:isoGiry}, the completion of the free monad on $\B + \mathcal{E}_1$ is 
\begin{equation*}
\CC T_{\B + \mathcal{E}_1} \iso \CC T_\B(- + 1) \iso \Delta(- + 1) \,.
\end{equation*}
where $\Delta(- + 1)$ is the \emph{Radon sub-probability distribution monad} with Kantorovich metric.

(Step 2) In combination with the theory $\O{\Sigma_\diamond}$, by Corollary~\ref{cor:ResumptionTransfomerCMet}, the free completion monad on $\U_{\textbf{MP}} = \B + \mathcal{E}_1 + \O{\Sigma_\diamond}$ is given by
\begin{equation*}
\CC T_{\U_{\textbf{MP}}} 
\iso \mu y.  \CC T_{\B + \mathcal{E}_1}(c \cdot y + - ) 
\iso \mu y.  \Delta(c \cdot y + 1 + -) \,.
\end{equation*}
This means that also for the case of complete metric spaces the free monad on 
$\U_{\textbf{MP}}$ assigns to any arbitrary metric space $X \in \CMet$ the 
\emph{initial solution} of the following functorial equation in $\CMet$
\begin{equation}
  \MP \iso \Delta(1 + c \cdot \MP + X) \,.
  \label{eq:functorialEquationCMP}
\end{equation}

\medskip
Observe that the map $\omega_X \colon \MP \to \Delta(1 + c \cdot \MP + X)$
arising from the above isomorphism is a coalgebra structure for the functor
$\Delta(1 + c \cdot - + X)$ on $\CMet$. Next we show that $(\MP, \omega_X)$ is
actually the final coalgebra.

\begin{thmC}[{\cite[Section~7]{TuriR98}}] \label{th:uniquefixedpoint}
Every locally contractive endofunctor $H$ on $\CMet$ has a unique 
fixed point which is both an initial algebra and a final coalgebra for $H$.
\end{thmC}

Recall from Example~\ref{ex:monoidalclosed} that
the internal hom $[X, Y]$ in $\CMet$ is the set of non-expansive maps from $X$ to $Y$ with
point-wise supremum metric $d_{[X,Y]}(f,g) = \sup_{x \in X} d_Y(f(x),g(x))$.

An endofunctor $H$ on $\CMet$ is \emph{locally $c$-Lipschitz continuous} if 
for all $X,Y \in \CMet$, 
and non-expansive maps $f,g \colon X \to Y$,
\begin{equation*}
  d_{[HX, HY]}(H(f), H(g)) \leq c \cdot d_{[X,Y]}(f,g)  \,.
\end{equation*}
$H$ is \emph{locally non-expansive} if it is locally $1$-Lipschitz
continuous, and \emph{locally contractive} if it is locally $c$-Lipschitz
continuous, for some $0 \leq c < 1$.

Examples of locally contractive functors are the constant functors and the
rescaling functor $(c \cdot -)$, for $0 \leq c < 1$. Moreover, locally
contractiveness is preserved by products and coproducts and composition; and, if $H$ is locally non-expansive and $G$ is locally contractive, then $HG$ is locally contractive.

\begin{lem} \label{lm:Girylocallynonexp}
The endofunctor $\Delta$ on $\CMet$ is locally non-expansive.
\end{lem}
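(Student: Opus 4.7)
The plan is to unfold the definition of $d_{[\Delta X, \Delta Y]}$ as a pointwise supremum and reduce the problem to bounding the Kantorovich distance between $\Delta f(\mu)$ and $\Delta g(\mu)$ uniformly in $\mu \in \Delta X$. Concretely, I want to establish
\begin{equation*}
\K[d_Y]{\Delta f(\mu), \Delta g(\mu)} \leq \sup_{x \in X} d_Y(f(x),g(x))
\end{equation*}
for every $\mu \in \Delta X$, which is exactly what is required since the right-hand side is $d_{[X,Y]}(f,g)$.

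The core step is exhibiting a sufficiently good coupling. Given $\mu \in \Delta X$, I would form the pushforward $\omega = \langle f, g \rangle_*\mu$ of $\mu$ along the pairing map $\langle f, g \rangle \colon X \to Y \times Y$. Non-expansiveness of $f$ and $g$ ensures this map is measurable (indeed non-expansive into $Y \mprod Y$), and a direct calculation with the projection maps $\pi_1, \pi_2 \colon Y \times Y \to Y$ shows that the left and right marginals of $\omega$ are $\Delta f(\mu)$ and $\Delta g(\mu)$ respectively, so $\omega$ is a valid coupling. Moreover, $\omega$ is Radon whenever $\mu$ is, as pushforwards along measurable maps preserve Radonness.

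Having produced this coupling, the bound follows by the change-of-variables formula for pushforwards applied to the non-expansive function $d_Y \colon Y \mprod Y \to [0,\infty]$:
\begin{equation*}
\lebint{d_Y}{\omega} \;=\; \lebint[X]{d_Y(f(x),g(x))}{\mu} \;\leq\; \sup_{x \in X} d_Y(f(x),g(x)) \,,
\end{equation*}
where the inequality uses that $\mu$ is a probability measure. Taking the infimum over all couplings on the left-hand side yields $\K[d_Y]{\Delta f(\mu), \Delta g(\mu)} \leq d_{[X,Y]}(f,g)$, and then passing to the supremum over $\mu$ gives the desired inequality $d_{[\Delta X,\Delta Y]}(\Delta f, \Delta g) \leq d_{[X,Y]}(f,g)$.

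I do not expect any real obstacle here; the only delicate point is making sure that the chosen coupling $\langle f,g\rangle_*\mu$ genuinely lives in the space of Radon probability measures on $Y \mprod Y$ used to compute the Kantorovich distance, but this is standard for pushforwards of Radon measures along continuous (hence Borel-measurable) maps. The argument also specializes with no change to $\Pi$ in place of $\Delta$, which is useful since the same locally non-expansive property will be invoked when applying Theorem~\ref{th:uniquefixedpoint} to the functor $\Delta(1 + c \cdot - + X)$.
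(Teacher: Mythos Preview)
Your argument is correct, but it proceeds differently from the paper's. The paper works on the \emph{dual} side: it invokes Kantorovich--Rubinstein duality to write $\K[d_Y]{\Delta f(\mu),\Delta g(\mu)}$ as a supremum over non-expansive $k \colon Y \to [0,1]$, then uses change of variables, linearity of the integral, and non-expansiveness of $k$ to bound $|k(f(x))-k(g(x))|$ by $d_Y(f(x),g(x))$, arriving at the same intermediate quantity $\int d_Y(f(x),g(x))\,\mathrm{d}\mu$ that you reach. You instead stay on the \emph{primal} side, exhibiting the explicit coupling $\langle f,g\rangle_*\mu$ and bounding the Kantorovich infimum from above directly. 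Your route is arguably more elementary, since it avoids appealing to the duality theorem (which in the extended-metric setting with possibly infinite distances requires some care to state); the paper's route has the virtue of making the connection to the integral-probability-metric formulation explicit, which is what is later used in Proposition~\ref{prop:fixpointbisimdist} and its labelled variants. Incidentally, your worry about Radonness of the coupling is unnecessary here: the paper's definition of $\K[d_Y]{-,-}$ lets $\omega$ range over \emph{all} joint Borel probability measures with the prescribed marginals, so it suffices that $\langle f,g\rangle_*\mu$ is Borel, which is immediate.
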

 \begin{proof}
 We need to check that for all $f,g \in \CMet(X,Y)$,
 \begin{equation}
   \sup_{x \in X} d_Y(f(x),g(x)) \geq \sup_{\mu \in \Delta X} \K[d_Y]{\Delta f(\mu), \Delta g(\mu)} \,.
   \label{eq:Deltalocallynonexpansive}
 \end{equation}
 Denote by $\Phi_Y$ be the set of non-expansive functions $k \colon Y \to [0,1]$, \ie, those functions such that $\forall y,y'.\, |k(y)-k(y')| \leq d_Y(y,y')$. Then, for any $\mu \in \Delta X$,
 \begin{align*}
 \K[d_Y]{\Delta f(\mu), \Delta g(\mu)} 
 &= \sup_{k \in \Phi_Y} \left| \lebint{k}{\Delta f(\mu)} - \lebint{k}{\Delta g(\mu)} \right|
 \tag{Kantorovich duality} \\
 &= \sup_{k \in \Phi_Y} \left| \lebint{k}{(\mu \circ f^{-1})} - \lebint{k}{(\mu \circ g^{-1})} \right|
 \tag{def.~$\Delta$} \\
 &= \sup_{k \in \Phi_Y} \left| \lebint{k \circ f}{\mu} - \lebint{k \circ g}{\mu} \right|
 \tag{change of var.} \\
 &= \sup_{k \in \Phi_Y} \left| \lebint{(k \circ f) - (k \circ g)}{\mu} \right|
 \tag{linearity of $\int$} \\
 &\leq \sup_{k \in \Phi_Y} \lebint{  \left| (k \circ f) - (k \circ g) \right|}{\mu} 
 \tag{subadd.\ of $| \cdot |$} \\
 &\leq \lebint{ d_Y \circ \tupl{f,g}}{\mu} 
 \tag{$k$ non-expansive} \\
 &\leq \lebint{ \sup_{x \in X} d_Y(f(x),g(x)) }{\mu} 
 \tag{monotonicity of $\int$} \\
 &= \sup_{x \in X} d_Y(f(x),g(x)) \,.
 \tag{$\mu$ probability measure}
 \end{align*}
 For the generality of $\mu \in \Delta X$, the above inequality implies \eqref{eq:Deltalocallynonexpansive}.
 \end{proof}
 
Thus, the following holds. 
\begin{thm} \label{th:finalDeltacoalgebra}
  $(\MP, \omega_X)$ is the final coalgebra for
  $\Delta( 1 + c \cdot Id + X)$ in $\CMet$.
\end{thm}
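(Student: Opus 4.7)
The plan is to invoke Theorem~\ref{th:uniquefixedpoint} of Turi and Rutten: since $\MP$ arises by construction as the initial algebra of the functor $H_X = \Delta(1 + c \cdot Id + X)$ on $\CMet$ (via the isomorphism~\eqref{eq:functorialEquationCMP}), it suffices to show that $H_X$ is locally contractive. Then $H_X$ has a \emph{unique} fixed point that is simultaneously its initial algebra and its final coalgebra, so $(\MP, \omega_X)$ is automatically final.

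First I would check that each constituent of $1 + c \cdot Id + X$ is locally contractive. The two constant functors $1$ and $X$ are locally contractive trivially (their action on any pair of parallel morphisms has distance $0$), and the rescaling functor $c \cdot Id$ is locally $c$-Lipschitz continuous by inspection of the definition of $c \cdot -$ and the point-wise supremum metric on internal homs. Using the stability properties recalled just before Lemma~\ref{lm:Girylocallynonexp} (coproducts preserve local contractiveness), the coproduct $1 + c \cdot Id + X$ is locally contractive with constant $c$.

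Second, I would combine this with Lemma~\ref{lm:Girylocallynonexp}, which states that $\Delta$ is locally non-expansive on $\CMet$. Applying the composition principle already cited in the paper (if $H$ is locally non-expansive and $G$ is locally contractive, then $HG$ is locally contractive), the composite $H_X = \Delta \circ (1 + c \cdot Id + X)$ is locally $c$-contractive on $\CMet$.

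Finally, I would appeal to Theorem~\ref{th:uniquefixedpoint}: the unique fixed point of $H_X$ serves both as the initial $H_X$-algebra and as the final $H_X$-coalgebra. Since the derivation in Step~2 of Section~\ref{sec:presentationMP} identifies $\MP$ with this initial algebra and exhibits $\omega_X \colon \MP \to H_X(\MP)$ as (the inverse of) the structure map, the same pair $(\MP, \omega_X)$ is the final coalgebra, as required. I do not anticipate any real obstacle here; the only delicate point is verifying the constants in the local-contractiveness argument, which reduces to the bookkeeping above.
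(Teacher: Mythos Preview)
Your proposal is correct and follows essentially the same route as the paper's proof: both argue that $1 + c \cdot Id + X$ is locally contractive, compose with the locally non-expansive $\Delta$ via Lemma~\ref{lm:Girylocallynonexp}, and then invoke Theorem~\ref{th:uniquefixedpoint} to conclude that the unique fixed point is simultaneously initial and final. You merely spell out in more detail why the inner functor is locally contractive, which the paper leaves implicit.
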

\begin{proof}
  This is a direct consequence of Theorem~\ref{th:uniquefixedpoint} and
  Lemma~\ref{lm:Girylocallynonexp}, since, $1 + c \cdot - + X$ is locally
  contractive and the composition of a locally contractive functor with a
  locally non-expansive one is locally contractive.
\end{proof}

Note that, when $X = 0$ is the empty metric space, the coalgebra of
this functor correspond to the final $c$-Markov process we have used in
Section~\ref{sec:mp&bisim} to characterize the $c$-discounted probabilistic
bisimilarity distance. When $X$ is not the empty space, we obtain coalgebraic
structures that can be interpreted as Markov process with $X$-labelled 
terminal states; one can view the labels in $X$ as describing different kind 
of termination of the process. 

Hence, in the light of Theorem~\ref{th:finalDeltacoalgebra}, we have
shown that for the case of complete metric spaces $\U_{\textbf{MP}}$
axiomatizes the $c$-discounted bisimilarity distance on the final Markov
process.

\begin{rem}
While by interpreting the theory $\U_{\textbf{MP}}$ over $\Met$ we can only characterize Markov processes 
that are acyclic, by doing it over $\CMet$ we obtain an algebraic representation of all bisimilarity classes as the elements of the final coalgebra. Thus, among others, we also recover Markov processes with cyclic structures as the limit of all their finite unfoldings. 
\end{rem}

%%%%%%%%%%%%%%%%%%%%%%%%%%%%%%%%%%%%%%%%%%%%%%
\section{Tensor of Quantitative Theories}
\label{sec:tensor} 

In this section, we consider the commutative combination of quantitative theories, their \emph{tensor}, 
that imposes mutual commutation of the operations from each theory. As such, it is properly coarser than the sum of two theories, which is just their unrestrained combination. 

The main theoretical result is that the free monad on the tensor of two basic theories corresponds to the categorical tensor of the free monads on the theories (\cf, Theorem~\ref{th:tensorofsimpleeqmonad}). In the proof given, we use the fact that the quantitative theories are \emph{basic}, as this allows us to exploit the correspondence between the algebras of a theory $\U$ and the EM-algebras of the monad $T_{\U}$ (Theorem~\ref{th:EilenbergMoore}). 

Our main examples of tensor of quantitative effects are given by the combination of 
reader and writer quantitative monads with arbitrary quantitative effects (Section~\ref{sec:tensorReaderWriter}).
We conclude the section by showing three nontrivial applications of tensorial combinations of quantitative theories by providing modular axiomatizations of labelled Markov processes (Section~\ref{sec:labelmarkovprocesses}), 
Mealy machines (Section~\ref{sec:mealyMachines}), 
and Markov decision processes (Section~\ref{sec:markovdecitionprocesses}), 
with their respective bisimilarity distances.

%%%%%%%%%%%%%%%%%%%%%%%%%%%%%%%%%%%%%%%%%%%%%%
\subsection{Tensor of Strong Monads on Metric Spaces} \label{sec:MonadTensor}

In this section, we recall the definition of (categorical) \emph{tensor of strong monads} on $\Met$ and $\CMet$. Our presentation is based on Manes~\cite{Manes69}, which addresses the case for $\Set$ monads. Given that the monoidal structures of $\Met$ and $\CMet$ are essentially identical, we will concentrate on $\Met$ in the following discussion, with the understanding that the same results hold for $\CMet$ as well.

\smallskip
Recall that $\Met$ is a symmetric monoidal closed category (\cf\ Example~\ref{ex:monoidalclosed}): the monoidal product $X \mprod Y$ has $X \times Y$ as underlying set and distance function given by $d_{X \mprod Y}((x,y)(x',y')) = d_X(x,x') + d_Y(y,y')$; 
the internal hom $[X,Y]$ is given by the set of non-expansive maps from $X$ to $Y$ with distance function $d_{[X,Y]}(f,g) = \sup_{x \in X} d_Y(f(x),g(x))$.

\bigskip
As $\Met$ is self-enriched, it has all $v$-fold \emph{powers} 
(or $v$-\emph{powers}) $X^v$, of any $v, X \in \Met$, defined as $X^v =
[v, X]$~\cite{Kelly82}.   
Moreover, $(-)^v \colon \Met \to \Met$ is a strong functor with strength 
$\xi_{X,Y} \colon X \mprod Y^v \to (X \mprod Y)^v$ obtained by currying 
\begin{equation*}
  v \mprod (X \mprod Y^v) \xrightarrow{\,\iso\,}  
  X \mprod (v \mprod Y^v) \xrightarrow{\,X \mprod \ev \,} X \mprod Y \,.
\end{equation*}

\medskip
%Such a functor lifts to the category of algebras for a strong functor as follows.
%\begin{defi}[Power algebra]
%\label{def:PowerAlgebra}
Let $F \colon \Met \to \Met$ be a strong functor with strength $t$.
The $v$-\emph{power} functor $(-)^v$ is lifted to $F$-algebras by mapping
$(A, a)$ to $(A,a)^v = (A^v, a^v \circ \sigma_A)$, 
where $\sigma \colon F(-^v) \nat (F-)^v$ is the strong natural transformation where, at each component $X \in \Met$, $\sigma_X$ is obtained by currying
\begin{equation*}
  v \mprod FX^v \xrightarrow{\,t \,}  
  F(v \mprod X^v) \xrightarrow{\,F\ev \,} FX \,.
  \label{eq:distributiveLaw}
\end{equation*}
We call $(A,a)^v$ the $v$-power of $(A, a)$. As the definition above is valid for generic extended metric spaces $v \in \Met$
we have that $F$-algebras are closed under powers of $\Met$-objects.

\begin{defi}[Pre-operation of a strong functor]
\label{def:operation}
Let $F \colon \Met \to \Met$ be a strong functor and $v \in \Met$.  A
\emph{$v$-ary pre-operation of $F$} is a strong natural transformation of type $(-)^v \nat F$.
\end{defi}

We denote by $\Op[F](v)$ the set of $v$-ary pre-operations and by $\Op[F]$ the collection of all pre-operations of $F$.
An assignment of $g \in \Op[F](v)$ to an $F$-algebra $(A,a)$ is the composite $a^g = a \circ g_A$.
We call $a^g$ an \emph{operation} of $(A,a)$.

\begin{prop} \label{prop:operations}
Let $(A,a), (B,b)$ be $F$-algebras of a strong endofunctor $F$ on $\Met$ 
and $f \colon A \to B$ a morphism in $\Met$.
Then, the following are equivalent:
\begin{enumerate}
  \item \label{operations1} 
  $f$ is an $F$-homomorphisms from $(A,a)$ to $(B,b)$;
  \item \label{operations2} 
  For every $v \in \Met$ and $g \in \Op[F](v)$, $f \circ a^g = b^g \circ f^v$.
\end{enumerate}
\end{prop}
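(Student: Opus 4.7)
The plan is to prove the two implications separately, with direction \ref{operations1} $\Rightarrow$ \ref{operations2} being routine and \ref{operations2} $\Rightarrow$ \ref{operations1} requiring a Yoneda-style construction of pre-operations.

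For \ref{operations1} $\Rightarrow$ \ref{operations2}, I would simply unwind the definitions. Fix a pre-operation $g \colon (-)^v \nat F$. Naturality of $g$ at the morphism $f \colon A \to B$ yields $F(f) \circ g_A = g_B \circ f^v$. Composing with $b$ on the left and using that $f$ is an $F$-homomorphism (i.e.\ $b \circ F(f) = f \circ a$) gives
\begin{equation*}
b^g \circ f^v = b \circ g_B \circ f^v = b \circ F(f) \circ g_A = f \circ a \circ g_A = f \circ a^g,
\end{equation*}
which is the desired equation.

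For the reverse direction, the plan is to exhibit, for every $\hat{a}$ in the underlying set of $FA$, a pre-operation $g^{\hat{a}} \in \Op[F](A)$ witnessing the homomorphism condition at that specific element. The construction goes via a quantitative Yoneda lemma: using the equivalence between strong natural transformations on $\Met$ and $\Met$-enriched ones from~\cite{Kock72}, natural transformations $(-)^v \nat F$ are in bijection with elements of $Fv$. Explicitly, I would define
\begin{equation*}
g^{\hat{a}}_X \colon X^A \to FX, \qquad g^{\hat{a}}_X(h) = F(h)(\hat{a}),
\end{equation*}
for each $X \in \Met$ and $h \in X^A = [A, X]$. Non-expansiveness of $g^{\hat{a}}_X$ would be verified directly from the non-expansiveness of the enriched action $F \colon [A, X] \to [FA, FX]$ evaluated at $\hat{a}$; naturality in $X$ is immediate from functoriality of $F$; strength follows from Kock's equivalence.

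Having verified $g^{\hat{a}} \in \Op[F](A)$, the hypothesis \ref{operations2} yields $f \circ a^{g^{\hat{a}}} = b^{g^{\hat{a}}} \circ f^A$. Evaluating both sides at $id_A \in A^A$ (noting $f^A(id_A) = f$ and $g^{\hat{a}}_A(id_A) = \hat{a}$) gives
\begin{equation*}
f(a(\hat{a})) = (f \circ a^{g^{\hat{a}}})(id_A) = (b^{g^{\hat{a}}} \circ f^A)(id_A) = b(g^{\hat{a}}_B(f)) = b(F(f)(\hat{a})),
\end{equation*}
and since $\hat{a} \in FA$ is arbitrary, this is precisely $f \circ a = b \circ F(f)$, i.e.\ $f$ is an $F$-homomorphism. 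The main obstacle is checking that $g^{\hat{a}}$ is a \emph{strong} natural transformation rather than merely a natural transformation of the underlying $\Set$-functors (in particular, that it commutes with the strengths $\xi$ of $(-)^A$ and $t$ of $F$); this is precisely where the appeal to~\cite{Kock72} and the enriched Yoneda lemma is essential, as verifying the strength square by a direct calculation would require unpacking the definition of $\xi$ and the coherence of $t$ in a somewhat tedious manner.
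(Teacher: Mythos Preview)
Your proposal is correct and follows essentially the same approach as the paper: both directions are argued identically, with the nontrivial implication \ref{operations2}$\Rightarrow$\ref{operations1} handled via the Kock correspondence between strong and $\Met$-enriched structure and the enriched Yoneda lemma to produce, for each element of $FA$, an $A$-ary pre-operation whose evaluation at $id_A$ recovers that element. The only cosmetic difference is that you spell out the Yoneda pre-operation explicitly as $g^{\hat a}_X(h)=F(h)(\hat a)$, whereas the paper phrases it abstractly as the inverse of $g \mapsto g_A \circ j_A$ (with $j_A(*)=id_A$) and then tests against generalized elements $e\colon 1\to FA$; the underlying argument is the same.
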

\begin{proof}
$\eqref{operations1} \Rightarrow \eqref{operations2}$ follows by definition of $a^g$, $b^g$
and naturality of $g$.  As for $\eqref{operations2} \Rightarrow \eqref{operations1}$, note that since $\Met$
is a symmetric monoidal closed category, we have a 1-1 correspondence between strong and 
$\Met$-enriched endofunctors on $\Met$, and also between strong and $\Met$-enriched natural transformations~\cite{Kock72}.  Therefore, by (the weak form of) the enriched Yoneda lemma (\cf~\cite{Kelly82}), 
there exists a natural bijection between strong natural transformations $g \in \Op[F](A)$ and the (generalised) elements of $FA$, \ie, 
morphisms of the form $1 \to FA$, obtained via the composition
\begin{equation*}
1 \xrightarrow{\,j_A\,} A^A \xrightarrow{\,g_A\,} FA \,.
\end{equation*}
where $j_A(*) = id_A \in A^A$. Thus, for any $e \colon 1 \to FA$, there exists $\hat{e} \in \Op[F](A)$ such that $\hat{e}_A \circ j_A = e$.
As in $\Met$ the elements of an extended metric space $X$ are identifiable by maps of type $1 \to X$ (equivalently, $\Met(1,X)$ is a jointly epic family), to prove \eqref{operations1} it suffices to show 
that $f \circ a \circ e = b \circ Ff \circ e$, for all $e \colon 1 \to FA$:
\begin{align*}
    f \circ a \circ e
    &= f \circ a \circ \hat{e}_A \circ j_A \tag{def. $\hat{e}_A$}\\
    &= f \circ a^{\hat{e}} \circ j_A \tag{def. $a^{\hat{e}}$} \\
    &= b^{\hat{e}} \circ f^A \circ j_A \tag{by \eqref{operations2}} \\
    &= b \circ \hat{e}_B \circ f^A \circ j_A \tag{def. $a^{\hat{e}}$} \\
    &= b \circ Ff \circ \hat{e}_A \circ j_A \tag{nat. $\hat{e}$} \\
    &= b \circ Ff \circ e \,. \tag{def. $\hat{e}_A$}
\end{align*}
This concludes the proof.
\end{proof}

The above proposition indicates that $F$-algebras are precisely characterized by their operations.
In some situations, depending on the functor $F$, one gets the same characterization with much fewer operations. We identify this property with the following definition.
\begin{defi}[Exhaustive sets of pre-operations] 
A subset $\mathcal{E} \subseteq \Op[F]$ of pre-operations of a strong functor $F$ is \emph{exhaustive}, if for any $F$-algebras $(A,a)$, $(B,b)$ and 
$f \colon A \to B$ in $\Met$, the following are equivalent:
\begin{enumerate}
  \item $f$ is a $F$-homomorphisms from $(A,a)$ to $(B,b)$;
  \item For every $v \in \Met$ and $v$-ary pre-operation $g \in \mathcal{E}$, $f \circ a^g = b^g \circ f^v$.
\end{enumerate}
\end{defi}

Let $F, G$ be two strong endofunctors on $\Met$.  
A \emph{$\tupl{F, G}$-bialgebra} 
is a triple $(A, a, b)$ consisting of a metric space $A \in \Met$ with both a $F$-algebra structure $a \colon FA \to A$ and a $G$-algebra structure $b \colon GA \to A$.  A morphism of $\tupl{F, G}$-bialgebras 
is a non-expansive map that is simultaneously a $F$- and $G$-homomorphism.
Denote by $\biAlg{F}{G}$ the category of $\tupl{F, G}$-bialgebras.

\begin{prop} \label{prop:comm}
Let $(A,a,b)$ be a $\tupl{F, G}$-bialgebra.  
The following statements are equivalent:
\begin{enumerate}
  \item \label{comm1}
  For all $v \in \Met$ and $g \in \Op[F](v)$, $a^g$ is a $G$-homomorphism;
  \item \label{comm2}
  For all $w \in \Met$ and $h \in \Op[G](w)$, $b^h$ is a $F$-homomorphism.
\end{enumerate}
Diagrammatically:
\begin{equation*}
%\small
\begin{tikzcd}
GA^v \arrow[r, "\bar{b}"] \arrow[d, "G(a^g)"'] 
\arrow[dr, phantom, "\eqref{comm1}" gray]
& A^v  \arrow[d, "a^g"] \\
GA \arrow[r, "b"] & A
\end{tikzcd}
\quad\text{iff}\quad
\begin{tikzcd}
FA^w \arrow[r, "\bar{a}"] \arrow[d, "F(b^h)"'] 
\arrow[dr, phantom, "\eqref{comm2}" gray]
& A^w  \arrow[d, "b^h"] \\
FA \arrow[r, "a"] & A
\end{tikzcd}
\end{equation*}
where $(A,a)^w = (A^w, \bar{a})$ and $(A,b)^v = (A^v, \bar{b})$.
\end{prop}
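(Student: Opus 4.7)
The statement is manifestly symmetric in the pair $(F,G)$ and in the algebra structures $(a,b)$, so it suffices to establish one implication; I sketch $(1) \Rightarrow (2)$.

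Fix $w \in \Met$ and $h \in \Op[G](w)$. To show that $b^h \colon (A^w, \bar{a}) \to (A,a)$ is an $F$-homomorphism, I apply Proposition~\ref{prop:operations} in the ``$F$-direction'', which reduces the task to verifying, for every $v \in \Met$ and every pre-operation $g \in \Op[F](v)$, the identity
\[
b^h \circ \bar{a}^g = a^g \circ (b^h)^v \,.
\]
The crucial technical step is to identify the $v$-ary operation $\bar{a}^g$ on the power algebra $(A^w, \bar{a}) = (A^w,\, a^w \circ \sigma_A)$ with the $w$-power $(a^g)^w$ of the operation $a^g$ on $(A,a)$, modulo the canonical swap isomorphism
\[
(A^w)^v \;\iso\; A^{v \mprod w} \;\iso\; A^{w \mprod v} \;\iso\; (A^v)^w
\]
induced by the symmetry of $\mprod$. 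I would prove this by unfolding $\bar{a}^g = a^w \circ \sigma_A \circ g_{A^w}$ and chasing a diagram that combines the coherence axiom of $g$ as a strong natural transformation $(-)^v \nat F$ (relating the strengths $\xi$ of $(-)^v$ and $t$ of $F$) with the defining formula of the distributive law $\sigma$ as the currying of $F\ev \circ t$. A fully symmetric argument identifies $\bar{b}^h$ with $(b^h)^v$ under the analogous iso.

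Under these two identifications, the equation to be verified becomes
\[
b^h \circ (a^g)^w = a^g \circ (b^h)^v \,,
\]
read as an equality of maps out of $A^{v \mprod w}$. Conversely, applying Proposition~\ref{prop:operations} to the hypothesis $(1)$ — namely that $a^g \colon (A^v, \bar{b}) \to (A,b)$ is a $G$-homomorphism — in the ``$G$-direction'' yields, for every $w$ and every $h \in \Op[G](w)$, the identity $a^g \circ \bar{b}^h = b^h \circ (a^g)^w$, which after replacing $\bar{b}^h$ by $(b^h)^v$ is exactly the same equation. Hence $(2)$ follows from $(1)$, and by symmetry the converse holds as well.

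The main obstacle is precisely the identification step: showing that the operation induced on a power algebra coincides with the power of the induced operation. Intuitively, ``operations on $A^w$ are computed coordinatewise'', which is transparent in $\Set$; in the $\Met$-enriched setting it requires careful bookkeeping of the three strengths involved ($\xi$, $t$, and the derived $\sigma$), and ultimately rests on the equivalence between strong and $\Met$-enriched natural transformations invoked in the proof of Proposition~\ref{prop:operations}.
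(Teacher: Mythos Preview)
Your proposal is correct and follows essentially the same route as the paper: reduce via Proposition~\ref{prop:operations} to the equation $b^h \circ \bar{a}^g = a^g \circ (b^h)^v$, then use the identification of $\bar{a}^g$ with $(a^g)^w$ (and symmetrically $\bar{b}^h$ with $(b^h)^v$) modulo the swap isomorphism $\chi$, together with hypothesis~\eqref{comm1} and naturality. The only organizational difference is that the paper isolates your ``crucial technical step'' as a standalone lemma (Proposition~\ref{prop:Iso}), proves it once by the strength-and-evaluation diagram chase you outline, and then simply cites it twice in the proof of Proposition~\ref{prop:comm}.
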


Prior to presenting the proof of this statement, it is beneficial to introduce a technical result that will prove useful in subsequent discussions.
\begin{prop} \label{prop:Iso}
Let $(A,a)$ be a $F$-algebra of a strong endofunctor $F$ 
on $\Met$. Then, for any $v,w \in \Met$ and $g \in \Op[F](v)$ the following commute 
\begin{equation*}
\begin{tikzcd}
(A^v)^w \arrow[rr, "\chi", "\iso"'] \arrow[rd, "(a^g)^w"'] 
& & (A^w)^v \arrow[ld, "\bar{a}^g"] \\
& A^w
\end{tikzcd}
\end{equation*}
where $(A,a)^w = (A^w, \bar{a})$ and $\chi$ is the canonical isomorphism.  
\end{prop}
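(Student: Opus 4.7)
The plan is to reduce the desired equation $(a^g)^w = \bar{a}^g \circ \chi$ to an identity purely about the pre-operation $g$, the canonical swap $\chi$, and the strengths of $F$ and of $(-)^v$, and then to verify the reduced identity by transposing under the adjunction $(w \mprod -) \dashv (-)^w$.

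First I would unfold both sides. By functoriality of $(-)^w$, $(a^g)^w = a^w \circ (g_A)^w$. On the other hand, applying the construction of the $w$-power of an $F$-algebra described in the preamble (with $w$ in place of $v$), $\bar{a} = a^w \circ \sigma_A$, where $\sigma_A \colon F(A^w) \to (FA)^w$ is obtained by currying $F\ev_A \circ t_{w, A^w} \colon w \mprod F(A^w) \to FA$; hence $\bar{a}^g \circ \chi = a^w \circ \sigma_A \circ g_{A^w} \circ \chi$. It therefore suffices to prove the $a^w$-free identity
\[
(g_A)^w = \sigma_A \circ g_{A^w} \circ \chi \colon (A^v)^w \to (FA)^w. \quad (\star)
\]

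To prove $(\star)$ I would use the universal property of the power $(FA)^w$: two morphisms into it agree iff they agree after transposing under $(w \mprod -) \dashv (-)^w$, that is, after precomposing with $w \mprod -$ and postcomposing with $\ev_{FA}$. The transpose of the left-hand side is $g_A \circ \ev_{A^v}$. The transpose of the right-hand side, after unfolding $\sigma_A$ via its defining equation $\ev_{FA} \circ (w \mprod \sigma_A) = F\ev_A \circ t_{w, A^w}$, becomes
\[
F\ev_A \circ t_{w, A^w} \circ (w \mprod g_{A^w}) \circ (w \mprod \chi).
\]
I would then apply two properties of $g$: its strong naturality, giving $t_{w, A^w} \circ (w \mprod g_{A^w}) = g_{w \mprod A^w} \circ \xi_{w, A^w}$, with $\xi$ the strength of $(-)^v$; and its ordinary naturality at $\ev_A \colon w \mprod A^w \to A$, giving $F\ev_A \circ g_{w \mprod A^w} = g_A \circ (\ev_A)^v$.

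Combining these substitutions, $(\star)$ reduces to the purely structural identity
\[
\ev_{A^v} = (\ev_A)^v \circ \xi_{w, A^w} \circ (w \mprod \chi),
\]
which holds by the very definition of $\chi$ as the canonical isomorphism $(A^v)^w \iso (A^w)^v$ induced, through the universal property of powers, by the braiding $v \mprod w \iso w \mprod v$ in the symmetric monoidal closed category $\Met$. The main obstacle is bookkeeping: making sure the ordinary and strong naturality facts about $g$ are invoked at the correct components, and that the strengths $t$ and $\xi$ and the evaluations are correctly tracked through the transpositions; once this is arranged, no real calculation remains beyond coherence in the monoidal closed structure of $\Met$.
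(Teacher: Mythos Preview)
Your proof is correct and follows essentially the same approach as the paper's: both arguments transpose under the adjunction $(w \mprod -) \dashv (-)^w$ and verify equality of the transposes using the defining equation of $\sigma$, the strong naturality of $g$, the ordinary naturality of $g$, and the relation between $\chi$, $\xi$, and the evaluation maps. Your only cosmetic difference is that you first cancel the common factor $a^w$ to obtain the reduced identity $(\star)$ before transposing, whereas the paper keeps $a$ in the diagram throughout; this makes your write-up a bit tidier but does not change the substance.
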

\begin{proof} %(of Proposition~\ref{prop:Iso})
By the universality of the counit $\ev \colon (w \mprod -) \nat Id$ 
of the adjunction $(w \mprod - ) \dashv (-)^w$
it suffices to show that the following two diagrams commute:
\begin{equation*}
\begin{tikzcd}
& w \mprod (A^v)^w \arrow[dl, "a^g \circ \ev"'] \arrow[d, "w \mprod (a^g)^w"] \\
A & w \mprod A^w \arrow[l, "\ev"]
\end{tikzcd}
\quad
\begin{tikzcd}
w \mprod (A^v)^w \arrow[d, "a^g \circ \ev"'] \arrow[r, "w \mprod \chi"] 
& w \mprod (A^w)^v \arrow[d, "w \mprod \bar{a}^g"]\\
A & w \mprod A^w \arrow[l, "\ev"]
\end{tikzcd}
\end{equation*}
The diagram to the left commutes by naturality of the counit $\ev$;
the one to the right commutes as follows, 
where $\xi$ and $t$ are respectively the strengths of $(-)^v$ and $F$
\begin{equation*}
\begin{tikzcd}
& w \mprod (A^w)^v \arrow[ddr, "w \mprod g", bend left] \arrow[d, "\xi"]
\\ w \mprod (A^v)^w \arrow[ur, "w \mprod \chi"] \arrow[d, "\ev"']
& (w \mprod A^w)^v \arrow[dl, "\ev^v"', near start] \arrow[d, "g"]
\\ A^v \arrow[d, "g"']
& F(w \mprod A^w) \arrow[dl, "F\ev"', near start] 
& w \mprod FA^w \arrow[l, "t"'] \arrow[d, "w \mprod \sigma"]
\\ FA \arrow[d, "a"']
& & w \mprod (FA)^w \arrow[ll, "\ev"'] \arrow[d, "w \mprod a^w"]
\\ A & & w \mprod A^w \arrow[ll, "\ev"']
\end{tikzcd}
\end{equation*}
by naturality of the counit $\ev$; definition of $\xi$ and $\chi$;
definition of the law $\sigma \colon F(-)^w \nat (F - )^w$; 
definition of $a^g$, $\bar{a}^g$; by $\bar{a} = a^w \circ \sigma$;
and because $g$ is strong.
\end{proof}

\begin{proof} (of Proposition~\ref{prop:comm})
$\eqref{comm1} \Rightarrow \eqref{comm2}$ By Proposition~\ref{prop:operations}, we prove \eqref{comm2}
by showing that for all $v \in \cat{V}$ and $g \in \Op[F](v)$, $b^h \circ \bar{a}^g = a^g \circ (b^h)^v$.
This is shown by the diagram below
\begin{equation*}
\begin{tikzcd}
(A^w)^v \arrow[rr, "\bar{a}^g"] \arrow[rd, "\chi", bend left=10] \arrow[dd, "(b^h)^v"'] & & A^w \arrow[dd, "b^h"] \\
 & (A^v)^w \arrow[lu, "\chi^{-1}", bend left=10] 
 	\arrow[ru, "(a^g)^w"', near start] \arrow[ld, "\bar{b}^h"', near start] \\
A^v \arrow[rr, "a^g"] & & A
\end{tikzcd}
\end{equation*}
which commutes by Proposition~\ref{prop:Iso}, \eqref{comm1}, definition of $a^g$, and naturality of $g$. 
The implication $\eqref{comm2} \Rightarrow \eqref{comm1}$ is similar.
\end{proof}

\begin{defi}[Tensor algebra]
A \emph{tensor $\tupl{F, G}$-algebra} is a $\tupl{F, G}$-bialgebra $(A,a,b)$  that satisfies either of the equivalent conditions of Proposition~\ref{prop:comm}.
\end{defi}

In the case the functors $F$ and $G$ admit exhaustive sets of pre-operations, the conditions of Proposition~\ref{prop:comm} 
can be more conveniently expressed in the following way.
\begin{prop} \label{prop:exhaustiveComm}
Let $\mathcal{D}$ and $\mathcal{E}$ be exhaustive sets of pre-operations for $F$ and $G$, respectively. Then, $(A,a,b)$ is a tensor $\tupl{F, G}$-algebra iff it satisfies either of the equivalent conditions:
\begin{enumerate}
  \item \label{dComm1} 
  For all $g \in \mathcal{D}$, $a^g$ is a $G$-homomorphism;
  \item \label{dComm2} 
  For all $h \in \mathcal{E}$, $b^h$ is a $F$-homomorphism.
\end{enumerate}
\end{prop}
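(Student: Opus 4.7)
The plan is to deduce the statement from Proposition~\ref{prop:comm} by observing that exhaustiveness of $\mathcal{D}$ (respectively $\mathcal{E}$) is precisely what is needed to drop the quantification over all of $\Op[F]$ (resp.\ $\Op[G]$) in conditions~\eqref{comm1}--\eqref{comm2} of that proposition. The easy direction is immediate: if $(A,a,b)$ is already a tensor $\tupl{F,G}$-algebra, then, by Proposition~\ref{prop:comm}, the conditions hold for \emph{every} pre-operation, hence a fortiori for every $g \in \mathcal{D}$ and every $h \in \mathcal{E}$; this yields both \eqref{dComm1} and \eqref{dComm2}.

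The real content is the converse. Assuming \eqref{dComm1}, so that $a^g \colon (A^v,\bar{b}) \to (A,b)$ is a $G$-homomorphism for every $g \in \mathcal{D}$, I would verify condition~\eqref{comm2} of Proposition~\ref{prop:comm}: for every $w \in \Met$ and every $h \in \Op[G](w)$, the map $b^h \colon (A^w,\bar{a}) \to (A,a)$ is an $F$-homomorphism. Here is where exhaustiveness enters: applied to the $F$-algebras $(A^w,\bar{a})$ and $(A,a)$ and the morphism $b^h$, it reduces the task to checking, for each $v \in \Met$ and each $g \in \mathcal{D}$ of arity $v$, the equation
\begin{equation*}
b^h \circ \bar{a}^g \,=\, a^g \circ (b^h)^v \,.
\end{equation*}
This equation is then established by exactly the diagram chase used in the proof of Proposition~\ref{prop:comm}: use Proposition~\ref{prop:Iso} to rewrite $\bar{a}^g$ as $(a^g)^w \circ \chi$ and $\bar{b}^h$ as $(b^h)^v \circ \chi^{-1}$, and close the square by invoking the hypothesis that $a^g$ is a $G$-homomorphism together with the (strong) naturality of $g$.

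The implication \eqref{dComm2}$\Rightarrow$ tensor algebra is entirely symmetric, using exhaustiveness of $\mathcal{E}$ and the dual diagram chase with $h \in \mathcal{E}$. The equivalence between \eqref{dComm1} and \eqref{dComm2} then follows formally from the equivalence already established in Proposition~\ref{prop:comm}. The main subtlety to keep in mind is that exhaustiveness, in its stated form, is an equivalence between being a homomorphism and commuting with every pre-operation from $\mathcal{D}$, valid for \emph{arbitrary} pairs of $F$-algebras; invoking it for the concrete algebras $(A^w,\bar{a})$ and $(A,a)$ is legitimate because the $v$-power construction does produce genuine $F$-algebras. This exactly parallels the role played by Proposition~\ref{prop:operations} in the proof of Proposition~\ref{prop:comm}, and replacing that appeal by exhaustiveness is what yields the refined statement.
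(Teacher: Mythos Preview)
Your proposal is correct and follows essentially the same approach as the paper's proof: the easy direction is immediate from $\mathcal{D} \subseteq \Op[F]$, and for the converse you assume \eqref{dComm1}, take an arbitrary $h \in \Op[G](w)$, reduce the $F$-homomorphism condition on $b^h$ to the equations $b^h \circ \bar{a}^g = a^g \circ (b^h)^v$ for $g \in \mathcal{D}$ via exhaustiveness of $\mathcal{D}$, and close the resulting diagram using Proposition~\ref{prop:Iso} and the hypothesis that $a^g$ is a $G$-homomorphism---exactly as in the paper. Your remark that the equivalence \eqref{dComm1}$\Leftrightarrow$\eqref{dComm2} follows formally once both are shown equivalent to being a tensor algebra is also in line with the paper, which phrases it as repeating the argument of Proposition~\ref{prop:comm} with exhaustiveness replacing Proposition~\ref{prop:operations}.
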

\begin{proof} %(of Proposition~\ref{prop:exhaustiveComm})
The equivalence of the statements \eqref{dComm1}, \eqref{dComm2} follows as 
in Proposition~\ref{prop:comm}, by using the density of $\D$ and $\E$ in lieu of Proposition~\ref{prop:operations}.

Let $(A,a,b)$ be a tensor $\tupl{F, G}$-algebra. Then, \eqref{dComm1} 
follows trivially because, $\mathcal{D}$ is a subset of pre-operations of $F$.
For the converse implication, assume \eqref{dComm1} and let $h \in \Op[G](w)$ for some 
$w \in \cat{V}$. We want to show that
\begin{equation*}
\begin{tikzcd}
F A^w \arrow[r, "\bar{a}"] \arrow[d, "F(b^h)"'] 
& A^w  \arrow[d, "b^h"] \\
\Sigma A \arrow[r, "a"] & A
\end{tikzcd}
\end{equation*}
commutes, where $(A,a)^{w} = (A^{w}, \bar{a})$.
Since $\mathcal{D}$ is exhaustive, it suffices to show that for 
all $v$-ary pre-operation $g \in \mathcal{D}$, $b^h \circ \bar{a}^g = a^g \circ (b^h)^{v}$. This follows by 
\begin{equation*}
\begin{tikzcd}
(A^w)^v \arrow[rr, "\bar{a}^g"] \arrow[rd, "\chi", bend left=10] \arrow[dd, "(b^h)^v"'] & & A^w \arrow[dd, "b^h"] \\
 & (A^v)^w \arrow[lu, "\chi^{-1}", bend left=10] 
 	\arrow[ru, "(a^g)^w"', near start] \arrow[ld, "\bar{b}^h"', near start] \\
A^v \arrow[rr, "a^g"] & & A
\end{tikzcd}
\end{equation*}
which commutes by Proposition~\ref{prop:Iso}, \eqref{dComm1}, 
definition of $a^g$, and naturality of $g$.
\end{proof}

Let $(T,\eta,\mu)$ be a strong monad on $\Met$.
Note that, as $T$ is a strong functor and the EM-algebras for $T$ are 
closed under powers of $\Met$-objects, all the results and definitions 
given in this section extends to EM-algebras for $T$.

\smallskip
Let $T$, $T'$ be two strong monads on $\Met$.  
A Eilenberg-Moore \emph{$\tupl{T, T'}$-bialgebra} is a triple $(A, a, a')$ consisting of an extended metric space $A \in \Met$ endowed with both a EM 
$T$-algebra structure $a \colon TA \to A$ and a EM $T'$-algebra structure 
$a' \colon T'A \to A$.  We say that a EM $\tupl{T, T'}$-bialgebra $(A,a,b)$
a EM tensor $\tupl{T, T'}$-algebra if it is so as a $\tupl{T, T'}$-bialgebra 
for the functors $T,T'$.
We denote by $\biEMAlg{T}{T'}$ the category of EM $\tupl{T, T'}$-bialgebras 
and by $\tensorAlg{T}{T'}$, the full subcategory of the EM 
$\tupl{T, T'}$-tensor algebras.

\begin{defi}[Tensor of monads]
If the forgetful functor $\tensorAlg{T}{T'} \to \cat{V}$ has left adjoint, then the
monad induced by the adjunction is the \emph{tensor} of $T, T'$, denoted $T \otimes T'$.
\end{defi}

Note that the tensor of monads does not necessarily exist
(see~\cite{BowlerGLS13} for counterexamples).
However, when it does $T \otimes T' \iso T' \otimes T$, 
as the categories of tensor biagebras 
$\tensorAlg{T}{T'}$ and $\tensorAlg{T'}{T}$ are isomorphic.

\begin{rem}[Discussion and related work]
Pre-operations of a strong functor $F$ are related to Plotkin and Power's \emph{algebraic operations}~\cite{PlotkinP01,PlotkinP03} in the sense that their assignment to $F$-algebras are the appropriate version of algebraic operations for functors.  Moreover, when considered over a strong monad $T$ they correspond to generic effects of type $I \to Tv$ (\ie, Kleisli maps of type $I \to v$, where $I$ is the unit object of the monoidal structure).  
The reason why we consider pre-operations over functors, and not just 
monads, is to relate the operations of an algebraic monad with those of 
its signature (see Section~\ref{sec:symbolicPreOp}).
\end{rem}

%%%%%%%%%%%%%%%%%%%%%%%%%%%%%%%%%%%%%%%%%%%%%%
\subsection{Tensor of Quantitative Theories}
\label{sec:TensorTheories} 

In this section, we develop the theory for the \emph{tensor} of quantitative 
equational theories. The main result is that the free monad on the tensor of 
two theories is the tensor of the monads on the theories. 

\smallskip
Let $\Sigma$, $\Sigma'$ be two disjoint signatures.  Following Freyd~\cite{Freyd66} (and~\cite{HylandPP06}),
we define the tensor of two quantitative equational theories $\U$, $\U'$ of
respective types $\Sigma$ and $\Sigma'$, written $\U \otimes \U'$, as the
smallest quantitative theory containing $\U$, $\U'$ and the quantitative equations
\begin{equation}
\vdash f( g(x^{1}_{1}, \dots, x^{1}_{m}), \dots, g(x^{n}_{1}, \dots, x^{n}_{m}) ) 
=_0  g( f(x^{1}_{1}, \dots, x^{n}_{1}), \dots, f(x^{1}_{m}, \dots, x^{n}_{m}) ) \,,
\label{eq:communtativeEq}
\end{equation}
for all $f \colon n \in \Sigma$ and $g \colon m \in \Sigma'$, expressing that the operations of one theory commute with the operations of the other.

\subsubsection{Symbolic Pre-operations} \label{sec:symbolicPreOp}

Towards our main result, we identify an exhaustive set of pre-operations for the free monads on quantitative equational theories which, in turn, will 
give us a simpler characterization for the tensor algebras for these monads (\cf~Proposition~\ref{prop:exhaustiveComm}).

First observe that any signature functor $\Sigma = \coprod_{f {:} n \in \Sigma} Id^n$ in $\Met$ is strong, as it is the coproduct of the strong functors 
$Id^n \iso (-)^{\underline{n}}$, where $\underline{n} \in \Met$ 
denotes the set $\ens{1, \dots, n}$ equipped with the discrete extended 
metric assigning infinite distance to distinct elements.
Moreover, the injections $\inj_f \colon (-)^{\underline{n}} \nat \Sigma$ are strong natural transformations, hence they are $\underline{n}$-ary pre-operations of $\Sigma$ (\cf~Definition~\ref{def:operation}).

\begin{prop} \label{prop:symbolicOperations}
$\mathcal{S}_\Sigma = \{ \inj_f  \mid f \colon n \in \Sigma \}$ is an exhaustive set of pre-operations of $\Sigma$.
\end{prop}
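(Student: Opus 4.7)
The plan is to exploit the coproduct decomposition $\Sigma A = \coprod_{\phi \colon n \in \Sigma} A^n$ of the signature endofunctor and reduce being a $\Sigma$-homomorphism to checking the equation on each coproduct injection separately. The key preliminary observation is that since $\underline{n}$ carries the discrete metric, $A^{\underline{n}} = [\underline{n}, A]$ coincides in $\Met$ with the $n$-fold product $A^n$, so the assigned operation $a^{\inj_\phi} = a \circ (\inj_\phi)_A$ on a $\Sigma$-algebra $(A,a)$ is precisely the interpretation $\phi^A \colon A^n \to A$ of the operation symbol $\phi \colon n \in \Sigma$, namely the $\phi$-component of $a$ under the coproduct decomposition of $\Sigma A$.

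The forward direction of exhaustiveness is immediate from Proposition~\ref{prop:operations}: every $\Sigma$-homomorphism is automatically compatible with every pre-operation of $\Sigma$, and in particular with those in $\mathcal{S}_\Sigma \subseteq \Op[\Sigma]$.

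For the converse, I would assume $h \colon A \to B$ satisfies $h \circ a^{\inj_\phi} = b^{\inj_\phi} \circ h^n$ for every $\phi \colon n \in \Sigma$. Naturality of $\inj_\phi$ at $h$ yields $(\inj_\phi)_B \circ h^n = \Sigma h \circ (\inj_\phi)_A$, so the hypothesis rewrites as $h \circ a \circ (\inj_\phi)_A = b \circ \Sigma h \circ (\inj_\phi)_A$. Since this holds for every $\phi \colon n \in \Sigma$ and the family $\{(\inj_\phi)_A\}_{\phi}$ is the coproduct cocone of $\Sigma A$, and hence jointly epi, the universal property of the coproduct forces $h \circ a = b \circ \Sigma h$, i.e., $h$ is a $\Sigma$-homomorphism. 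There is no real obstacle: the set $\mathcal{S}_\Sigma$ is engineered to coincide with the coproduct cocone of $\Sigma$, so exhaustiveness essentially reduces to the universal property of the coproduct together with naturality of the injections.
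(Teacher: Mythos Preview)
Your proof is correct and follows essentially the same approach as the paper: the forward direction is immediate (the paper argues directly from the definition and naturality, you invoke Proposition~\ref{prop:operations}, which amounts to the same thing), and the converse is obtained from the universal property of the coproduct $\Sigma A = \coprod_{\phi} A^n$, which is exactly what the paper says. Your write-up simply makes explicit the naturality step and the jointly-epi argument that the paper compresses into the phrase ``universality of the coproduct''.
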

\begin{proof} %(of Proposition~\ref{prop:symbolicOperations})
Let $(A,a), (B,b)$ be $\Sigma$-algebras in $\Met$ and $h \colon A \to B$ 
a non-expansive map. We want to prove the equivalence of
\begin{enumerate}
  \item \label{symb1} 
  $f$ is a $\Sigma$-homomorphisms from $(A,a)$ to $(B,b)$;
  \item \label{symb2} 
  For every $f \colon n \in \Sigma$, $h \circ a^{\inj_f} = b^{\inj_f} \circ h^v$.
\end{enumerate}

$\eqref{symb1} \Rightarrow \eqref{symb2}$ follows by definition of $a^{\inj_f}$, $b^{\inj_f}$ and naturality
of $\inj_f \colon (-)^{\underline{n}} \nat \Sigma$. The implication
$\eqref{symb2} \Rightarrow \eqref{symb1}$ follows by the universality of the coproduct, as $\Sigma =  \coprod_{f {:} n \in \Sigma} Id^n$.
\end{proof}

In the following, the pre-operations in $\mathcal{S}_\Sigma$ will be called \emph{symbolic}, and to simplify the notation, for any $f \colon n \in \Sigma$ and $\Sigma$-algebra $(A,a)$, we write $a^f$ instead of $a^{\inj_f}$.

\medskip
Now we turn to study the pre-operations of the monad $T_{\U}$, for a quantitative equational theory $\U$ 
of type $\Sigma$. Firstly, observe that the monad $T_{\U}$ is strong, with strength 
\begin{equation*}
\zeta_{X,Y} \colon X \mprod T_{\U}Y \to T_{\U}(X \mprod Y) 
\end{equation*}
obtained by uncurrying the unique map $h_{X,Y}$ that, by Theorem~\ref{th:freeQAlgebra}, makes the following diagram commute
\begin{equation*}
\begin{tikzcd}
Y \arrow[rd, "\beta_{X,Y}"', near start, bend right] \arrow[r, "\eta^{\U}_Y"] & [-3ex]
T_{\U}Y \arrow[d, dashed, "h_{X,Y}"'] 
 & [5ex] \Sigma T_{\U}Y \arrow[l, "\psi^{\U}_Y"'] \arrow[d,"\Sigma h_{X,Y}"] \\
 & (T_{\U}(X \mprod Y))^X & \Sigma (T_{\U}(X \mprod Y))^X 
 \arrow[l, "\gamma"']
\end{tikzcd}
\end{equation*}
where $\beta_{X,Y}$ is the currying of 
$\eta^{\U}_{X \mprod Y} \colon X \mprod Y \to T_{\U}(X \mprod Y)$ and
$(T_{\U}(X \mprod Y)^X, \gamma)$ is the $X$-power of the free quantitative
$\Sigma$-algebra on $X \mprod Y$ satisfying $\U$.

Since a monad is strong iff both its unit and multiplication are strong natural transformations,
both $\eta^{\U}$, $\mu^{\U}$ are strong.
Moreover, also $\psi^{\U} \colon \Sigma T_{\U} \nat T_{\U}$ is strong.

Thus any pre-operation $g \in \Op[\Sigma](v)$ can be tuned into
a pre-operation of $T_{\U}$ as the composite 
\begin{equation*}
 (-)^{v} \xrightarrow{\,g\,} \Sigma
 			      \xrightarrow{\,\Sigma \eta^{\U}\,} \Sigma T_{\U}
			      \xrightarrow{\,\psi^{\U}\,} T_{\U} \,.
\end{equation*}

Moreover, when the theory $\U$ is basic, by Theorem~\ref{th:EilenbergMoore}, 
we can turn any exhaustive set of pre-operations of $\Sigma$ into an exhaustive set of pre-operations of $T_{\U}$.
\begin{prop} \label{prop:symbolicOperationsMonad}
Let $\U$ be a basic quantitative equational theory of type $\Sigma$. Then if $\mathcal{D}\subseteq \Op[\Sigma]$ is exhaustive, so is $\{ \psi^{\U} \circ \Sigma \eta^{\U} \circ g \mid g \in \mathcal{D} \} \subseteq \Op[T_{\U}]$.
\end{prop}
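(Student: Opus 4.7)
The plan is to reduce exhaustiveness of the set $\widehat{\mathcal{D}} = \{ \psi^{\U} \circ \Sigma \eta^{\U} \circ g \mid g \in \mathcal{D} \}$ for the monad $T_{\U}$ to exhaustiveness of $\mathcal{D}$ for the signature functor $\Sigma$, by exploiting the isomorphism $\EMAlg{T_{\U}} \iso \KK[\Sigma]{\U}$ granted by Theorem~\ref{th:EilenbergMoore} (since $\U$ is basic). The first step is to recall the explicit form of this isomorphism on objects and morphisms. Given an EM $T_{\U}$-algebra $(A, a)$, the corresponding quantitative $\Sigma$-algebra satisfying $\U$ has structure map
\begin{equation*}
  \tilde{a} \;=\; a \circ \psi^{\U}_A \circ \Sigma \eta^{\U}_A \colon \Sigma A \to A \,,
\end{equation*}
obtained by restricting the $T_{\U}$-action to one-level terms. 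Conversely, the EM algebra structure associated to a $\Sigma$-algebra is its homomorphic extension along the identity. Crucially, a non-expansive map $f \colon A \to B$ is a $T_{\U}$-homomorphism between $(A,a)$ and $(B,b)$ iff it is a $\Sigma$-homomorphism between $(A, \tilde a)$ and $(B, \tilde b)$, again because the isomorphism of categories in Theorem~\ref{th:EilenbergMoore} is the identity on carriers and morphisms.

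The second step is a direct calculation. For any $v$-ary $g \in \mathcal{D}$ and any EM $T_{\U}$-algebra $(A,a)$, the operation assigned by $\hat g = \psi^{\U} \circ \Sigma \eta^{\U} \circ g$ is
\begin{equation*}
  a^{\hat g} \;=\; a \circ \hat g_A \;=\; a \circ \psi^{\U}_A \circ \Sigma \eta^{\U}_A \circ g_A \;=\; \tilde a \circ g_A \;=\; \tilde a^{g} \,,
\end{equation*}
so the operations induced by $\hat g$ on $T_{\U}$-algebras agree with those induced by $g$ on the corresponding $\Sigma$-algebras. In particular, for $f \colon A \to B$ in $\Met$, the condition $f \circ a^{\hat g} = b^{\hat g} \circ f^v$ is literally the same as $f \circ \tilde a^{g} = \tilde b^{g} \circ f^v$.

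Combining the two steps gives the desired equivalence: $f$ is a $T_{\U}$-homomorphism iff $f$ is a $\Sigma$-homomorphism between $(A, \tilde a)$ and $(B, \tilde b)$ (by step one), iff $f \circ \tilde a^{g} = \tilde b^{g} \circ f^v$ for every $g \in \mathcal{D}$ of arity $v$ (by exhaustiveness of $\mathcal{D}$), iff $f \circ a^{\hat g} = b^{\hat g} \circ f^v$ for every $g \in \mathcal{D}$ of arity $v$ (by step two). This is exactly exhaustiveness of $\widehat{\mathcal{D}}$.

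I do not expect significant obstacles: the only point requiring some care is keeping the correspondence of Theorem~\ref{th:EilenbergMoore} completely explicit (noting that one must verify that $\tilde a$ defined as above does satisfy $\U$, and that the two assignments $a \mapsto \tilde a$ and $\alpha \mapsto \bar\alpha$ are mutually inverse on carriers). These are, however, standard features of algebraic monads and are implicit in the statement of Theorem~\ref{th:EilenbergMoore}; once granted, the proof reduces to the one-line computation $a^{\hat g} = \tilde a^{g}$ above.
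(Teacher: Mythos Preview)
Your proposal is correct and follows essentially the same approach as the paper's own proof: both rely on the isomorphism of Theorem~\ref{th:EilenbergMoore} (which sends an EM $T_{\U}$-algebra $(A,a)$ to the $\Sigma$-algebra $(A,\,a \circ \psi^{\U}_A \circ \Sigma\eta^{\U}_A)$ and acts as the identity on morphisms), together with the one-line computation $a^{\hat g} = \tilde a^{g}$, to reduce exhaustiveness for $T_{\U}$ to the assumed exhaustiveness of $\mathcal{D}$ for $\Sigma$. The paper's version is terser but the content is identical.
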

\begin{proof} %(of Proposition~\ref{prop:symbolicOperationsMonad})
$(A,a), (B,b)$ be $T_{\U}$-algebras and $h \colon A \to B$ a non-expansive map.
We want to prove the equivalence of
\begin{enumerate}
  \item \label{symbterm1} 
  $h$ is a $T_{\U}$-homomorphism from $(A,a)$ to $(B,b)$;
  \item \label{symbterm2} 
  For every $v$-ary pre-operation $g \in \mathcal{D}$, 
  $h \circ a^{(\psi^{\U} \circ \Sigma \eta^{\U} \circ g)} = b^{(\psi^{\U} \circ \Sigma \eta^{\U} \circ g)} \circ h^v$.
\end{enumerate}

$\eqref{symbterm1} \Rightarrow \eqref{symbterm2}$ follows by 
definition of $a^{(\psi^{\U} \circ \Sigma \eta^{\U} \circ g)}$, $b^{(\psi^{\U} \circ \Sigma \eta^{\U} \circ g)}$ 
and naturality of $\psi^{\U} \circ \Sigma \eta^{\U} \circ g$.
For the converse implication, recall that the isomorphism of categories from Theorem~\ref{th:EilenbergMoore},
maps a $T_{\U}$-algebra $(A,a)$ to 
the $\Sigma$-algebra $(A, a \circ \psi^{\U}_A \circ \Sigma\eta^{\U}_A)$ and morphisms essentially to themselves.
Thus $\eqref{symbterm2} \Rightarrow \eqref{symbterm1}$ follows by density of $\mathcal{D}$
and definition of $a^{(\psi^{\U} \circ \Sigma \eta^{\U} \circ g)}$, $b^{(\psi^{\U} \circ \Sigma \eta^{\U} \circ g)}$.
\end{proof}

When $\U$ is a basic theory, by combining Propositions~\ref{prop:symbolicOperations} and \ref{prop:symbolicOperationsMonad}, we easily obtain an exhaustive set 
of pre-operations also for the monad $T_{\U}$.
\begin{cor} \label{cor:symOperations}
$\mathcal{S}_{T_{\U}} = \{ \psi^{\U} \circ \Sigma \eta^{\U} \circ \inj_f \mid f \colon n \in \Sigma \}$ is a exhaustive set of pre-operations of $T_{\U}$, whenever $\U$ is a basic quantitative equational theory.
\end{cor}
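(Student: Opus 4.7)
The plan is to derive this corollary as a direct two-step composition of the results just established, with essentially no new work required. First, I would invoke Proposition~\ref{prop:symbolicOperations}, which guarantees that the set $\mathcal{S}_\Sigma = \{\inj_f \mid f\colon n \in \Sigma\}$ of coproduct injections (viewed as $\underline{n}$-ary pre-operations of the signature endofunctor $\Sigma = \coprod_{f\colon n \in \Sigma} Id^n$) is exhaustive in $\Op[\Sigma]$. This is the ``signature-level'' input to the argument.

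Next, I would apply Proposition~\ref{prop:symbolicOperationsMonad} with the choice $\mathcal{D} = \mathcal{S}_\Sigma$. The conclusion of that proposition is precisely that
\begin{equation*}
\{ \psi^{\U} \circ \Sigma \eta^{\U} \circ g \mid g \in \mathcal{D} \} \subseteq \Op[T_{\U}]
\end{equation*}
is an exhaustive set of pre-operations of $T_{\U}$, under the standing hypothesis that $\U$ is basic (which is what makes Theorem~\ref{th:EilenbergMoore}, and hence Proposition~\ref{prop:symbolicOperationsMonad}, available). Unfolding $g$ through the coproduct injections $\inj_f$ gives exactly the family $\mathcal{S}_{T_{\U}} = \{\psi^{\U} \circ \Sigma \eta^{\U} \circ \inj_f \mid f\colon n \in \Sigma\}$ claimed in the corollary.

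There is no genuine obstacle here beyond checking that the hypotheses of the two propositions are met: $\U$ being basic for Proposition~\ref{prop:symbolicOperationsMonad}, and the strength structure on $\Sigma$ (making each $\inj_f$ strong, so that it qualifies as a pre-operation in the sense of Definition~\ref{def:operation}) for Proposition~\ref{prop:symbolicOperations}. Both were already verified in the text preceding the corollary — specifically, the observation that $Id^n \iso (-)^{\underline n}$ equips $\Sigma$ with a canonical strength making the injections strong natural transformations. Hence the proof is a one-line composition of the two propositions, and no further calculation is needed.
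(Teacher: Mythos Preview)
Your proposal is correct and matches the paper's own argument exactly: the paper states the corollary with the preamble ``by combining Propositions~\ref{prop:symbolicOperations} and \ref{prop:symbolicOperationsMonad}, we easily obtain an exhaustive set of pre-operations also for the monad $T_{\U}$,'' which is precisely the two-step composition you describe.
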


Also the pre-operations in $\mathcal{S}_{T_{\U}}$ will be called \emph{symbolic} and we simplify the notation
by writing $a^{\tupl{f}}$ instead of $a^{(\psi^{\U} \circ \Sigma \eta^{\U} \circ \inj_f)}$, for $f \colon n \in \Sigma$ and $(A,a) \in \Alg{T_{\U}}$.

As an immediate consequence of Corollary~\ref{cor:symOperations} and Proposition~\ref{prop:exhaustiveComm}, we obtain the following simpler characterization for tensor $\tupl{T_{\U}, T_{\U'}}$-algebras.
\begin{cor} \label{cor:termComm}
Let $\U$, $\U'$ be basic quantitative theories respectively of type $\Sigma$, $\Sigma'$. Then, $(A,a,b)$ is a tensor $\tupl{T_{\U}, T_{\U'}}$-algebra iff
it satisfies either of the equivalent conditions
\begin{enumerate}
  \item \label{tComm1}
  For all $f \colon n \in \Sigma$, $a^{\tupl{f}}$ is a $T_{\U'}$-homomorphism;
  \item \label{tComm2}
  For all $g \colon n \in \Sigma'$, $b^{\tupl{g}}$ is a $T_{\U}$-homomorphism.
\end{enumerate}
\end{cor}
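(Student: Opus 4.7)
The plan is to observe that this corollary is essentially an instance of Proposition~\ref{prop:exhaustiveComm} applied to the exhaustive sets of symbolic pre-operations identified in Corollary~\ref{cor:symOperations}. First, since both $\U$ and $\U'$ are basic quantitative theories, Corollary~\ref{cor:symOperations} supplies exhaustive sets of pre-operations
\begin{equation*}
\mathcal{S}_{T_{\U}} = \{ \psi^{\U} \circ \Sigma \eta^{\U} \circ \inj_f \mid f \colon n \in \Sigma \} \subseteq \Op[T_{\U}] \,,
\end{equation*}
and analogously $\mathcal{S}_{T_{\U'}} \subseteq \Op[T_{\U'}]$ indexed by symbols in $\Sigma'$.

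Next, I would instantiate Proposition~\ref{prop:exhaustiveComm} with $F = T_{\U}$, $G = T_{\U'}$, $\mathcal{D} = \mathcal{S}_{T_{\U}}$, and $\mathcal{E} = \mathcal{S}_{T_{\U'}}$. This immediately yields the equivalence between being a tensor $\tupl{T_{\U}, T_{\U'}}$-algebra and each of the two pointwise conditions obtained by ranging $g$ (resp.\ $h$) over the symbolic pre-operations in $\mathcal{S}_{T_{\U}}$ (resp.\ $\mathcal{S}_{T_{\U'}}$). The equivalence of the two conditions themselves is then inherited directly from the equivalence stated in Proposition~\ref{prop:exhaustiveComm}.

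The remaining step is purely notational: unfolding the abbreviation $a^{\tupl{f}} := a^{(\psi^{\U} \circ \Sigma \eta^{\U} \circ \inj_f)}$ (and symmetrically $b^{\tupl{g}}$) shows that requiring $a^g$ to be a $T_{\U'}$-homomorphism for every $g \in \mathcal{S}_{T_{\U}}$ is literally the same as requiring $a^{\tupl{f}}$ to be a $T_{\U'}$-homomorphism for every $f \colon n \in \Sigma$, and dually on the other side. I do not expect any obstacle in this proof, because the genuine difficulty has already been absorbed by the two invoked results: Proposition~\ref{prop:exhaustiveComm} established the reduction from all pre-operations to an exhaustive subset, and Corollary~\ref{cor:symOperations} used basicness (via Theorem~\ref{th:EilenbergMoore}, to transfer along the isomorphism $\EMAlg{T_{\U}} \iso \KK{\Sigma,\U}$) to promote the exhaustive family $\mathcal{S}_\Sigma$ of symbolic pre-operations of the signature functor to an exhaustive family of pre-operations of the monad. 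Consequently, the present corollary is a one-line corollary in which all the substantive work is already packaged.
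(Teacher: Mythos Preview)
Your proposal is correct and follows exactly the approach the paper takes: the paper states the corollary as ``an immediate consequence of Corollary~\ref{cor:symOperations} and Proposition~\ref{prop:exhaustiveComm}'', and you have simply spelled out how those two results are instantiated and how the notation $a^{\tupl{f}}$ unfolds. There is nothing to add.
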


%%%%%%%%%%%%%%%%%%%%%%
\subsubsection{Tensor of Free Monads on Quantitative Theories}

Let $\U, \U'$ be basic quantitative theories respectively of type $\Sigma, \Sigma'$.
We show that any model for $\U \otimes \U'$ is a \emph{$\tupl{\U \otimes \U'}$-bialgebra}: 
an extended metric space $A$ with both a $\Sigma$-algebra structure 
$a \colon \Sigma A \to A$ satisfying $\U$ and a $\Sigma'$-algebra structure 
$b \colon \Sigma' A \to A$ satisfying $\U'$ and respecting the diagrammatic condition below, 
for all $f \colon n \in \Sigma$ and $g \colon m \in \Sigma'$
\begin{equation}
\begin{tikzcd}
A^{\underline{n}} \arrow[r, "a^f"] & A & A^{\underline{m}} \arrow[l, "b^g"'] \\
(A^{\underline{m}})^{\underline{n}} \arrow[rr, "\chi", "\iso"'] \arrow[u, "(b^g)^{\underline{n}}"] 
& & (A^{\underline{n}})^{\underline{m}} \arrow[u, "(a^f)^{\underline{m}}"'] 
\end{tikzcd}
\label{eq:commDiagram}
\end{equation}

Formally, we denote by $\KK{(\Sigma, \U) \otimes (\Sigma', \U')}$ the
category of $\tupl{\U \otimes \U'}$-bialgebras, with morphisms the non-expansive 
homomorphisms preserving both algebraic structures.  Then, the following isomorphism
of categories holds.
\begin{prop} \label{prop:bialgebrasTensor}
$\KK{\Sigma + \Sigma', \U \otimes \U'} \iso \KK{(\Sigma, \U) \otimes (\Sigma', \U')}$, for $\U, \U'$ basic theories.  
\end{prop}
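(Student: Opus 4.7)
The plan is to mimic Proposition~\ref{prop:bialgebras}, extending its construction to account for the extra commutation axioms \eqref{eq:communtativeEq} imposed by the tensor. First I would set up the same pair of functors $H \colon \KK{\Sigma + \Sigma', \U \otimes \U'} \to \KK{(\Sigma, \U) \otimes (\Sigma', \U')}$ and $K$ in the reverse direction, defined on objects by $H(A, \gamma) = (A, \gamma \circ \inj_l, \gamma \circ \inj_r)$ and $K(B, \alpha, \beta) = (B, [\alpha, \beta])$, and as the identity on morphisms. Since $\U \otimes \U'$ contains both $\U$ and $\U'$ as subtheories, the argument from Proposition~\ref{prop:bialgebras} immediately yields that the components $\gamma \circ \inj_l$ and $\gamma \circ \inj_r$ of $H(A, \gamma)$ satisfy $\U$ and $\U'$, that $K(B, \alpha, \beta)$ satisfies both $\U$ and $\U'$ as a single $(\Sigma + \Sigma')$-algebra, and that $H \circ K = Id$ and $K \circ H = Id$ hold by the universal property of the coproduct.

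The genuinely new step is to verify that the commutation axioms \eqref{eq:communtativeEq} correspond precisely to the diagrammatic condition \eqref{eq:commDiagram}. For $f \colon n \in \Sigma$ and $g \colon m \in \Sigma'$, I would unfold the symbolic pre-operations pointwise: $a^f \colon A^{\underline{n}} \to A$ sends $(a_1, \ldots, a_n)$ to the interpretation of $f$ applied to that tuple, and analogously for $b^g$. Under the natural identification of $(A^{\underline{m}})^{\underline{n}}$ with $\underline{n} \times \underline{m}$-matrices over $A$, and of $\chi$ with matrix transposition, evaluating \eqref{eq:commDiagram} at a matrix $(a^i_j)$ yields exactly
\begin{equation*}
a^f\bigl(b^g(a^1_1, \ldots, a^1_m), \ldots, b^g(a^n_1, \ldots, a^n_m)\bigr) = b^g\bigl(a^f(a^1_1, \ldots, a^n_1), \ldots, a^f(a^1_m, \ldots, a^n_m)\bigr),
\end{equation*}
which is precisely the semantic content of \eqref{eq:communtativeEq} under the assignment $x^i_j \mapsto a^i_j$.

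Consequently $(A, \gamma)$ satisfies \eqref{eq:communtativeEq} iff $H(A, \gamma)$ satisfies \eqref{eq:commDiagram}, and conversely for $K$. Combined with Proposition~\ref{prop:bialgebras} applied to the $\U$- and $\U'$-axioms, this shows that $H$ and $K$ are well-defined on objects and establishes the claimed isomorphism of categories. The main obstacle I foresee is the bookkeeping required to verify that the abstract diagram \eqref{eq:commDiagram}, phrased in terms of powers and the canonical isomorphism $\chi$, transcribes exactly to the elementwise commutation equation \eqref{eq:communtativeEq}; this amounts to carefully tracking how the strength of the signature functor induces the pointwise interpretation of $a^f$ and identifying $\chi$ as matrix transposition on discrete powers.
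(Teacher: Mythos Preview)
Your proposal is correct and follows essentially the same route as the paper: both use the same pair of functors $H,K$ from Proposition~\ref{prop:bialgebras} and then verify that the additional commutation axioms \eqref{eq:communtativeEq} correspond exactly to the diagrammatic condition \eqref{eq:commDiagram}. The paper handles this last step slightly more abstractly, simply observing that $f^{\A} = (a\circ\inj_l)^f$ and $g^{\A} = (a\circ\inj_r)^g$ so that the equation and the diagram coincide, whereas you spell out the element-level reading via matrices and transposition; but these are the same argument.
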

\begin{proof} %(of Proposition~\ref{prop:bialgebrasTensor})
The isomorphism is given by the pair of functors 
\begin{equation*}
\begin{tikzcd}
\KK{\Sigma + \Sigma', \U \otimes \U'} \arrow[rr, bend left=5, "H"]
&  & 
\KK{(\Sigma, \U) \otimes (\Sigma', \U')} \arrow[ll, bend left=5, "K"] 
\end{tikzcd}
\end{equation*}
defined, for a $(\Sigma + \Sigma')$-algebra $(A,
a)$ satisfying $\U \otimes \U'$ and a 
$\tupl{\U \otimes \U'}$-bialgebra $(B, b, b')$, respectively as
\begin{align*}
H(A, a) = (A, a \circ \inj_l, a \circ \inj_r) \,,
&&
K(B, b, b') = (B, [b, b']) \,, 
\end{align*}
where $[b, b'] \colon \Sigma B + \Sigma' B \to B$ is the unique map induced by $b$ 
and $b'$ by the universality of the coproduct. 
Both functors are identity on morphisms; it is easy to see that
a homomorphism in one sense is also a homomorphism in the other.
 
The pair of functors above is the restriction of the isomorphic pair of  
functors used in the proof of~\cite[Proposition~4.1]{BacciMPP18}.
Thus, to show $H$ and $K$ are well defined we are just left to deal with checking 
that the restriction conditions on the subcategories are preserved both ways.

As for $H$, we prove that whenever $\A = (A, a)$ satisfies the quantitative 
equation in \eqref{eq:communtativeEq}, then $(A, a \circ \inj_l, a \circ \inj_r)$ satisfies 
the commutativity of the diagram in \eqref{eq:commDiagram}. 
This follows as, for all $f \colon n \in \Sigma$ and $g \colon m \in \Sigma'$, by definition of algebraic 
interpretation $(-)^\A$, we have
\begin{gather*}
f^\A = a \circ \inj_l \circ \inj_f = (a \circ \inj_l)^f \,, \\
g^\A = a \circ \inj_r \circ \inj_g = (a \circ \inj_r)^g \,.
\end{gather*}
Thus, the satisfiability \eqref{eq:communtativeEq} 
coincides with the commutativity of the diagram in \eqref{eq:commDiagram}.

For $K$ we need to show that whenever 
$(B, b, b')$ satisfies the commutativity of the diagram in \eqref{eq:commDiagram}, 
then $\A = (A, [b,b'])$ satisfies \eqref{eq:communtativeEq}. This follows as, 
for all $f \colon n \in \Sigma$ and $g \colon m \in \Sigma'$, by definition of algebraic 
interpretation $(-)^\A$, we have
\begin{gather*}
f^\A = [b,b'] \circ \inj_l \circ \inj_f = (b)^f \,, \\
g^\A = [b,b'] \circ \inj_r \circ \inj_g = (b')^g \,.
\end{gather*}
Thus, the commutativity of the diagram in \eqref{eq:commDiagram}
coincides with the satisfiability of \eqref{eq:communtativeEq}.
\end{proof} 

Moreover, by adapting the isomorphism of Theorem~\ref{th:EilenbergMoore} and
exploiting the fact that symbolic pre-operations are exhaustive (\cf~Corollary~\ref{cor:termComm}) the following is also true.

\begin{prop} \label{prop:tensorAlgebras}
$ \KK{(\Sigma, \U) \otimes (\Sigma', \U')} \iso \tensorAlg{T_{\U}}{T_{\U'}}$, for $\U, \U'$ basic theories.  
\end{prop}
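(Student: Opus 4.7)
The plan is to build the desired isomorphism on top of the Eilenberg--Moore correspondence of Theorem~\ref{th:EilenbergMoore}. Applied separately to the two basic theories $\U$ and $\U'$, that theorem furnishes an isomorphism between $\biEMAlg{T_{\U}}{T_{\U'}}$ and the category whose objects are triples $(A,a,b)$ with $(A,a)\in\KK[\Sigma]{\U}$ and $(A,b)\in\KK[\Sigma']{\U'}$ (morphisms being non-expansive maps which are simultaneously $\Sigma$- and $\Sigma'$-homomorphisms). Explicitly, this functor sends $(A,\alpha,\beta)$ to $(A,a,b)$ with $a = \alpha \circ \psi^{\U}_A \circ \Sigma\eta^{\U}_A$ and $b = \beta \circ \psi^{\U'}_A \circ \Sigma'\eta^{\U'}_A$, and acts as the identity on morphisms. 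The proposition will follow by restricting this isomorphism to the tensor subcategories on either side; the whole substance of the proof lies in showing that the tensor condition on the EM side matches the commutativity condition~\eqref{eq:commDiagram} on the algebraic side.

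The key computation making this translation possible is that, for every $f\colon n\in\Sigma$, the symbolic pre-operation of $T_{\U}$ associated to $f$ evaluated on $\alpha$ agrees with the symbolic pre-operation of $\Sigma$ associated to $f$ evaluated on the induced quantitative algebra structure $a$; unfolding the definitions,
\begin{equation*}
\alpha^{\tupl{f}} \;=\; \alpha \circ \psi^{\U}_A \circ \Sigma\eta^{\U}_A \circ (\inj_f)_A \;=\; a \circ (\inj_f)_A \;=\; a^f \,,
\end{equation*}
and symmetrically $\beta^{\tupl{g}}=b^g$ for every $g\colon m\in\Sigma'$.

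With this identification in hand, Corollary~\ref{cor:termComm} (``for every $f\colon n\in\Sigma$, $\alpha^{\tupl{f}}$ is a $T_{\U'}$-homomorphism from $(A,\beta)^{\underline{n}}$ to $(A,\beta)$'') becomes the statement that each $a^f$ is a $T_{\U'}$-homomorphism. A second application of Theorem~\ref{th:EilenbergMoore} reduces this to asking that $a^f$ be a $\Sigma'$-homomorphism from $(A,b)^{\underline{n}}$ to $(A,b)$. Exhaustiveness of the symbolic pre-operations $\mathcal{S}_{\Sigma'}$ (Proposition~\ref{prop:symbolicOperations}), together with Proposition~\ref{prop:Iso}, which identifies the $g$-operation of $(A,b)^{\underline{n}}$ with $(b^g)^{\underline{n}}\circ\chi^{-1}$, then translates this homomorphism condition into the equality
\begin{equation*}
a^f \circ (b^g)^{\underline{n}} \;=\; b^g \circ (a^f)^{\underline{m}} \circ \chi
\end{equation*}
holding for every $g\colon m\in\Sigma'$, which is exactly the commutativity of the diagram~\eqref{eq:commDiagram}. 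Morphisms carry over for free since the underlying bijection is identity-on-arrows.

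I expect the only real obstacle to be the bookkeeping around the canonical iso $\chi\colon(A^{\underline{m}})^{\underline{n}}\iso(A^{\underline{n}})^{\underline{m}}$ and the precise form of the power algebra structures when recasting ``$a^f$ is a $\Sigma'$-homomorphism'' in the diagrammatic form~\eqref{eq:commDiagram}; however, this is precisely what Proposition~\ref{prop:Iso} already supplies, so no new ingredients are required. Functoriality of the assignment and the correspondence on morphisms are inherited directly from the two instances of Theorem~\ref{th:EilenbergMoore}.
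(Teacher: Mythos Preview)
Your approach is correct and uses the same ingredients as the paper: the Eilenberg--Moore correspondence of Theorem~\ref{th:EilenbergMoore}, the identification $\alpha^{\tupl f}=a^f$, Corollary~\ref{cor:termComm}, exhaustiveness of symbolic pre-operations (Proposition~\ref{prop:symbolicOperations}), and Proposition~\ref{prop:Iso}. The one step you leave implicit is that the isomorphism of Theorem~\ref{th:EilenbergMoore} commutes with taking $\underline{n}$-powers---i.e.\ that $(A,\beta)^{\underline n}$ on the EM side corresponds to $(A,b)^{\underline n}$ on the quantitative side---which is what licenses your ``second application of Theorem~\ref{th:EilenbergMoore}''. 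This holds because $\psi^{\U'}\circ\Sigma'\eta^{\U'}$ is a strong natural transformation, and the paper also invokes it without proof (the line ``$(A^{\underline m},\overline{a_\flat})=(K(A,a))^{\underline m}=K((A,a)^{\underline m})$'').

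Where you differ slightly is in organization: you treat both directions symmetrically by using the categorical isomorphism to pass between $T_{\U'}$-homomorphisms and $\Sigma'$-homomorphisms, whereas the paper, for the direction from $\KK{(\Sigma,\U)\otimes(\Sigma',\U')}$ to $\tensorAlg{T_{\U}}{T_{\U'}}$, instead verifies the $T_{\U}$-homomorphism condition directly via the universal property of free algebras (Theorem~\ref{th:freeQAlgebra}), exhibiting both composites as the unique homomorphic extension of $a$ along $b_\flat^{\tupl g}$. Your route is a bit more economical; the paper's is more explicit about why the lift to EM algebras succeeds.
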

\begin{proof} %(of Proposition~\ref{prop:tensorAlgebras})
Recall the isomorphism of categories from Theorem~\ref{th:EilenbergMoore}
\begin{equation*}
\begin{tikzcd}
\EMAlg{T_{\U}} \arrow[rr, bend left=10, "H"]
&  & 
\KK[\Sigma]{\U} \arrow[ll, bend left=10, "K"] 
\end{tikzcd}
\end{equation*}
mapping morphisms to themselves and on objects acting as
follows: for $(A,a) \in \EMAlg{T_{\U}}$ and $(B, b) \in
\KK[\Sigma]{\U}$, 
\begin{align*}
  H(A, a) = (A, a \circ \psi^{\U}_A \circ \Sigma\eta^{\U}_A) \,,
  &&
  K(B, b) = (B, b_\flat) \,,
\end{align*}
where $b_\flat \colon T_{\U}B \to B$ is the unique map that, by
Theorem~\ref{th:freeQAlgebra}, 
satisfies the equations $b_\flat \circ \eta^{\U}_B = id_B$ and 
$b_\flat \circ \psi^{\U}_B = b \circ \Sigma b_\flat$.
(for the details on the proof \cf\ \cite[Theorem~4.2]{BacciMPP18}).

Next we show that the obvious point-wise extension of the above functors 
on the categories of bialgebras $\KK{(\Sigma, \U) \otimes (\Sigma', \U')}$ and $\tensorAlg{T_{\U}}{T_{\U'}}$
is an isomorphism of categories.

Clearly, since $H$ and $K$ are inverse with each other, so are their point-wise extensions. We are left to prove is that $H$ and $K$ are well defined.

Let $(A,a,b) \in \tensorAlg{T_{\U}}{T_{\U'}}$. We need to check that 
condition \eqref{eq:commDiagram} is satisfied by 
$(A, a \circ \psi^{\U}_A \circ \Sigma\eta^{\U}_A, b \circ \psi^{\U}_A \circ \Sigma\eta^{\U}_A)$.
Let $(A,b)^{\underline{n}} = (A^{\underline{n}}, \bar{b})$. 
By Corollary~\ref{cor:termComm} and Propositions~\ref{prop:symbolicOperations},  \ref{prop:symbolicOperationsMonad}, we have 
that the bottom square diagram below commutes for all $f \colon n \in \Sigma$ and all $g \colon m \in \Sigma'$,
while the top commute by Proposition~\ref{prop:Iso}:
\begin{equation*}
\begin{tikzcd}
(A^{\underline{m}})^{\underline{n}} \arrow[d, "\chi"'] \arrow[dr, "(b^{\tupl{g}})^{\underline{n}}", bend left]\\
(A^{\underline{n}})^{\underline{m}} \arrow[r, "\bar{b}^{\tupl{g}}"] \arrow[d, "(a^{\tupl{f}})^{\underline{m}}"'] 
& A^{\underline{n}}  \arrow[d, "a^{\tupl{f}}"] \\
A^{\underline{m}} \arrow[r, "b^{\tupl{g}}"] & A
\end{tikzcd}
\end{equation*}
Since $a^{\tupl{f}} = (a \circ \psi^{\U}_A \circ \Sigma\eta^{\U}_A)^f$ and 
$b^{\tupl{g}} = (b \circ \psi^{\U}_A \circ \Sigma\eta^{\U}_A)^g$, the above diagram proves 
that condition \eqref{eq:commDiagram} holds.

Let $(A,a,b) \in \KK{(\Sigma, \U) \otimes (\Sigma', \U')}$. We need to show that $(A,a_\flat,b_\flat)$
is a tensor $\tupl{T_{\U}, T_{\U'}}$-bialgebra. 
By Corollary~\ref{cor:termComm}, it is sufficient 
to prove that the following diagram commutes for all $g \colon m \in \Sigma'$, 
\begin{equation}
\begin{tikzcd}
T_{\U} A^{\underline{m}} \arrow[r, "\overline{a_\flat}"] \arrow[d, "T_{\U}b_\flat^{\tupl{g}}"'] 
& A^{\underline{m}}  \arrow[d, "b_\flat^{\tupl{g}}"] \\
T_{\U} A \arrow[r, "a_\flat"] & A
\end{tikzcd}
\label{eq:goal}
\end{equation}
where $(A,a_\flat)^{\underline{m}} = (A^{\underline{m}}, \overline{a_\flat})$. 

Toward proving \eqref{eq:goal}, first notice that the diagram below commutes 
for all $f \colon n \in \Sigma$ and $g \colon m \in \Sigma'$ 
\begin{equation}
\begin{tikzcd}
(A^{\underline{m}})^{\underline{n}} \arrow[rr, "\bar{a}^f"] 
	\arrow[rd, "\chi^{-1}", bend left=10] \arrow[dd, "(b^g)^{\underline{n}}"'] 
 & & A^{\underline{m}} \arrow[dd, "b^g"] \\
 & (A^{\underline{n}})^{\underline{m}} \arrow[lu, "\chi", bend left=10] 
 	\arrow[ru, "(a^f)^{\underline{m}}"', near start] \\
A^{\underline{n}} \arrow[rr, "a^f"] \arrow[ur, phantom, "\eqref{eq:commDiagram}" gray] & & A
\end{tikzcd}
\label{eq:bg-homo}
\end{equation}
for $(A,a)^{\underline{m}} = (A^{\underline{m}}, \bar{a})$ and $(A,b)^{\underline{n}} = (A^{\underline{n}}, \bar{b})$. Indeed, the bottom commutes because $(A,a,b)$ satisfies \eqref{eq:commDiagram}, and the top triangle does by Proposition~\ref{prop:Iso}. Thus, by Propositions~\ref{prop:symbolicOperations}, \ref{prop:symbolicOperationsMonad} and \eqref{eq:bg-homo} we have that $b^g$ is a $\Sigma$-homomorphism from $(A^{\underline{m}}, \bar{a})$ to $(A,a)$.
Moreover, as shown below, $b^g = b_\flat^{\tupl{g}}$:
\begin{align*}
    b^g 
    &= b \circ in_g \tag{def. $b^g$}\\
    &= b_\flat \circ \psi^{\U'}_A \circ \Sigma'\eta^{\U'}_A  \circ in_g \tag{$HK = Id$} \\
    &= b_\flat^{\tupl{g}} \,. \tag{def. $b_\flat^{\tupl{g}}$}
\end{align*}

Going back to proving \eqref{eq:goal}, by Theorem~\ref{th:freeQAlgebra}, it suffices to show 
that both $b_\flat^{\tupl{g}} \circ \overline{a_\flat}$ and $a_\flat \circ T_{\U}b_\flat^{\tupl{g}}$ are the (unique) 
homomorphic extension of $a$ along $b_\flat^{\tupl{g}}$.
This is shown by the following diagrams
\begin{equation*}
\begin{tikzcd}
A^{\underline{m}} \arrow[r, "\eta^{{\U}}"] \arrow[d, "b_\flat^{\tupl{g}}"']
& T_{\U} A^{\underline{m}}  \arrow[d, "T_{\U} b_\flat^{\tupl{g}}"] 
& & \Sigma T_{\U} A^{\underline{m}}  \arrow[ll, "\psi^{\U}"'] \arrow[d, "\Sigma T_{\U}b_\flat^{\tupl{g}}"] 
\\
A \arrow[r, "\eta^{\U}"] \arrow[dr, "id"'] 
& T_{\U} A \arrow[d, "a_\flat"] 
& & \Sigma T_{\U}A \arrow[ll, "\psi^{\U}"'] \arrow[d, "\Sigma a_\flat"] 
\\
& A
& & \Sigma A \arrow[ll, "a"']
\end{tikzcd}
%\quad
\begin{tikzcd}
A^{\underline{m}} \arrow[r, "\eta^{{\U}}"] \arrow[dr, "id"'] \arrow[ddr, "b_\flat^{\tupl{g}}"', bend right]
& T_{\U} A^{\underline{m}}  \arrow[d, "\overline{a_\flat}"] 
& & \Sigma T_{\U} A^{\underline{m}}  \arrow[ll, "\psi^{\U}"'] \arrow[d, "\Sigma \overline{a_\flat}"] \\
& A^{\underline{m}} \arrow[d, "b_\flat^{\tupl{g}}"] 
& & \Sigma A^{\underline{m}} \arrow[d, "\Sigma b_\flat^{\tupl{g}}"] \arrow[ll, "\overline{a}"'] \\
& A
& & \Sigma A \arrow[ll, "a"']
\end{tikzcd}
\end{equation*}
that commute by definition of $a_\flat$; naturality of 
$\eta^{\U}$, $\psi^{\U}$; since $(A^{\underline{m}}, \overline{a_\flat}) = (K(A,a))^{\underline{m}} = K((A,a)^{\underline{m}}) = (A^{\underline{m}}, \overline{a}_\flat)$; and because $b_\flat^{\tupl{g}}$ is a $\Sigma$-homomorphism from $(A^{\underline{m}}, \bar{a})$ to $(A,a)$.
\end{proof}

By combining the above two propositions we get the main theorem of this section.
\begin{thm} \label{th:tensorofsimpleeqmonad}
Let $\U, \U'$ be basic quantitative theories.  Then, the monad $T_{\U \otimes \U'}$ in $\Met$
is the tensor of monads $T_{\U} \otimes T_{\U'}$.
\end{thm}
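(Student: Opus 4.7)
The plan is to chain the two isomorphisms of categories established just above in the excerpt and then transport an adjunction across them. Note first that $\U \otimes \U'$ is a basic theory: the theories $\U$ and $\U'$ themselves are basic by assumption, and the extra commutation axioms~\eqref{eq:communtativeEq} have empty premises, hence are (trivially) of the required basic form. Consequently Theorems~\ref{th:freeQAlgebra} and \ref{th:EilenbergMoore} apply to $\U \otimes \U'$.

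First I would compose Proposition~\ref{prop:bialgebrasTensor} and Proposition~\ref{prop:tensorAlgebras} to obtain an isomorphism of categories
\begin{equation*}
\KK{\Sigma + \Sigma', \U \otimes \U'} \;\iso\; \KK{(\Sigma, \U) \otimes (\Sigma', \U')} \;\iso\; \tensorAlg{T_{\U}}{T_{\U'}} \,.
\end{equation*}
Inspecting the two functors defining each isomorphism (as explicitly given in the proofs of those propositions), I would observe that each functor acts as the identity on underlying metric spaces and on morphisms, so the composite isomorphism strictly commutes with the forgetful functors into $\Met$.

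Next, by Theorem~\ref{th:freeQAlgebra} applied to the basic theory $\U \otimes \U'$, the forgetful functor $\KK{\Sigma + \Sigma', \U \otimes \U'} \to \Met$ has a left adjoint, and the monad arising from this adjunction is (up to isomorphism) $T_{\U \otimes \U'}$. Transporting this adjunction across the isomorphism above, the forgetful functor $\tensorAlg{T_{\U}}{T_{\U'}} \to \Met$ also has a left adjoint. Hence the tensor of monads $T_{\U} \otimes T_{\U'}$ exists (by its definition as the monad induced by precisely this adjunction, when it has a left adjoint). Since isomorphic adjunctions induce isomorphic monads, we conclude
\begin{equation*}
T_{\U \otimes \U'} \;\iso\; T_{\U} \otimes T_{\U'} \,,
\end{equation*}
as required.

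The work has essentially been done by Propositions~\ref{prop:bialgebrasTensor} and \ref{prop:tensorAlgebras}, so the only mild subtlety to check is that both isomorphisms in the chain are strict over $\Met$, i.e.\ they commute on the nose with the forgetful functors; this is immediate from the explicit formulas for $H$ and $K$ in both proofs (they relabel algebra structures but never touch carriers or underlying maps). Granted that, the theorem follows with no further calculation.
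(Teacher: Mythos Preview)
Your proof is correct and follows essentially the same route as the paper: compose Propositions~\ref{prop:bialgebrasTensor} and~\ref{prop:tensorAlgebras}, observe that the resulting isomorphism commutes with the forgetful functors to $\Met$, transport the free-algebra adjunction of Theorem~\ref{th:freeQAlgebra}, and conclude by the definition of the tensor of monads. You spell out a couple of points the paper leaves implicit (that $\U\otimes\U'$ is basic, and that the isomorphisms are strict over $\Met$), but there is no substantive difference in strategy.
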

\begin{proof}
  By Propositions~\ref{prop:bialgebrasTensor} and~\ref{prop:tensorAlgebras} the
  forgetful functor from $\tensorAlg{T_{\U}}{T_{\U'}}$ to $\Met$ has a
  left adjoint and the monad generated by this adjunction is isomorphic to
  $T_{\U \otimes \U'}$.  Thus, by definition of tensor of monads, 
  $T_{\U \otimes \U'} \iso T_{\U} \otimes T_{\U'}$.
\end{proof}

The above results do not require any specific property of 
the $\Met$, apart from requiring the morphisms to be non-expansive maps. Thus, when the quantitative equational theories are continuous, we can reformulate  Theorem~\ref{th:tensorofsimpleeqmonad} to be valid in $\CMet$.

\begin{thm} \label{th:tensorofsimpleeqmonadComplete}
Let $\U, \U'$ be continuous quantitative theories.  
Then, $\CC T_{\U \otimes \U'}$ in $\CMet$ is the tensor of monads $\CC T_{\U} \otimes \CC T_{\U'}$.
\end{thm}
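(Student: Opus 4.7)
The plan is to mimic the proof of Theorem~\ref{th:tensorofsimpleeqmonad} step by step in $\CMet$, relying on the fact that $\CMet$ has essentially the same symmetric monoidal closed structure as $\Met$ (cf.\ Example~\ref{ex:monoidalclosed}), and replacing each appeal to Theorems~\ref{th:freeQAlgebra}/\ref{th:EilenbergMoore} with their continuous counterparts Theorems~\ref{th:freeCQAlgebra}/\ref{th:CompleteEilenbergMoore}. First, I would verify that if $\U, \U'$ are continuous then so is $\U \otimes \U'$: the extra commutativity axioms in \eqref{eq:communtativeEq} are of the form $\vdash t =_0 s$, which trivially fit the continuous-schema template with the constant function $f \equiv 0$, and the rest of the axioms come from $\U$ and $\U'$, which are continuous by assumption. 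Since continuous theories are basic, this also guarantees the isomorphism $\EMAlg{\CC T_{\U \otimes \U'}} \iso \CC\KK{\Sigma+\Sigma', \U \otimes \U'}$ from Theorem~\ref{th:CompleteEilenbergMoore}.

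Next, I would restate Propositions~\ref{prop:bialgebrasTensor} and~\ref{prop:tensorAlgebras} for the $\CMet$ setting, namely
\begin{equation*}
\CC\KK{\Sigma+\Sigma', \U \otimes \U'} \iso \CC\KK{(\Sigma,\U) \otimes (\Sigma',\U')} \iso \tensorAlg{\CC T_{\U}}{\CC T_{\U'}} \,.
\end{equation*}
The first isomorphism is literally the same argument as in Proposition~\ref{prop:bialgebrasTensor}, since the functors $H$ and $K$ defined there use only the universal property of the coproduct in the underlying category (which exists in $\CMet$) and the translation between satisfiability of \eqref{eq:communtativeEq} and the commutative diagram \eqref{eq:commDiagram}. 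The second isomorphism requires redoing Proposition~\ref{prop:tensorAlgebras} with $\CC T_{\U}$ in place of $T_{\U}$. Here I would re-examine Section~\ref{sec:symbolicPreOp}: the construction of symbolic pre-operations and the identification of $\mathcal{S}_{\CC T_{\U}}$ as an exhaustive set (the $\CMet$-analogues of Proposition~\ref{prop:symbolicOperationsMonad} and Corollary~\ref{cor:symOperations}) carry over because $\CMet$ is symmetric monoidal closed with the same internal hom and powers as $\Met$, and because $\CC T_{\U}$ is strong (its strength arising from Theorem~\ref{th:freeCQAlgebra} exactly as in the $\Met$ case). The corresponding version of Corollary~\ref{cor:termComm} then yields the tensor characterisation in terms of symbolic operations, and the diagrammatic argument of Proposition~\ref{prop:tensorAlgebras} (which purely uses the universal property of free algebras from Theorem~\ref{th:freeCQAlgebra}, Proposition~\ref{prop:Iso}, and the commutative square \eqref{eq:commDiagram}) then goes through verbatim.

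Finally, I would conclude: by the displayed chain of isomorphisms and Theorem~\ref{th:freeCQAlgebra}, the forgetful functor $\tensorAlg{\CC T_{\U}}{\CC T_{\U'}} \to \CMet$ has a left adjoint, and the monad induced by this adjunction is isomorphic to $\CC T_{\U \otimes \U'}$. By the very definition of tensor of monads, this monad is $\CC T_{\U} \otimes \CC T_{\U'}$, giving the desired isomorphism.

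The main obstacle I expect is making sure each supporting lemma of Section~\ref{sec:TensorTheories} has a $\CMet$-version that genuinely works: in particular, confirming that the strong-functor machinery (pre-operations, the enriched Yoneda argument in the proof of Proposition~\ref{prop:operations}, and the isomorphism of Proposition~\ref{prop:Iso}) transports to $\CMet$ without any hidden use of Cauchy-incompleteness. Since the closed structure on $\CMet$ is the one inherited from $\Met$ (completeness being preserved by $\mprod$ and $[-,-]$), this transport is essentially bookkeeping, and no new ideas are required beyond the continuity of $\U \otimes \U'$ noted at the start.
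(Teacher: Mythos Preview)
Your proposal is correct and follows essentially the same approach as the paper: observe that $\U \otimes \U'$ is continuous, then refactor Propositions~\ref{prop:bialgebrasTensor} and~\ref{prop:tensorAlgebras} in $\CMet$ using Theorem~\ref{th:freeCQAlgebra} in place of Theorem~\ref{th:freeQAlgebra}, and conclude by the definition of tensor of monads. In fact your write-up is more careful than the paper's own proof, which merely asserts that the refactoring goes through without spelling out the $\CMet$-versions of the symbolic pre-operation machinery.
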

\begin{proof} %(of Theorem~\ref{th:tensorofsimpleeqmonadComplete})
The tensor $\U \otimes \U'$ of continuous theories is also continuous, so that, by 
Theorem~\ref{th:freeCQAlgebra}, the free monad on it in $\CMet$ is $\CC T_{\U \otimes \U'}$.
Moreover, by exploiting the universal property of~Theorem~\ref{th:freeCQAlgebra}, we 
can refactor the proofs of Propositions~\ref{prop:bialgebrasTensor} and \ref{prop:tensorAlgebras} 
to obtain the isomorphism $\CC\KK{\Sigma + \Sigma', \U \otimes \U'} \iso \tensorAlg{\CC T_{\U}}{\CC T_{\U'}}$. 
Thus, by definition of tensor of monads, $\CC T_{\U \otimes \U'} \iso \CC T_{\U} \otimes \CC T_{\U'}$.
\end{proof}

%%%%%%%%%%%%%%%%%%%%%%%%%%%%%%%%%%%%%%%%%%%%%%
\subsection{Tensor with Reader/Writer Effects}
\label{sec:tensorReaderWriter}
As an example of commutative combination of effects we consider the operation of tensoring a generic 
quantitative theory with the quantitative reader and writer theories, respectively.  Similarly to Hyland et al.~\cite{HylandLPP07}, we show that these operations corresponds, at the level of monads, to the so called reader and writer monad transformers of Moggi and Cenciarelli~\cite{Moggi91,Cenciarelli93}.

\subsubsection*{Reader Monad Transformer}
Let $T$ be a strong monad with strength $t$ and $E$ a finite set.

The strength $t$ gives rise to a distributive law of the monad $T$ over the monad $(-)^{\underline{E}}$
\begin{equation*}
\lambda_X \colon TX^{\underline{E}} \nat (TX)^{\underline{E}}
\end{equation*} 
obtained by currying $T\ev_X^{\underline{E}} \circ t_{{\underline{E}},X^{\underline{E}}}$.
As distributive laws induce a notion of monad composition~\cite{Beck69}, Moggi's \emph{reader monad 
transformer} 
\begin{equation*}
T \mapsto (T-)^{\underline{E}}
\end{equation*} 
is also available in $\Met$. The following says that we can recover this monad transformer
as the operation of tensoring with the reader monad.
\begin{thm}[Tensoring with Reader Monad] \label{th:readercomposition}
Let $T$ be a strong monad.  Then, $T \otimes (-)^{\underline{E}}$ exists and is given 
as the monad composition $(T-)^{\underline{E}}$.
\end{thm}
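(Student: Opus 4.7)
The plan is to establish the theorem by exhibiting the composite $(T-)^{\underline{E}}$ as arising from a Beck distributive law, and then identifying the resulting Eilenberg–Moore category with $\tensorAlg{T}{(-)^{\underline{E}}}$. Since the latter is precisely the category whose left adjoint (when it exists) defines the monad tensor, this will deliver both the existence of $T \otimes (-)^{\underline{E}}$ and its coincidence with the composite monad. I would proceed in three steps.

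First, I would use the strength $t$ of $T$ to build a natural transformation $\lambda \colon T(-)^{\underline{E}} \nat (T-)^{\underline{E}}$ by currying $T\ev \circ t_{\underline{E},(-)^{\underline{E}}}$, and verify by routine coherence that $\lambda$ is a distributive law of the monad $T$ over the monad $(-)^{\underline{E}}$ in the sense of Beck; the four compatibility axioms follow from the fact that $\eta^{T}$ and $\mu^{T}$ are strong, together with the explicit form of the unit $\kappa$ and multiplication $\zeta$ in \eqref{eq:readerUnitMult}. The key observation for what follows is that this $\lambda$ coincides \emph{componentwise} with the strong natural transformation $\sigma \colon T(-)^{\underline{E}} \nat (T-)^{\underline{E}}$ defined in Section~\ref{sec:MonadTensor} to construct the $\underline{E}$-power of a $T$-algebra. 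By Beck's construction, $(T-)^{\underline{E}}$ then carries a canonical monad structure.

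Second, I would characterize $\EMAlg{(T-)^{\underline{E}}}$ and match it with $\tensorAlg{T}{(-)^{\underline{E}}}$. By standard distributive-law theory, an EM-algebra for $(T-)^{\underline{E}}$ is a triple $(A, a, r)$ with $(A,a) \in \EMAlg{T}$ and $(A,r) \in \EMAlg{(-)^{\underline{E}}}$ satisfying $a \circ Tr = r \circ a^{\underline{E}} \circ \lambda_A$, that is, $r \colon (A,a)^{\underline{E}} \to (A,a)$ is a $T$-homomorphism (using the identification $\lambda = \sigma$ from the previous step). Now the singleton $\{id_{(-)^{\underline{E}}}\} \subseteq \Op[(-)^{\underline{E}}](\underline{E})$ is trivially an exhaustive set of pre-operations of $(-)^{\underline{E}}$: for any $(-)^{\underline{E}}$-algebras $(A,r), (B,r')$ and $f \colon A \to B$, the test $f \circ r^{id} = (r')^{id} \circ f^{\underline{E}}$ unfolds to $f \circ r = r' \circ f^{\underline{E}}$, which is just the defining equation of a $(-)^{\underline{E}}$-homomorphism. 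Applying the EM-enriched version of Proposition~\ref{prop:exhaustiveComm} with $\mathcal{E} = \{id_{(-)^{\underline{E}}}\}$, the distributive-law compatibility is equivalent to $(A,a,r)$ being a tensor $\tupl{T, (-)^{\underline{E}}}$-bialgebra. This identification is evidently functorial in morphisms and commutes with the forgetful functors to $\Met$.

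Third, the forgetful functor $\EMAlg{(T-)^{\underline{E}}} \to \Met$ has a left adjoint generating the monad $(T-)^{\underline{E}}$; transporting this adjunction across the isomorphism of the previous step, the forgetful functor $\tensorAlg{T}{(-)^{\underline{E}}} \to \Met$ also admits a left adjoint, inducing the same monad. By the definition of monad tensor, $T \otimes (-)^{\underline{E}}$ exists and is isomorphic to $(T-)^{\underline{E}}$. The main obstacle I expect is the careful alignment between $\lambda$ and $\sigma$ needed to recast the Beck compatibility square in the form required by Proposition~\ref{prop:comm}; once this identification is made explicit, the remaining steps are essentially diagrammatic bookkeeping.
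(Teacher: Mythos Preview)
Your proposal is correct and follows essentially the same approach as the paper: identify the Beck $\lambda$-bialgebras for the strength-induced distributive law with the tensor $\tupl{T,(-)^{\underline{E}}}$-bialgebras, using that $\lambda$ coincides with the power-law $\sigma$ and that $h = id \in \Op[(-)^{\underline{E}}](\underline{E})$ makes the compatibility square an instance of Proposition~\ref{prop:comm}\eqref{comm2}. The only cosmetic difference is that the paper proves the converse direction by an explicit diagram chase (showing for all $g \in \Op[T](v)$ that $a^g$ is a $(-)^{\underline{E}}$-homomorphism, using that the $\sigma$ for $(-)^{\underline{E}}$ is the canonical iso $\chi$ and Proposition~\ref{prop:Iso}), whereas you factor this through Proposition~\ref{prop:exhaustiveComm} with $\mathcal{E}=\{id\}$; since the proof of that proposition is exactly that diagram chase, the two arguments are the same up to packaging.
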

\begin{proof}
Recall that the composite $(T-)^{\underline{E}}$ is the monad that arises 
from the adjunction with the forgetful functor $\distAlg{\lambda} \to \Met$, 
where $\distAlg{\lambda}$ denotes the full subcategory of 
EM $\tupl{T, (-)^{\underline{E}}}$-bialgebras $(A,a,b)$ satisfying the
commutativity of the diagram 
\begin{equation}
\begin{tikzcd}
TA \arrow[r, "a"] & A & A^{\underline{E}} \arrow[l, "b"'] \\
T(A^{\underline{E}}) \arrow[rr, "\lambda"] \arrow[u, "Tb"] 
& & (TA)^{\underline{E}} \arrow[u, "a^{\underline{E}}"'] 
\end{tikzcd}
\label{eq:lambdaDiagram}
\end{equation}
The bialgebras satisfying \eqref{eq:lambdaDiagram} are called, $\lambda$-bialgebras for the law $\lambda \colon T(-^{\underline{E}}) \nat (T-)^{\underline{E}}$ (see e.g., \cite{Beck69}).
We show that the category of $\lambda$-bialgebras is identical to 
the category of tensor $\tupl{T \otimes (-)^{\underline{E}}}$-bialgebras, 
that is, that the commutativity of the diagram above corresponds to either one of 
the equivalent conditions from Proposition~\ref{prop:comm}.

One direction is easy, as if we assume $(A,a,b)$ to be a tensor 
$\tupl{T \otimes (-)^{\underline{E}}}$-bialgebra, then~\eqref{eq:lambdaDiagram}
is just \eqref{comm2} from Proposition~\ref{prop:comm}
for $h = id \in \Op[(-)^{\underline{E}}](\underline{E})$ as, by definition of $\underline{E}$-power algebra,
$(A, a)^{\underline{E}} = (A^{\underline{E}}, a^{\underline{E}} \circ \lambda_A)$.

For the converse direction, assume \eqref{eq:lambdaDiagram} holds and let $g \in \Op[T](v)$,
for some $v \in \Met$.  Then, asking that $a^g$ is a $(-)^{\underline{E}}$-homomorphism (\ie, condition \eqref{comm1} from Proposition~\ref{prop:comm}) corresponds to 
the commutativity of the following diagram, as $(A, b)^v = (A^v, b^v \circ \sigma_A)$ and 
$(A, a)^{\underline{E}} = (A^{\underline{E}}, a^{\underline{E}} \circ \lambda_A)$:
\begin{equation*}
\begin{tikzcd}
(A^v)^{\underline{E}} \arrow[r, "\sigma"] \arrow[d, "g^{\underline{E}}"'] 
& (A^{\underline{E}})^v \arrow[r, "b^v"] \arrow[d, "g"']  
& A^v \arrow[d, "g"'] \\
(TA)^{\underline{E}} \arrow[dr, "a^{\underline{E}}"', near start, bend right] 
& T(A^{\underline{E}}) \arrow[r, "Tb"] \arrow[d, "a^{\underline{E}} \circ \lambda"'] 
& TA \arrow[d, "a"] \\
& A^{\underline{E}} \arrow[r, "b"] & A 
\end{tikzcd}
\end{equation*}
The bottom-right square is \eqref{eq:lambdaDiagram}, so commutes by hypothesis; 
the top-right square commutes by naturality of $g$; and finally, the left diagram 
commutes by Proposition~\ref{prop:Iso} as, by definitions of the strengths of $(-)^v$ and 
$(-)^{\underline{E}}$, $\sigma \colon (A^v)^{\underline{E}} \nat (A^{\underline{E}})^v$ 
coincides with the canonical isomorphism (denoted as $\chi$ in Proposition~\ref{prop:Iso}).

Therefore, as the two categories of bialgebras coincide, by definition of tensor 
of monads, $T \otimes (-)^{\underline{E}} = (T-)^{\underline{E}}$.  
\end{proof}

By using the above result in combination with Theorem~\ref{th:tensorofsimpleeqmonad}, we obtain an
analogous transformer at the level of quantitative equational theories as follows.
\begin{cor} \label{cor:readertransfomer}
Let $\U$ be a basic quantitative equational theory.  
Then, $(T_{\U} -)^{\underline{E}}$ is the free monad on the theory $\U \otimes \R[]$ in $\Met$.
\end{cor}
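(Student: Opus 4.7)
The plan is to chain together three results already established in the paper, so the proof is essentially a one-line composition once the ingredients are identified. First I would invoke Theorem~\ref{th:isoReaderMonad} to replace the reader monad $(-)^{\underline{E}}$ by the free monad $T_{\R[]}$ on the quantitative theory of reading computations. Since $\R[]$ is axiomatized solely by the equations (\Idem) and (\Diag), both of which are unconditional quantitative equations and in particular involve no nontrivial assumptions over variables in their premises, the theory $\R[]$ is basic in the sense of \cite{MardarePP17}.

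Next, since $\U$ is basic by hypothesis and $\R[]$ is basic by the observation above, Theorem~\ref{th:tensorofsimpleeqmonad} applies, yielding the isomorphism of monads
\begin{equation*}
T_{\U \otimes \R[]} \;\iso\; T_{\U} \otimes T_{\R[]} \,.
\end{equation*}
Composing with the iso $T_{\R[]} \iso (-)^{\underline{E}}$ from Theorem~\ref{th:isoReaderMonad}, and using that the tensor of monads is determined up to iso by the underlying monads, we obtain
\begin{equation*}
T_{\U \otimes \R[]} \;\iso\; T_{\U} \otimes (-)^{\underline{E}} \,.
\end{equation*}

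Finally, I would apply Theorem~\ref{th:readercomposition}, which identifies $T_{\U} \otimes (-)^{\underline{E}}$ with the composite monad $(T_{\U} -)^{\underline{E}}$ arising from the distributive law induced by the strength of $T_{\U}$. Chaining the two isomorphisms yields $T_{\U \otimes \R[]} \iso (T_{\U} -)^{\underline{E}}$, which is the claim.

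None of the steps should present a real obstacle: they are direct invocations of previously proved theorems. The only point requiring a moment of care is checking that $\R[]$ satisfies the hypothesis of Theorem~\ref{th:tensorofsimpleeqmonad} (basicness), which is immediate from its axiomatization. I would mention this briefly in the proof so the reader can verify that Theorem~\ref{th:tensorofsimpleeqmonad} genuinely applies, and otherwise keep the argument to a single sentence giving the composite isomorphism.
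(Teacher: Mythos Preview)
Your proposal is correct and matches the paper's own argument exactly: the corollary is stated as an immediate consequence of Theorem~\ref{th:readercomposition} together with Theorem~\ref{th:tensorofsimpleeqmonad} (using Theorem~\ref{th:isoReaderMonad} to identify $T_{\R[]}$ with $(-)^{\underline{E}}$). Your explicit check that $\R[]$ is basic, so that Theorem~\ref{th:tensorofsimpleeqmonad} applies, is the only point the paper leaves implicit.
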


Moreover, as $\R[]$ is a continuous theory, by Theorems~\ref{th:isoReaderMonadCompletion}, \ref{th:tensorofsimpleeqmonad}, and \ref{th:readercomposition}, we obtain the following variant of the 
quantitative reader theory transformer on continuous  theories.
\begin{cor}
Let $\U$ be a continuous quantitative theory.  
Then, $(\CC T_{\U} - )^{\underline{E}}$ is the free monad on the theory $\U \otimes \R[]$ in $\CMet$.
\end{cor}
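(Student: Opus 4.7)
The plan is to combine three ingredients already established: the tensor--preserves--free-monad result for continuous theories (Theorem~\ref{th:tensorofsimpleeqmonadComplete}), the identification of the reader monad on $\CMet$ with a free monad on a continuous theory (Theorem~\ref{th:isoReaderMonadCompletion}), and the monad-level characterization of the reader transformer (Theorem~\ref{th:readercomposition}), transported from $\Met$ to $\CMet$.

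First, I would verify that $\R[]$ is continuous. Its generating axioms (\Idem) and (\Diag) are of the form $\vdash t =_0 s$, which fit the continuous schema template with the constant-zero real-valued function; so $\R[]$ is continuous, and, the class of continuous theories being closed under disjoint unions plus the commutation equations~\eqref{eq:communtativeEq} (which are also of the form $\vdash t =_0 s$), the tensor $\U \otimes \R[]$ is again continuous. In particular, $\CC T_{\U \otimes \R[]}$ is well-defined and, by Theorem~\ref{th:freeCQAlgebra}, is the free monad on $\U \otimes \R[]$ in $\CMet$.

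Next, I would chain the isomorphisms. By Theorem~\ref{th:tensorofsimpleeqmonadComplete} applied to the continuous theories $\U$ and $\R[]$,
\begin{equation*}
\CC T_{\U \otimes \R[]} \;\iso\; \CC T_{\U} \otimes \CC T_{\R[]} \,.
\end{equation*}
By Theorem~\ref{th:isoReaderMonadCompletion}, $\CC T_{\R[]} \iso (-)^{\underline{E}}$ as strong monads on $\CMet$, so the display becomes $\CC T_{\U} \otimes (-)^{\underline{E}}$. It remains to identify this tensor with the composite $(\CC T_{\U} -)^{\underline{E}}$.

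The key step is to observe that Theorem~\ref{th:readercomposition} holds verbatim in $\CMet$. Its proof uses only that the ambient category is symmetric monoidal closed, that $(-)^{\underline{E}}$ is a power functor for the discrete space $\underline{E}$, and that the distributive law $\lambda \colon T(-^{\underline{E}}) \nat (T-)^{\underline{E}}$ induced by the strength $t$ of $T$ coincides, under currying, with the canonical isomorphism used in Proposition~\ref{prop:Iso}. All three features are available in $\CMet$ (cf.\ Example~\ref{ex:monoidalclosed}), and $\CC T_{\U}$ is a strong monad there since strength is preserved by the reflection $\CC$. Hence the same argument shows $\CC T_{\U} \otimes (-)^{\underline{E}} \iso (\CC T_{\U} -)^{\underline{E}}$, and splicing the isomorphisms together yields $\CC T_{\U \otimes \R[]} \iso (\CC T_{\U} -)^{\underline{E}}$. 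The only delicate point is checking this transport of Theorem~\ref{th:readercomposition}, but since no property specific to $\Met$ enters its proof, this is routine.
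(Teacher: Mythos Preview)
Your proposal is correct and follows essentially the same route as the paper, which simply cites Theorems~\ref{th:isoReaderMonadCompletion}, \ref{th:tensorofsimpleeqmonadComplete} (the paper writes \ref{th:tensorofsimpleeqmonad}, likely a typo), and \ref{th:readercomposition} together with the observation that $\R[]$ is continuous. Your additional remark that Theorem~\ref{th:readercomposition} transports to $\CMet$ is already covered by the paper's blanket statement at the start of Section~\ref{sec:MonadTensor} that all results there hold in $\CMet$ as well.
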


\subsubsection*{Writer Monad Transformer}
Let $T$ be a strong monad with strength $t$ and $(\Lambda, *, 0)$ a monoid 
structure with $\Lambda \in \Met$, unit $0 \in \Lambda$, and non-expansive multiplication $* \colon \Lambda \times \Lambda \to \Lambda$.

The strength $t$ gives rise to a canonical distributive law of the monad 
$(\Lambda \mprod -)$ over $T$ as
\begin{equation*}
t_{\Lambda, -} \colon (\Lambda \mprod T-) \nat T(\Lambda \mprod -) \,.
\end{equation*}  
So the composite $T(\Lambda \mprod -)$ acquires a canonical monad structure via the above distributive law~\cite{Beck69}, and we 
obtain the following version of Moggi's \emph{writer monad transformer} in $\Met$:
\begin{equation*}
T \mapsto T(\Lambda \mprod -) \,.
\end{equation*}

Hyland et al.~in~\cite{HylandPP06} observed that Moggi's writer monad transformer
can be equivalently recovered as the operation of tensoring with the writer monad.

\begin{thm}[Tensoring with Writer Monad~{\cite[Theorem~12]{HylandPP06}}] \label{th:writercomposition}
Let $T$ be a strong monad with countable rank.  Then, the monad composition $T(\Lambda \mprod -)$ 
is given as $T \otimes (\Lambda \mprod -)$.
\end{thm}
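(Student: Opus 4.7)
The plan is to mirror the strategy of Theorem~\ref{th:readercomposition}, identifying the composite monad $T(\Lambda \mprod -)$ with the tensor by comparing their Eilenberg--Moore categories. By Beck's theorem on distributive laws~\cite{Beck69}, the composition $T(\Lambda \mprod -)$ induced by $\lambda = t_{\Lambda, -}$ arises from the forgetful adjunction $\distAlg{\lambda} \to \Met$, where $\distAlg{\lambda}$ denotes the full subcategory of EM $\tupl{T, (\Lambda \mprod -)}$-bialgebras $(A, a, b)$ satisfying the distributivity condition
\begin{equation*}
b \circ (\Lambda \mprod a) = a \circ Tb \circ t_{\Lambda, A} \colon \Lambda \mprod TA \to A \,.
\end{equation*}
Hence it suffices to prove $\distAlg{\lambda} = \tensorAlg{T}{(\Lambda \mprod -)}$, whence by the universal property of the tensor one recovers $T \otimes (\Lambda \mprod -) \iso T(\Lambda \mprod -)$. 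The countable-rank hypothesis on $T$ ensures that the tensor exists.

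The key calculation is the pointwise description of the strength: for $\alpha \in \Lambda$ and $m \in TA$, we have $t_{\Lambda, A}(\alpha, m) = T((\alpha, -))(m)$ where $(\alpha, -) \colon A \to \Lambda \mprod A$ is the non-expansive map $x \mapsto (\alpha, x)$. Consequently $Tb \circ t_{\Lambda, A}(\alpha, m) = T(b^{\tupl{\wrt{\alpha}}})(m)$, where $b^{\tupl{\wrt{\alpha}}} \colon A \to A$, $x \mapsto b(\alpha, x)$, is the symbolic pre-operation of $(\Lambda \mprod -)$ associated with $\wrt{\alpha} \colon 1 \in \Sigma_{\Wr[\Lambda]}$. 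Reading the distributivity equation pointwise in $\alpha$, it is equivalent to the conjunction over $\alpha \in \Lambda$ of $b^{\tupl{\wrt{\alpha}}} \circ a = a \circ T(b^{\tupl{\wrt{\alpha}}})$, i.e.,\ each $b^{\tupl{\wrt{\alpha}}}$ is a $T$-homomorphism.

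By Theorem~\ref{th:isoWriterMonad}, $(\Lambda \mprod -) \iso T_{\Wr[\Lambda]}$, and since $\Wr[\Lambda]$ is basic, Corollary~\ref{cor:symOperations} yields that $\{\wrt{\alpha}\}_{\alpha \in \Lambda}$ is an exhaustive family of pre-operations of $(\Lambda \mprod -)$. Invoking Proposition~\ref{prop:exhaustiveComm} with this family, condition \eqref{dComm2} states precisely that each $b^{\tupl{\wrt{\alpha}}}$ is a $T$-homomorphism, matching the reformulation of the distributivity equation above. Thus $\distAlg{\lambda} = \tensorAlg{T}{(\Lambda \mprod -)}$, as required. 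The main subtlety is the pointwise analysis of the strength in the enriched metric setting: one must verify carefully that reducing the single distributivity diagram to the family of per-$\alpha$ equations is a genuine equivalence of $\Met$-morphisms, which rests on the generalised elements $\alpha \colon 1 \to \Lambda$ being jointly epic in $\Met$ together with the explicit pointwise formula for $t_{\Lambda, A}$.
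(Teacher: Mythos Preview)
The paper does not prove this theorem; it simply cites \cite[Theorem~12]{HylandPP06}. Your argument is therefore not a comparison target but an independent proof, and it is essentially correct: you adapt the paper's own proof of Theorem~\ref{th:readercomposition} (the reader case) to the writer case, replacing the single pre-operation $id \in \Op[(-)^{\underline{E}}](\underline{E})$ by the exhaustive family $\{\wrt{\alpha}\}_{\alpha \in \Lambda}$ of $\underline{1}$-ary symbolic pre-operations of $(\Lambda \mprod -)$, which is legitimate by Corollary~\ref{cor:symOperations} since $\Wr[\Lambda]$ is basic. The pointwise analysis of the strength and the identification $b^{\tupl{\wrt{\alpha}}}(x) = b(\alpha,x)$ are correct (the latter following from $\tau_X(x)=(0,x)$ and $\omega_X(\wrt{\alpha}(0,x))=(\alpha,x)$), and the reduction of the distributivity diagram to the per-$\alpha$ equations is valid because points $1 \to \Lambda$ are jointly epic in $\Met$.

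One remark: your sentence ``the countable-rank hypothesis on $T$ ensures that the tensor exists'' is misplaced. In your argument the existence of the tensor follows \emph{a posteriori} from the identification $\distAlg{\lambda} = \tensorAlg{T}{(\Lambda \mprod -)}$ together with Beck's adjunction for the composite monad; countable rank plays no role. (Indeed, the paper's reader-monad analogue, Theorem~\ref{th:readercomposition}, carries no rank hypothesis.) The rank assumption in the cited statement presumably reflects the different proof strategy of \cite{HylandPP06}, not a genuine necessity for the approach you take.
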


As any quantitative theory $\U$ induces a 
monad $T_{\U}$ with countable rank (\cf~Ford et al.~\cite{FordMS21}), by combining the above with Theorems~\ref{th:tensorofsimpleeqmonad} and \ref{th:isoWriterMonad}, 
we get an analogous transformer at the level of quantitative equational theories as follows:
\begin{cor} \label{cor:writerTheoryTransformer}
Let $\U$ be a basic quantitative theory.  
Then, $T_{\U}(\Lambda \mprod -)$ is the free monad on the theory $\U \otimes \Wr[]$ in $\Met$.
\end{cor}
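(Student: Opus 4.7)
The plan is to chain together three results already established in the paper: Theorem~\ref{th:tensorofsimpleeqmonad} to reduce a tensor of theories to a tensor of monads, Theorem~\ref{th:isoWriterMonad} to identify $T_{\Wr[]}$ with $(\Lambda \mprod -)$, and Theorem~\ref{th:writercomposition} to recognise the resulting tensor of monads as the composition $T_{\U}(\Lambda \mprod -)$. Concretely, first I would observe that $\Wr[]$ is a basic quantitative theory, since inspection of its axioms (\Zero), (\Mult), (\Diff) shows that the hypotheses of each conditional quantitative equation involve only quantitative equations between variables. Together with the assumption that $\U$ is basic, this licenses an application of Theorem~\ref{th:tensorofsimpleeqmonad}, giving
\begin{equation*}
  T_{\U \otimes \Wr[]} \iso T_{\U} \otimes T_{\Wr[]} \iso T_{\U} \otimes (\Lambda \mprod -) \,,
\end{equation*}
where the second isomorphism uses Theorem~\ref{th:isoWriterMonad}.

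Next I would verify that $T_{\U}$ satisfies the hypotheses of Theorem~\ref{th:writercomposition}, namely that it is strong and of countable rank. Strength of $T_{\U}$ was exhibited in Section~\ref{sec:symbolicPreOp}, where its strength $\zeta$ was constructed via the universal property of the free algebra. For countable rank, I would invoke the result of Ford et al.~\cite{FordMS21} already cited in Section~\ref{sec:sumInput}, which establishes that every quantitative equational theory induces a monad of countable rank. With these hypotheses in hand, Theorem~\ref{th:writercomposition} then yields
\begin{equation*}
  T_{\U} \otimes (\Lambda \mprod -) \iso T_{\U}(\Lambda \mprod -) \,,
\end{equation*}
and composing the two isomorphisms gives $T_{\U \otimes \Wr[]} \iso T_{\U}(\Lambda \mprod -)$, which is the claim up to the fact that $T_{\U \otimes \Wr[]}$ is by definition the free monad on the theory $\U \otimes \Wr[]$ in $\Met$ (Theorem~\ref{th:freeQAlgebra}).

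I do not anticipate any real obstacle: the proof is essentially a bookkeeping exercise that assembles three strong external inputs, and the only non-mechanical step is checking that $\Wr[]$ is basic and recalling that $T_{\U}$ is strong of countable rank. The mild subtlety worth flagging is that Theorem~\ref{th:writercomposition}, as quoted from~\cite{HylandPP06}, relies on $\Lambda$ carrying the monoidal (not Cartesian) product structure and on non-expansiveness of the monoid multiplication~\eqref{*nonexp}; these are exactly the standing assumptions on $\Lambda$ adopted in Section~\ref{sec:writeralgebras}, so the application is clean.
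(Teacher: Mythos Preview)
Your proposal is correct and follows essentially the same route as the paper: invoke Theorem~\ref{th:tensorofsimpleeqmonad} (using that both $\U$ and $\Wr[]$ are basic) and Theorem~\ref{th:isoWriterMonad} to get $T_{\U \otimes \Wr[]} \iso T_{\U} \otimes (\Lambda \mprod -)$, then apply Theorem~\ref{th:writercomposition} via the countable-rank result of Ford et al.~\cite{FordMS21}. The paper's own justification is just a one-line pointer to exactly these three ingredients, so your more explicit bookkeeping (checking $\Wr[]$ is basic, recalling strength of $T_{\U}$) only makes the argument clearer, not different.
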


As $\Wr[]$ is also a continuous quantitative theory, by similar arguments as before, 
we obtain the following variant of quantitative writer theory transformer on continuous theories.
\begin{cor}
Let $\U$ be a continuous quantitative theory.  
Then, $\CC T_{\U}(\Lambda \mprod -)$ is the free monad on the theory $\U \otimes \Wr[]$ in $\CMet$.
\end{cor}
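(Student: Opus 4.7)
The plan is to mirror the proof of the immediately preceding Corollary~\ref{cor:writerTheoryTransformer}, but working in $\CMet$ instead of $\Met$, chaining together the completed-tensor theorem with the monadic characterization of the tensor by the writer monad. The four ingredients I would assemble are: (i) continuity of $\Wr[]$, inspected directly from the axioms $(\Zero)$, $(\Mult)$, $(\Diff)$, all of which fit the continuous schema pattern (with the constant function, the constant function, and $f(\e) = d_\Lambda(\alpha,\alpha') + \e$, respectively); (ii) the completed tensor correspondence $\CC T_{\U \otimes \Wr[]} \iso \CC T_{\U} \otimes \CC T_{\Wr[]}$, which is Theorem~\ref{th:tensorofsimpleeqmonadComplete}; (iii) the presentation $\CC T_{\Wr[]} \iso (\Lambda \mprod -)$ on $\CMet$ from Theorem~\ref{th:isoWriterMonadCompletion}; and (iv) the monadic identification $T \otimes (\Lambda \mprod -) \iso T(\Lambda \mprod -)$ of Theorem~\ref{th:writercomposition}, applied to $T = \CC T_{\U}$.

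Concretely, I would first note that since $\U$ is continuous, $\U \otimes \Wr[]$ is continuous as well (its extra axioms~\eqref{eq:communtativeEq} are $0$-equations, which fit the continuous schema trivially). Hence Theorem~\ref{th:freeCQAlgebra} gives $\CC T_{\U \otimes \Wr[]}$ as the free monad on $\U \otimes \Wr[]$ in $\CMet$, and Theorem~\ref{th:tensorofsimpleeqmonadComplete} gives the isomorphism
\begin{equation*}
  \CC T_{\U \otimes \Wr[]} \;\iso\; \CC T_{\U} \otimes \CC T_{\Wr[]} \,.
\end{equation*}
Substituting $\CC T_{\Wr[]} \iso (\Lambda \mprod -)$ from Theorem~\ref{th:isoWriterMonadCompletion}, this becomes $\CC T_{\U} \otimes (\Lambda \mprod -)$, and an appeal to Theorem~\ref{th:writercomposition} collapses the latter tensor to the composite monad $\CC T_{\U}(\Lambda \mprod -)$, yielding the desired statement.

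The only non-routine step is verifying the hypothesis of Theorem~\ref{th:writercomposition}, namely that the monad $\CC T_{\U}$ on $\CMet$ has countable rank. On $\Met$ this is exactly the Ford--Milius--Schr\"oder result already invoked in Section~\ref{sec:sumInput}; transporting it to $\CMet$ is the main thing to check carefully. I would argue this by combining two facts: $\CC \colon \Met \to \CMet$ is left adjoint to the inclusion and preserves $\aleph_1$-filtered colimits (it is $\aleph_1$-accessible, as used in the proof following Proposition~4.6), and the inclusion $\CMet \hookrightarrow \Met$ preserves $\aleph_1$-filtered colimits since $\CMet$ is closed under them in $\Met$ (Appendix~\ref{sec:EMet-lcp}). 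Thus the endofunctor $\CC T_{\U}$ on $\CMet$, being the composite of $\aleph_1$-accessible pieces, is itself $\aleph_1$-accessible, i.e.\ of countable rank. With this verification in hand, the four-step chain above goes through and delivers the corollary.
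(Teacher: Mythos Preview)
Your proposal is correct and follows the same route the paper indicates (the paper gives no explicit proof, only ``by similar arguments as before'', and your chain of Theorems~\ref{th:tensorofsimpleeqmonadComplete}, \ref{th:isoWriterMonadCompletion}, and \ref{th:writercomposition} is exactly what is intended). Your added care in verifying that $\CC T_{\U}$ has countable rank is appropriate and goes beyond what the paper spells out; one minor point is that Appendix~\ref{sec:EMet-lcp} only treats $\Met$ explicitly, so the closure of $\CMet$ under $\aleph_1$-filtered colimits in $\Met$ is not literally proved there, though it is a standard fact and the paper relies on $\CMet$ being accessible elsewhere (e.g.\ in Section~\ref{sec:mp&bisim}).
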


%%%%%%%%%%%%%%%%%%%%%%%%%%%%%%%%%%%%%%%%%%%%%%
\subsection{The Algebras of Labeled Markov Processes}
\label{sec:labelmarkovprocesses}
In this section, we provide a quantitative equational axiomatization of labelled Markov processes with their discounted bisimilarity metric~\cite[Section~6]{BreugelHMW07}. 

\subsubsection{Labelled Markov Processes over Metric Spaces}
\label{sec:lmp&bisim}
Let $A$ be a finite set of action labels. 
As in~\cite[Section~6]{BreugelHMW07}, we regard $A$-labelled Markov processes 
over extended metric spaces as coalgebras on the category of metric spaces.
In detail, we consider two variants of labelled Markov processes:
\begin{align*}
  X &\longrightarrow \Pi(c \cdot X + 1 )^{\underline{A}}  \quad \text{in $\Met$} \,, \\
  X &\longrightarrow \Delta(c \cdot X + 1 )^{\underline{A}}  \quad \text{in $\CMet$} \,,
\end{align*}
where $\Pi$ and $\Delta$ are the functors from Section~\ref{sec:probchoice}, mapping a 
metric space $X$ to a space of probability measures with Kantorovich metric. We will collectively refer to these coalgebras as labelled $c$-Markov processes.

Similarly to~Section~\ref{sec:mp&bisim}, the use of the rescaling functor $(c \cdot -)$ is 
to encompass the case where the probabilistic bisimilarity distance is discounted 
by a factor $0 < c < 1$. This will not change the essence of the results 
from~\cite{BreugelHMW07} that are used in this section to characterize 
the probabilistic bisimilarity metric. 

In~\cite{BreugelHMW07}, van Breugel et al.\ characterized the bisimilarity distance 
on labelled Markov processes as the pseudometric induced by the unique homomorphism to the final coalgebra. Specifically, the \emph{$c$-discounted bisimilarity pseudometric} on a labelled $c$-Markov process $(X, \tau)$ is obtained as the function $\dist \colon X \times X \to [0,1]$ given as
\begin{equation*}
  \dist(x,x') = d_{Z}(h(x), h(x')) \,,
\end{equation*}
where $h \colon X \to {Z}$ is the unique homomorphism to the final labelled $c$-Markov 
process $(Z,\omega)$.

This distance has a characterization as the least fixed point of a 
monotone function on a complete lattice of $1$-bounded pseudometrics.
\begin{propC}[{\cite[Theorem~40]{BreugelHMW07}}] \label{prop:fixpointbisimdistLabelled}
The $c$-discounted bisimilarity pseudometric $\dist$ on $(X, \tau)$ 
is the \emph{unique} fixed point
of the following operator on the complete lattice of extended pseudometrics $d$ on $X$
with point-wise order $\sqsubseteq$, such that $d \sqsubseteq d_X$,
\begin{equation*}
  \Psi^c(d)(x,x') = \sup_{a \in A} \; \sup_{f} \left| \lebint{f}{\tau(x)(a)} - \lebint{f}{\tau(x')(a)} \right| \,,
\end{equation*}
with $f$ ranging over non-expansive positive $1$-bounded
real valued functions on $c \cdot X + 1$.
\end{propC}

\subsubsection{Quantitative Algebraic Presentation}
\label{sec:presentationLMP}

We provide a quantitative equational theory that 
axiomatizes (the monad of) $A$-labelled Markov processes with 
$c$-discounted bisimilarity metric. We do this by extending the axiomatization of (unlabelled) Markov processes from Section~\ref{sec:markovprocesses}  with a new ``reading'' operator used to describe 
the reaction to the choice of a label from a finite set $A$ of action labels.
As expected, the reading operations will be axiomatized by the theory $\R[A]$ of reading computations (\cf~Section~\ref{sec:readeralgebras}). 

\medskip
Formally,  for $A = \ens{a_1, \dots, a_n}$ we define the quantitative theory of labelled Markov processes as the following combination of quantitative theories,
\begin{equation*}
  \U_{\textbf{LMP}} = ((\B + \mathcal{E}_{1}) \otimes \R[A]) + \O{\Sigma_{\diamond}} \,.
\end{equation*}
with signature $\Sigma_{\textbf{MP}} = \Sigma_{\B} \cup \Sigma_{1} \cup \Sigma_{\R[A]} 
\cup \Sigma_{\diamond}$ given 
as the disjoint union of those from its component theories. Explicitly, 
\begin{equation*}
  \Sigma_{\textbf{LMP}}  = \set{ +_e \colon 2}{ e \in [0,1]} \cup 
  \{ \textsf{raise}_* \colon 0 \} \cup \{ \rd \colon |A| \} \cup \{ \diamond \colon \tupl{1, c} \}
\end{equation*}
theory $\U_{\textbf{LMP}}$ is given by the following set of axioms 
\begin{align*} 
(\Bone)\,
& \vdash x +_1 y =_0 x \,, \\
(\Btwo)\, 
& \vdash x +_e x =_0 x \,, \\
(\SC)\,
& \vdash x +_e y =_0 y +_{1-e} x \,, \\
(\SA)\,
& \vdash (x +_e y) +_{e'} z =_0 x +_{ee'} (y +_{\frac{e' - ee'}{1 - ee'}} z) \,, \text{ for $e,e' \in [0,1)$} \,, 
\\ 
(\IB)\,
& \{ x \,{=_\e}\, y, x' \,{=_{\e'}}\, y' \} \,{\vdash}\, x +_e x' \,{=_{\delta}}\, y +_e y', \, \text{for $\delta \geq e \e + (1-e) \e'$,}
\\
(\Idem)\,
& \vdash x =_0 \rd(x, \dots, x) \,, \\
(\Diag)\,
& \vdash \rd(x_{1,1}, \dots, x_{n,n} ) =_0 
	\rd( \rd(x_{1,1}, \dots, x_{1,n}), \dots, \rd(x_{n,1}, \dots, x_{n,n}) )
\\
(\Com)\,
& \vdash \rd(x_1 +_e y_1, \dots, x_n +_e y_n) =_0 \rd(x_1, \dots, x_n) +_e \rd(y_1, \dots, y_n) \,, 
\\
(\Lip{\diamond})\,
& \{x =_\e y \} \vdash \diamond(x) =_{\delta} \diamond(y) \,,
\text{ for $\delta \geq c \e$} \,.
\end{align*}
Note that, the constant $\textsf{raise}_*$ has no explicit associated axiom since $\mathcal{E}_{1}$ is the trivial theory and  (\Idem) already implies the commutativity axiom required by tensoring with $\R[A]$.

Intuitively, $\Sigma_{\textbf{LMP}}$-terms (modulo $=_0$ provability) should be interpreted as equivalence classes of behaviours of labelled Markov processes up-to bisimilarity. The term $t +_e t'$ expresses convex combination of behaviours; $\textsf{raise}_*$ represents termination; $\rd(t_1, \dots, t_n)$ is used to express that $t_i$ is the selected behaviour after the choice of the action label $a_i \in A$; and $\diamond(t)$ expresses the ability of taking a transition 
to the behaviour $t$.

\paragraph*{On Metric Spaces}
\label{sec:labelmarkovprocessesMet}

We characterize the monad $T_{\U_{\textbf{LMP}}}$ on $\Met$  in steps, by explaining
the contribution of the different theories in
\begin{equation*}
  \U_{\textbf{LMP}} = ((\B + \mathcal{E}_{1}) \otimes \R[A]) + \O{\Sigma_{\diamond}} \,.
\end{equation*}

(Step 1) As shown in Section~\ref{sec:presentationMP}, $T_{\B + \mathcal{E}_1}$
is the \emph{finitely supported sub-distribution monad}
\begin{equation*}
T_{\B + \mathcal{E}_1} \iso \Pi(- + 1) \,.
\end{equation*}
Thus, $\B + \mathcal{E}_1$ axiomatizes finitely supported sub-distributions with Kantorovich metric.

(Step 2) By Theorem~\ref{th:tensorofsimpleeqmonad} and Corollary~\ref{cor:readertransfomer},
we further get the monad isomorphism
\begin{equation*}
T_{(\B + \mathcal{E}_1) \otimes \R[A]} \iso 
\Pi(1+ -) \otimes (-)^{\underline{A}} \iso 
(\Pi(1+ -))^{\underline{A}} \,,
\end{equation*}
saying that tensoring with the theory $\R[A]$ of reading computations corresponds
to axiomatically adding the capability of reacting to the choice of 
an action label.

(Step 3) The final step is to sum the above with the theory $\O{\Sigma_\diamond}$.
Then, by Corollary~\ref{cor:ResumptionTransfomerMet}, the monad on 
$\U_{\textbf{LMP}}$ is 
\begin{equation*}
T_{\U_{\textbf{LMP}}} 
\iso \mu y.  T_{(\B + \mathcal{E}_1) \otimes \R[A]}(c \cdot y + - ) 
\iso \mu y.  \Pi(c \cdot y + 1 + -)^{\underline{A}} \,,
\end{equation*}
where we implicitly applied the isomorphisms $c \cdot (A + B) \iso c \cdot A + c \cdot B$ and $1 \iso c \cdot 1$.

Explicitly, this means that the free monad on $\U_{\textbf{LMP}}$
assigns to an arbitrary metric space $X \in \Met$ the \emph{initial solution} 
of the following functorial equation in $\Met$
\begin{equation*}
  \FMP[X] \iso (\Pi(c \cdot \FMP[X] + 1 + X))^{\underline{A}} \,.
\end{equation*}
In particular, when $X = 0$ is the empty metric space (\ie, the initial object in $\Met$)
the above corresponds to the isomorphism on the initial $(\Pi(c \cdot - + 1))^{\underline{A}}\,$-algebra.
The isomorphism gives us also a $(\Pi(c \cdot - + 1))^{\underline{A}}\,$-coalgebra
structure $\tau_0 \colon \FMP[0] \to (\Pi(c \cdot \FMP[0] + 1))^{\underline{A}}$ on $\FMP[0]$.

The key observation is that the metric of $\FMP[0]$ is the bisimilarity metric.
\begin{lem} \label{lem:bisimMetricLabel}
$d_{\FMP[0]}$ is the $c$-discounted probabilistic bisimilarity metric on $(\FMP[0], \tau_0)$.
\end{lem}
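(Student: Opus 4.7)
The plan is to mimic the proof of Lemma~\ref{lem:bisimMetric}, adapting each step to the labelled setting, where the extra reading structure introduces a supremum over $A$.

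First, I would invoke the fact that isomorphisms in $\Met$ are isometries. Applied to the coalgebra structure $\tau_0 \colon \FMP[0] \xrightarrow{\iso} (\Pi(c \cdot \FMP[0] + 1))^{\underline{A}}$ coming from the initial solution, and using that the metric on an $\underline{A}$-power in $\Met$ is the pointwise supremum (since $\underline{A}$ is discrete), this yields
\begin{equation*}
  d_{\FMP[0]}(x, x') \;=\; \sup_{a \in A} \K[d]{\tau_0(x)(a), \tau_0(x')(a)} \,,
\end{equation*}
where $d$ denotes the metric on $c \cdot \FMP[0] + 1$.

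Next, I would apply Kantorovich--Rubinstein duality~\cite[Theorem~5.10]{Villani08} to rewrite each inner Kantorovich distance as a supremum over non-expansive functions $f \colon c \cdot \FMP[0] + 1 \to [0,1]$, obtaining
\begin{equation*}
  d_{\FMP[0]}(x, x') \;=\; \sup_{a \in A} \, \sup_{f} \left| \lebint{f}{\tau_0(x)(a)} - \lebint{f}{\tau_0(x')(a)} \right| \,.
\end{equation*}
This is exactly the operator $\Psi^c$ appearing in Proposition~\ref{prop:fixpointbisimdistLabelled}, applied to $d_{\FMP[0]}$ itself. Hence $d_{\FMP[0]}$ is a fixed point of $\Psi^c$, and by the uniqueness of the fixed point stated there (a Banach-style argument, since $\Psi^c$ is $c$-contractive on the complete lattice of $d_X$-bounded pseudometrics), it coincides with the $c$-discounted bisimilarity pseudometric on $(\FMP[0], \tau_0)$.

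The routine part is the string of identifications above; the only subtlety is checking that $d_{\FMP[0]}$ is legitimately a $d$-bounded pseudometric so that the uniqueness clause of Proposition~\ref{prop:fixpointbisimdistLabelled} applies, but this follows automatically because $d_{\FMP[0]}$ is a metric on $\FMP[0]$ and the right-hand side of the fixed-point equation is bounded by $d_{\FMP[0]}$ by construction. Thus there is no real obstacle, and the proof is a direct adaptation of the one given for Lemma~\ref{lem:bisimMetric}.
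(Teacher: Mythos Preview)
Your proposal is correct and takes essentially the same approach as the paper, which simply says ``Similar to Lemma~\ref{lem:bisimMetric}.'' You have faithfully spelled out that adaptation: use that the isomorphism $\tau_0$ is an isometry together with the pointwise-sup metric on the $\underline{A}$-power, apply Kantorovich--Rubinstein duality, and conclude via the uniqueness clause of Proposition~\ref{prop:fixpointbisimdistLabelled}.
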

\begin{proof}
Similar to Lemma~\ref{lem:bisimMetric}.
\end{proof}

\paragraph*{On Complete Metric Spaces}

Since all the quantitative theories considered are continuous, we can replicate 
the same steps also while interpreting the theory $\U_{\textbf{LMP}}$ over complete metric spaces, obtaining 
the monad
\begin{equation*}
\CC T_{\U_{\textbf{LMP}}} \iso \mu y. \Delta(c \cdot y + 1 + -)^{\underline{A}} \,.
\end{equation*}

By following similar arguments to Section~\ref{sec:presentationMP}, one can 
prove that the the functorial equation $\FMP[X] \iso \Delta(c \cdot \FMP[X] + 1 + X)^{\underline{A}}$ has a unique solution.
By applying the monad above on $X = 0$ we recover the carrier of the final $(\Delta(c \cdot - + 1))^{\underline{A}}\,$-coalgebra, equipped with $c$-discounted probabilistic bisimilarity metric. 

%%%%%%%%%%%%%%%%%%%%%%%%%%%%%%%%%%%%%%%%%%%%%%
\subsection{The Algebras of Mealy Machines}
\label{sec:mealyMachines}

In a similar spirit to the axiomatization of labelled Markov processes, here we
provide a quantitative axiomatization of Mealy machines with their (coalgebraically defined) discounted bisimilarity metric.

\subsubsection{Mealy machines over Metric Spaces}
\label{sec:mealyAutomata&bisim}

Informally, \emph{Mealy machines} are deterministic automata with outputs. 
Formally, they are tuples $(X, I, \Lambda, t, o)$ consisting of a set of states $X$, a 
finite set $I = \ens{i_1, \dots, i_n}$ of \emph{inputs}, a set $\Lambda$ of 
\emph{outputs}, a \emph{transition function} $t \colon X \times I \to X$, and an \emph{output function} $o \colon X \times I \to \Lambda$.

These structures are clearly $\Set$ coalgebras for the functor $(\Lambda \times -)^I$~\cite{Rutten06,SilvaBBR13}. In order to give a coalgebraic definition of a 
bisimilarity metric for Mealy machines, we will interpret them as coalgebras $(X,\tau)$ on categories of metric spaces. Specifically
\begin{align*}
  \tau \colon X &\longrightarrow (c \cdot X \mprod \Lambda)^{\underline{I}}  \quad 
  \text{in $\Met$/$\CMet$} \,,
\end{align*}
where  $0 < c < 1$ and we assume $\Lambda$ to be a complete metric space of outputs with a monoid structure.
The rescaling functor $(c \cdot -)$ is used to obtain a discounted bisimilarity 
distance. When we want to emphasize the r\^{o}le of the discount factor
we call these coalgebras $c$-Mealy machines.

Similarly to \cite{BreugelHMW07}, we define the the \emph{$c$-discounted bisimilarity pseudometric} on a $c$-Mealy machine $(X, \tau)$ as the pseudometric 
induced by the unique homomorphism to the final coalgebra. That is,
\begin{equation*}
  \dist(x,x') = d_{Z}(h(x), h(x')) \,,
\end{equation*}
where $h \colon X \to {Z}$ is the unique homomorphism to the 
final $c$-Mealy machine $(Z,\omega)$.

A concrete characterization of the final $c$-Mealy machine can be obtained as in~\cite{Rutten06}. We don't repeat the argument here as it is not necessary for our technical development, which requires only its existence. 

This distance has a characterization as the least fixed point of a 
monotone function on a complete lattice of $[0,\infty]$-valued pseudometrics.
\begin{prop} \label{prop:fixpointbisimdistMealy}
The $c$-discounted bisimilarity pseudometric $\dist$ on $(X, \tau)$ 
is the \emph{unique} fixed point
of the following operator on the complete lattice of extended pseudometrics $d$ on $X$
with point-wise order $\sqsubseteq$, such that $d \sqsubseteq d_X$,
\begin{equation*}
  \Psi^c(d)(x,x') = \sup_{i \in I} \left( c \cdot d(x_i, x'_i) + d_\Lambda(\lambda_i, \lambda'_i)  \right)  \,,
\end{equation*}
where $\tau(i)(x) = (x_i, \lambda_i)$ and $\tau(i)(x') = (x'_i, \lambda'_i)$.
\end{prop}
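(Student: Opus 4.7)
The plan is to follow exactly the pattern used in the proof of Proposition~\ref{prop:fixpointbisimdist}: first verify that the coalgebraically-defined bisimilarity pseudometric $\dist$ is a fixed point of $\Psi^c$, then appeal to the Banach fixed point theorem to obtain uniqueness.

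For the fixed-point property, I would proceed as follows. By Lambek's lemma, the final $c$-Mealy machine $\omega \colon Z \to (c \cdot Z \mprod \Lambda)^{\underline{I}}$ is an isomorphism, and isomorphisms in $\Met$/$\CMet$ are isometries. Let $h \colon X \to Z$ be the unique coalgebra homomorphism from $(X,\tau)$, so $(c \cdot h \mprod \mathit{id}_\Lambda)^{\underline{I}} \circ \tau = \omega \circ h$; concretely, if $\tau(x)(i) = (x_i,\lambda_i)$ then $\omega(h(x))(i) = (h(x_i),\lambda_i)$. Using the explicit metric on $(c \cdot Z \mprod \Lambda)^{\underline{I}} = [\underline{I}, c \cdot Z \mprod \Lambda]$ (point-wise supremum, with discrete $\underline{I}$) together with $d_{c \cdot Z \mprod \Lambda}((z,\lambda),(z',\lambda')) = c \cdot d_Z(z,z') + d_\Lambda(\lambda,\lambda')$, I would compute
\begin{align*}
\dist(x,x')
&= d_Z(h(x),h(x')) = d_{(c \cdot Z \mprod \Lambda)^{\underline{I}}}(\omega h(x), \omega h(x')) \\
&= \sup_{i \in I}\bigl( c \cdot d_Z(h(x_i),h(x'_i)) + d_\Lambda(\lambda_i,\lambda'_i) \bigr) \\
&= \sup_{i \in I}\bigl( c \cdot \dist(x_i,x'_i) + d_\Lambda(\lambda_i,\lambda'_i) \bigr) = \Psi^c(\dist)(x,x'),
\end{align*}
showing that $\dist$ is a fixed point of $\Psi^c$, and moreover that $\dist \sqsubseteq d_X$ since $h$ is non-expansive.

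For uniqueness, I would argue exactly as in Proposition~\ref{prop:fixpointbisimdist}. The set of bounded extended real-valued functions on $X \times X$ (which contains the sublattice of extended pseudometrics $d$ with $d \sqsubseteq d_X$) is a complete Banach space under the sup-norm $\|f\| = \sup_{x,x'}|f(x,x')|$. The operator $\Psi^c$ is $c$-contractive on this space: given $d, d'$ in the relevant sublattice, the output contribution $d_\Lambda(\lambda_i,\lambda'_i)$ cancels in the difference $\Psi^c(d)(x,x') - \Psi^c(d')(x,x')$, and a standard $|\sup - \sup|$ estimate gives
\begin{equation*}
\bigl| \Psi^c(d)(x,x') - \Psi^c(d')(x,x') \bigr| \leq \sup_{i \in I} c \cdot \bigl| d(x_i,x'_i) - d'(x_i,x'_i)\bigr| \leq c \cdot \|d - d'\|.
\end{equation*}
The Banach fixed point theorem then guarantees uniqueness of the fixed point in this complete space.

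The main obstacle is bookkeeping around $\infty$-valued distances: one has to work in the sublattice of pseudometrics dominated by $d_X$ so that the sup-norm argument makes sense (this is precisely the assumption $d \sqsubseteq d_X$ in the statement), and one needs to verify that $\Psi^c$ restricts to an endomap of this sublattice — which follows because $\Psi^c(d_X) \sqsubseteq d_X$ by non-expansiveness of $\tau$ (the component maps $(x \mapsto x_i)$ are $c^{-1}$-Lipschitz into $c \cdot X$, so $c \cdot d_X(x_i,x'_i) + d_\Lambda(\lambda_i,\lambda'_i) \leq d_X(x,x')$). Everything else is a direct transcription of the argument already given for (unlabelled) Markov processes.
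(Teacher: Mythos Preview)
Your proposal is correct and follows essentially the same strategy as the paper: both rely on the Banach fixed point theorem applied to the $c$-contractive operator $\Psi^c$ in sup-norm. The only minor variation is that you verify the fixed-point property of $\dist$ directly via Lambek's lemma and the isometry of the final coalgebra structure map, whereas the paper instead identifies $\dist$ with the iterative limit $\lim_n (\Psi^c)^n(\mathbf{0})$ and checks separately that this limit is a pseudometric (using monotonicity of $\Psi^c$ and closure of pseudometrics under suprema); your route is arguably cleaner since the pseudometric property is immediate from the coalgebraic definition.
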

\begin{proof}
The uniqueness of the fixed point follows by Banach fixed point theorem. Indeed, the set of extended real valued functions on $X \times X$
(which is a superset of the set of extended pseudometrics on $X$) can be turned into a complete Banach space 
by means of the sup-norm $|| f || = \sup_{x,x'} |f(x,x')|$ and $\Psi^c$ is a $c$-contractive operator on it. Moreover, $\dist = \lim_{n\to \infty} (\Psi^c)^n(\mathbf{0})$,
where $\mathbf{0}$ is the constantly $0$ pseudometric. Since $\Psi^c$ is a monotone operator, $(\Psi^c)^n(\mathbf{0}) \sqsubseteq (\Psi^c)^{n+1}(\mathbf{0})$. 
Moreover, $\Psi^c$ maps pseudometrics into pseudometrics. As pseudometrics are closed under point-wise suprema, $\dist$ is a pseudometric.
\end{proof}

\subsubsection{Quantitative Algebraic Presentation}
\label{sec:presentationMealy}

Next we provide a quantitative equational theory that axiomatizes (the monad of) 
Mealy machines with $c$-discounted bisimilarity metric.
As we did already in the previous sections we will do this by combining simpler theories via of sum and tensor. The basic theories we use are:
\begin{enumerate}
\item 
\emph{The quantitative theory $\R[I]$ of reading computations}
will be used to axiomatize the reaction to the choice of an input symbol $i \in I$
(\cf~Section~\ref{sec:readeralgebras});

\item 
\emph{The quantitative theory $\Wr[\Lambda]$ of writing computations} will be used 
to describe the action of outputing a symbol $\alpha \in \Lambda$. (\cf~Section~\ref{sec:writeralgebras}). In our axiomatic interpretation, we assume 
$\Lambda$ to have a monoid structure and outputs to be recorded in an ``output tape'' by means of writing operations.

\item
\emph{The quantitative theory of contractive operators} $\O{\Sigma_{\diamond}}$
with signature $\Sigma_{\diamond} = \ens{\diamond \colon \tupl{1,c} }$ will be used to axiomatize the transition to a next state with discounting factor $0 < c < 1$ (\cf\ Section \ref{sec:controperators}).
\end{enumerate}

\medskip
Formally, for a finite set of inputs $I = \ens{i_1, \dots, i_n}$ and complete metric space $\Lambda$ of outputs with monoid structure $(\Lambda, 0, *)$, we define the quantitative theory of Mealy machines as the following combination of quantitative theories,
\begin{equation*}
  \U_{\textbf{MM}} = (\R[I] \otimes \Wr[\Lambda]) + \O{\Sigma_{\diamond}} \,.
\end{equation*}
with signature $\Sigma_{\textbf{MM}} = \Sigma_{\R[I]} \cup \Sigma_{\Wr[\Lambda]}
\cup \Sigma_{\diamond}$ given 
as the disjoint union of those from its component theories. Explicitly, 
\begin{equation*}
  \Sigma_{\textbf{MM}}  = \{ \rd \colon |I| \} \cup \{ \wrt{\alpha} \colon 1 \mid \alpha \in \Lambda  \} \cup \{ \diamond \colon \tupl{1, c} \}
\end{equation*}
and the theory $\U_{\textbf{MM}}$ is given by the following axioms 
\begin{align*}
(\Idem)\,
& \vdash x =_0 \rd(x, \dots, x) \,, 
\\
(\Diag)\,
& \vdash \rd(x_{1,1}, \dots, x_{n,n} ) =_0 
	\rd( \rd(x_{1,1}, \dots, x_{1,n}), \dots, \rd(x_{n,1}, \dots, x_{n,n}) )
\\
(\Zero) \,
& \vdash x =_0 \wrt{0}(x) \,, 
\\
(\Mult) \,
& \vdash \wrt{\alpha}( \wrt{\alpha'}(x) ) =_0  \wrt{\alpha * \alpha'}(x) \,, 
\\
(\Diff) \,
& \{x =_\e x'\} \vdash \wrt{\alpha}(x) =_\delta  \wrt{\alpha'}(x')\,, 
\text{ for $\delta \geq d_\Lambda(\alpha, \alpha') + \e $} \,,
\\
(\Com)\,
& \vdash \rd(\wrt{\alpha}(x_1), \dots, \wrt{\alpha}(x_n)) =_0 
\wrt{\alpha}( \rd(x_1, \dots, x_n) ) \,, 
\\
(\Lip{\diamond})\,
& \{x =_\e y \} \vdash \diamond(x) =_{\delta} \diamond(y) \,,
\text{ for $\delta \geq c \e$} \,.
\end{align*}

Intuitively, $\Sigma_{\textbf{MM}}$-terms (modulo $=_0$ provability) should be interpreted as equivalence classes of behaviours of Mealy machines up-to bisimilarity. The term $\rd(t_1, \dots, t_n)$ is used to express that $t_k$ is the selected behaviour after reading input $i_k \in I$; $\wrt{\alpha}(t)$ is the term expressing behaviour of writing the output $\alpha \in \Lambda$ in the output tape; and $\diamond(t)$ expresses the ability of taking a transition to the behaviour $t$.

\paragraph*{On Metric Spaces}
\label{sec:MealyMachinesMet}

We characterize the monad $T_{\U_{\textbf{MM}}}$ on $\Met$  in steps, by explaining
the contribution of the different theories in $\U_{\textbf{MM}}$.

(Step 1) As shown in Section~\ref{sec:readeralgebras}, $T_{\R[I]}$
is the \emph{reader monad}
\begin{equation*}
T_{\R[I]} \iso (-)^{\underline{I}} \,.
\end{equation*}
Thus, $\R[I]$ axiomatizes the space of functions with domain the set $I$.

(Step 2) By Theorem~\ref{th:tensorofsimpleeqmonad} and Corollary~\ref{cor:writerTheoryTransformer} (equivalently, Corollary~\ref{cor:readertransfomer}), we further get the monad isomorphisms
\begin{equation*}
T_{\R[I] \otimes \Wr[\Lambda]} \iso (-)^{\underline{I}} \otimes (- \mprod \Lambda) \iso 
(- \mprod \Lambda)^{\underline{I}} \,,
\end{equation*}
saying that tensoring with the theory $\Wr[\Lambda]$ of writing computations corresponds to axiomatically adding the capability of writing an output symbol after
reading an input action.

(Step 3) By summing the above theories with the theory $\O{\Sigma_\diamond}$,
by Corollary~\ref{cor:ResumptionTransfomerMet}, we get that the free monad on 
$\U_{\textbf{MM}}$ is 
\begin{equation*}
T_{\U_{\textbf{MM}}} 
\iso \mu y.  T_{\R[I] \otimes \Wr[\Lambda]}(c \cdot y + - ) 
\iso \mu y.  ((c \cdot y  + -) \mprod \Lambda)^{\underline{I}} \,.
\end{equation*}

Explicitly, the free monad on $\U_{\textbf{MM}}$
assigns to an arbitrary metric space $X \in \Met$ the \emph{initial solution} 
of the following functorial equation in $\Met$
\begin{equation*}
  \MM[X] \iso (c \cdot \MM[X] + X) \mprod \Lambda)^{\underline{I}} \,.
\end{equation*}
In particular, when $X = 0$ is the empty metric space
the above corresponds to the isomorphism of the initial $(c \cdot - \mprod \Lambda)^{\underline{I}}\,$-algebra.
From this we recover a $(c \cdot - \mprod \Lambda)^{\underline{I}}\,$-coalgebra
structure $\tau_0 \colon \MM[0] \to (c \cdot \MM[0] \mprod \Lambda)^{\underline{I}} $ on $\MM[0]$, whence a $c$-Mealy machine.

\begin{lem} \label{lem:bisimMetricMealy}
$d_{\MM[0]}$ is the $c$-discounted probabilistic bisimilarity metric on $(\MM[0], \tau_0)$.
\end{lem}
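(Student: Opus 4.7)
The plan is to mirror the proof of Lemma~\ref{lem:bisimMetric}, exploiting the fact that $\MM[0]$ is defined via the initial-algebra isomorphism
\[
\MM[0] \;\iso\; (c \cdot \MM[0] \mprod \Lambda)^{\underline{I}},
\]
and that isomorphisms in $\Met$ are isometries. The associated coalgebra $\tau_0$ is (up to iso) the inverse of this algebra structure, so by isometry
\[
d_{\MM[0]}(x,x') \;=\; d_{(c \cdot \MM[0] \mprod \Lambda)^{\underline{I}}}(\tau_0(x), \tau_0(x')).
\]

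Next, I would unfold the right-hand side using the concrete descriptions of the metric on powers, on the monoidal product $\mprod$, and on the rescaling functor $c\cdot{-}$ (\cf\ Example~\ref{ex:monoidalclosed}\eqref{MetClosedStructure} and Section~\ref{sec:controperators}). Since $\underline{I}$ is discretely metrised, $[\underline{I},-]$ restricted to non-expansive maps coincides with the set-theoretic function space with the pointwise supremum metric, hence
\[
d_{(c\cdot \MM[0] \mprod \Lambda)^{\underline{I}}}(\tau_0(x),\tau_0(x'))
\;=\; \sup_{i\in I} \bigl( c \cdot d_{\MM[0]}(x_i, x'_i) + d_{\Lambda}(\lambda_i, \lambda'_i) \bigr),
\]
writing $\tau_0(x)(i) = (x_i,\lambda_i)$ and $\tau_0(x')(i) = (x'_i,\lambda'_i)$. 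Combining the two displays above shows that $d_{\MM[0]}$ is a fixed point of the operator $\Psi^c$ of Proposition~\ref{prop:fixpointbisimdistMealy}.

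Finally, since $d_{\MM[0]} \sqsubseteq d_{\MM[0]}$ trivially, $d_{\MM[0]}$ lies in the lattice of extended pseudometrics bounded above by $d_{\MM[0]}$, where Proposition~\ref{prop:fixpointbisimdistMealy} guarantees the uniqueness of the fixed point. Hence $d_{\MM[0]} = \dist$, which is the $c$-discounted bisimilarity metric on the final $c$-Mealy machine carried by $\MM[0]$. I expect no serious obstacle: the only mild care point is to confirm that $(c\cdot\MM[0]\mprod\Lambda)^{\underline I}$ really carries the pointwise supremum of the $\mprod$-additive metric scaled by $c$ on the first coordinate, which is routine from the definitions recalled in Section~\ref{sec:ReaderWriterMonads} and Section~\ref{sec:controperators}. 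The final coalgebra property of $(\MM[0],\tau_0)$ needed to identify $d_{\MM[0]}$ with the categorical bisimilarity distance follows (as in Theorem~\ref{th:finalDeltacoalgebra}) from local contractivity of $(c\cdot -\mprod \Lambda)^{\underline I}$ on $\CMet$ via Theorem~\ref{th:uniquefixedpoint}.
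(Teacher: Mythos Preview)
Your proposal is correct and follows essentially the same route as the paper, which simply says ``Similar to Lemma~\ref{lem:bisimMetric}'': use that the initial-algebra isomorphism is an isometry, unfold the metric on $(c\cdot\MM[0]\mprod\Lambda)^{\underline{I}}$ to recognise $d_{\MM[0]}$ as a fixed point of $\Psi^c$, and conclude by the uniqueness in Proposition~\ref{prop:fixpointbisimdistMealy}.

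One small correction: your final sentence about needing the ``final coalgebra property of $(\MM[0],\tau_0)$'' via Theorem~\ref{th:uniquefixedpoint} is unnecessary and slightly misplaced. This lemma lives in the $\Met$ subsection, where $(\MM[0],\tau_0)$ arises from the \emph{initial} algebra and is not claimed to be final; Theorem~\ref{th:uniquefixedpoint} is a $\CMet$ result. You do not need finality here: Proposition~\ref{prop:fixpointbisimdistMealy} already tells you that $\dist$ on the coalgebra $(\MM[0],\tau_0)$ is the unique fixed point of $\Psi^c$ in the relevant lattice, and since you have shown $d_{\MM[0]}$ is such a fixed point, you are done.
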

\begin{proof}
Similar to Lemma~\ref{lem:bisimMetric}.
\end{proof}

\paragraph*{On Complete Metric Spaces}

As the quantitative theories considered are continuous, we can replicate 
the same steps also while interpreting the theory $\U_{\textbf{MM}}$ over complete metric spaces, obtaining 
the monad
\begin{equation*}
\CC T_{\U_{\textbf{MM}}} 
\iso \mu y.  ((c \cdot y  + -) \mprod \Lambda)^{\underline{I}} \,.
\end{equation*}

By following similar arguments to Section~\ref{sec:presentationMP}, one can 
prove that the the functorial equation $\MM[X] \iso (c \cdot \MM[X] + X) \mprod \Lambda)^{\underline{I}}$ has a unique solution in $\CMet$.
Hence, by applying the monad above on $X = 0$ we recover the carrier of the final 
$(c \cdot - \mprod \Lambda)^{\underline{I}}\,$-coalgebra, equipped with $c$-discounted probabilistic bisimilarity metric.

%%%%%%%%%%%%%%%%%%%%%%%%%%%%%%%%%%%%%%%%%%%%%%
\subsection{The Algebras of Markov Decision Processes with Rewards}
\label{sec:markovdecitionprocesses}
In this section we provide a quantitative equational axiomatization of Markov decision processes with rewards and their (coalgebraically defined) discounted bisimilarity metric. The axiomatization is obtained by extending that of labelled Markov processes
from Section~\ref{sec:labelmarkovprocesses} by adding the ability to record the rewards associated with a specific probabilistic decision.

\subsubsection{Markov Decision Processes over Metric Spaces}
\label{sec:mdp&bisim}

Informally, Markov decision processes are labelled Markov processes where
each choice of action label (decision) is associated with a probabilistic reward.
Formally, as in~\cite{BreugelHMW07}, we regard them as coalgebras on the category of extended metric spaces. In detail, we consider two variants of Markov decision processes:
\begin{align*}
  X &\longrightarrow \Pi(c \cdot X \mprod \reals )^{\underline{A}}  \quad \text{in $\Met$} \,, \\
  X &\longrightarrow \Delta(c \cdot X \mprod \reals)^{\underline{A}}  \quad \text{in $\CMet$} \,,
\end{align*}
where $\Pi$ and $\Delta$ are the functors from Section~\ref{sec:probchoice}.
For convenience, the rescaling functor $(c \cdot - )$ is used to account of a 
discount factor on the bisimilarity metric and the functor $(- \mprod \reals)$
is to give a metric interpretation to the combination with the reward structure.

\begin{rem}
In~\cite{Puterman2005} a Markov decision process is defined as a tuple $(S, p(\cdot | s,a), r(s, a))$ with a Markov kernel $p \colon S \times A \to \Delta(S)$ and randomised reward function $r \colon S \times A \to \Delta(\reals)$.
Our coalgebraic representation is the natural generalisation over metric spaces, where the 
randomness of the Markov kernel and reward function is combined as a probability measure on $(c \cdot S \mprod \reals)$, by regarding $\reals$ and $S$ as extended metric spaces.
\end{rem}

Similarly to Section~\ref{sec:mp&bisim}, one can show that the final coalgebra for 
the functors $\Pi(c \cdot - \mprod \reals )^{\underline{A}}$ in $\Met$ and 
$\Delta(c \cdot - \mprod \reals)^{\underline{A}}$ in $\CMet$ exists, thus we define 
the $c$-discounted probabilistic bisimilarity distance on a Markov decision process 
$(X, \tau)$ as the pseudometric 
\begin{equation*}
  \dist(x,x') = d_{Z}(h(x), h(x')) 
\end{equation*}
induced by the unique homomorphism $h \colon X \to {Z}$ to the final coalgebra.

Also in this case, the probabilistic bisimilarity distance can be given a fixed point
characterization.
\begin{prop}
The $c$-discounted bisimilarity pseudometric $\dist$ on $(X, \tau)$ 
is the \emph{unique} fixed point
of the following operator on the complete lattice of extended pseudometrics $d$ on $X$
with point-wise order $\sqsubseteq$, such that $d \sqsubseteq d_X$,
\begin{equation*}
  \Psi^c(d)(x,x') = \sup_{a \in A} 
  	\sup_{f} \left| \lebint{f}{\tau(x)(a)} - \lebint{f}{\tau(x')(a)} \right| \,,
\end{equation*}
with $f$ ranging over non-expansive positive $1$-bounded
real valued functions on $c \cdot X \mprod \reals$.
\end{prop}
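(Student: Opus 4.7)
The plan is to follow the same two-part strategy already used for Propositions~\ref{prop:fixpointbisimdist} and \ref{prop:fixpointbisimdistLabelled}: first establish uniqueness of the fixed point via Banach's fixed point theorem, and then show that $\dist$ itself satisfies $\Psi^c(\dist) = \dist$ via Kantorovich--Rubinstein duality applied label-wise.

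For uniqueness, I would embed the complete lattice of extended pseudometrics bounded by $d_X$ into the complete Banach space of bounded $[0,\infty]$-valued functions on $X \times X$ equipped with the sup-norm $\|f\| = \sup_{x,x'} |f(x,x')|$. The operator $\Psi^c$ is well-defined on this lattice (the outer supremum over the finite set $A$ of Kantorovich-type expressions preserves the pseudometric axioms and the bound by $d_X$, using that transition functions are $c$-Lipschitz into $\Pi(c \cdot X \mprod \reals)$ or $\Delta(c \cdot X \mprod \reals)$). The key observation is that $\Psi^c$ is $c$-contractive with respect to the sup-norm: for any two candidate pseudometrics $d, d'$, Kantorovich--Rubinstein duality gives
\begin{equation*}
|\Psi^c(d)(x,x') - \Psi^c(d')(x,x')| \leq c \cdot \|d - d'\|,
\end{equation*}
because the supremum is taken over non-expansive $1$-bounded test functions on the rescaled space $c \cdot X \mprod \reals$, whose metric structure scales distances on the $X$-component by $c$ while leaving the reward component $\reals$ untouched. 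By Banach's fixed point theorem, $\Psi^c$ admits a unique fixed point.

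For the fixed-point property of $\dist$, I would use that the unique morphism $h \colon X \to Z$ to the final coalgebra is a coalgebra homomorphism, hence the square relating $\tau$ to the final structure $\omega$ commutes, and isomorphisms in $\Met$/$\CMet$ are isometries. Pulling back the metric on $Z$ along $h$ and exploiting the isomorphism $Z \iso \Pi(c \cdot Z \mprod \reals)^{\underline{A}}$ (resp.\ $\Delta$), the distance $\dist(x,x')$ equals the pointwise supremum over $a \in A$ of Kantorovich distances between the pushforwards of $\tau(x)(a), \tau(x')(a)$ along $h \mprod \mathrm{id}_{\reals}$, evaluated in the rescaled metric on $c \cdot Z \mprod \reals$. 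Since $h$ is non-expansive, precomposing test functions with $h \mprod \mathrm{id}$ gives a bijection (or at least a suitable correspondence) between non-expansive $1$-bounded test functions on $c \cdot Z \mprod \reals$ and those on $c \cdot X \mprod \reals$ restricted to the image, so Kantorovich--Rubinstein duality yields exactly the defining expression of $\Psi^c(\dist)(x,x')$.

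The main obstacle I anticipate is the handling of the reward component $\reals$, which is \emph{not} rescaled by $c$: previous arguments in the paper treat a state space $c \cdot X + 1$ where the ``unit'' or ``termination'' component is discrete/constant and thus contributes nothing to the Lipschitz estimate, whereas here $\reals$ carries its own nontrivial, unbounded metric. I would need to verify carefully that the metric on $c \cdot X \mprod \reals = (X, c\, d_X) \mprod (\reals, d_\reals)$, whose distance function is the sum $c \cdot d_X(x,x') + d_\reals(r, r')$, still yields $c$-contractiveness of $\Psi^c$ on the $d$-argument (it does, because only the $X$-component of the test function's Lipschitz constraint couples to $d$, while the $\reals$-component contributes a $d$-independent term that cancels in the difference $\|\Psi^c(d) - \Psi^c(d')\|$). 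Restricting to non-expansive $1$-bounded test functions is essential to make the Kantorovich integrals finite despite $\reals$ being unbounded, and this restriction is compatible with Kantorovich--Rubinstein duality in the extended-metric setting used throughout the paper.
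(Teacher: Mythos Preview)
Your proposal is correct and matches the paper's approach: the paper does not give an explicit proof of this proposition, but its argument is clearly intended to follow the template of Propositions~\ref{prop:fixpointbisimdist} and \ref{prop:fixpointbisimdistMealy}, namely Banach's fixed point theorem for uniqueness (via $c$-contractiveness of $\Psi^c$ in the sup-norm) together with the Kantorovich--Rubinstein duality, applied label-wise as in Proposition~\ref{prop:fixpointbisimdistLabelled}. Your additional discussion of why the unbounded, non-rescaled $\reals$-component does not spoil $c$-contractiveness is a useful elaboration that the paper leaves implicit.
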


%%%%%%%%%%%%%%%%%%%%%%%%%%%%%%%%%%%%%%%%

\subsubsection{Quantitative Algebraic Presentation}
\label{sec:presentationMDP}

We provide a quantitative axiomatization
of Markov decision processes with rewards equipped with discounted bisimilarity metric.
As the construction is similar to Section~\ref{sec:labelmarkovprocesses}, we avoid repeating the details of 
each step of the monad characterization. 

%We do this by extending what we have seen in Section~\ref{sec:labelmarkovprocesses} 
%with the theory of writing computations to encode rewards.

\medskip
Let $A = \ens{a_1, \dots, a_n}$ be a finite set of actions and $(\reals, +, 0)$ be the standard monoid structure on the reals. We define the quantitative theory $\U_{\textbf{MDP}}$ of Markov decision processes with real-valued rewards as the following combination of quantitative theories,
\begin{equation*}
  \U_{\textbf{MDP}} = ((\B \otimes \Wr[\reals]) \otimes \R[A]) + \O{\Sigma_{\diamond}} \,.
\end{equation*}
with signature $\Sigma_{\textbf{MDP}} = \Sigma_{\B} \cup \Sigma_{\Wr[\reals]} \cup \Sigma_{\R[A]} \cup \Sigma_{\diamond}$ given 
as the disjoint union of those from its component theories. 

\paragraph*{On Metric Spaces and Complete Metric Spaces}
\label{sec:markovdecisionprocesses}

Similarly to what we have done in for labelled Markov processes, 
we relate Markov decision processes and their $c$-discounted
probabilistic bisimilarity pseudometric with the free monads on the 
theory $\U_{\textbf{MDP}}$ in $\Met$ and $\CMet$.  

\smallskip
The only step that changes in the characterization of the monad $T_{\U_{\textbf{MDP}}}$ in $\Met$, 
regards the combination of theories $\B \otimes \Wr[\reals]$, which is dealt using Corollary~\ref{cor:writerTheoryTransformer}. Thus, similarly to Section~\ref{sec:labelmarkovprocesses} we get 
\begin{equation*}
T_{\U_{\textbf{MDP}}} = 
T_{((\B \otimes \Wr[\reals]) \otimes \R[A]) + \O{\Sigma_{\diamond}}} \iso 
\mu y.  \Pi((c \cdot y + - ) \mprod \reals)^{\underline{A}} \,.
\end{equation*}
The metric on the initial solution for the functorial fixed point definition 
corresponds to the $c$-discounted probabilistic bisimilarity (pseudo)metric on
its coalgebra structure.

\smallskip
Similar considerations apply also when interpreting the theories in the category $\CMet$
of complete metric spaces, as the argument follows without issues because $\reals$ 
a complete metric space.  Thus we obtain the following 
characterization for the monad:
\begin{equation*}
\CC T_{\U_{\textbf{LMP}}} \iso 
\mu y.  \Delta((c \cdot y + - ) \mprod \reals)^{\underline{A}} \,.
\end{equation*}
As the fixed point solution in $\CMet$ is unique, $\CC T_{\U_{\textbf{LMP}}}0$ is an algebraic characterization of the final $\Delta((c \cdot - ) \mprod \reals)^{\underline{A}}\,$-coalgebra with probabilistic bisimilarity metric.

%%%%%%%%%%%%%%%%%%%%%%%%%%%%%%%%%%%%%%%%
\section{Conclusions}
\label{sec:concl}

We studied the disjoint and commutative combinations of quantitative effects, respectively
as the sum and tensor of their quantitative equational theories.  The key results are
Theorems~\ref{th:sumofsimpleeqmonad} and \ref{th:tensorofsimpleeqmonad}, asserting that
the sum and tensor of two quantitative theories corresponds to the categorical sum and
tensor, respectively, of their free monads.  In addition to these general results, we
provide quantitative analogues Moggi's monad transformers for exceptions, resumption,
reader, and writer.

We illustrate the applicability of our theoretical development with the axiomatizations
four coalgebraic bisimilarity metrics: for Markov processes, labeled Markov processes,
Mealy machines, and Markov decision processes. Apart from the intrinsic interest in their
quantitative equational presentation as effects, what is particularly pleasant is the
systematic compositional way with which one can obtain quantitative axiomatizations of
different variants of coalgebraic structures by just combining theories as new basic
ingredients.

An example that escapes our compositional treatment via sum and tensor is the
combination of probabilities and non-determinism as illustrated
in~\cite{MioV20}.  A possible future work in this direction is to extend the
combination of theories with another operator: the distributive tensor (see~\cite[Section~6]{HylandP06}).
Following an intuition similar to Cheng~\cite{Cheng_2020}, we claim that these correspond in a suitable way to Garner's weak distributive law~\cite{Garner20}.  
Our claim seems promising in the light of the work~\cite{GoyP20,BonchiS20} 
which consider equational axiomatizations combining probabilities and non-determinism.

\section*{Acknowledgments}
  \noindent Giorgio Bacci wishes to acknowledge fruitful discussions with Dexter Kozen and Ugo Dal Lago during a workshop at the Bellairs Research Institute in Barbados.
  Radu Mardare was supported by the EPSRC-UKRI grant EP/Y000455/1 ``A correct-by-construction approach to approximate computation''.
  Prakash Panangaden’s research was supported by a grant from NSERC.

%% Bibliography
\bibliographystyle{alphaurl}
\bibliography{biblio}

\appendix

\section{Limits and Colimits of Extended Metric Spaces} 
\label{app:exmetric}

Limits and colimits in the category $\Met$ of extended metric spaces and non-expansive maps are defined
similarly to those in $\Set$, at least for the part of their underlying set. Some care, though, should to taken 
in the definition of the distance function.

From an abstract point of view, the reason is that the forgetful functor $U \colon \Met \to \Set$, sending
an extended metric space $(X,d_X)$ to its underlying set $X$, is faithful (\ie, homsets are mapped injectively), and has as left adjoint the functor $\mathit{Disc} \colon \Set \to \Met$ that assigns to 
each set $X$ the \emph{discrete} extended metric space $\underline{X}$ giving distance $\infty$ to every distinct pair of 
elements. Therefore $U$ preserves all limits which may exist in $\Met$ (this is \emph{why} the underlying
set of product spaces is the Cartesian product of their underlying sets).

Moreover, $\Met$ is a full reflective subcategory of $\PMet$, the category of extended pseudometric spaces (\ie, a relaxation of extended metric spaces where different elements $x \neq y$ can be assigned distance $d(x,y) = 0$),
with reflection mapping a pseudometric space $(X,d_X)$ into its quotient modulo the equivalence 
$x \cong y$ iff $d_X(x,y) = 0$. Since $\PMet$ is cocomplete with colimits constructed similarly to $\Set$ also $\Met$ is cocomplete and its colimits are just simple quotiented versions of those in $\Set$.

Although the abstract argument above is enough to prove completeness and cocompleteness of $\Met$, in the proof below we give a direct concrete construction of its limits and colimits.
\begin{prop} \label{prop:emetbicomplete}
$\Met$ is a complete and cocomplete category.
\end{prop}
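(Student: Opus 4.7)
The plan is to exhibit both limits and colimits concretely by building them on top of the corresponding constructions in $\Set$, equipped with a suitable extended distance. Since it is standard that having all small products and equalizers yields completeness, and dually for cocompleteness, I only need to check these four constructions.

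For products of a family $\{(X_i, d_i)\}_{i \in I}$, I would take the underlying set $\prod_i X_i$ and equip it with the supremum distance
\begin{equation*}
  d\bigl((x_i)_i, (y_i)_i\bigr) = \sup_{i \in I} d_i(x_i, y_i) \in [0,\infty] \,.
\end{equation*}
The metric axioms and non-expansiveness of the projections $\pi_j$ are immediate, and the universal property reduces to the analogous fact in $\Set$ once we observe that if each $f_i \colon Z \to X_i$ is non-expansive then the tuple map $\tupl{f_i}_i$ is non-expansive by the sup definition. For equalizers of $f, g \colon X \to Y$, I would take the subset $E = \{x \in X \mid f(x) = g(x)\}$ with the distance inherited from $X$; the inclusion $E \hookrightarrow X$ is obviously non-expansive, and the universal factorization of any non-expansive $h \colon Z \to X$ with $f h = g h$ through $E$ is the same as in $\Set$ and is automatically non-expansive. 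This gives completeness.

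For coproducts of $\{(X_i, d_i)\}_{i \in I}$, I would take the disjoint union $\coprod_i X_i$ with the distance that agrees with $d_i$ on the $i$-th summand and is $\infty$ across different summands. The injections are isometric embeddings (hence non-expansive) and any cocone $\{h_i \colon X_i \to Z\}$ induces the expected copairing, which is non-expansive on each summand and trivially non-expansive between summands because the source distance is $\infty$. The real work lies in coequalizers.

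For the coequalizer of $f, g \colon X \to Y$, I would proceed as follows. Let $R \subseteq Y \times Y$ be the symmetric relation $\{(f(x), g(x)), (g(x), f(x)) \mid x \in X\}$, and define a pseudometric $\hat d$ on $Y$ as the ``shortest path'' distance
\begin{equation*}
  \hat d(y, y') = \inf \sum_{k=0}^{n-1} d_Y(y_k, y'_k) \,,
\end{equation*}
where the infimum ranges over all finite zig-zags $y = y_0, y'_0, y_1, y'_1, \dots, y_{n-1}, y'_{n-1} = y'$ with $(y'_k, y_{k+1}) \in R$ for $0 \le k < n-1$. I would verify that $\hat d$ is a pseudometric with $\hat d \le d_Y$, that it is the largest pseudometric on $Y$ with these properties and satisfying $\hat d(f(x), g(x)) = 0$ for all $x$, and then set $Q = Y / {\sim}$ where $y \sim y'$ iff $\hat d(y, y') = 0$, endowed with the well-defined extended metric induced by $\hat d$. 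The quotient map $q \colon Y \to Q$ is non-expansive by construction and coequalizes $f$ and $g$. For the universal property, any non-expansive $h \colon Y \to Z$ with $hf = hg$ satisfies $d_Z(h(y), h(y')) \le d_Y(y, y')$ and is constant on $R$, hence the pointwise bound $d_Z \circ (h \times h) \le \hat d$ forces a unique non-expansive factorization through $q$.

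The main obstacle I anticipate is the coequalizer step: one has to check carefully that the shortest-path pseudometric is indeed a pseudometric (the triangle inequality relies on concatenating zig-zags) and, critically, that it is maximal among pseudometrics bounded by $d_Y$ that collapse $R$. This maximality is exactly what forces non-expansiveness of the induced map into any competitor $Z$ and thus delivers the universal property. Once this is in place, completeness and cocompleteness follow.
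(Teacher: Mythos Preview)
Your proposal is correct. Both approaches ultimately rest on the same idea---equip the $\Set$-level (co)limit with the largest compatible (pseudo)metric, then for colimits quotient out the zero-distance pairs---but the organization differs. The paper treats an arbitrary small diagram $D$ in one go: the limit metric is $d_L(x,y)=\sup_i d_i(f_i(x),f_i(y))$, and the colimit pseudometric is defined abstractly as the supremum over \emph{all} pseudometrics on the $\Set$-colimit making the cocone maps non-expansive, followed by the metric quotient. You instead decompose into products/equalizers and coproducts/coequalizers, and for coequalizers you give the explicit shortest-path (zig-zag) formula and then argue maximality. Your explicit formula and the paper's abstract $\sup$ description coincide for coequalizers, so the two routes are equivalent; the paper's uniform description is what it later reuses (e.g., in the analysis of filtered colimits, Proposition~\ref{prop:filteredmetric}), while your concrete zig-zag construction is more hands-on and makes the universal property for coequalizers very transparent.
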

\begin{proof}

Let $D \colon \mathcal{I} \to \Met$ be a small diagram, and let $D(i) = (X_i, d_i)$, for each object $i \in \mathcal{I}$.
Let $U \colon \Met \to \Set$ be the standard forgetful functor, sending $(X,d_X)$ to $X$. Clearly,
also $UD \colon \mathcal{I} \to \Set$ is a small diagram. 
We show completeness and cocompleteness separately:

\textbf{Completeness:} Let $(f_i \colon L \to X_i)_{i \in \mathcal{I}}$ be the limit cone to $UD$. We define 
$d_L \colon L \times L \to [0, \infty]$ as follows, for arbitrary $x,y \in L$
\begin{equation*}
  d_L(x, y) = \sup_{i \in \mathcal{I}} d_i(f_i(x), f_i(y)) \,,
\end{equation*}
and claim that this is an extended metric%
\footnote{Note that the definition of $d_L$ makes sense since the supremum exists in $[0, \infty]$; 
this would not be true for standard (finite) metrics taking values in $[0, \infty)$.}.

Let $x,y,z \in L$. 
Identity of indiscernible follows by
\begin{align*}
  d_L(x,y) = 0 
  &\iff \sup_{i \in \mathcal{I}} d_i(f_i(x), f_i(y)) = 0 \tag{def.\ $d_L$} \\
  &\iff \forall i \in \mathcal{I}, \,  d_i(f_i(x), f_i(y)) =  0  \tag{$d_i$ positive} \\
  &\iff \forall i \in \mathcal{I}, \, f_i(x) = f_i(y)  \tag{$d_i$ metric} \\
  &\iff  x = y \,,  \tag{$(f_i)_{i \in \mathcal{I}}$ limit cone }
\end{align*}
symmetry by
\begin{align*}
  d_L(x,y)  
  &= \sup_{i \in \mathcal{I}} d_i(f_i(x), f_i(y)) \tag{def.\ $d_L$} \\
  &= \sup_{i \in \mathcal{I}} d_i(f_i(y), f_i(x)) \tag{$d_i$ metric} \\
  &= d_L(y,x) \,,  \tag{def.\ $d_L$}
\end{align*}
and triangular inequality by
\begin{align*}
  d_L(x,z) +  d_L(z,y) 
  &= \sup_{i \in \mathcal{I}} d_i(f_i(x), f_i(z)) + \sup_{i \in \mathcal{I}} d_i(f_i(z), f_i(y)) \tag{def.\ $d_L$} \\
  &\geq \sup_{i \in \mathcal{I}} \big( d_i(f_i(x), f_i(z)) + d_i(f_i(z), f_i(y)) \big) \tag{$\sup$} \\
  &\geq \sup_{i \in \mathcal{I}} d_i(f_i(x), f_i(y)) \tag{$d_i$ metric} \\
  &= d_L(x,y) \,.  \tag{def.\ $d_L$}
\end{align*}
With this metric all $f_i$ are non-expansive functions. Indeed we have, for all $i \in \mathcal{I}$ and $x, y \in L$
\begin{equation*}
 d_i(f_i(x), f_i(y)) \leq \sup_{i \in \mathcal{I}} d_i(f_i(x), f_i(y)) = d_L(x,y) \,.
\end{equation*}
Since the forgetful functor $U \colon \Met \to \Set$ is faithful, the non-expan\-siveness of
the maps $f_i$ implies that $(f_i \colon (L,d_L) \to (X_i, d_i))_{i \in \mathcal{I}}$ is a cone to $D$.
Next we show that this is actually the limiting cone. 

Let $(h_i \colon (H,d_H) \to (X_i,d_i))_{i \in \mathcal{I}}$ be a cone to $D$. Then $(h_i \colon H \to X_i)_{i \in \mathcal{I}}$ is 
a cone to $UD$. Since $(f_i \colon L \to X_i)_{i \in \mathcal{I}}$ is the limit cone 
to $UD$, there exists a unique function $g \colon H \to L$ in $\Set$ satisfying 
$f_i \circ g = h_i$, for all $i \in \mathcal{I}$. We finish our proof by showing that $g$ is a non-expansive function.
By non-expansiveness of the $h_i$'s we have that, for all $i \in \mathcal{I}$ and $a,b \in H$, 
$d_i(h_i(a), h_i(b)) \leq d_H(a,b)$, and thus also
\begin{align*}
d_L(g(a), g(b)) 
&= \sup_{i \in \mathcal{I}} d_i(f_i(g(a)), f_i(g(b))) \tag{def.\ $d_L$} \\
&= \sup_{i \in \mathcal{I}} d_i(h_i(a), h_i(b)) \tag{$f_i \circ g = h_i$} \\
&\leq d_H(a,b) \,. \tag{$h_i$ non-expansive}
\end{align*}
Thus we conclude that 
$(f_i \colon (L,d_L) \to (X_i, d_i))_{i \in \mathcal{I}}$ is a limit cone to $D$.

\textbf{Cocompleteness:} Let $(f_i \colon X_i \to L)_{i \in \mathcal{I}}$ be the colimit cocone to $UD$. We define 
$d_L \colon L \times L \to [0, \infty]$, for arbitrary $x,y \in L$, as follows:
\begin{equation*}
 d_L(x,y) = \sup_{d \in M_L} d(x,y) \,,
\end{equation*}
where $M_L$ is the set of all extended pseudometrics $d$ on $L$ making 
all $f_i$'s non-expansive functions $f_i \colon (X_i,d_i) \to (L,d)$. 
We claim that this is an extended pseudometric.
Since all $d \in M_L$ are pseudometrics, we can derive immediately that $d_L(x,x) = 0$ and
$d_L(x,y) = d_L(y,x)$, for all $x,y \in L$. Moreover, for all $x,y,z \in L$, we have
\begin{align*}
  d_L(x,z) +  d_L(z,y) 
  &= \sup_{d \in M_L} d(x,z) + \sup_{d \in M_L} d(z,y) \tag{def.\ $d_L$} \\
  &\geq \sup_{d \in M_L} d(x,z) + d(z,y) \tag{$\sup$} \\
  &\geq \sup_{d \in M_L} d(x,y) \tag{$d$ metric} \\
  &= d_L(x,y) \,.  \tag{def.\ $d_L$}
\end{align*}
Moreover, for all $i \in \mathcal{I}$ and $x, y \in X_i$
\begin{align*}
 d_i(f_i(x), f_i(y)) 
 &\leq \sup_{d \in M_L} d(x, y) \tag{def.\ $M_L$} \\
 &=  d_L(x, y) \,. \tag{def.\ $d_L$}
\end{align*}
Thus, all the functions $f_i$ are non-expansive w.r.t.\ the pseudometric $d_L$.

Now we turn the extended pseudometric space $(L, d_L)$ into an extended metric space 
$(C, d_C)$ by taking the quotient modulo the equivalence $x \cong y$ iff $d_L(x,y) = 0$. 
The extended metric $d_C \colon C \times C \to [0, \infty]$ is given by  
\begin{equation*}
  d_C([x], [y]) = d_L(x,y) 
\end{equation*}
for all $x,y \in L$, where $[\cdot] \colon L \to C$ denote the quotient map w.r.t.\ $\cong$.
Note that $d_C$ is well defined because by triangular inequality of $d_L$ the definition 
above is independent of the choice of the representative $x$ of the $\cong$-equivalence 
class $[x]$ in $C$.

From the non-expansiveness of the maps $f_i$ we have that also $[\cdot] \circ f_i$ are non-expansive.
Thus, since the forgetful functor $U \colon \Met \to \Set$ is faithful, $([\cdot] \circ f_i \colon (X_i, d_i) \to (C,d_C))_{i \in \mathcal{I}}$ is a cocone to $D$ in $\Met$.
Next, we show that this is the colimiting cocone. 

Let $(h_i \colon (X_i,d_i) \to (H,d_H))_{i \in \mathcal{I}}$ be a cocone to $D$. Then $(h_i \colon X_i \to H)_{i \in \mathcal{I}}$ is 
a cocone to $UD$. Since $(f_i \colon X_i \to L)_{i \in \mathcal{I}}$ is the colimit cocone 
to $UD$, there exists a unique function $g \colon L \to H$ in $\Set$ satisfying 
$g \circ f_i = h_i$, for all $i \in \mathcal{I}$. We prove that $g$ is non-expansive w.r.t.\ the pseudometric $d_L$.
Let $d_g \colon L \times L \to [0,\infty]$ be defined as $d_g(x,y) = d_H(g(x), g(y))$. 
It is easy to see that this is an extended pseudometric on $L$. Moreover, for all $i \in \mathcal{I}$ and $x',y' \in X_i$
we have 
\begin{align*}
 d_g(f_i(x'), f_i(y')) 
 &= d_H(g(f_i(x')), g(f_i(y'))) \tag{def.\ $d_g$} \\
 &= d_H(h_i(x'), h_i(y')) \tag{$g \circ f_i = h_i$} \\
 &\leq d_i(x', y') \,. \tag{$h_i$ non-expansive}
\end{align*}
Thus $d_g \in M_L$. Using this we observe that, for all $x, y \in L$
\begin{equation}
d_L(x,y) = \sup_{d \in M_L} d(x,y) \geq d_g(x,y) = d_H(g(x), g(y)) \,.
\label{eq:gnexp}
\end{equation}

Let $g'([x]) = g(x)$, for all $x \in L$. By \eqref{eq:gnexp} and 
the fact that $d_H$ is a metric, $g(x) = g(y)$ whenever $x \cong y$, hence the $g' \colon C \to H$ 
is a well defined function on $C$. Moreover, by definition of $d_C$ and \eqref{eq:gnexp}, $g'$ is also 
non-expansive as a map $g' \colon (C,d_C) \to (H,d_H)$ in $\Met$.

Clearly, $g' \circ [\cdot] \circ f_i = h_i$, for all $i \in \mathcal{I}$. Assume that there exists 
another map $g'' \colon C \to H$ such that $g'' \circ [\cdot] \circ f_i = h_i$, for all $i \in \mathcal{I}$.
By the universal property of $g$, $g' \circ [\cdot] = g = g'' \circ [\cdot]$, thus $g'([x]) = g''([x])$, for all $x \in L$.

Thus $([\cdot] \circ f_i \colon (X_i, d_i) \to (C,d_C))_{i \in \mathcal{I}}$ is the colimit cocone to $D$.
\qedhere
\end{proof}

The situation is very similar when we consider the full subcategory $\CMet$ of complete extended metric spaces.
Indeed, $\CMet$ is closed under limits, which are defined as in $\Met$. Moreover, as $\CMet$ is a reflective subcategory of $\Met$, with reflection the Cauchy completion functor, we have that also $\CMet$ is cocomplete, with colimits constructed as in $\Met$ and completed via Cauchy completion.

\begin{prop} \label{prop:cmetbicomplete}
$\CMet$ is a complete and cocomplete category.
\qed
\end{prop}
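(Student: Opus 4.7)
The plan is to establish completeness and cocompleteness separately, leveraging Proposition~\ref{prop:emetbicomplete} and the reflective structure $\CC \colon \Met \to \CMet$ mentioned just before the statement.

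For completeness, I would show that $\CMet$ is closed under limits taken in $\Met$, so that the construction in the proof of Proposition~\ref{prop:emetbicomplete} restricts. Let $D \colon \mathcal{I} \to \CMet$ be a small diagram, viewed via the embedding as a diagram in $\Met$, and let $(f_i \colon (L, d_L) \to D(i))_{i \in \mathcal{I}}$ be its limiting cone in $\Met$, with $d_L(x, y) = \sup_i d_i(f_i(x), f_i(y))$. Given a Cauchy sequence $(x_n)$ in $L$, non-expansiveness of each $f_i$ forces $(f_i(x_n))$ to be Cauchy in $D(i)$, hence convergent to some $y_i \in D(i)$ by completeness of $D(i)$. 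Continuity (indeed non-expansiveness) of the connecting maps $D(g) \colon D(i) \to D(j)$ ensures that the family $(y_i)_{i \in \mathcal{I}}$ is compatible with the diagram, and so determines, by the universal property applied to the cone $(y_i \colon 1 \to D(i))$, a unique element $y \in L$ with $f_i(y) = y_i$ for all $i$. A short estimate using $d_L = \sup_i d_i \circ (f_i \times f_i)$ and an $\e/2$ argument (split $N$ large enough for Cauchyness at scale $\e/2$, then pass to the pointwise limit in each coordinate) shows $x_n \to y$ in $L$. Thus $(L, d_L) \in \CMet$ and the limit cone serves as the limit in $\CMet$.

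For cocompleteness I would invoke the standard fact that a reflective subcategory of a cocomplete category is cocomplete, with colimits computed by reflecting the ambient colimit. Explicitly, given a small diagram $D \colon \mathcal{I} \to \CMet$, let $(g_i \colon D(i) \to (L, d_L))_{i \in \mathcal{I}}$ be its colimit cocone in $\Met$, which exists by Proposition~\ref{prop:emetbicomplete}. Applying the Cauchy completion functor $\CC \colon \Met \to \CMet$ (left adjoint to the embedding $J \colon \CMet \hookrightarrow \Met$) yields a cocone $(\CC g_i \circ \iota_{D(i)} \colon D(i) \to \CC L)_{i \in \mathcal{I}}$, where $\iota_{D(i)} \colon D(i) \to \CC D(i) \cong D(i)$ is the unit at $D(i) \in \CMet$ (which is an iso because $D(i)$ is already complete). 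To verify it is colimiting in $\CMet$, take any cocone $(h_i \colon D(i) \to H)_{i \in \mathcal{I}}$ in $\CMet$; viewing it in $\Met$ yields, by the universal property of $(L, d_L)$, a unique non-expansive $k \colon L \to J H$, and then the adjunction $\CC \dashv J$ produces a unique $\tilde k \colon \CC L \to H$ with $\tilde k \circ \iota_L = k$. The required factorization $\tilde k \circ (\CC g_i \circ \iota_{D(i)}) = h_i$ and uniqueness both follow from combining the universal properties of $L$ in $\Met$ and of the unit $\iota_L$.

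The only step requiring genuine verification is the completeness-preservation argument in the first paragraph; everything in the second paragraph is formal nonsense about reflective subcategories once one has $\CC \dashv J$ from the preliminaries. The slight subtlety in the first paragraph is that the limit in $\Met$ is a subspace of a product (carved out by equalizer equations), so one must confirm that the pointwise limit of the Cauchy sequence actually lies in this subspace; this is exactly where continuity of the connecting morphisms $D(g)$ is used.
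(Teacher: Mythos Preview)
Your proposal is correct and follows exactly the approach the paper sketches immediately before the proposition: $\CMet$ is closed under limits computed in $\Met$ (your Cauchy-sequence argument with the uniform $\e/2$ estimate is the standard verification of this), and cocompleteness follows formally from $\CMet$ being a reflective subcategory of the cocomplete $\Met$ via the Cauchy completion functor. The paper states this reasoning in one paragraph and then marks the proposition with \qed, so you have simply filled in the details the paper leaves implicit.
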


%%%%%%%%%%%%%%%%%%%%%%%%%%%%%%%%%%%
\section{Extended Metric Spaces are Locally Countably Presentable}
\label{sec:EMet-lcp}

Let $\lambda$ be a regular infinite cardinal (i.e., one that is not cofinal to any smaller cardinal).
A small category is called \emph{$\lambda$-filtered} if any subcategory of
less than $\lambda$ morphisms has a cocone in it. 
When $\lambda = \aleph_0$, the term \emph{finitely} filtered (or simply, filtered) is most commonly
used, and \emph{countably} filtered in the case $\lambda = \aleph_1$.

\begin{exa}
Let $\aleph_0$ denote the skeleton of the category of finite sets and all functions between them.
Then $\aleph_0$ is finitely filtered, but not countably filtered. While the skeleton category $\aleph_1$
of all countable sets is countably filtered.
\end{exa}

A diagram is \emph{$\lambda$-filtered} if its domain is $\lambda$-filtered, and 
a colimit is \emph{$\lambda$-filtered} when it is the colimit of a $\lambda$-filtered diagram.
A functor $F \colon \mathcal{C} \to \mathcal{D}$ is called \emph{$\lambda$-accessible} if its domain 
$\mathcal{C}$ has $\lambda$-filtered colimits and $F$ preserves them%
\footnote{In some literature $\aleph_0$-accessible functors are said \emph{of finite rank} and 
and $\aleph_1$-accessible functors \emph{of countable rank}. This is the terminology preferred 
by John Power in his seminal work about enriched Lavwere theories.}. 

An object $X$ of a small category $\mathcal{C}$ is \emph{$\lambda$-presentable} if its hom-functor
\begin{equation*} 
\mathcal{C}(X, -) \colon \mathcal{C} \to \Set
\end{equation*}
is $\lambda$-accessible.
Explicitly, $X$ is $\lambda$-presentable iff for each $\lambda$-filtered colimit cocone 
$(c_i \colon D(i) \to C)_{i \in \mathcal{I}}$ of a $\lambda$-filtered diagram $D \colon \mathcal{I} \to \mathcal{C}$, and each morphism $f \colon X \to C$, there exists $i \in \mathcal{I}$ such that
\begin{itemize}
\item $f$ factorizes through $c_i$, \ie, $f = c_i \circ g$ for some $g \colon X \to D(i)$, and
\item the factorization is essentially unique in the sense that if $f = c_i \circ g = c_i \circ g'$, 
then $D(i \to j) \circ g = D(i \to j) \circ g'$, for some $j \in \mathcal{I}$.
\end{itemize}

\begin{defi}[Accessibility and Local Presentability]
A category $\mathcal{C}$ is \emph{$\lambda$-accessible} if
\begin{itemize} 
\item it has all $\lambda$-filtered colimits;
\item there is a set $\mathcal{C}_\lambda$ of $\lambda$-presentable objects such that every object is a 
$\lambda$-filtered colimit of objects of $\mathcal{C}_\lambda$.
\end{itemize}
It is \emph{locally $\lambda$-presentable} if, moreover, it has all small colimits (\ie, it is cocomplete).
\end{defi} 
A category is said \emph{accessible} (resp. \emph{locally presentable}) if it is $\lambda$-accessible 
(resp. $\lambda$-locally presentable) for some regular infinite cardinal $\lambda$.
In the case $\lambda = \aleph_0$, we speak about \emph{locally finitely presentable category},
and for $\lambda = \aleph_1$ about \emph{locally countably presentable category}.

\begin{exa}
The category $\Set$ is locally finitely presentable with finitely presentable objects precisely the finite
sets (for $\Set_{\aleph_0}$ we can choose the set of all natural numbers).
The category $\omega\mathbf{CPO}$ of cpo's (\ie, posets with joints of all increasing 
$\omega$-chains) and $\omega$-continuous functions is not locally finitely presentable, however, 
it is locally countably presentable with countably presentable objects precisely the countable cpo's
(for $\omega\mathbf{CPO}_{\aleph_1}$ we can choose the set of all countable ordinals with 
standard partial order).
\end{exa}

Next, we focus our attention on the category $\Met$. In turn, we prove that 
\begin{enumerate}
\item the only finitely presentable objects in $\Met$ are the finite discrete spaces, 
with distances either $0$ or $\infty$ (Proposition~\ref{prop:discretefinite}); 
\item $\Met$ is locally countably presentable, with countably presentable objects precisely
the countable spaces (Lemma~\ref{lemma:EMetcountablepresentability} and Theorem~\ref{th:locallycountablyrepresentability})
\end{enumerate}

Note that a direct consequence of (1) is that $\Met$ is \emph{not} locally finitely presentable, since
filtered colimits of discrete spaces are discrete.

The proofs of these results are immediate adaptations of~\cite{AdamekMM12} which shows
that the category of 1-bounded pseudometric spaces with non-expansive maps is locally countably 
presentable.

\begin{prop} \label{prop:discretefinite}
Finitely presentable objects in $\Met$ are finite and discrete.
\end{prop}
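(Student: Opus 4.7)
The plan is to show that any finitely presentable $X \in \Met$ is both discrete (all nonzero distances are $\infty$) and finite, by exhibiting in each case a filtered diagram whose colimit is $X$ and whose factorisation property fails otherwise.

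For discreteness, I would proceed by contrapositive: suppose $x, y \in X$ with $0 < d_X(x,y) < \infty$. For each $n \geq 1$ set $X_n = (X, (1+1/n)d_X)$, which is still an extended metric space since scaling preserves the metric axioms (including pairs valued at $\infty$). Because $(1+1/(n+1))d_X \leq (1+1/n)d_X$ pointwise, the set-theoretic identity yields a non-expansive map $X_n \to X_{n+1}$, giving a filtered $\omega$-chain. Using the concrete construction of colimits in Proposition~\ref{prop:emetbicomplete}, its colimit in $\Met$ has underlying set $X$ equipped with the largest pseudometric $d'$ satisfying $d' \leq (1+1/n)d_X$ for all $n$; this is $d' = d_X$, which is already a metric, so no quotient is taken and the colimit is $X$ itself with set-identity cocone maps. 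If $X$ were finitely presentable, the canonical identity $X \to X$ would factor through some $X_n$; since each cocone map is the set-identity, the factor $g \colon X \to X_n$ would also be the set-identity, and non-expansiveness would force $(1+1/n)d_X(x,y) \leq d_X(x,y)$, contradicting $d_X(x,y) > 0$.

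For finiteness, assume $X$ is discrete and finitely presentable. The set $\mathcal{F}$ of finite subsets of $X$, directed by inclusion, indexes a filtered diagram $(F, d_X|_F)_{F \in \mathcal{F}}$ whose transition maps are the (isometric, hence non-expansive) inclusions. By Proposition~\ref{prop:emetbicomplete} its colimit is $(X, d_X)$ with cocone maps the inclusions into $X$. The identity $X \to X$ must then factor as $X \xrightarrow{g} F \hookrightarrow X$ for some finite $F \in \mathcal{F}$, but the image of this composition is contained in $F$, forcing $F = X$. Hence $X$ is finite.

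The chief obstacle will be the colimit computations: in each case I must verify via Proposition~\ref{prop:emetbicomplete} that the colimit metric is exactly $d_X$ on $X$, with no non-trivial quotient, so that the identity on $X$ genuinely realises a morphism $X \to \mathrm{colim}$ whose failure to factor witnesses non-presentability. Once that is in hand, everything else is a direct unfolding of the definition of finite presentability.
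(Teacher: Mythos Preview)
Your proposal is correct and follows essentially the same approach as the paper: the same $\omega$-chain $(X,(1+1/n)d_X)$ for discreteness and the filtered diagram of finite subspaces for finiteness. The only cosmetic differences are that the paper handles finiteness first and does not assume discreteness there (your assumption is harmless but unused in your own argument).
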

\begin{proof}
Note that every extended metric space is a colimit of the filtered diagram obtained by taking all its
finite subspaces and their inclusion maps.
Let $(X,d_X)$ be a finitely presentable object in $\Met$. Then the identity must 
factorize through the inclusion of one of the finite subspaces. Thus, $X$ must be finite.

%By contradiction, assume that $d_X$ is not the discrete metric, hence there exists a pair
%$x,y \in X$ of points such that $0 < d_X(x,y) < \infty$.
For each positive integer $n > 0$, define the function $d_n \colon X \times X \to [0,\infty]$ as 
\begin{equation*}
d_n(x,y) = \left( 1+ \frac{1}{n} \right) \cdot d_X(x,y) \,,
\end{equation*}
where $\infty \cdot r = \infty$ for any $r \in [0, \infty)$. Clearly, all $d_n$'s are extended metrics.
Consider the $\omega$-chain of spaces $(X, d_n)$ with identities as connecting maps. 
This is a countably filtered diagram with colimit cocone $(id_X \colon (X, d_n) \to (X,d_X))_{n > 0}$.
Since $(X,d_X)$ is finitely presentable, the identity $id_X \colon (X,d_X) \to (X,d_X)$ 
must factorize through the a colimit map $id_X \colon (X, d_n) \to (X,d_X)$ for some positive 
integer $n > 0$.
But the distances in $(X,d_X)$ which are strictly between $0$ and $\infty$ are increased by $d_n$.
So $(X,d_X)$ must discrete.
\end{proof}

\begin{prop} \label{prop:filteredset}
Let $(f_i \colon D(i) \to (C, d_C))_{i \in \mathcal{I}}$ be the colimit cocone to a small diagram 
$D \colon \mathcal{I} \to \Met$ and $D(i) = (X_i, d_i)$, for each $i \in \mathcal{I}$. 
If $D$ is filtered,  $C$ is the quotient set of 
$\coprod_{i \in \mathcal{I}} X_i$ under the equivalence
\begin{equation*}
  in_i(x) \sim in_{i'}(x')  
  \quad \text{iff there exists $j \in \mathcal{I}$ with $D(i \to j)(x) = D(i' \to j)(x')$} \,,
\end{equation*}
where $in_i \colon X_i \to \coprod_{i \in I} X_i$ are the canonical injections into the coproduct
and $f_i$ assigns to each $x \in X_i$ the equivalence class of $in_i(x)$, for all $i \in \mathcal{I}$.
\end{prop}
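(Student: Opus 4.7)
The plan is to construct the claimed set-theoretic quotient as an explicit colimit cocone in $\Met$, endowed with a natural pseudometric built from the diagram, verify its universal property, and conclude by uniqueness of colimits. Let $Q := (\coprod_i X_i)/{\sim}$ with the stated equivalence, and define $q_i \colon X_i \to Q$ by $q_i(x) = [\inj_i(x)]$; this is the standard explicit construction of filtered colimits in $\Set$. Equip $Q$ with
\[ d_Q([\inj_i(x)], [\inj_{i'}(x')]) := \inf\bigl\{ d_j(D(i \to j)(x), D(i' \to j)(x')) \,\big|\, j \in \mathcal{I},\ i \to j,\ i' \to j \bigr\}. \]

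Filteredness of $\mathcal{I}$ enters at essentially every verification step: the infimum ranges over a nonempty set because any two indices admit a common successor; independence of the choice of representatives follows by pushing any witness for $\sim$ and the chosen $j$ to a common upper bound and applying functoriality of $D$; and the triangle inequality reduces, for three classes with representatives indexed by $i_1, i_2, i_3$, to the triangle inequality in $X_k$ for a common upper bound $k$ of the three indices. By construction each $q_i$ is non-expansive and satisfies $q_j \circ D(i \to j) = q_i$, so $(q_i)$ is a cocone. For the universal property, any cocone $(h_i \colon X_i \to (Y, d_Y))$ in $\Met$ induces, via the set-theoretic universal property of filtered colimits in $\Set$, a unique function $h \colon Q \to Y$ with $h \circ q_i = h_i$. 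Non-expansiveness of $h$ follows pointwise: for $u = [\inj_i(x)]$ and $v = [\inj_{i'}(x')]$ and any $j$ with $i \to j \leftarrow i'$, the cocone condition for $(h_i)$ and non-expansiveness of $h_j$ give $d_Y(h(u), h(v)) \leq d_j(D(i \to j)(x), D(i' \to j)(x'))$; taking the infimum over such $j$ yields $d_Y(h(u), h(v)) \leq d_Q(u, v)$.

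The main obstacle is the passage from $\PMet$ to $\Met$: $d_Q$ is a priori only a pseudometric, so identifying $(Q, d_Q)$ as an object of $\Met$ requires arguing that the further quotient by $u \cong v$ iff $d_Q(u, v) = 0$, which appears as the final step of the general colimit construction in Proposition~\ref{prop:emetbicomplete}, does not collapse any additional classes in the filtered situation. Once this is established, the cocone $(q_i)$ satisfies the universal property in $\Met$, and uniqueness of colimits forces $(Q, d_Q) \cong (C, d_C)$ via the assignment $[\inj_i(x)] \mapsto f_i(x)$, delivering the claimed description of the underlying set of $C$ and of the cocone maps $f_i$.
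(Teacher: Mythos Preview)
Your approach differs from the paper's: you build the candidate $(Q, d_Q)$ directly with the infimum pseudometric and check its universal property in $\Met$, whereas the paper works only at the $\Set$ level, showing that $(\coprod_i X_i)/{\sim}$ is the colimit of $UD$ in $\Set$ and then invoking the general $\Met$-colimit construction of Proposition~\ref{prop:emetbicomplete}. Your route is more self-contained and in effect already proves Proposition~\ref{prop:filteredmetric} along the way.

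The obstacle you flag, however, is a genuine gap, and you do not close it: $d_Q$ need not be a metric, so the $\Met$-colimit may be a \emph{proper} quotient of $(\coprod_i X_i)/{\sim}$. Take the $\omega$-chain $X_n = \{a,b\}$ with $d_n(a,b) = 1/n$ and identity connecting maps (non-expansive since $1/m \leq 1/n$ for $n \leq m$). Here $\sim$ has two classes, $[a]$ and $[b]$, yet any cocone $(h_n \colon X_n \to Z)$ in $\Met$ forces $d_Z(h_n(a),h_n(b)) \leq 1/n$ for all $n$, hence $h_n(a) = h_n(b)$; the $\Met$-colimit is a single point. Thus the step ``once this is established'' cannot be completed in general. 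The paper's proof hides the same issue in the sentence ``it suffices to prove that $(f_i)$ is a colimit cocone to $UD$ in $\Set$'': this tacitly assumes that the pseudometric-zero quotient appearing in Proposition~\ref{prop:emetbicomplete} is trivial for filtered diagrams, which the example refutes. What both arguments actually deliver is that $C$ is a quotient of $(\coprod_i X_i)/{\sim}$ via $[\inj_i(x)] \mapsto f_i(x)$, with your $d_Q$ as the intermediate pseudometric.
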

\begin{proof}
We prove that $\sim$ is an equivalence relation. Reflexivity and symmetry are trivially satisfied. 
Transitivity follows by the fact that $\mathcal{I}$ is filtered. Indeed, let $in_i(x) \sim in_{i'}(x')$ and
$in_{i'}(x') \sim in_{i''}(x'')$. Then there exist $j, j' \in \mathcal{I}$ such that 
$D(i \to j)(x) = D(i' \to j)(x')$ and $D(i' \to j')(x') = D(i'' \to j')(x'')$. Since $\mathcal{I}$ is filtered, there
exists $j'' \in \mathcal{I}$ above $i, i', i'', j$, and $j'$ such that
\begin{gather*}
D(i \to j'') = D(j \to j'') \circ D(i \to j) \,, \\
D(i' \to j'') = D(j \to j'') \circ D(i' \to j) = D(j' \to j'') \circ D(i' \to j') \,, \text{ and} \\
D(i'' \to j'') = D(j' \to j'') \circ D(i'' \to j') \,.
\end{gather*}
From this we derive $D(i \to j'')(x) = D(i'' \to j'')(x'')$, i.e., $in_{i}(x) \sim in_{i''}(x'')$.

Let $U \colon \Met \to \Set$ denote the forgetful functor into $\Set$.
By the construction of colimits in $\Met$ (see Proposition~\ref{prop:emetbicomplete}) it suffices
to prove that $(f_i \colon X_i \to (\coprod_{i \in I} X_i)/_{\sim})_{i \in \mathcal{I}}$ is a colimit cocone 
to $UD$ in $\Set$.

For convenience, let $X = (\coprod_{i \in I} X_i)/_{\sim}$ and, for each $i \in \mathcal{I}$, let $[in_i(x)]$ denote the equivalence class of $in_i(x)$.
We first prove that $(f_i \colon X_i \to X)_{i \in \mathcal{I}}$ is a cocone to $UD$.
Let $i \to j \in \mathcal{I}$. Then from $D(i \to j)(x) = D(id_j)(D(i \to j)(x))$ for all $x \in X_i$,
we have $in_i(x) \sim in_j(D(i \to j)(x))$. Therefore, for all $x \in X_i$
\begin{equation*}
f_i(x) 
= [in_i(x)] 
= [in_j(x)(D(i \to j)(x))]
= f_j (D(i \to j)(x)) \,.
\end{equation*}
Hence $f_i = f_j \circ D(i \to j)$ for all $i, j \in \mathcal{I}$.
Now we prove that $(f_i \colon X_i \to X)_{i \in \mathcal{I}}$ is a colimit. 
Let $(h_i \colon X_i \to H)_{i \in \mathcal{I}}$ be a cocone to $UD$. Define  
$g \colon X \to H$ for arbitrary $x \in X_i$ as follows:
\begin{equation*}
  g([in_i(x)]) = \gamma (in_i(x)) \,,
\end{equation*}
where $\gamma \colon \coprod_{i \in I} X_i \to H$ is the unique map such 
that $\gamma \circ in_i = h_i$, for all $i \in \mathcal{I}$.
Note that $g$ is well defined. Indeed, if $D(i \to j)(x) = D(i' \to j)(x')$ for some $j \in \mathcal{I}$, we have that
\begin{align*}
g([in_i(x)]) 
&= \gamma (in_i(x)) \tag{def.\ $g$} \\
&= h_i(x) \tag{$\gamma \circ in_i = h_i$} \\
&= h_j (D(i \to j)(x)) \tag{$(h_i \colon X_i \to H)_{i}$ cocone to $UD$} \\
&= h_j (D(i' \to j)(x')) \\
&= h_{i'}(x') \tag{$(h_i \colon X_i \to H)_{i}$ cocone to $UD$} \\
&= \gamma (in_{i'}(x')) \tag{$\gamma \circ in_{i'} = h_{i'}$} \\
&= g([in_{i'}(x')]) \,.  \tag{def.\ $g$}
\end{align*}
Let $i \in \mathcal{I}$. Then, for all $x \in X_i$
\begin{align*}
g \circ f_i (x) 
&= g ([in_i(x)]) \tag{def.\ $f_i$} \\
&= \gamma (in_i(x)) \tag{def.\ $g$} \\
&= h_i(x) \,. \tag{$\gamma \circ in_i = h_i$}
\end{align*}
Thus $g \circ f_i  = h_i$ for all $i \in \mathcal{I}$. Assume now that there exists $g' \colon X \to H$ such that
$g' \circ f_i  = h_i$ for all $i \in \mathcal{I}$. Then, for all $x \in X_i$
\begin{align*}
g'([in_i(x)]) 
&= g'(f_i (x))  \tag{def.\ $f_i$} \\
&= h_i(x) \tag{$g' \circ f_i  = h_i$} \\
&= g(f_i(x)) \tag{$g \circ f_i  = h_i$} \\
&= g([in_i(x)]) \,.   \tag{def.\ $f_i$}
\end{align*}
Thus $g' = g$.
\end{proof}

%%%%%%%%%%%%%%%%%%%%%%%%%

\begin{prop} \label{prop:filteredmetric}
Let $(f_i \colon D(i) \to (C, d_C))_{i \in \mathcal{I}}$ be the colimit cocone to a small diagram 
$D \colon \mathcal{I} \to \Met$ and $D(i) = (X_i, d_i)$, for each $i \in \mathcal{I}$. If $D$ is filtered, then, for all $x,y \in C$
\begin{equation*}
  d_C(x,y) = \inf \{ d_i(x', y') \mid i \in \mathcal{I}\,, f_i(x') = x \,, \text{and } f_i(y') = y \} \,.
\end{equation*}
\end{prop}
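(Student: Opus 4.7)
The plan is to use the concrete description of the set-theoretic colimit from Proposition~\ref{prop:filteredset} as a bridge between the set-level data and the metric-level colimit constructed in Proposition~\ref{prop:emetbicomplete}. I would work on the underlying set $L$ of the set-theoretic colimit of $UD$, with canonical maps $f_i^L \colon X_i \to L$, and define on it
\[
\hat{\delta}(a,b) \;=\; \inf\{\, d_i(x',y') \mid i \in \mathcal{I},\ x',y' \in X_i,\ f_i^L(x') = a,\ f_i^L(y') = b \,\}.
\]
The goal then reduces to showing that $\hat{\delta}$ coincides with the supremum pseudometric $d_L$ used in Proposition~\ref{prop:emetbicomplete} to equip $L$, and that this formula transfers correctly to the quotient $C = L/\{\hat{\delta}=0\}$.

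The steps, in order, are: (i) verify $\hat{\delta}$ is an extended pseudometric on $L$; (ii) show $\hat{\delta}$ is the largest pseudometric on $L$ making every $f_i^L$ non-expansive, so $\hat{\delta} = d_L$; and (iii) push the formula down to $C$. For (i), reflexivity is immediate (every $a \in L$ is $f_i^L(x')$ for some $i, x'$, and $d_i(x',x') = 0$) and symmetry is obvious. The substantive work, and the main obstacle, is the triangle inequality: given witnesses $(i,x',y')$ for $\hat{\delta}(a,b)$ and $(j,y'',z')$ for $\hat{\delta}(b,c)$, the two preimages of $b$ live in different stages $X_i$ and $X_j$ and must be reconciled in a common stage.

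Here I would invoke filteredness via Proposition~\ref{prop:filteredset}: from $f_i^L(y') = b = f_j^L(y'')$ the characterization of $\sim$ produces some $k \in \mathcal{I}$, which by taking a further upper bound we may assume dominates both $i$ and $j$, with $D(i \to k)(y') = D(j \to k)(y'')$. Setting $\tilde{x} = D(i \to k)(x')$, $\tilde{y} = D(i \to k)(y') = D(j \to k)(y'')$, and $\tilde{z} = D(j \to k)(z')$ gives preimages living in the single stage $X_k$ whose distances to each other are controlled by $d_i(x',y')$ and $d_j(y'',z')$ via non-expansiveness of the connecting maps; the triangle inequality in $X_k$ then yields $\hat{\delta}(a,c) \le d_i(x',y') + d_j(y'',z')$, and letting the witnesses approach the respective infima gives $\hat{\delta}(a,c) \le \hat{\delta}(a,b) + \hat{\delta}(b,c)$.

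For (ii), non-expansiveness of each $f_i^L$ with respect to $\hat{\delta}$ is immediate since the triple $(i,x',y')$ is itself a candidate witness for $\hat{\delta}(f_i^L(x'), f_i^L(y'))$; maximality follows because any pseudometric $d'$ on $L$ for which every $f_i^L$ is non-expansive satisfies $d'(a,b) \le d_i(x',y')$ for each witness, hence $d' \le \hat{\delta}$ pointwise. Combined with (i), the characterization of $d_L$ in Proposition~\ref{prop:emetbicomplete} gives $\hat{\delta} = d_L$. For (iii), writing $q \colon L \to C$ for the quotient and $f_i = q \circ f_i^L$, we have $d_C(x,y) = \hat{\delta}(a,b)$ for any representatives $a,b$ of $x,y$. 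The infimum in the stated formula ranges over $(i,x',y')$ with $f_i^L(x') \in q^{-1}(x)$ and $f_i^L(y') \in q^{-1}(y)$: one inequality is trivial since any witness for $\hat{\delta}(a,b)$ is also a witness there; the other uses that any two representatives of $x$ (resp.\ of $y$) are at $\hat{\delta}$-distance zero, so the triangle inequality for $\hat{\delta}$ lets us replace $f_i^L(x')$ by $a$ and $f_i^L(y')$ by $b$ without changing the bound, which closes the argument.
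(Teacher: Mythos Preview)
Your proposal is correct and follows essentially the same approach as the paper: define the infimum formula, verify it is a pseudometric (using filteredness to reconcile witnesses at a common stage for the triangle inequality), show it is the maximal pseudometric making all cocone maps non-expansive, and conclude via the explicit description of the colimit metric from Proposition~\ref{prop:emetbicomplete}. The only difference is that you keep the distinction between the set-theoretic colimit $L$ and its metric quotient $C$ explicit and add step~(iii) to transfer the formula across the quotient, whereas the paper works directly on $C$ and silently identifies the two; this is a harmless refinement rather than a different argument.
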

\begin{proof}
Assume $D$ is filtered.
As shown in Proposition~\ref{prop:emetbicomplete}, 
\begin{equation}
 d_C(x,y) = \sup_{d \in M_C} d(x,y) \,,
 \label{eq:colimitsupmetric}
\end{equation}
where $M_C$ is the set of all extended pseudometrics $d'$ on $C$ making 
all the functions $f_i \colon (X_i, d_i) \to (C,d')$ non-expansive. 
Let $d \colon C \times C \to [0,\infty]$ be
\begin{equation}
  d(x,y) = \inf \{ d_i(x', y') \mid i \in \mathcal{I}\,, f_i(x') = x \,, \text{and } f_i(y') = y \} \,.
  \label{eq:infdistance}
\end{equation}
Since $D$ is filtered, then also $UD$ is so, where $U \colon \Met \to \Set$ is the obvious forgetful functor 
to $\Set$. Since $\Set$ is locally finitely representable, then for any finite subset $\{x_1, \dots, x_n\} \subseteq C$, there exist $i \in \mathcal{I}$ and $\{x'_1, \dots, x'_n\} \subseteq X_i$ such that $f_i(x'_j) = x_j$, for all $0 \leq j \leq n$.
In particular, this implies that for any $x,y \in C$, the infimum in~\eqref{eq:infdistance} never ranges over an empty set. 

Let $x,y \in C$. We prove $d_C(x,y) \leq d(x,y)$ and $d_C(x,y) \geq d(x,y)$ separately.
\begin{itemize}

\item 
By non-expansivity of the maps $f_i \colon (X_i, d_i) \to (C, d_C)$, for any $i \in \mathcal{I}$
such that $f_i(x') = x$ and $f_i(y') = y$ for some $x', y' \in X_i$, we have
\begin{equation*}
  d_i(x',y') \geq d_C(f_i(x'),f_i(y')) = d_C(x,y) \,.
\end{equation*}
Thus $d_C(x,y) \leq \inf \{ d_i(x', y') \mid i \in \mathcal{I}\,, f_i(x') = x \,, \text{and } f_i(y') = y \}$. 
By~\eqref{eq:infdistance}, $d_C(x,y) \leq d(x,y)$.

\item
We prove $d \in M_L$. We start by showing that $d$ is a pseudometric on $C$.
Since all $d_i$ are extended metrics, we immediately derive that $d(x,x) = 0$ and $d(x,y) = d(y,x)$ 
for all $x,y \in C$. Moreover, for all $x,y,z \in C$:
\begin{align*}
 d(x,y) 
 &= \inf \{ d_i(x', y') \mid i \in \mathcal{I}\,, f_i(x') = x \,, \text{and } f_i(y') = y \}  \\
 &\leq \inf \{ d_i(x', z') + d_i(z',y') \mid i \in \mathcal{I}\,, f_i(x') = x \,, f_i(y') = y \,, f_i(z') = z \}  \\
 &\leq d(x,z) + d(z,y) \,,
\end{align*}
where the last inequality follows by the fact that $D$ is filtered, hence
for all $j,k \in \mathcal{I}$ there exists $i \in \mathcal{I}$ and non-expansive maps $D(j \to i)$, $D(k \to i)$ such that $f_j = f_i \circ D(j \to i)$ and $f_k = f_i \circ D(k \to i)$.
Therefore, $d$ is a pseudometric.
The non-expansiveness of the maps $f_j \colon (X_j, d_j) \to (C,d)$, for all $j \in \mathcal{I}$ 
follows immediately by~\eqref{eq:infdistance}:
\begin{equation*}
  d(f_j(x'), f_j(y')) = \inf \{ d_i(x', y') \mid i \in \mathcal{I} \text{ and } x',y' \in X_j \} \leq d_j(x', y') \,.
\end{equation*}

Thus, $d \in M_L$. Therefore, by \eqref{eq:colimitsupmetric} we have $d_C(x,y) \geq d(x,y)$.
\qedhere
\end{itemize}
\end{proof}

\begin{lem} \label{lemma:EMetcountablepresentability}
$(X,d_X) \in \Met$ is countably presentable iff it is countable.
\end{lem}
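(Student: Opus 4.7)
The forward direction is short. Write $X$ as the colimit of the diagram of its countable metric subspaces with the inclusion maps. This diagram is countably filtered because the union of a countable collection of countable subspaces is again a countable subspace, so every countable set of objects has an upper bound. By Proposition~\ref{prop:filteredmetric}, the colimit distance at any pair $(x,y)$ is the infimum of distances over countable subspaces containing both $x$ and $y$; since $\{x,y\}$ is itself such a subspace, this colimit distance is exactly $d_X(x,y)$. If $X$ is countably presentable, then $id_X$ must factor (essentially uniquely) through some countable subspace, forcing $X$ itself to be countable.

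For the backward direction, assume $X$ is countable, and enumerate $X = \{x_n \mid n \in \naturals\}$. Let $D \colon \mathcal{I} \to \Met$ be a countably filtered diagram with colimit cocone $(f_i \colon D(i) \to (C,d_C))_{i \in \mathcal{I}}$, and let $g \colon X \to C$ be non-expansive. We must exhibit an (essentially unique) factorization of $g$ through some $f_{j_*}$. First, by Proposition~\ref{prop:filteredset}, for each $n$ pick $i_n \in \mathcal{I}$ and $a_n \in D(i_n)$ with $f_{i_n}(a_n) = g(x_n)$, and by countable filteredness move all $a_n$ to a common index $j_0 \in \mathcal{I}$ above every $i_n$, yielding $\tilde{a}_n \in D(j_0)$ with $f_{j_0}(\tilde{a}_n) = g(x_n)$. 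Second, for each triple $(n,m,k) \in \naturals^3$ apply Proposition~\ref{prop:filteredmetric} to pick $j^{n,m,k} \in \mathcal{I}$ and $u^{n,m,k}, v^{n,m,k} \in D(j^{n,m,k})$ with $f_{j^{n,m,k}}(u^{n,m,k}) = g(x_n)$, $f_{j^{n,m,k}}(v^{n,m,k}) = g(x_m)$, and
\[
 d_{j^{n,m,k}}(u^{n,m,k},v^{n,m,k}) \leq d_C(g(x_n),g(x_m)) + 1/k.
\]
By countable filteredness, pick $j_1 \in \mathcal{I}$ above $j_0$ and all of the countably many $j^{n,m,k}$, and transport every element along the structure maps into $D(j_1)$.

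The crucial step is to reconcile, in a single index, the two kinds of witnesses. In $D(j_1)$ the transported elements $\tilde{a}_n$ and $\tilde{u}^{n,m,k}$ both map under $f_{j_1}$ to $g(x_n)$, so by Proposition~\ref{prop:filteredset} they are $\sim$-equivalent, meaning they become equal at some larger index of $\mathcal{I}$; the same holds for the pairs $\tilde{a}_m$ and $\tilde{v}^{n,m,k}$. This yields a countable family of ``larger indices''; by countable filteredness, choose $j_*$ above all of them. Define $h \colon X \to D(j_*)$ by letting $h(x_n)$ be the common image, in $D(j_*)$, of $\tilde{a}_n$ and of every $\tilde{u}^{n,m,k}$ (and of every $\tilde{v}^{m,n,k}$). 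Then $f_{j_*}\!\circ h = g$, and for all $n,m$ and every $k$ the non-expansiveness of the structure maps gives
\[
  d_{j_*}(h(x_n),h(x_m)) \leq d_{j^{n,m,k}}(u^{n,m,k},v^{n,m,k}) \leq d_C(g(x_n),g(x_m)) + 1/k,
\]
hence $d_{j_*}(h(x_n),h(x_m)) \leq d_C(g(x_n),g(x_m)) \leq d_X(x_n,x_m)$ by non-expansiveness of $g$. Thus $h$ is non-expansive.

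For essential uniqueness, if $h, h' \colon X \to D(i)$ both factor $g$, then $f_i(h(x_n)) = f_i(h'(x_n))$ for each $n$; by Proposition~\ref{prop:filteredset} pick $j_n \in \mathcal{I}$ above $i$ equalising $D(i \to j_n)(h(x_n))$ and $D(i \to j_n)(h'(x_n))$; by countable filteredness pick $j$ above every $j_n$. Then $D(i \to j) \circ h = D(i \to j) \circ h'$ pointwise on $X$, proving the required uniqueness. The main obstacle is the orchestration in the third paragraph: one has to combine the two separate ``find-a-later-index'' operations (Propositions~\ref{prop:filteredset} and~\ref{prop:filteredmetric}) and absorb countably many such operations into a single $j_*$, carefully ensuring that the total family of indices consulted along the way remains countable so that countable filteredness applies.
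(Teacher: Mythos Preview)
Your proof is correct and follows the same strategy as the paper: use Proposition~\ref{prop:filteredmetric} to extract, for each pair of points, a countable family of distance witnesses at varying indices, and then exploit countable filteredness (together with Proposition~\ref{prop:filteredset}) to consolidate all of the countably many indices and identifications involved into a single stage $j_*$ at which the factorising map can be defined and shown non-expansive. Your explicit reconciliation step---merging the canonical lifts $\tilde a_n$ with the distance witnesses $\tilde u^{n,m,k},\tilde v^{n,m,k}$ before passing to $j_*$---spells out more carefully what the paper compresses into its passage from $j_{x,y}$ to $j$, but the overall architecture is the same.
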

\begin{proof}
Every extended metric space $(X,d_X)$ is a countably filtered colimit of its countable subspaces. 
If $(X,d_X)$ is countably presentable, then the identity $id_X$ must factorize through the inclusion of 
one of the countable subspaces of $(X,d_X)$. Thus, $(X,d_X)$ is countable.

Conversely, let $(X,d_X)$ be a countable space,
and let $(f_i \colon D(i) \to (C, d_C))_{i \in \mathcal{I}}$ be the colimit cocone to a countably 
filtered diagram $D \colon \mathcal{I} \to \Met$.
Any morphism $h \colon (X,d_X) \to (C,d_C)$ factorizes through the image $h(X) \subseteq (C,d_C)$.
Note that $h(X)$ is countable space because $X$ is so. 

For each $i \in \mathcal{I}$, let $D(i) = (X_i, d_i)$. Since $D$ is filtered, by Proposition~\ref{prop:filteredmetric},
for any $x,y \in h(X)$,
\begin{equation}
  d_C(x,y) = \inf \{ d_i(x', y') \mid i \in \mathcal{I}\,, f_i(x') = x \,, \text{and } f_i(y') = y \} \,.
  \label{eq:infmetric}
\end{equation}
Thus, for any $n \in \mathbb{N}$ there exist
$j_{n} \in \mathcal{I}$ and $x_{n}, y_{n} \in X_{j_{n}}$ such that
\begin{gather}
  f_{j_{n}}(x_{n}) = x \,, \;  f_{j_{n}}(y_{n}) = y \,,  \text{ and } \, 
  %\label{eq:approximantsmapsto} \\
  d_{j_{n}}(x_{n}, y_{n}) \leq d_C(x,y) + \frac{1}{n+1} \,.
  \label{eq:approximants}
\end{gather}
Since $D$ is countably filtered, by \eqref{eq:approximants} and Proposition~\ref{prop:filteredset}, 
there exist $j_{x,y} \in \mathcal{I}$ and connecting morphisms $D(j_n \to j_{x,y}) \colon X_{j_{n}} \to X_{j_{x,y}}$, 
mapping all $x_{n}$ to a single element in $\bar{x}_y \in X_{j_{x,y}}$, %and all $y_{n}$ to a single element $y'$. 
i.e., for all $n \in \mathbb{N}$
\begin{equation}
  D(j_n \to j_{x,y})(x_n) = \bar{x}_y 
  \quad \text{and} \quad f_{j_{x,y}}(\bar{x}_y) = x
  %\quad \text{and} \quad D(j_n \to j_{x,y})(y_n) = y' 
  \,.
  \label{eq:commonpair}
\end{equation}
From \eqref{eq:commonpair} and the fact that $h(X)$ is countable, by Proposition~\ref{prop:filteredset}, 
there exists $j \in \mathcal{I}$ and connecting morphisms such that, for all $x,y \in h(X)$.
\begin{equation}
  D(j_{x,y} \to j)(\bar{x}_y) = \bar{x} 
   \quad \text{and} \quad f_{j}(\bar{x}) = x \,.
  \label{eq:finalelement}
\end{equation}

From the construction above, we can define the map $g \colon X \to X_j$ as follows:
\begin{equation*}
  g(x) = \overline{h(x)} \,,
\end{equation*}
where $\overline{h(x)}$ is the element in $X_j$ satisfying \eqref{eq:finalelement}.
Next we prove that $g$ is non-expansive. Assume by contradiction that, $d_X(x,y) < d_j(g(x),g(y))$ for some
$x,y \in X$.
Then there exists $n \in \mathbb{N}$ such that $d_X(x,y) + \frac{1}{n+1} \leq d_j(g(x),g(y))$,
and by non-expansivity of $h$
\begin{equation}
  d_C(h(x),h(y)) + \frac{1}{n+1} \leq d_j(g(x),g(y)) = d_j( \overline{h(x)} , \overline{h(y)}) \,.
  \label{eq:contraddiction}
\end{equation}
But, by \eqref{eq:finalelement} $f_j(\overline{h(x)}) = h(x)$ and $f_j(\overline{h(y)}) = h(y)$. Thus  \eqref{eq:contraddiction} contradicts \eqref{eq:infmetric}. Consequently, for all $x,y \in X$, $d_X(x,y) \leq d_j(g(x),g(y))$, i.e., $g$ is non-expansive.

By definition of $g$ and \eqref{eq:finalelement} we immediately obtain that $h = f_j \circ g$. 
Hence $h$ factorizes through $f_j$. By Proposition~\ref{prop:filteredset} the factorization is essentially unique.
Indeed, whenever $y,y' \in X_i$ fulfil $f_i(y) = f_i(y')$, then there exists $j \in \mathcal{I}$ such that 
$D(i \to j)(y) = D(i' \to j)(y')$.
\end{proof}

\begin{thm} \label{th:locallycountablyrepresentability}
$\Met$ is a locally countably presentable category.
\end{thm}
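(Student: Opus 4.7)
The plan is to verify the two defining conditions of local $\aleph_1$-presentability for $\Met$:
(i) cocompleteness, and (ii) the existence of a set of countably presentable objects such that every object is a countably filtered colimit of objects from this set.

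Condition (i) is immediate from Proposition~\ref{prop:emetbicomplete}. For the set required in (ii), Lemma~\ref{lemma:EMetcountablepresentability} tells us that the countably presentable objects in $\Met$ are exactly the countable extended metric spaces; up to isometry these form a (small) set, which we take as $\Met_{\aleph_1}$.

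The remaining task, and the substantive content of the proof, is to exhibit every $(X,d_X) \in \Met$ as a countably filtered colimit of countable spaces. First I would consider the poset $\mathcal{I}$ of countable subsets $Y \subseteq X$ ordered by inclusion, viewed as a small category. This poset is countably filtered, because the union of a countable family of countable subsets is again countable, providing the required upper bound. Then I would define the diagram $D \colon \mathcal{I} \to \Met$ sending $Y$ to the subspace $(Y, d_X{\restriction}_{Y \times Y})$ and each inclusion $Y \subseteq Y'$ to the evident isometric embedding, with cocone $f_Y \colon (Y, d_X{\restriction}_{Y \times Y}) \hookrightarrow (X, d_X)$.

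It remains to check that $(X, d_X)$ together with the $f_Y$ is the colimit of $D$. On underlying sets this is a standard filtered colimit in $\Set$: every $x \in X$ belongs to the singleton subset $\{x\} \in \mathcal{I}$, so $X = \bigcup_{Y \in \mathcal{I}} f_Y(Y)$, and one checks the universal property directly using that every countable subset of the target factors through some $Y$. For the metric, I would invoke Proposition~\ref{prop:filteredmetric}: since $D$ is filtered, the colimit distance on $X$ is
\begin{equation*}
  d(x,y) = \inf \{ d_X{\restriction}_{Y \times Y}(x',y') \mid Y \in \mathcal{I},\, f_Y(x') = x,\, f_Y(y') = y \} \,,
\end{equation*}
and since $f_Y$ is the literal inclusion and $d_X{\restriction}_{Y \times Y}(x,y) = d_X(x,y)$ for every countable $Y$ containing both $x$ and $y$ (e.g.\ $Y = \{x,y\}$), this infimum equals $d_X(x,y)$. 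Hence the colimit is $(X, d_X)$ as required, completing the verification of (ii).

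The main obstacle is essentially bookkeeping: ensuring that the diagram $D$ is genuinely countably filtered (which relies on countable unions of countable sets being countable) and that Proposition~\ref{prop:filteredmetric} gives back the original metric on the nose. No further technical difficulty is anticipated, since Lemma~\ref{lemma:EMetcountablepresentability} already identifies the countably presentable objects and Proposition~\ref{prop:emetbicomplete} supplies cocompleteness.
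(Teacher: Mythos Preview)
Your proposal is correct and follows essentially the same approach as the paper: cocompleteness from Proposition~\ref{prop:emetbicomplete}, countably presentable objects identified via Lemma~\ref{lemma:EMetcountablepresentability}, and every space exhibited as the countably filtered colimit of its countable subspaces. The paper simply asserts the last point in one sentence, whereas you spell out the verification (including the appeal to Proposition~\ref{prop:filteredmetric} for the metric), so your argument is a more detailed rendering of the same proof.
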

\begin{proof}
By Proposition~\ref{prop:emetbicomplete}, $\Met$ is cocomplete. 
Every extended metric space is the countably filtered colimit of its countable subspaces. 
By Lemma~\ref{lemma:EMetcountablepresentability} the countable spaces are countably presentable. 
Hence for $\Met_{\aleph_1}$ one can take the set of objects of the skeleton category of the full 
subcategory of all countable extended metric spaces, or equivalently, the set of all countable ordinals 
endowed with an extended metric space.
\end{proof}

\end{document}